\documentclass[11pt]{article}
\usepackage[margin=1in]{geometry}

\usepackage{algorithm}
\usepackage{algorithmic}
\usepackage{microtype}
\usepackage{graphicx}
\usepackage{subcaption}
\usepackage{booktabs}

\usepackage{hyperref}

\usepackage{lmodern}
\usepackage[T1]{fontenc}
\usepackage[utf8]{inputenc}
\usepackage[tbtags]{amsmath}
\usepackage{amsmath}
\usepackage{mathrsfs}

\usepackage{amssymb,amsthm,mathtools, bbm}
\usepackage{physics}
\usepackage{hyperref}
\usepackage[svgnames]{xcolor}
\usepackage{paralist}

\definecolor{links}{RGB}{11, 85, 255}
\definecolor{cites}{RGB}{0, 200, 0}
\definecolor{urls}{RGB}{255, 116, 0}
\hypersetup{colorlinks={true},linkcolor={links},citecolor=[named]{cites},urlcolor={urls}}
\usepackage[compress]{natbib}
\usepackage{doi}
\usepackage{tikz} 
\usepackage{pgfplots}
\pgfplotsset{compat=1.14}
\usepgfplotslibrary{fillbetween}
\usepackage{hyphenat}
\usepackage[font=footnotesize]{caption}
\usepackage{subcaption}
\usepackage{appendix}
\usepackage{nicefrac}
\usepackage{tcolorbox}
\usepackage{multirow}
\usepackage{thm-restate}
\usepackage{todonotes}
\usepackage{changepage}
\usepackage{pifont}

\newcommand{\eps}{\varepsilon}
\newcommand{\dist}{\text{dist}}
\newcommand{\cost}{\textup{cost}}

\newcommand{\calA}{\mathcal{A}}
\newcommand{\calS}{\mathcal{S}}

\newcommand{\N}{\mathbb{N}}

\newcommand{\R}{\mathbb{R}}

\newcommand{\cand}{\mathbb{C}}

\newcommand{\cA}{\mathcal{A}}

\newcommand{\cC}{\mathcal{C}}

\newcommand{\cE}{\mathcal{E}}
\newcommand{\cF}{\mathcal{F}}
\newcommand{\cG}{\mathcal{G}}

\newcommand{\cM}{\mathcal{M}}
\newcommand{\cN}{\mathcal{N}}

\newcommand{\cS}{\mathcal{S}}

\newcommand{\cX}{\mathcal{X}}

\newcommand{\bN}{\textbf{N}}

\DeclarePairedDelimiter{\lrabs}{\lvert}{\rvert}

\newcommand{\OPT}{\textmd{\textup{OPT}}}
\newcommand{\Pb}[1]{\mathbb{P}\left[#1\right]}
\newcommand{\ind}[1]{\mathbbm{1}_{\left\{#1\right\}}}

\newcommand{\E}[2]{\mathbb{E}_{#1}\left[#2\right]}
\newcommand{\V}[2]{\mathbb{V}_{#1}\left[#2\right]}

\newcommand{\frechetCont}{d_{F}}

\newcommand{\curves}[1]{T_{\text{d}}^{#1}}

\newcommand{\inp}{p}
\newcommand{\inpmin}{\inp^{min}}
\newcommand{\cc}{\gamma}
\newcommand{\tcc}{\widetilde{\cc}}
\newcommand{\dir}{e}
\newcommand{\smin}{s^{\min}}
\newcommand{\tmin}{t^{\min}}
\newcommand{\proj}[2]{\text{Proj}_{#1}\lpar #2\rpar}
\newcommand{\cyl}{\textup{Cyl}}

\newcommand{\candCurve}{\cN^C}

\newcommand{\lpar}{\left(}
\newcommand{\rpar}{\right)}

\newcommand{\frechet}{Fr\'echet}

\usepackage{amsmath}
\usepackage{amssymb}
\usepackage{mathtools}
\usepackage{amsthm}

\newtheorem{claim}{Claim}[section]

\newtheorem{observation}{Observation}[section]
\newtheorem{fact}{Fact}[section]

\usepackage[capitalize,noabbrev]{cleveref}

\theoremstyle{plain}
\newtheorem{theorem}{Theorem}[section]
\newtheorem{proposition}[theorem]{Proposition}
\newtheorem{lemma}[theorem]{Lemma}
\newtheorem{corollary}[theorem]{Corollary}
\theoremstyle{definition}
\newtheorem{definition}[theorem]{Definition}

\theoremstyle{remark}
\newtheorem{remark}[theorem]{Remark}

\title{Terminal Dimension Reduction for Time Series with Applications}

\author{Alexander Munteanu\thanks{TU Dortmund, Germany} \and Matteo Russo\thanks{EPFL, Switzerland} \and David Saulpic\thanks{CNRS \& Universit\'e Paris Cit\'e, IRIF, Paris, France} \and Chris Schwiegelshohn\thanks{Aarhus University, Denmark}}

\usepackage{mathtools}

\usepackage{amsmath,amsthm,amssymb}
\usepackage{xspace}
\usepackage{xcolor}
\usepackage{thmtools}
\usepackage{thm-restate}
\usepackage{hyperref}
\hypersetup{
	colorlinks=true,
	linkcolor=red,
	citecolor=blue,
	urlcolor=violet,
	pdftitle={Dimension Reduction for Curves},
	linktocpage=true,
}
\usepackage{cleveref}
\usepackage{paralist}

\renewcommand{\epsilon}{\varepsilon}

\allowdisplaybreaks

\begin{document}
\maketitle
\thispagestyle{empty}

\begin{abstract}
Terminal embeddings have emerged as a powerful tool for dimension reduction. 
    Given a set of points $P\subset \mathbb{R}^d$, a terminal embedding is a mapping $f:\mathbb{R}^d\rightarrow \mathbb{R}^t$ that preserves the pairwise distance between any pair of points $p\in P$ and $q\in \mathbb{R}^d$ up to small distortion under this mapping.
    Terminal embeddings have been particularly fruitful for constructing $k$-means and $k$-median coresets, where the objective is to find a typically weighted subset $\Omega$ of $P$ such that for any candidate solution, the cost of the clustering objective on $\Omega$ approximates the cost of the clustering objective on $P$ up to small distortion.
    Unfortunately, these techniques have not been extended to more complicated structures such as clustering time-series data under common straight-line interpolation between measurements. The main issue is that terminal embeddings, arguably the central technique in this line of research, cannot be linear and are thus not immediately suitable to preserve linear structures.
    In this work, we develop a generalization of terminal embeddings to affine line-segments that overcomes this issue.
    We showcase their applicability by using our lines-preserving terminal embeddings to obtain the first dimension-free coresets for clustering time-series under the Fréchet distance.
    The underlying dimension reduction uses Johnson-Lindenstrauss (JL) embeddings, and our experiments indicate that terminal embeddings perform similarly to JL and favorably against PCA for synthetic and real-world time-series, while only terminal embeddings extend pairwise distance preservation to the full ambient space.

    \vspace{1em}
    \noindent
    \textbf{Keywords:}{ dimension reduction, terminal embeddings, time-series, Fr\'echet distance, coresets}
\end{abstract}

\clearpage
\thispagestyle{empty}
\tableofcontents

\clearpage
\setcounter{page}{1}

\section{Introduction}
Time-series and panel data are ubiquitous in the analysis of physical sensor networks, geo-, and weather-information systems, price forecasting, and the stock market \citep[e.g.][]{ZhangRSZ07,ChapadosB08,Lucasetal15,ZimmerMN18,HuangSV20,HuangSV21,WangMC25}. Often one is not only interested in single-dimensional time-series, that represent one stock or one physical measurement, such as temperature, over time. Rather, we would like to model a large portfolio of stocks, or entire markets. Similarly, we are interested to combine all kinds of weather information into one single high-dimensional measurement at every time instant, to obtain high-dimensional time-series that represent and provide a holistic perspective on weather development over time.
For standard machine learning approaches to work for such data, we face two main challenges, \citep[cf.][]{PaparrizosYL24}: 
First, clustering algorithms were initially developed only for vectors. Standard vector embeddings fail to adequately model even one-dimensional time-series in the presence of asynchronous or missing measurements. Vector encodings of higher-dimensional time-series are even more problematic. Second, resources such as runtime, memory and sample complexity depend heavily on the volume and dimension of the data.

As a solution to the first challenge, the fields of theoretical computer science and computational geometry model time-series data as \emph{polygonal curves}. These connect single measurements or \emph{vertices} by straight-line connections that implicitly interpolate linearly along the continuous timeline. Building upon this representation, they introduced clustering algorithms for time-series under the \frechet{} distance with rigorous guarantees \citep{DriemelKS16}, though initially limited to one-dimensional time-series.
An intriguing direction thus remained open, namely an extension and dimension reduction for high-dimensional time-series to address the second of the above challenges.

First random projections for preserving the \frechet{} distance appeared in  \citep{DriemelK18}, followed by a Johnson-Lindenstrauss (JL) equivalent for preserving pairwise distances between curves representing time-series \citep{MeintrupMR19}.
Albeit with additive errors, this enabled efficient algorithms for metric $k$-median under the \frechet{} distance, where centers were restricted to be input curves.
Recently, the analysis was refined \citep{PsarrosR23,PsarrosR25} to obtain multiplicative errors for preserving pairwise distances by applying JL-embeddings \citep{JohnsonL1984} to a \emph{skeleton} of the curves. However, the method remained restricted to discrete clustering objectives, and estimating the clustering cost.
To the best of our knowledge, no \emph{terminal embeddings} exist for the problem, and these are the missing key to remove dimension dependence in \emph{continuous} clustering
of time-series and in data reduction via \emph{coresets}.

Terminal embeddings were originally introduced by \citep{ElkinFN17} and have emerged as an important tool for dimension reduction.
Given a set of $n$ points $P\in \mathbb{R}^d$, a terminal embedding with distortion $(1\pm \varepsilon)$ is a mapping $f\colon\mathbb{R}^d\to \mathbb{R}^t$ ensuring for any $p\in P$ and $q\in \mathbb{R}^d$, we have
$$(1-\varepsilon) \|p-q\| \leq \|f(p)-f(q)\| \leq (1+\varepsilon) \|p-q\|,$$
where $\|x\| = {(\sum_{i=1}^d x_i^2)}^{1/2}$ denotes the Euclidean norm.
This guarantee is remarkably similar to the JL-lemma, where both $p$ and $q$ are \emph{terminals}, i.e., elements of $P$. But in fact it is significantly stronger: in terminal embeddings, $q$ may be an arbitrary \emph{query} point from the ambient space.
For points in $\R^d$, a series of results \citep{ElkinFN17,MahabadiMMR18} ultimately led to optimal terminal embeddings with reduced target dimension $t = \Theta(\varepsilon^{-2}\log n)$
\citep{NarayananN19,CherapanamjeriN24}, matching lower bounds that even hold against the weaker JL-guarantee \citep{LarsenN17,AlonK17}.

The stronger guarantee given by terminal embeddings gave rise to several applications. Notably, terminal embeddings are at the forefront of data summaries known as coresets \citep{Feldman20,MunteanuS18}
for the important $k$-means and $k$-median problems. Given a loss function $\cost(P,C)$, where $C$ is a candidate solution, an $\eps$-coreset is a (typically weighted) subset  $\Omega\subseteq P$ such that
\begin{equation}\label{k-clustering}
    \cost(P,C) = (1\pm \varepsilon)\,\cost(\Omega,C),
\end{equation}
for all candidate solutions $C$.
For the $(k,z)$-clustering problem, specifically, $C$ is a set of at most $k$ points and we have $\cost(P,C) = \sum_{p\in P} \min_{c\in C} \|p-c\|^z$, where the special case $z=1$ corresponds to $k$-median and $z=2$ corresponds to $k$-means.
Terminal embeddings have played a central role in constructing \emph{dimension independent} coresets \citep{BecchettiBC0S19,BravermanCJKST022,BravermanJKW21,HuangV20}, including the optimal coreset bounds $\tilde{O}\left(k \varepsilon^{-2}\min(\varepsilon^{-z},k^{z/(z+2)})\right)$ due to \citep{Cohen-AddadLSSS22,HuangLW24}. 
Recently, terminal embeddings led to dimension-free learning rates for $k$-means/median for points, and for curves under discrete DTW and \frechet{} distances \citep{BucarelliLST23,Cohen-AddadLS25,KrivosijaMNS25}.
As remarked in \citep{HuangV20}, the standard JL-guarantee seems insufficient to recover any of the above bounds, making terminal embeddings an indispensable technique for several algorithmic applications that crucially rely on strong dimension reduction guarantees.

Unfortunately, the added power of terminal embeddings comes at a cost: the mapping $f$ cannot be linear.
To see this, consider that if $f\colon\R^d \to \R^t$ were a linear mapping and $t<d$, then there must exist a non-zero vector $v$ in the nullspace of $f$. The difference vector $p-q$ between any point $p\in P$ to $q:=p+v\in \mathbb{R}^d$, whose norm is the non-zero distance that we would like to preserve, is thus mapped to $0$, precluding any multiplicative approximation.

If we only require the weaker JL-guarantee, the mapping can however be linear; indeed most efficient JL-variants are typically sampled from a distribution over random Gaussian or Rademacher matrices. 
This aspect of JL-transforms has made it a popular technique for many applications in randomized numerical linear algebra such as low-rank approximation and regression \citep{Woodruff14}. 

While research on both sketching algorithms for numerical linear algebra and coreset algorithms via terminal embeddings for clustering points have seen substantial progress and success for many tasks, there exists a collection of problems for which our knowledge of dimension reduction methods is currently less well developed. One specific example, which we want to focus in the remainder, is clustering time-series under the \frechet{} distance.
Given the importance that terminal embeddings have for point clustering objectives, the lack of progress can primarily be attributed to the lack of terminal embeddings for lines and line-segments. Unfortunately line-segments are linear structures and -- as mentioned earlier --  terminal embeddings are inherently non-linear.
We thus ask the intriguing open research question:
\begin{center}
    \emph{Can the framework of terminal embeddings be extended to preserving distances between lines and polygonal curves lying in $d$-dimensional Euclidean ambient space?}
\end{center}

\subsection{Goals and Problem Setting}
The main goal of our work is to introduce new techniques that extend the guarantees of terminal embeddings for points to the setting of high-dimensional time-series. To this end, we aim to develop a terminal embedding framework for affine lines and polygonal curves in $\R^d$.

A polygonal curve $\pi$ of \emph{complexity} $m$ can be formalized as an ordered sequence of vertices $(p_0,\ldots, p_m) \in \mathbb{R}^d$, where consecutive pairs of vertices are connected by straight-line segments. 
We view a polygonal curve $\pi\colon [0,1]\to \mathbb{R}^d$ by fixing $m+1$ values $0=t_0<\ldots<t_m=1$ such that $\pi(t_i)=p_i$ and defining $\pi(t)=(1-\lambda)p_{i}+\lambda p_{i+1}$ where $\lambda=\frac{t-t_i}{t_{i+1}-t_i}$ for $t_i\leq t\leq t_{i+1}$. The (continuous) \frechet{} distance between two curves $\pi_1$ and $\pi_2$ is then defined as
\begin{align*}
    d_F(\pi_1,\pi_2) = \underset{\alpha,\beta \in \mathcal T}{\phantom{p}\inf \phantom{p}} ~ \underset{t\in [0,1]}{\sup} \|\pi_1(\alpha(t))-\pi_2(\beta(t))\|,
\end{align*} where $\mathcal T\subseteq \{f\colon [0,1]\to [0,1] \}$ is a set of continuous, non-decreasing, and surjective mappings, that are called \emph{reparameterizations}.

Note that the straight-line segments of a polygonal curve are continuous connected compact subsets (closed intervals) in a collection of affine lines. Akin to JL-mappings, known constructions of standard point terminal embeddings do not apply to terminal sets of unbounded cardinality. We thus introduce the following generalization.
\begin{definition}[Terminal Embeddings for the \frechet{} Distance]
    \label{def:frechet}
Let ${P}$ be a set of $n$ polygonal curves in $\mathbb{R}^d$ of complexity at most $m$ and let $\mathcal{Q}_\ell $ be the set of polygonal curves in $\mathbb{R}^d$ of complexity at most $\ell$. A mapping $f\colon \R^d \to \mathbb{R}^t$ is an $(\varepsilon,m,\ell)$-preserving terminal embedding for the \frechet{} distance, if for all curves $\sigma\in {P}$
and $\tau\in \mathcal{Q}_\ell$ it holds that 
$$
d_F(f(\sigma),f(\tau)) = (1\pm \varepsilon)\, d_F(\sigma,\tau)\,.$$
\end{definition}

\subsection{Our Results} 
We aim at terminal embeddings for preserving the \frechet{} distance between high-dimensional time-series represented as polygonal curves. Instead of points, terminals as well as queries are (collections of) line-segments, requiring a different construction of the terminal embedding.

We state our main result regarding terminal embeddings for curves with respect to the \frechet{} distance. 
\begin{restatable}{theorem}{mainfrechet}
    \label{thm:mainfrechet}
    There exist $(\varepsilon,m,\ell)$-preserving terminal embeddings for the \frechet{} distance with target dimension $t\in O(\ell \varepsilon^{-4} \log(nm))$.
\end{restatable}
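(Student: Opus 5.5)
We build $f$ from a terminal embedding that preserves distances from the terminal segments, and then lift that guarantee to \frechet{} distances by a combinatorial reduction.

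\textbf{Step 1: a lines-preserving terminal embedding.} Let $S$ be the set of at most $nm$ segments of the input curves in $P$, and let $L$ be the set of their supporting affine lines. We want a map $f\colon\R^d\to\R^t$ that is affine along every segment of every curve we apply it to. The guarantee we aim for is
\[
\dist(f(q),f(s))=(1\pm\eps)\,\dist(q,s)\quad\text{for every point } q\in\R^d \text{ and segment } s\in S.
\]
We also want $f$ restricted to each line of $L$ to be an isometry up to $(1\pm\eps)$.

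The construction has two parts.
\begin{itemize}
\item First, take a JL map $\Pi$ for the set $A$ consisting of the anchor points and direction vectors of the lines in $L$. Take it also for an $\eps$-net of the relevant low-dimensional subspaces spanned by pairs of these elements, so that distances between lines and points within those spans are preserved. This gives $t=O(\eps^{-2}\log(nm))$ times lower-order factors.
\item Second, as in the \citet{NarayananN19} construction, map a query point $q$ to $(\Pi u, \|q-u\|\cdot c)$, where $u$ is computed by a convex program. The program is replaced here by one that minimizes over projections onto the nearest line, and $c$ occupies an extra coordinate.
\end{itemize}
The extra $\eps^{-2}$ factor is the price of the net over directions within each segment's 2-dimensional span. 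This is what yields $\eps^{-4}$.

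\textbf{Step 2: handle query curves of complexity $\ell$.} A query curve $\tau\in\mathcal{Q}_\ell$ has $\ell+1$ vertices. We place its vertices into the terminal set by a union bound over the query's own $\ell$-dimensional span. Concretely, we use the standard trick that preserving distances from a fixed $\ell$-dimensional subspace together with the terminal set requires $t$ to grow additively or multiplicatively by $\ell$. This gives the factor $\ell$ in $t=O(\ell\eps^{-4}\log(nm))$.

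We then define $f(\tau)$ as the polygonal curve through the images of the vertices of $\tau$. Since $f$ is linear on the span of each segment of $\tau$, this curve coincides with the pointwise image of $\tau$. The same holds for $f(\sigma)$ with $\sigma\in P$.

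\textbf{Step 3: lift to the \frechet{} distance.} Use the characterization of $d_F(\sigma,\tau)\le\delta$ through the free-space diagram. Every free-space cell is determined by distances between a vertex of one curve and a segment of the other, together with the monotone-path feasibility. The key point is that the cell intervals, i.e.\ the sets $\{x\in s: \|x-v\|\le\delta\}$ for a vertex $v$ and a segment $s$, are preserved up to rescaling of $\delta$ by $(1\pm\eps)$. They also keep their order along the segment, because $f$ is near-isometric on each line.

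Reachability is governed by the order of interval endpoints. Hence a feasible monotone path at level $\delta$ for $(\sigma,\tau)$ maps to one at level $(1+\eps)\delta$ for $(f(\sigma),f(\tau))$, and conversely. This gives $d_F(f(\sigma),f(\tau))=(1\pm O(\eps))\,d_F(\sigma,\tau)$, and rescaling $\eps$ finishes the proof.

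\textbf{Main obstacle.} The main obstacle is Step 1, specifically making the nonlinear terminal map behave affinely on query segments. The plan is to make $f$ piecewise linear with respect to the nearest terminal line. The risk is a query segment that crosses between the regions of two different terminal segments. To handle that case, we would subdivide the query segment at the crossing points and argue that the extra vertices do not change the \frechet{} distance beyond a $(1\pm\eps)$ factor.
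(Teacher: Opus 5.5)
Your proposal has a genuine gap at its core, in Step 2. There is no ``standard trick'' that lets a map fixed before the query is seen preserve distances to an arbitrary query span. For any linear map of rank below $d$ there is a line on which the JL guarantee fails. A union bound ``over the query's own span'' is also unavailable: $\mathcal{Q}_\ell$ is infinite, while one draw of the randomness, and one fixed image of the input curves, must serve every $\tau$ simultaneously. Putting the query's vertices into the terminal set fails for the same reason.

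What is missing is a reduction of every query to one of \emph{finitely many data-determined} subspaces, with \emph{multiplicative} control of the cross term $a^T(I-\Pi)b$. The paper obtains this greedily (Lemma~\ref{lem:innerproduct}). Start from the segment containing the input point $a_0$ nearest to the query line. Then add the residual of an input point $v$ whenever $|v^T(I-UU^T)s|>\varepsilon\|(I-UU^T)v\|\,\|s\|$ for one of the $O(\ell)$ query vectors $s$ (offsets $b_0-a_0$ and directions). Each such step adds more than $\varepsilon^2\|s\|^2$ to $\|U^Ts\|^2$, so $O(\ell\varepsilon^{-2})$ input segments suffice. This gives a projection $\Pi$ with $\|a-b\|^2=(1\pm\varepsilon)(\|\Pi(a-b)\|^2+\|(I-\Pi)a\|^2+\|(I-\Pi)b\|^2)$ (Lemma~\ref{lem:centroidset}).

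Only $\exp(O(\ell\varepsilon^{-2}\log(nm)))$ such $\Pi$ exist. Hence one Gaussian $S$ can be a subspace embedding for every span of an input segment together with the range of some $\Pi$. The \emph{linear} maps $f(a)=(Sa,0)$ and $g(b)=(S\Pi b,(I-\Pi)b)$, with the residual written exactly into $2\ell$ ancillary coordinates, then finish the proof. This is also the true source of $\varepsilon^{-4}$: a factor $\varepsilon^{-2}$ from the sketch times $O(\ell\varepsilon^{-2}\log(nm))$ from the subspace dimension and the logarithm of the number of subspaces. Your attribution of the second $\varepsilon^{-2}$ to a net over directions in $2$-dimensional spans is incorrect: such a net has $\mathrm{poly}(1/\varepsilon)$ points and costs only a $\log(1/\varepsilon)$ factor.

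Step 1 cannot carry the argument either. Your Narayanan--Nelson-type map $q\mapsto(\Pi u(q),c\,\|q-u(q)\|)$, with $\Pi$ your JL map, is nonlinear in $q$ even within a single ``nearest line'' region. Both the convex-program output $u(q)$ and the norm coordinate are nonlinear. So query segments are not mapped to segments, and subdividing at region boundaries does not repair this. Step 2 then simply asserts that $f$ is linear on each query segment, which contradicts Step 1. The paper avoids nonlinearity altogether by letting $g$ depend on the query through $\Pi$ while remaining linear.

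Finally, the guarantee you aim for, $\mathrm{dist}(f(q),f(s))=(1\pm\varepsilon)\,\mathrm{dist}(q,s)$, is a point-to-set statement and is too weak for Step 3. The free-space intervals there need $\|x-v\|$ preserved for every point $x$ on a segment and every vertex $v$ of the other curve, in both directions. Conversely, once distances between all pairs (input point, query point) are preserved and segments map affinely, the Fr\'echet bound is immediate by transporting any pair of reparameterizations. No free-space analysis is needed.
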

Consider $(k,\ell,z)$-clustering under the \frechet{} distance as an application. In light of \Cref{k-clustering}, the loss function is
\begin{align*}
    \cost(P,C) = \sum_{\sigma \in P} \min_{\tau \in C} d_F(\sigma,\tau)^z,
\end{align*}
where the $|C|=k$ center curves $\tau \in C\subseteq \mathcal{Q}_\ell$, have restricted complexity at most $\ell$. Algorithms and coreset constructions for this problem received considerable attention recently \citep{BravermanCJKST022,ConradiKPR24,Cohen-AddadD0SS25}. 
The best existing coreset are built via an algorithm whose success rely solely on the \emph{existence} of small enumerations over all candidate clusterings \citep{Cohen-AddadD0SS25}. The terminal embeddings given in  \Cref{thm:mainfrechet} allow, in a black-box manner, to remove the dependence on the dimension of those small enumeration. This black-box application yields the first dimension-free coresets of size $\tilde{O}(k\cdot \varepsilon^{-5-z} \cdot \ell^2 \cdot \log^2 m)$.

Unfortunately, such a na{\"i}ve application degrades the dependence on the precision parameter $\varepsilon$ severely from $\eps^{-1-z}$ to $\eps^{-5-z}$.
However, we improve the size bounds of the coreset construction framework due to \citet{Cohen-AddadSS21}, and hereby refine the coreset analysis to directly incorporate our terminal embedding, avoiding the additional loss. Again, this framework relies only on the existence of small enumerations over candidate clusterings, and does not need to actually compute them, and in particular it does not need to build the terminal embedding.
This ultimately leads us to the following result.
\begin{restatable}{theorem}{frechetcoreset}
    \label{thm:frechetcoreset}
    Given a set ${P}$ of $n$ curves in $\R^d$ of complexity $m$, there exists an $\eps$-coreset of size  
    $\tilde{O}(k \varepsilon^{-2-z} \ell^2 \log^2 m)$ for $(k,\ell,z)$-clustering under the \frechet{} distance, with center curves in $\R^d$ of complexity at most $\ell$. 
\end{restatable}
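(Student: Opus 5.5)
We follow the coreset framework of \citet{Cohen-AddadSS21} (sensitivity/importance sampling via ring and group decomposition). The only problem-specific ingredient is a bound on the size of an \emph{approximate centroid set}, i.e.\ an $\eps$-net of candidate clusterings. Its logarithm enters the sample size linearly. The plan is to show that this net can be defined in the low-dimensional image of the terminal embedding of \Cref{thm:mainfrechet}, without ever computing the embedding.

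First, we compute a constant-factor approximation $\mathcal{A}$ and partition $P$ into clusters, rings, and groups exactly as in the framework. Within each group, points have costs within a constant factor of each other, and the group's total cost is controlled by $\cost(P,\mathcal{A})$. We then sample $\tilde O(k\eps^{-2-z}\cdot \log|\mathcal{N}|)$ curves per group, where $\mathcal{N}$ is a net of ``cost vectors'' $(d_F(\sigma,\tau))_{\sigma\in G}$ over $\tau\in\mathcal{Q}_\ell$. The sample size must avoid the $\eps^{-4}$ loss. We therefore refine the chaining argument so that the per-group variance term yields $\eps^{-2}$ times the entropy, and the $\eps^{-z}$ comes from the ring sizes. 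In particular, the $\eps$-dependence of the net's size must enter only logarithmically and additively, not multiplied through the dimension.

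Second, we bound $\log|\mathcal{N}|$. We apply \Cref{thm:mainfrechet} with terminal set equal to the group (at most $n$ curves of complexity $m$) and constant distortion $\eps_0=1/2$, so the target dimension is $t=O(\ell\log(nm))$. Every query center $\tau\in\mathcal{Q}_\ell$ then has a proxy $f(\tau)$ in $\R^t$ whose Fréchet distances to all $f(\sigma)$ match $d_F(\sigma,\tau)$ up to a constant factor. This is where the main obstacle lies. A constant-distortion embedding does not directly give an $\eps$-accurate net of the true distances. The plan is to use the terminal guarantee with $\eps_0$ constant only to localize $\tau$ within a bounded region of radius $O(\text{cost})$. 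Inside that region we need accuracy $\eps$. The first attempt is to invoke \Cref{thm:mainfrechet} at precision $\eps$ but only in a ``dimension-count'' role: a grid of $\ell$ vertices in $\R^{t}$ with $t=O(\ell\eps^{-4}\log(nm))$ gives $\log|\mathcal{N}|=O(\ell t\log(1/\eps))$. This would reintroduce $\eps^{-4}$, so we instead aim to exploit that only the $\ell$ vertices of $\tau$ and the $O(\ell m)$ relevant segment interactions matter. By a JL-type union bound over the $\ell$ center vertices against the $nm$ terminal segments, the relevant metric is determined by $O(\ell\log(nm))$ parameters at scale $\eps$, discretized to $\log(1/\eps)$ bits each. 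This yields $\log|\mathcal{N}| = \tilde O(\ell^2\log^2 m)$ after restricting to the $\tilde O(k\eps^{-z})$-size sample, so that $\log n$ becomes $\log(k/\eps)$, absorbed in $\tilde O$.

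Finally, we combine the two steps. Substituting $\log|\mathcal{N}|=\tilde O(k\ell^2\log^2 m)$ (across $k$ centers) into the refined sampling bound gives total size $\tilde O(k\eps^{-2-z}\ell^2\log^2 m)$. Correctness holds because the net exists, and the framework needs only existence, not computation, of the embedding and the net. The step I expect to be hardest is the refined variance/chaining bound that keeps the $\eps$-dependence at $\eps^{-2-z}$ while the net is defined through an embedding with $\eps$-dependent dimension.
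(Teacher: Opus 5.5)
\textbf{There is a genuine gap.} Your reduction to the \citet{Cohen-AddadSS21} framework is the right frame, but the step that carries the theorem, the entropy bound on the net, does not go through.

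\emph{The union bound is impossible.} You appeal to a ``JL-type union bound over the $\ell$ center vertices against the $nm$ terminal segments'', but center vertices range over all of $\R^d$. This is exactly the obstruction that forces terminal rather than JL embeddings. Moreover, any oblivious sketch at precision $\eps$ costs $\eps^{-2}$ dimensions anyway.

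\emph{The conclusion is false.} You claim $\log|\mathcal{N}|$ is $\eps$-independent up to logarithms, but already degenerate curves (single points) refute this. Take $d=n$, inputs $e_1,\dots,e_n$, $k=\ell=1$, $\mathcal{A}=\{0\}$ and a point center $c$ in the unit ball. The cost vectors are $(1+\|c\|^2-2c_i)_i$. Unit vectors with $\Theta(\eps^{-2})$ entries equal to $\pm\Theta(\eps)$ give a packing of these vectors in $\ell_\infty$ at scale $\gg\eps$. Hence every $\eps$-net of cost vectors has $\exp(\Omega(\eps^{-2}\log(\eps^2 n)))$ elements. This stays polynomial in $\eps^{-1}$ after replacing $n$ by the sample size.

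\emph{The discretization ignores long segments.} Discretizing ``to $\log(1/\eps)$ bits each'' presumes a bounded range. Center vertices can sit anywhere along a long input segment, which is why the paper needs the chunk-removal/parallel-segment construction of Lemma~\ref{thm:coresetMain}.

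\textbf{What the paper does instead.} The $\eps^{-2}$ in the entropy is harmless; the term to remove is the $\eps^{-4}\log(nm)$ coming from the JL step of \Cref{thm:mainfrechet}. The paper does this in three steps.
\begin{enumerate}
\item It never composes with JL. \Cref{cor:centroidset} gives only $\exp(O(\ell\eps^{-2}\log(nm)))$ candidate projections $\Pi$, each onto the span of $O(\ell\eps^{-2})$ input segments. For the right $\Pi$, all distances from input points to center points are represented linearly in dimension $O(\ell\eps^{-2})$: the span of $\Pi$ plus ancillary coordinates for $(I-\Pi)a$ and $(I-\Pi)b$. That is, one uses $\Pi$ itself rather than $S\Pi$.
\item For each $\Pi$, it builds the dimension-dependent centroid set of Lemma~\ref{thm:coresetMain} in this low dimension and takes the union over all $\Pi$. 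The union costs only an additive $O(\ell\eps^{-2}\log(nm))$ in the logarithm. This gives $\log|\mathcal{N}_h|=\tilde O(k\ell^2 2^{2h}\log(nm))$ at precision $2^{-h}$.
\item This $2^{2h}$ growth is exactly what the chaining framework (\Cref{thm:framework-intro} with $c=2$) tolerates. The level-$h$ variance decays like $2^{-2h}$ and cancels it, yielding $\tilde O(k\ell^2\log^2 m\cdot\eps^{-2}\min(\eps^{-z},k))$.
\end{enumerate}
You gestured at chaining but then required it to work with an $\eps$-free entropy, which neither exists nor is needed.
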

Notably, this bound is optimal for constant $\ell$ and $m$, as it recovers the optimal coreset bounds $\Omega(k\varepsilon^{-2-z})$ for $(k,z)$-clustering with centers being points \citep{HuangLW24,ZhuTHH24} whenever $k\gg  \varepsilon^{-1}$. 
Interestingly, coresets for clustering under the \frechet{} distance was one of the only settings for which the framework of \citet{Cohen-AddadSS21}
failed to provide state-of-the-art bounds: our construction therefore provides the missing link and strengthens this framework.
An overview and comparison of our result with previous results is given in \Cref{tbl:frechetcoresets} and details on related work can be found in \cref{sec:related}.

\begin{table*}[t]
\caption{Resulting coreset sizes of applying our main result \Cref{thm:mainfrechet} to the $(k,\ell,z)$-clustering problem on polygonal curves of complexity at most $m$ and center curves of complexity at most $\ell$ under $z$-th power of \frechet{} distance, and comparison to previous work.}
\label{tbl:frechetcoresets}
\begin{adjustwidth}{-1.5cm}{-1.5cm}
    \begin{center}
        \begin{small}
            \renewcommand{\arraystretch}{1.3}
            \begin{tabular}{r|c|l}
                \textsc{Authors \& Venue}  & \textsc{Reference} & \textsc{Coreset Size} \\
                \hline
                \hline
                Braverman, Cohen-Addad, Jiang, Krauthgamer, & \multirow{2}{*}{\citep{BravermanCJKST022}} & \multirow{2}{*}{$\tilde{O}(k^3 \cdot \varepsilon^{-3}\cdot d \cdot \ell \cdot \log m)$} \\ Schwiegelshohn, Toftrup, Wu (FOCS'22) &  \\
                \hline
                Cohen-Addad, Saulpic, Schwiegelshohn (FOCS'23) & \citep{CSS23} & $\tilde{O}(k^2 \cdot \varepsilon^{-3}\cdot d\cdot \ell\cdot \log m)$\\
                \hline
                Conradi, Kolbe, Psarros, Rohde (SoCG'24) & \citep{ConradiKPR24} & $\tilde{O}(k^2 \cdot \varepsilon^{-2}\cdot \log n\cdot d\cdot \ell\cdot \log m)$\\
                \hline
                Cohen-Addad, Draganov, Russo, & \multirow{2}{*}{\citep{Cohen-AddadD0SS25}}& \multirow{2}{*}{$\tilde{O}(k\cdot \varepsilon^{-1-z} \cdot d\cdot \ell\cdot \log m)$}\\
                Saulpic, Schwiegelshohn 
                (SODA'25) &  \\
                \hline
                \hline
                \multirow{2}{*}{Our results (here)} & Direct application of \Cref{thm:mainfrechet} & $\tilde{O}(k\cdot \varepsilon^{-5-z} \cdot \ell^2 \cdot \log^2 m)$ \\
                ~& Final result in \Cref{thm:frechetcoreset} & $\tilde{O}(k \cdot \varepsilon^{-2-z} \cdot \ell^2 \cdot \log^2 m)$\\
            \end{tabular}
        \end{small}
    \end{center}
\end{adjustwidth}
\end{table*}

\subsection{Our Techniques}
\paragraph{Review of Terminal Embedding Constructions.}
Let us first review how normal terminal embeddings are constructed. That is, given a finite set of points $P\subset \mathbb{R}^d$, we wish to preserve distances between any $p\in P$ and any $q\in \mathbb{R}^d$.
The image $\mathbb{R}^{t}$ of the mapping $f$ maps $\mathbb{R}^d$ to $\mathbb{R}^{t-1}$ via a JL-type of embedding (typically a matrix of i.i.d. Gaussians or Rademachers) and adds one ancillary coordinate.

For preserving distances between pairs of input points in $P$, we know that defining $f$ to be the image of the JL-embedding along the first $t-1$ coordinates suffices, and we can simply set the ancillary coordinate to $0$.
The embedding of an arbitrary point $q$ can then be computed via semidefinite programming \citep{MahabadiMMR18,NarayananN19}, typically resulting in the ancillary coordinate being non-zero.
An alternative is to embed $q$ as follows: find its nearest neighbor $p$ in the input set and define $f(q)=f(p)$ on the first $t-1$ coordinates, and set the ancillary coordinate to $\|p-q\|$. However, this simple trick only preserves distances up to a factor of $2$ \citep{ElkinFN17}. In addition, such embeddings are far from preserving affine lines, as they exhibit strong discontinuities.

Before continuing with the key steps of our analysis, we emphasize one specific novelty of our terminal embeddings: they are \emph{piecewise-linear}. Their construction avoid sophisticated techniques such as solving semidefinite programs or building nearest neighbor data structures as in previous work \citep{MahabadiMMR18,NarayananN19,CherapanamjeriN24}. 
Our novel constructions thus enhance the simplicity of available algorithms as they merely rely on oblivious random projections and data dependent orthogonal projections to low-dimensional subspaces.

\paragraph{Terminal Embeddings for the \frechet{} Distance.}\label{par:techoverview}
We actually aim for a stronger guarantee than just preserving the \frechet{} distance.
Observe that a polygonal curve of complexity $\ell$ is a sequence of $\ell$ line-segments, which is a subset of a collection of $\ell$ affine lines.
Our goal is to preserve the distance between an arbitrary point on any input line-segment -- belonging to one of the curves in ${P}$ -- to any possible point in any affine line that contains a line-segment of a query curve.
Clearly, aiming for this stronger guarantee implies that the \frechet{} distance is also preserved, since the \frechet{} distance relies only on a subset of the point pairs whose distances are preserved. 

When attempting to extend the previous approaches for terminal embeddings to lines or line-segments, we can rely on the property that projecting with Gaussian matrices also preserves distances to subspaces.
More specifically, given an arbitrary but fixed $1$-dimensional affine query subspace $V$ (i.e. an affine line)
it is immediate from \Cref{lem:subspace} (with $j=O(1)$) that if the number of columns of a Gaussian matrix $S$ is in the order of $O(\varepsilon^{-2}\log(nm/\delta))$, then with probability at least $1-\delta$, it holds for any point on a line of the input $p\in P$ and any point on the query line $q\in V$ that
\begin{equation}
\label{eq:jlguarantee}
  \|S(p-q)\| = (1\pm \varepsilon) \|p-q\| \, .
\end{equation}
The immediate idea of extending this guarantee by enumerating all possible $1$-dimensional affine subspaces of $\mathbb{R}^d$, and setting $\delta$ sufficiently small to apply a union bound over all of them clearly cannot work. Similarly to the case of points, as long as the rank of $S$ is less than $d$, there exists an affine line $V$ for which \Cref{eq:jlguarantee} does not hold.

Let us characterize the boundary cases when \Cref{eq:jlguarantee} holds and when it might not hold. Denote by $\mathcal{V}_{P}$ the set of affine lines spanned by point pairs in ${P}$, where we 
represent any line $L$ as an orthonormal basis matrix $V=[v_0,v_1]\in\R^{d\times 2}$, so that $L$ is included in the span of $v_0, v_1$. Note that many lines could be represented by the same matrix, but preserving distances to any point in the columnspan of $V$ is stronger than merely to $L$, and linear subspaces are analytically more tractable than affine lines.
While we cannot extend the guarantee of \Cref{eq:jlguarantee} to all lines in $\R^d$, we can do a union-bound over all lines in $\mathcal{V}_{P}$: hence, we know that all distances between points in ${P}$ and any point on a line $V\in \mathcal{V}_{P}$ can be preserved with high probability.

Now, intuitively, it seems that this should also be true for any query $V'$ that is \emph{sufficiently close} to some line $V\in \mathcal{V}_{P}$. 
Specifically, if there exists $V\in\mathcal{V}_{P}$ such that for all $p\in {P}$
\begin{align}
\label{eq:IP}
  &\|V'^T(I-VV^T)p\| 
  \leq \varepsilon \cdot \|(I-VV^T)p\| \cdot \|(I-V'V'^T)p\|\,,
\end{align}
then we can show that the Gaussian matrix also preserves distances between points $p\in {P}$ on input lines and points $q\in V'$ on the query line within the desired precision.
More precisely, $\|(I-V'V'^T)p\|$ is the minimum distance of $p$ to any point of the subspace spanned by the line $V'$ and \Cref{eq:IP} ensures that there exists no vector $v'$ on the line $V'$ for which $v'^Tp$ is significantly different from $v^Tp$; in other words, $V$ and $V'$ are approximately aligned.

When \Cref{eq:IP} does not hold, suppose in the other extreme, that the line $V'$ is orthogonal to any line $V\in\mathcal V_P$, and thus orthogonal to the entire span of line-segments in ${P}$.
This situation can simply be handled using two ancillary coordinates: using the Pythagorean theorem, it holds that $\|p-q\|^2 = \|p\|^2 + \|q\|^2$ for any $p\in {P}$ and $q$ on the line $V'$. We can thus simply embed the entire line $V'$ along the ancillary coordinates.
Since we may assume that $\|Sp\|^2 = (1\pm \varepsilon) \|p\|^2$ for all $p\in P$ and the embedding of the line $V'$ remains orthogonal to $P$, this immediately implies that $\|p-q\|^2$ is preserved up to a factor $(1\pm \varepsilon)$.

Unfortunately, there is a significant gap between the two extreme cases sketched above, where a candidate line $V'$ is either nearly identical to some line $V\in \mathcal{V}_{P}$ or nearly orthogonal to all lines in $\mathcal{V}_{P}$.
To overcome this gap, the main idea is to cover $V'$ with a small union of lines in $\mathcal{V}_P$ -- in the sense that $V'$ is close to the span of this union -- such that if $V'$ is not included in the cover up to small distortion, then it must be nearly orthogonal to all $V\in \mathcal{V}_P$.

Our covering has some relationship with $\varepsilon$-nets for high-dimensional Euclidean balls, but it has a much stronger guarantee: in case of an $\varepsilon$-net, we would only obtain $\|V'^T(I-VV^T)p\| \leq \varepsilon \cdot \|(I-VV^T)p\|$
which would only yield 
{${\|p-q\| = (1\pm \varepsilon)\|f(p)-f(q)\| +\varepsilon\cdot \|p\|,}$} i.e., an additive error approximation, while our covering yields the desired multiplicative error approximation guarantee
\[
    \|f(p)-f(q)\| = (1\pm \varepsilon)\|p-q\|\,.
\]
The size of a cover to enforce \Cref{eq:IP} for a single line is only $O(\varepsilon^{-2})$, but to extend our argument to the entire query, we need to construct a cover for each of $\ell$ lines in the query curve $Q\in \mathcal Q_\ell$.
We then use the Gaussian matrix to preserve the distance between any $p\in P$ and any point $q'$ in the span of the cover.
The remaining parts of $V'\in Q$ that are (nearly) orthogonal to the cover are then embedded along $O(\ell)$ ancillary coordinates.

\paragraph{Applications.} An application of our terminal embedding is a new coreset construction for $(k,\ell,z)$-clustering of polygonal curves under the \frechet{} distance. 
As mentioned earlier, at this stage in coreset research, we have a variety of frameworks that only leave the existence of small enumerations over candidate clusterings as a problem specific open question. Their mere existence is enough to guarantee the success of coreset algorithms.
For describing our enumeration,
let us first focus on a simple case:
if the vertices of a candidate center curve are all close to a vertex in the union of input curves, then we can approximate the placement of the center curve's vertices via standard discretization arguments for covering Euclidean balls.

The more challenging task is to extend this construction to the case where we have a very long line-segment as part of a curve. Then a center curve may place a vertex somewhere along this line-segment, and this vertex may determine the \frechet{} distance. 
This issue posed a problem with previous attempts at obtaining good dimension reduction methods for the \frechet{} distance, see for example \citep{MeintrupMR19}.
The easiest way to address this, is to discretize the long line-segments themselves and derive a coreset construction depending on bit complexity.

An alternative, more commonly used in coreset construction, is based on using combinatorial measures such as the VC-dimension to enumerate over the set of all curves.
This enumeration reduces the number of clusterings to roughly $\exp(k\log k \cdot d\cdot \ell \cdot  \log m)$, see \citep{DriemelNPP21,BD23,ChengH24}.
Even with the best available analysis of a VC-dimension driven approach \citep{Cohen-AddadD0SS25}, the authors could not do better than to replace $d$ with our new target dimension, which fails to achieve a better than $\tilde{O}(k\varepsilon^{-5-z}\ell^2\log^2m)$ coreset size.

We address the challenge of enumeration by reducing the problem up to constants to the case where long line-segments of the center curve are parallel to some line-segment of the input.
This enables an explicit enumeration in which we can parameterize our dimension reduction gradually. As a result, we obtain a sequence of increasingly finer enumerations. We can now perform a chaining analysis using this sequence, leading to tighter bounds on the metric entropy,
akin to constructions that yield optimal coresets for $k$-median and $k$-means \citep{BCPSS24,Cohen-AddadLSSS22,HuangLW24,ZhuTHH24}.
Integrating our terminal embeddings for the \frechet{} distance into this construction finally leads to a coreset size bound of $\tilde{O}(k\varepsilon^{-2-z}\ell^2\log^2m)$, that is optimal assuming $\ell$ and $m$ are small, as the dependence on $k$ and $\eps$ matches recent $\Omega(k\varepsilon^{-2-z})$ lower bounds \citep{HuangLW24,ZhuTHH24} that hold whenever $k\gg  \varepsilon^{-1}$.

\section{Preliminaries}
For any non-negative numbers $a$ and $b$, we use $a = (1\pm \varepsilon) \, b$ to denote $(1-\varepsilon) \, b \leq a\leq (1+\varepsilon) \, b$.
Given a set of points $P$, the linear subspace $Q$ spanned by $P$ is the set of all points that can be written as linear combinations of points in $P$. We say that a matrix $U$ is an orthonormal basis of $Q$ if each column of $U$ has unit Euclidean norm, the columns of $U$ are pairwise orthogonal, and the columns of $U$ span $Q$. 
For any vector $x\in\mathbb{R}^d$, we denote its Euclidean norm by $\|x\|={(\sum_{i=1}^d x_i^2)^{1/2}}$ and the Euclidean distance between $x,y\in\mathbb{R}^d$ by $\|x-y\|$. For any matrix $X\in \mathbb{R}^{n\times d}$ we denote its spectral norm by $\|X\|_2 = \max_{y\in\R^d, y\neq 0} \|Xy\|/\|y\|$. We will use the notion of a subspace preserving sketch given in the following Lemma.

\begin{lemma}[Subspace preserving sketch \citep{Woodruff14}]
\label{lem:subspace}
Let $\varepsilon,\delta>0$ and let $c$ be some absolute constant. There exists a distribution $\mathcal{D}$ over random $r\times d$ matrices such that if $S\sim \mathcal{D}$ and $r\geq c\cdot \varepsilon^{-2}\cdot \left(j+\log(1/ \delta)\right)$, then for any fixed matrix $U\in \mathbb{R}^{d\times j}$ with orthonormal columns it holds with probability at least $1-\delta$ that 
$S$ is a \emph{subspace preserving sketch} for $U$. I.e., it holds that 
$\|U^TU - U^TS^TSU\|_2 \leq \varepsilon.$
\end{lemma}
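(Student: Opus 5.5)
The plan is to take $\mathcal{D}$ to be the distribution of a scaled Gaussian matrix: $S\in\R^{r\times d}$ with i.i.d.\ entries distributed as $\mathcal{N}(0,1/r)$. The proof then combines a Johnson--Lindenstrauss-type concentration bound for a single vector with an $\eps$-net argument over the unit sphere of the $j$-dimensional column span of $U$. Since $U$ has orthonormal columns, $U^TU=I_j$. Set $A:=U^TS^TSU-I_j$. The matrix $A$ is symmetric, so
\[
\|A\|_2=\sup_{x\in S^{j-1}}\lrabs{x^TAx}=\sup_{x\in S^{j-1}}\lrabs{\|SUx\|^2-1}.
\]
The lemma therefore reduces to showing that, with probability at least $1-\delta$, $\|SUx\|^2=1\pm\eps$ holds simultaneously for all unit vectors $x\in\R^j$.

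\textbf{Step 1 (single vector).} Fix $x\in S^{j-1}$ and put $y=Ux$. Then $\|y\|=1$. By rotation invariance of the Gaussian distribution, each coordinate of $Sy$ is an independent $\mathcal{N}(0,1/r)$ variable. Hence $r\|Sy\|^2$ is a $\chi^2_r$ variable. A standard chi-square tail bound (Laurent--Massart, or Bernstein's inequality for sums of sub-exponential variables) gives
\[
\Pb{\lrabs{\|Sy\|^2-1}>\eps/2}\le 2\exp(-c'r\eps^2)\qquad\text{for }\eps\le 1.
\]
For $\eps>1$ the exponent becomes $c'r\eps$, which is only better. I expect this step to be the main technical point: the sub-exponential (not sub-Gaussian) behaviour of $\chi^2$ is exactly what produces the $\eps^{-2}$ dependence.

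\textbf{Step 2 (net and union bound).} Let $N$ be a $1/4$-net of $S^{j-1}$. A volume argument gives $|N|\le 9^j$. A union bound over $N$ shows that $\lrabs{x^TAx}\le\eps/2$ holds for all $x\in N$, except with probability at most $2\cdot 9^j\exp(-c'r\eps^2)$. This failure probability is at most $\delta$ once $r\ge c\,\eps^{-2}(j+\log(1/\delta))$ for a suitable absolute constant $c$.

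\textbf{Step 3 (from the net to the sphere).} For an arbitrary $x\in S^{j-1}$, pick $y\in N$ with $\|x-y\|\le 1/4$. By symmetry of $A$,
\[
\lrabs{x^TAx-y^TAy}\le \lrabs{(x-y)^TAx}+\lrabs{y^TA(x-y)}\le 2\cdot\tfrac14\|A\|_2 .
\]
Taking the supremum over $x$ yields $\|A\|_2\le \eps/2+\|A\|_2/2$. Since $\|A\|_2$ is finite, this gives $\|A\|_2\le\eps$, which is the claimed subspace preserving property.
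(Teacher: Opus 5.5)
Your argument is the standard proof of the cited result. The paper gives no proof of its own and defers to \citet{Woodruff14}, where the Gaussian subspace embedding is proved with the same three ingredients you use: single-vector $\chi^2$/JL concentration, a union bound over a constant-scale net of the unit sphere of the column span, and an extension from the net to the whole sphere. Woodruff does the last step by writing a unit vector as a geometric series of scaled net points. Your bootstrap $\|A\|_2\le\varepsilon/2+\|A\|_2/2$ for the symmetric $A=U^TS^TSU-I_j$ is an equivalent and slightly shorter route. For $\varepsilon\le1$, Steps 1--3 are correct as written.

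The one wrong claim is the aside that for $\varepsilon>1$ the tail $\exp(-c'r\varepsilon)$ is ``only better.'' There $\varepsilon<\varepsilon^2$, so this tail is weaker than $\exp(-c'r\varepsilon^2)$. The union bound over $9^j$ net points would then need $r\gtrsim\varepsilon^{-1}(j+\log(1/\delta))$ rows, which the hypothesis does not supply.

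This is not just slack in your proof. Take $j=1$ and $\delta=e^{1-\varepsilon^2/c}$; the hypothesis then allows $r=1$. The failure probability is $\Pb{g^2>1+\varepsilon}$ with $g\sim\mathcal{N}(0,1)$, which is about $e^{-\varepsilon/2}$ up to a polynomial factor. That is far above $\delta$ for all sufficiently large $\varepsilon$. So the scaled Gaussian does not satisfy the lemma as literally stated for all $\varepsilon>0$.

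The fix is to restrict to $\varepsilon\in(0,1]$, as Woodruff's theorem does; this suffices for every use in the paper. If you want the literal statement, handle $\varepsilon\ge1$ separately, where it is trivial. The quantifiers let $\mathcal{D}$ depend on $\varepsilon$, and $S=0$ gives $\|U^TU\|_2=1\le\varepsilon$.
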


We remark that the property of a subspace preserving sketch dates back to \citet{Sarlos06}, and $\mathcal{D}$ is often chosen to be a distribution over matrices with i.i.d. Gaussian or Rademacher entries, rescaled by a factor $1/\sqrt{r}$, \citep[cf.][]{ClarksonW09,Woodruff14}. We note that sparser projections also exist that can be computed faster at the cost of a negligibly worse target dimension. We refer the interested reader to appropriate references \citep{KaneN14,Cohen16,ChenakkodDDR24}. However, not all these variants yield advantages for the parameterizations considered in our paper; see \Cref{sec:frechetrunningtime} for more discussion.

\section{Terminal Embeddings for Polygonal Curves}\label{sec:mainfrechet}

We are given a set of $n$ polygonal curves of complexity at most $m$ lying in $d$-dimensional Euclidean space. We will denote by ${P}$ the set of input curves interpreted as the collection of all points lying on any of these input curves. We use $\mathcal{Q}_\ell $ to denote the set of all polygonal curves of complexity at most $\ell$.
For the sake of presentation, we split into a mapping $f$ acting on the input curves and $g$ acting on the query curves. Both map $\R^d \to \R^t$, so they can be combined to comply with \Cref{def:frechet}.
We wish to construct a terminal embedding of the curves in $P$ and query curves $\mathcal{Q}_\ell $ that preserves the \frechet{} distance with target dimension $t$. That is, we seek embeddings ${f,g\colon \R^d \to \mathbb{R}^t}$ such that for any curve $\sigma\in P$ and any query $\tau\in \mathcal{Q}_\ell $,
\begin{compactitem}
    \item $f$ maps $\sigma$ to a curve of the same complexity $m$ and $g$ maps $\tau$ to a curve of the same complexity $\ell$, and\vspace*{2pt}
    \item $d_F(f(\sigma),g(\tau)) = (1\pm \varepsilon)\; d_F(\sigma,\tau)$.
\end{compactitem}

\noindent
We restate our main theorem and prove it below.

\mainfrechet*

In fact, we will show a slightly more general result. We develop a dimension reduction that preserves distances from any point contained in line-segments of the input curves, to any point contained in the affine $1$-dimensional subspaces, each induced by a line in a set of queries (possibly, though not necessarily line-segments of a curve). To this end, for any line-segment in the query starting at point $q_1$ and ending at point $q_2$, we let $q$ be the affine $1$-dimensional subspace that contains $q_1$ and $q_2$. An entire set of $\ell$ affine $1$-dimensional subspaces in the query will be denoted $Q_\ell$.

Specifically, we wish to find mappings $f,g$ that act linearly on $P,Q_\ell$, and show that for any point $a$ contained in any line-segment of any curve in $P$ and any point $b$ contained in any affine $1$-dimensional subspace $q\in Q_\ell$, we have
\[
    \|f(a)-g(b)\|^2 = (1\pm \varepsilon)\cdot \|a-b\|^2.
\]
Notice that this immediately implies that
\[
    \|f(a)-g(b)\|^z = (1\pm \varepsilon)\cdot \|a-b\|^z,
\]
when we rescale $\varepsilon$ by a factor $O(z)$; we thus focus on $z=2$.

Because $f$ is linear on $P$ and $g$ on $Q_\ell$, they preserve the complexity of curves: the image of a query curve $Q_\ell$ is a polygonal curve with complexity (at most) $\ell$, and the image of an input curve in $P$ has complexity (at most) $m$.
Since all possible distances between points on two curves are hereby preserved up to $(1\pm\varepsilon)$, it preserves in particular any distance between points matched by any reparameterization of the curves. This finally implies that the supremum that determines the continuous Fr\'echet distance is preserved up to a $(1\pm \varepsilon)$ factor as well.

We start with the following technical lemma.

\begin{lemma}\label{lem:innerproduct}
Let $V\subseteq \mathbb{R}^d$ be an arbitrary set. For any set $S\subseteq \mathbb{R}^d$, 
there exists a set $S_V \subseteq V$, with $|S_V| = O(|S|\cdot \varepsilon^{-2})$ and a corresponding matrix $U$ with orthonormal columns that form a basis for the span of $S_V$, such that $\forall v \in V, s\in S\colon  |v^T (I-UU^T) s| \leq \varepsilon \cdot \|(I-UU^T) v\| \cdot \|s\|.$
\end{lemma}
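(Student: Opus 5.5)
The plan is a greedy construction with a potential-function argument. Start with $S_V=\emptyset$ and $U$ the empty $d\times 0$ matrix, so $UU^T=0$. As long as some pair $v\in V$, $s\in S$ violates the inequality
\[
|v^T(I-UU^T)s| > \varepsilon\,\|(I-UU^T)v\|\,\|s\|,
\]
add that $v$ to $S_V$. Then update $U$ by appending the unit vector $w/\|w\|$, where $w:=(I-UU^T)v$. This is one Gram--Schmidt step, so $U$ stays orthonormal and its columns still span $\mathrm{span}(S_V)$. When no violating pair remains, the stated property holds by construction. It therefore only remains to bound the number of iterations by $|S|\cdot\varepsilon^{-2}$.

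First, a violating $v$ always has $w\neq 0$, so the update is well defined. Indeed, if $w=0$ then $v\in\mathrm{span}(U)$ and $v^T(I-UU^T)=w^T=0$, so the left-hand side is $0$ and no violation can occur. Next, since $I-UU^T$ is a symmetric idempotent projection, we have
\[
v^T(I-UU^T)s = w^T s = w^T(I-UU^T)s .
\]

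For the potential, track for each $s\in S$ the residual $\Phi_s:=\|(I-UU^T)s\|^2$. Initially $\Phi_s=\|s\|^2$, and $\Phi_s\ge 0$ always. Appending the direction $u=w/\|w\|$, which is orthogonal to the current span, lowers each residual by exactly $(u^Ts)^2$. So $\Phi_s$ is non-increasing for every $s$. For the violating $s$ in the current iteration, the decrease is
\[
(u^Ts)^2 = \frac{(w^Ts)^2}{\|w\|^2} = \frac{(v^T(I-UU^T)s)^2}{\|(I-UU^T)v\|^2} > \varepsilon^2\|s\|^2 .
\]
Each $s$ has total budget $\|s\|^2$ and loses more than $\varepsilon^2\|s\|^2$ every time it is the violating element. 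Hence each $s$ can be the violating element fewer than $\varepsilon^{-2}$ times. Summing over $s\in S$, the process stops after at most $|S|\varepsilon^{-2}$ iterations, which gives $|S_V|=O(|S|\varepsilon^{-2})$. Elements $s=0$ never violate the inequality and can be ignored.

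The only delicate point is the normalization. The decrease in $\Phi_s$ has to be compared against $\|s\|^2$, the initial value, and not against the current residual; otherwise the count would degrade to a logarithmic factor. Dividing by $\|w\|^2=\|(I-UU^T)v\|^2$ is exactly what cancels the factor $\|(I-UU^T)v\|$ in the violation condition. This is why the multiplicative form in the statement gives a clean additive $\varepsilon^2\|s\|^2$ decrease per violation. I expect no further obstacles: $V$ may be infinite, but this does not matter since each step only needs the existence of a violator. The argument uses no earlier result beyond basic orthogonal projection identities.
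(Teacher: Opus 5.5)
Your proposal is correct and follows essentially the same argument as the paper: a greedy Gram--Schmidt construction that appends the normalized residual $(I-UU^T)v$ of a violating $v$, combined with the Pythagorean (potential) bound showing that each $s$ can trigger fewer than $\varepsilon^{-2}$ additions. The only difference is cosmetic: the paper normalizes $\|s\|=1$ up front, while you carry $\|s\|^2$ as the budget explicitly and also spell out the well-definedness details ($w\neq 0$, $v^T(I-UU^T)s=w^Ts$).
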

\begin{proof}
Without loss of generality, let the vectors $s\in S$ have unit Euclidean norm. The claim then follows for arbitrary vectors by scaling.
We initialize $S_0 = \emptyset$ and proceed in rounds. For a set $S_i$, we define a matrix $U_i$ with orthonormal columns that form a basis for the span of $S_i$.
If in round $i$, there exists a vector $v_i\in V$ such that 
$$|v_i^T (I-U_{i-1}U_{i-1}^T) s| > \varepsilon \cdot \|(I-U_{i-1}U_{i-1}^T) v_i\|,$$
we say that there was a violation for $s$ in round $i$, and add the point ${u_i}/{\|u_i\|}$ to $U_{i-1}$, where $u_i= (I-U_{i-1}U_{i-1}^T)v_i$. Suppose we have $r$ rounds of violations for $s$.
Note that $u_i$ is orthogonal to the span of all previously added vectors.
Thus, due to the Pythagorean theorem, we have
\begin{align*}
1 &= \|s\|^2 \geq \|U_r s\|^2 \geq \!\sum_{\substack{\text{violation in} \\ \text{round } i\in \{1\ldots r\} }}\!\! \left(\frac{|v_i^T (I-U_{i-1}U_{i-1}^T) s|}{\|(I-U_{i-1}U_{i-1}^T) v_i\|}\right)^2 >\, r \cdot \varepsilon^2.
\end{align*}
Thus, we have that $r\leq \varepsilon^{-2}$ for any single $s\in S$. Overall, this implies that after at most $|S|\cdot \varepsilon^{-2}$ many rounds of violations, the claim of our lemma must hold.
\end{proof}
Our goal will be to find a low-dimensional subspace and the associated projection matrix $\Pi$ that projects points onto this subspace with the properties given in the following lemma.
\begin{restatable}{lemma}{lemcentroidset}
\label{lem:centroidset}
Let $P$ be a set of $n$ curves of complexity at most $m$. For any fixed query $Q_\ell$ of at most $\ell$ affine $1$-dimensional subspaces, there exists a projection matrix $\Pi$ onto the span of at most $O(\ell \varepsilon^{-2})$ line-segments of curves in $P$ such that for all $a\in P$ and any $b\in q$ for $q \in Q_\ell$, it holds that $\|a-b\|^2 = {(1\pm \varepsilon)}\left(\|\Pi(a-b)\|^2 + \|(I-\Pi)a\|^2 + \|(I-\Pi)b\|^2\right)$
\end{restatable}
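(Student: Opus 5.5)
The plan is to expand $\|a-b\|^2$ along the orthogonal decomposition induced by $\Pi$ and reduce the whole statement to bounding one cross term. Write $a-b = \Pi(a-b) + (I-\Pi)a - (I-\Pi)b$. The first summand is orthogonal to the other two, so
\[
\|a-b\|^2 = \|\Pi(a-b)\|^2 + \|(I-\Pi)a\|^2 + \|(I-\Pi)b\|^2 - 2\, a^T (I-\Pi) b .
\]
It therefore suffices to choose $\Pi$, a projection onto the span of $O(\ell\varepsilon^{-2})$ line-segments of $P$, such that for all $a\in P$ and all $b\in q\in Q_\ell$
\[
|a^T(I-\Pi)b| \le \varepsilon\, \|(I-\Pi)a\|\,\|(I-\Pi)b\| .
\]
By AM--GM, $2\varepsilon xy\le \varepsilon(x^2+y^2)$, so the cross term is at most $\varepsilon$ times the bracket and the $(1\pm\varepsilon)$ statement follows after rescaling $\varepsilon$. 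Because the target is linear in $a$, it is enough to control $a$ ranging over the linear span of each segment of $P$ (two vectors per segment). Likewise, $b$ ranges over a linear subspace $W$ of dimension at most $2\ell$ containing all affine lines of $Q_\ell$.

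The first step is to apply \Cref{lem:innerproduct} with $V$ the set of points on segments of $P$ (so $S_V$ consists of points of $P$, and their span lies in the span of $O(|S_V|)$ segments) and with $S$ an orthonormal basis of $W$, so $|S|\le 2\ell$. This gives $|a^T(I-UU^T)s|\le \varepsilon\|(I-UU^T)a\|\|s\|$ with $|S_V|=O(\ell\varepsilon^{-2})$. This is not yet the target: it has $\|b\|$ in place of $\|(I-\Pi)b\|$, which would only give additive error when $b$ is nearly inside the span of $U$.

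The main obstacle is upgrading $\|b\|$ to $\|(I-\Pi)b\|$. I would first try rerunning the greedy argument of \Cref{lem:innerproduct} adaptively. Declare a violation in round $i$ if there exist $a\in P$ and a unit vector $w\in (I-U_{i-1}U_{i-1}^T)W$ with $|a^T w| > \varepsilon\|(I-U_{i-1}U_{i-1}^T)a\|$. On a violation, add $u_i=(I-U_{i-1}U_{i-1}^T)a$ normalized; it is orthogonal to everything chosen so far. Each violation means the new direction captures at least an $\varepsilon^2$ fraction of the squared norm of some unit vector of the current residual subspace $(I-U_{i-1}U_{i-1}^T)W$, which has dimension at most $2\ell$. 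For the potential I would use the trace of the projection onto the current residual of $W$, or equivalently the sum of squared sines of the principal angles between $W$ and $\mathrm{span}(U_i)$, and show it drops by $\Omega(\varepsilon^2)$ per violation. That would give $O(\ell\varepsilon^{-2})$ rounds.

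The delicate point is that $w$ is renormalized after projection, so its preimage in $W$ may have small norm and the drop in an unnormalized potential could be tiny. If the principal-angle potential does not directly yield a constant-$\varepsilon^2$ drop, I would bucket residual directions by scale, since $\|(I-U U^T)w'\|$ lies in dyadic ranges, and bound the number of violations per scale. I would then argue that only $O(1)$ scales matter, because directions with residual below $\varepsilon\|w'\|$ already satisfy the bound via the first application of \Cref{lem:innerproduct}.

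Once the target inequality holds, setting $\Pi=U_rU_r^T$ and combining with the expansion and AM--GM completes the proof. If the adaptive potential argument gives an extra logarithmic factor, I would accept it and absorb it into the $O(\cdot)$.
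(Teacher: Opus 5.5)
\textbf{Gap: the intermediate target is unattainable.} Your expansion and the AM--GM step are fine, but the sufficient condition you reduce to, $|a^T(I-\Pi)b|\le\varepsilon\|(I-\Pi)a\|\,\|(I-\Pi)b\|$ for all $a\in P$ and $b\in q$, cannot be met with $O(\ell\varepsilon^{-2})$ segments, or with any number independent of $n$. So the adaptive greedy step cannot be completed.

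Here is a counterexample. Take $\ell=1$, orthonormal $e_0,\dots,e_{N+1}$, and input curves the segments $[e_i,2e_i]$ for $0\le i\le N$. Let the query line be $q=w+\mathbb{R}e_{N+1}$ with $w=\sum_{i=0}^N c_ie_i$, $c_0=1$, $c_i=\eta 2^{-i}$, $0<\eta\le 1$. Every admissible $\Pi$ projects onto $\mathrm{span}\{e_i:i\in T\}$ for some $T$. If $T\neq\{0,\dots,N\}$, let $j$ be the smallest index not in $T$, and take $a=e_j$, $b=w$. Then $a^T(I-\Pi)b=c_j$, $\|(I-\Pi)a\|=1$ and $\|(I-\Pi)b\|^2\le\frac43 c_j^2$, so the ratio is at least $\sqrt3/2$. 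Hence all $n=N+1$ segments are forced.

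The lemma itself, however, already holds with $T=\{0\}$ when $\eta\le\varepsilon$. Indeed, $|a^T(I-\Pi)b|\le\frac{\eta}{2}\|(I-\Pi)a\|^2$ for every $a\in P$ and $b\in q$. This is exactly your ``delicate point'': the residual of $w$ shrinks geometrically, so any unnormalized potential drops only by $\eta^2 4^{-j}$ per round, while every round is a genuine violation.

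The bucketing fallback fails for the same reason. It assumes small-residual directions are ``already handled'' by the first application of \Cref{lem:innerproduct}. That application only gives $\varepsilon\|(I-\Pi)a\|\,\|b\|$, and here $\|(I-\Pi)b\|\approx\eta/\sqrt3\ll\|b\|$, which is precisely the additive-error regime you identified yourself. An extra logarithmic factor would not help either, since the required count is linear in $n$.

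\textbf{Missing idea: normalize by $\|a-b\|$ and use the affine structure.} Normalize the cross term by $\|a-b\|$ instead of $\|(I-\Pi)b\|$. It suffices to show $|a^T(I-\Pi)b|\le\varepsilon\|(I-\Pi)a\|\,\|a-b\|$, because $\|a-b\|^2$ equals the bracket minus twice the cross term, so AM--GM still absorbs it. Making $\|a-b\|$ appear requires the affine structure that your linear treatment of $W$ discards.

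The paper does this as follows.
\begin{itemize}
\item For each $q\in Q_\ell$, it adds the segment through a point $a_0\in P$ closest to $q$. Then $(I-\Pi)a_0=0$, and the cross term becomes $(a-a_0)^T(I-\Pi)(b-a_0)$.
\item It writes $b-a_0=(b_0-a_0)+\alpha v$, with $b_0$ the projection of $a_0$ onto $q$.
\item It splits $\alpha v=v(\alpha-v^T(a-a_0))+vv^T(a-a_0)$.
\end{itemize}
Now $\|b_0-a_0\|\le\|a-b\|$ and $|\alpha-v^T(a-a_0)|\le 2\|a-b\|$. After also adding the input direction $u$ best aligned with $v$, one gets $\|(I-uu^T)v\|\,|v^T(a-a_0)|\le\|(I-vv^T)(a-a_0)\|\le 2\|a-b\|$.

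Thus \Cref{lem:innerproduct} only needs to be applied to the $O(\ell)$ fixed vectors $b_0-a_0$, $v$, $(I-uu^T)v$. In each case, $\|s\|$ times the accompanying scalar is $O(\|a-b\|)$. This yields $O(\ell\varepsilon^{-2})$ segments.
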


This key lemma is proven in \Cref{sec:omitted}. Some technical details may differ slightly from the high-level idea discussed in \Cref{par:techoverview}. The following result leans itself immediately.

\begin{corollary}
\label{cor:centroidset}
Let $P$ be a set of $n$ curves of complexity at most $m$. Then there exists a collection $\mathcal C$ of at most $|\mathcal C|=\exp(O(\ell \varepsilon^{-2} \log(nm)))$ projection matrices with the following properties.
\begin{compactenum}
    \item Each projection matrix $\Pi\in\mathcal{C}$ maps to a subspace that lies in the span of at most $O(\ell \varepsilon^{-2})$ line-segments of $P$.\label{item:size}\vspace*{2pt}
    \item For any set $Q_\ell$ of at most $\ell$ affine $1$-dimensional subspaces, there exists $\Pi\in \mathcal C$ satisfying the properties asserted by \Cref{lem:centroidset}.\label{item:prop}\vspace*{-2pt}
\end{compactenum}
\end{corollary}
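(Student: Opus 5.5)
The plan is to define $\mathcal C$ directly as the collection of all projection matrices onto spans of small subsets of line-segments of $P$. Property \ref{item:prop} then follows from \Cref{lem:centroidset}, and we only need to count.

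The input curves consist of at most $nm$ line-segments in total. By \Cref{lem:centroidset}, there is an absolute constant $c$ such that, for every fixed query $Q_\ell$, the asserted projection $\Pi$ projects onto the span of a set $T$ of at most $k := c\,\ell\varepsilon^{-2}$ line-segments of curves in $P$. We therefore define
\[
\mathcal C := \left\{ \Pi_T : T \text{ a set of at most } k \text{ line-segments of } P \right\},
\]
where $\Pi_T$ is the orthogonal projection onto the linear span of all points on the segments in $T$. Equivalently, $\Pi_T$ projects onto the span of the at most $2|T|$ endpoints, since each segment lies in the span of its two endpoints.

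Property \ref{item:size} holds by construction. For property \ref{item:prop}, fix any $Q_\ell$. \Cref{lem:centroidset} provides a $\Pi$ satisfying the required distance guarantee, and this $\Pi$ projects onto the span of some set $T$ with $|T|\le k$. The orthogonal projection onto a given subspace is unique, so $\Pi=\Pi_T\in\mathcal C$.

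For the size bound, we count subsets:
\[
|\mathcal C| \le \sum_{j=0}^{k} \binom{nm}{j} \le (nm+1)^{k} = \exp\big(O(\ell\varepsilon^{-2}\log(nm))\big).
\]

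The one point to check is that the subspace in \Cref{lem:centroidset} is determined solely by which segments it spans. It must not depend on extra continuous data, such as points chosen inside segments or directions taken from the query. If the proof of \Cref{lem:centroidset} instead projected onto the span of selected points inside segments, as in the greedy procedure of \Cref{lem:innerproduct}, I would enlarge that subspace to the span of the full segments containing those points. Enlarging the subspace costs at most a factor $2$ in dimension, so the size bound is unchanged. I would then verify that the guarantee of \Cref{lem:centroidset} survives the enlargement. This is plausible because that guarantee depends only on the subspace capturing the relevant input directions. It is the step I expect to need care, and otherwise the corollary is pure counting.
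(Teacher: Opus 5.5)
Your proof is correct and is essentially the paper's argument. $\mathcal C$ consists of the orthogonal projections onto spans of $O(\ell\varepsilon^{-2})$ input segments, property 2 is inherited from \Cref{lem:centroidset}, and the size bound is a binomial count. The paper writes $nm\binom{nm}{\ell\varepsilon^{-2}}$, where the extra factor $nm$ pays for the segment containing $a_0$ (the input point closest to the query line); your $\sum_{j\le k}\binom{nm}{j}$ already absorbs this.

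On the step you flagged: as written, the proof of \Cref{lem:centroidset} does select points via \Cref{lem:innerproduct}, not segments. The paper resolves this by adding the whole segment (offset and direction) of the chosen point inside each greedy round of \Cref{lem:innerproduct}. Since the segment of $a_0$ is in the span from the start, the enlarged span still contains $u_i=(I-U_{i-1}U_{i-1}^T)v_i$. Hence the $u_i$ remain mutually orthogonal, and each round still increases $\|U^Ts\|^2$ by more than $\varepsilon^2\|s\|^2$ for the violated $s$. The round count is therefore unchanged, and the stopping rule certifies the inequality for the final, enlarged subspace.

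Your alternative, enlarging after the fact and arguing that the guarantee survives because it only needs the subspace to capture the relevant input directions, does not work as justified: the guarantee is not monotone in the subspace. Take the input $\{e_1+ue_2 : |u|\le\varepsilon\}\cup\{(0,1,1)\}$ and the query line $\{(t,1,-1)\}$. Projecting onto $\operatorname{span}(e_1)$ satisfies the guarantee. Projecting onto the segment's span $\operatorname{span}(e_1,e_2)$ fails at $a=(0,1,1)$, $b=(0,1,-1)$, where $\|a-b\|^2=4$ while the right-hand side is $(1\pm\varepsilon)\cdot 2$.
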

\begin{proof}
There are $\binom{n\cdot m}{\ell\cdot \varepsilon^{-2}}$ many choices of $\Pi$, as we add both offset and direction of the affine line containing a point added in any repetition resp. iteration of \Cref{lem:innerproduct}. Therefore, there are $(n\cdot m)\cdot \binom{n\cdot m}{\ell\cdot \varepsilon^{-2}} \leq \exp(O(\ell\cdot \varepsilon^{-2}\cdot \log(nm)))$ possible subspaces overall.
\end{proof}
We are finally ready to prove our main result:
\begin{proof}[Proof of \Cref{thm:mainfrechet}]
The embedding $f$ acts only on the input $P$. It consists of a subspace preserving random projection matrix $S$ (see \Cref{lem:subspace}) with a target dimension $r$ that we will specify later. We denote these $r$ coordinates by $C_1$.
For the mapping $g$ that acts on the query curves $Q_\ell$, we have $2\ell$ ancillary coordinates denoted by $C_2$. 
The total embedding target dimension is thus $t=r+2\ell$.

Now, we describe how to accomplish the embedding $f\colon P \to \mathbb{R}^{t}$. Any input point $a\in\R^d$, that is, any point contained in one of the input line-segments of curves in $P$, is mapped by $f$ to $Sa$ for the $r$ coordinates in $C_1$ and to $0$ for all ancillary coordinates in $C_2$.

For the embedding $g\colon \mathbb{R}^d\to \mathbb{R}^{t}$ the construction is more intricate. 
For any fixed curve $Q_\ell \in \mathcal{Q}_\ell $, \Cref{item:size} of \Cref{cor:centroidset} yields the existence of a subspace $\Pi\in\mathcal{C}$ spanned by at most $O(\ell\cdot \varepsilon^{-2})$ points.

Then $g$ maps the points $b\in Q_\ell$ lying on line-segments of $Q_\ell$ via
$S\Pi b$ for the coordinates in $C_1$. In addition, we map $(I-\Pi) b $ to the coordinates in $C_2$.

To see that both mappings are (piece-wise) linear, note that $f$ is merely a random linear projection of the input curves in $P$. For the query curve $Q_\ell$, and any $b\in Q_\ell$, the mapping $g$ consists of two projections $S\Pi b$ and $(I-\Pi) b$ that are both linear mappings to the coordinates $C_1$ and $C_2$, respectively.

\paragraph{Error Guarantee.} Assuming that $S$ is a subspace preserving sketch (see \Cref{lem:subspace}) simultaneously for all $\Pi\in\mathcal{C}$ and for every line-segment in $P$ containing a point $a$, then
\begin{align}
  \|S(a-\Pi b)\|^2 &= (1\pm \varepsilon)\cdot \|a -\Pi b\|^2 \nonumber \\
  \label{eq:subspace}
  &= (1\pm \varepsilon)\cdot \left(\|(I-\Pi) a \|^2 +\|\Pi(a-b)\|^2\right).  
\end{align} 

By \Cref{item:prop} of \Cref{cor:centroidset} each $\Pi\in\mathcal C$ satisfies the equation asserted in \Cref{lem:centroidset}.

Moreover, $(I-\Pi) b $ is contained in a $2\ell$-dimensional subspace, so our mapping of $(I-\Pi) b$ to the ancillary coordinates $C_2$ preserves $\|(I-\Pi)b\|^2$ exactly.
We thus conclude

\begin{align*}
  \|f(a)-g(b)\|^2
  &\;\,= \;\|S(a-\Pi b)\|^2 + \|(I-\Pi)b\|^2 \\
  &\;\overset{\eqref{eq:subspace}}{=} \;(1\pm \varepsilon) \left(\|(I-\Pi)a\|^2 +\|\Pi(a-b)\|^2\right) + \|(I-\Pi)b\|^2 \\
  &\overset{\eqref{lem:centroidset}}{=} (1\pm \varepsilon)^2\, \|a-b\|^2
  = (1\pm 3\varepsilon)\, \|a-b\|^2\,. 
\end{align*}
The desired $(1\pm \varepsilon)$ approximation now follows by rescaling $\varepsilon$ by a factor $3$.
The conclusion for preserving the \frechet{} distance follows because the distances between any points on two curves that can possibly be mapped to each other via reparameterizations, are a subset of the distances preserved up to a factor $(1\pm\varepsilon)$ by our embedding; the supremum that determines the \frechet{} distance is thus also preserved.

\paragraph{Target Dimension.} Our arguments require a subspace preserving sketch (\Cref{lem:subspace}) for all segments of the input curves in $P$. There are $O(nm)$ segments, and each of them lies in a $2$-dimensional linear subspace. Using \Cref{lem:subspace} and a union bound, we thus need only $O(\varepsilon^{-2}\log(nm/\delta))$ dimensions to embed all of them. Simultaneously, the subspace preserving sketch is required to hold for all possible subspaces of dimension bounded by $O(\ell\varepsilon^{-2})$ defined by $\Pi\in\mathcal{C}$ (see \Cref{cor:centroidset}), which clearly dominates the embedding dimension.
Setting the target dimension to 
$$r \geq c\cdot \varepsilon^{-2}\cdot\left(\ell\cdot \varepsilon^{-2} + \log \left( \exp(\ell\cdot \varepsilon^{-2}\cdot \log(nm))/\delta  \right) \right)$$ for some absolute constant $c$ and using \Cref{lem:subspace} then guarantees that \Cref{eq:subspace} holds for all choices of $\Pi\in\mathcal{C}$ and all choices of $a\in P$ and $b\in Q_\ell \in \mathcal{Q}_\ell $ with probability $1-\delta$. 
Combining this with the $2\ell$ coordinates of $C_2$ ultimately yields the desired target dimension $t\in O\left(\ell\cdot \varepsilon^{-4} \log(nm)+ \varepsilon^{-2} \log(\delta^{-1})\right)$.
\end{proof}

\paragraph{Running Time.}\label{sec:frechetrunningtime}
We first emphasize the fact that most applications of terminal embeddings, in particular their application to coresets we focus on below, only require the \emph{existence} of such an embedding and do not need an explicit construction. Nevertheless, it is possible to obtain the embeddings with reasonable efficiency.
We state the running time for the specific range of parameters we studied here.
\begin{proposition}
Let $P$ be a set of $n$ polygonal curves of complexity $m$. Consider the mappings $f,g$ realizing the terminal embedding of \Cref{thm:mainfrechet}.
The mapping $f(a)$ for any point $a\in P$, can be computed in time $O(d\ell\eps^{-4} \log(nm))$.
For a given query polygonal curve $Q_\ell$ of complexity $\ell$, the mapping $g$ can be precomputed in time $O(nmd\ell^2\varepsilon^{-6} \log(nm))$.
Then $g(b)$ for any $b\in Q_\ell$ can be computed in time $O(d\ell\eps^{-2})$.
\end{proposition}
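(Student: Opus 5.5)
The running-time statement splits naturally into three costs, which I would bound separately: applying $f$, precomputing $g$ for a query, and evaluating $g$. For $f$, the map is $a\mapsto Sa$ padded with zeros on the $2\ell$ ancillary coordinates $C_2$. Here $S$ is a dense Gaussian or Rademacher $r\times d$ matrix with $r=O(\ell\eps^{-4}\log(nm))$, taking $\delta$ constant or polynomially small as in the proof of \Cref{thm:mainfrechet}. A matrix--vector product therefore costs $O(rd)=O(d\ell\eps^{-4}\log(nm))$, which is the first claim. I would also note that $f$ is linear on each segment, so mapping a segment only requires mapping its two endpoints.

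For the precomputation of $g$, I would make the greedy procedure from the proof of \Cref{lem:innerproduct} algorithmic, as it is used inside \Cref{lem:centroidset}. We maintain an orthonormal basis $U$, which by \Cref{lem:centroidset} never exceeds $k=O(\ell\eps^{-2})$ columns. Each round searches for a violating pair: a segment of $P$ (one of $nm$, each spanning a $2$-dimensional direction/offset space) together with a query line. On finding one, we add a new vector with a single Gram--Schmidt step. Checking a violation for one segment against the current $U$ means projecting a constant number of vectors onto $U^\perp$. Done naively this costs $O(dk)$, which gives $O(nm\,dk)$ per round and $O(nm\,d k^2)=O(nmd\ell^2\eps^{-4})$ over all rounds. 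That is already within the bound.

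The stated bound carries an additional $\eps^{-2}\log(nm)$ factor, which I would charge to forming $S\Pi$, or equivalently $SU$. Computing $SU$ costs $O(rdk)=O(d\ell^2\eps^{-6}\log(nm))$. I would store $SU$ together with $U$, and also an orthonormal basis $W$ of the at most $2\ell$-dimensional space containing $(I-UU^T)b$ for $b$ on the query lines; this $W$ costs $O(d\ell k)$ to build. The main obstacle I expect is deciding violations exactly. The condition in \Cref{lem:innerproduct} is a supremum over the continuum of points $v$ on the query lines and $s$ on input segments. I would reduce it to a finite check: restricted to a query line and an input segment (each inside a $2$-dimensional linear span), the condition becomes a ratio of quadratic forms on $2\times2$ Gram-type matrices. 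Its maximum is then a small generalized eigenvalue problem, computable in $O(dk)$ time once the projections are available. If this exact reduction turns out to be delicate, the fallback is to test only the basis vectors and accept a constant-factor loss in $\eps$, which the bound absorbs anyway. Summing these costs gives the precomputation bound $O(nmd\ell^2\eps^{-6}\log(nm))$.

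Evaluating $g(b)=(S\Pi b,(I-\Pi)b)$ then costs $O(dk)$ for $U^Tb$, followed by a multiplication by the stored $SU$. That multiplication costs $O(rk)$, which a naive count does not keep inside the target. To fix this, I would precompute the images of the query line endpoints, so that evaluating $g$ on a segment reduces to linear interpolation, or equivalently express $b$ in the low-dimensional coordinates. The ancillary part $W^T(I-UU^T)b$ costs $O(d\ell+dk)$. Together these give $O(d\ell\eps^{-2})$ per point, matching the statement.
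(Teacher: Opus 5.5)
Your three-part cost accounting matches the paper's. $f(a)=Sa$ costs $O(rd)$. The greedy of \Cref{lem:innerproduct} runs for $O(\ell\eps^{-2})$ rounds, each scanning the $nm$ input segments at $O(d\ell\eps^{-2})$ per segment, for $O(nmd\ell^2\eps^{-4})$ in total. Forming $SU$ costs $O(rd\ell\eps^{-2})=O(d\ell^2\eps^{-6}\log(nm))$, which accounts for the extra $\eps^{-2}\log(nm)$ in the stated bound. Your explicit basis $W$ for the $2\ell$ ancillary coordinates and your violation check over whole segments (rather than $nm$ discrete candidates) refine details the paper glosses over, and both fit in the budget.

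The gap is in the last step. The paper bounds $(SU)(U^Tb)$ by $O(d\ell\eps^{-2}+t\ell\eps^{-2})$ and absorbs the second term with the assumption $t<d$ (otherwise there is no reduction). Under that assumption your naive $O(rk)$ count is already within the target. Your interpolation trick does not remove the need for such an assumption, because merely writing down $g(b)\in\R^{r+2\ell}$ costs $\Omega(r)$. Moreover, $r=\Theta(\ell\eps^{-4}\log(nm))$ exceeds $d\ell\eps^{-2}$ whenever $d<\eps^{-2}\log(nm)$. So your closing claim that the pieces ``give $O(d\ell\eps^{-2})$'' silently uses $r=O(d\ell\eps^{-2})$. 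State $t<d$ explicitly; interpolation then becomes optional, though it does lower the $C_1$ part of the per-point cost to $O(d+r)$.

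Separately, your obstacle paragraph reverses the roles of $v$ and $s$. In \Cref{lem:innerproduct} as used for \Cref{lem:centroidset}, the continuum is the input side $V$: the points on input segments, from which the columns of $U$ are drawn. The set $S$ is a finite collection of $O(\ell)$ query-derived vectors, namely $b_0-a_0$ and the line directions. This orientation matters for correctness, not just cost. $U$ must be spanned by input segments so that $\Pi\in\mathcal C$, which is what guarantees $S$ is a subspace embedding for it. Your own round description and the column bound from \Cref{lem:centroidset} already presuppose this orientation.
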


\begin{proof}
    The mapping $f(a)$ can be realized as an oblivious JL-transform $Sa$ in time $O(td)=O(d\ell\eps^{-4}\log(nm))$.
    To compute the embedding $g$ for a specific query $Q_\ell$ consisting of up to $\ell$ line-segments (or a collection of $\ell$ affine lines), the embedding algorithm needs to construct $\Pi=UU^T$ as asserted by \Cref{lem:centroidset} which works by executing the steps of Lemma \ref{lem:innerproduct} for $nm$ choices of $v$, and $\ell$ choices of $s$.
    We thus have at most $\ell$ repetitions with $r\leq \eps^{-2}$ iterations each, in which we update $U_i$. In each iteration, we first compute the point $p_i = \operatorname{argmax}_{v_i} \frac{v_i^T (I-U_{i-1}U_{i-1}^T) s}{\|(I-U_{i-1}U_{i-1}^T) v_i\|}$, which takes time $O(nm d i)$, since the rank of $U_{i-1}$ is $i-1$.    
    Subsequently, we update the basis $U_i$ to the subspace spanned by $U_{i-1}$ and $u_i = \frac{(I-U_{i-1}U_{i-1}^T)p_i}{\|(I-U_{i-1}U_{i-1}^T)p_i\|}$.
    This takes time $O(d i)$, as we merely have to compute $(I-U_{i-1}U_{i-1}^T)p_i$ and normalize. The running time $O(nmd\ell^2\eps^{-4})$ for the construction of $U$ then follows by summing these terms up over all $\ell\eps^{-2}$ iterations. Now we can precompute $SU$ in time $O(td\ell\eps^{-2})=O(d\ell^2\eps^{-6}\log(nm))$ which leads to $O(O(nmd\ell^2\varepsilon^{-6} \log(nm)))$ overall.
    Using standard multiplication, computing $S\Pi b = (SU)(U^Tb)$ takes time $O(d\ell\eps^{-2}+t\ell\eps^{-2})$, and $(I-\Pi)b = b-U(U^Tb)$ takes time $O(d\ell\eps^{-2}+d)$. Since $t<d$, this results in a total running time of $O(d\ell\eps^{-2})$.
\end{proof}

The proof focuses on dense-JL and standard matrix multiplication for simplicity; using improved JL-constructions or fast matrix multiplication improves the performance straightforwardly. The dependence on $\varepsilon$, for instance, can be improved to obtain a running time of $O(\varepsilon\cdot ndt)$ using sparse JL-transforms \citep{KaneN14,schwiegelshohn23}. Sparse
subspace embeddings can also
be applied, if the failure probability is small enough \citep{Cohen16,ChenakkodDDR24}. 
We remark that $O(1)$-sparse embeddings \citep{ClarksonW13,MengM13,nelson2013osnap} do not apply as they do not allow a union bound over exponentially many subspaces.

\paragraph{Application to Coresets for Clustering under the \frechet{} Distance.}
As discussed before and can be seen in \Cref{tbl:frechetcoresets}, a direct application of our \Cref{thm:mainfrechet} to remove the dimension dependence in the best previous coreset construction of \citet{Cohen-AddadD0SS25} yields $\tilde{O}(k\cdot \varepsilon^{-5-z} \cdot \ell^2 \cdot \log^2 m)$. 
The proof of \cref{thm:frechetcoreset} improves this to $\tilde{O}(k \cdot \varepsilon^{-2-z} \cdot \ell^2 \cdot \log^2 m)$, optimal in both, $k, \eps$, by integrating our terminal embedding more directly in an improved chaining coreset construction framework, originally due to \citet{Cohen-AddadSS21}. The extensive details -- in particular, how to build the \emph{approximate centroid set} that allows to discretize the set of possible centers even further, required to apply the chaining framework -- are entirely in \Cref{sec:continuous,sec:coreset_framework}.

\section{Experimental Illustration}

\paragraph*{Computing Device.} All experiments were conducted on a commodity laptop with Intel$^\text{(R)}$ Core$^\text{(TM)}$ i7-8550U CPU, 4 cores at 1.80 GHz, 16 GB DDR4-3200 RAM.

\begin{figure*}[ht]
  \vskip 0.3in
  \begin{center}    \centerline{\includegraphics[width=.94\textwidth]{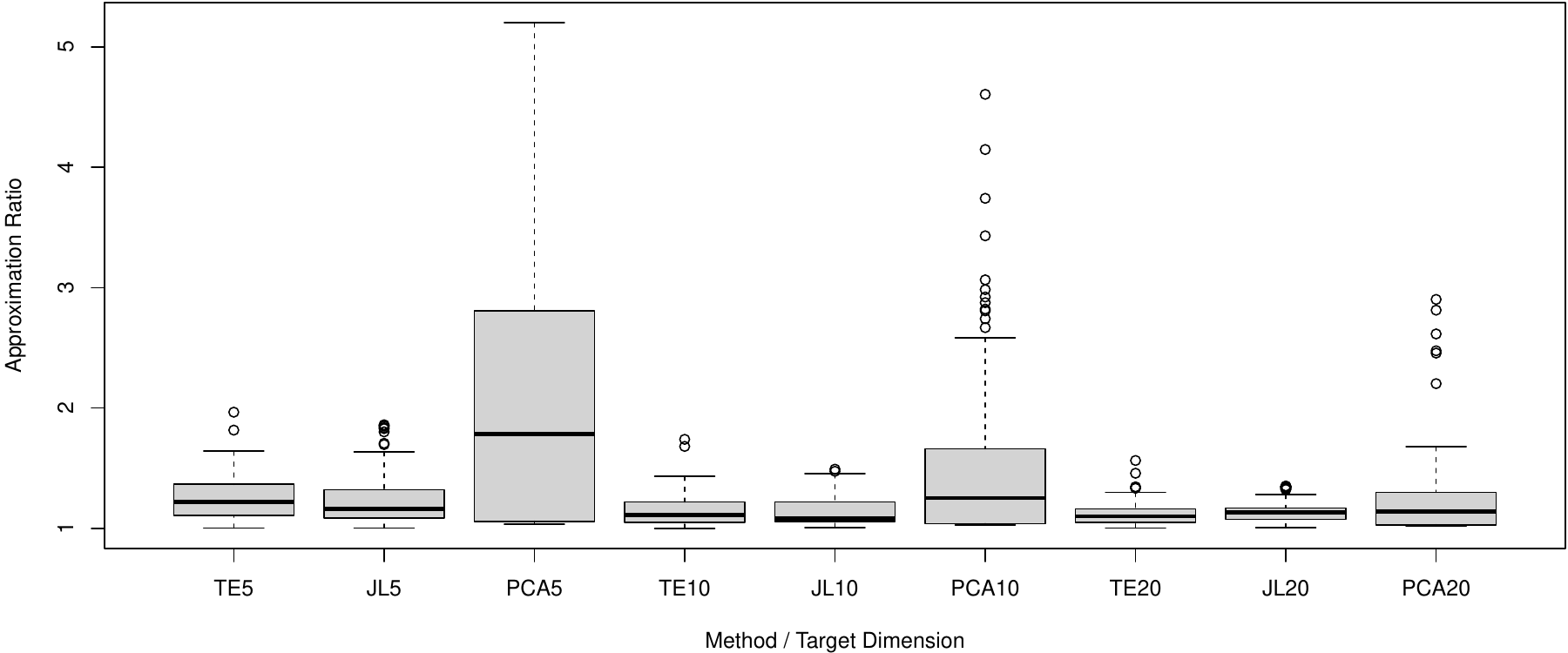}}
    \caption{
      Comparison of the approximation ratios (lower is better) of 120 \frechet{} distances between time-series, reduced by TE vs. JL and PCA to target dimensions $t\in\{5,10,20\}$.
    }
    \label{fig:boxplot}
  \end{center}
  \vskip -0.3in
\end{figure*}

\paragraph*{Dataset.} We use real-world data that consists of greenhouse gas (GHG) concentrations measured at different sites across California \citep{Lucasetal15}. We construct polygonal curves, representing $n=16$ different GHG concentrations, measured in $m=327$ regular time periods, each taken simultaneously at $d=2921$ distinct locations. Additionally, to simulate query curves that are not part of the input, we generated $n=10$ curves, and a query curve, each of length $m=5$, in $d=50$ dimensions. All entries of the generated vertices were Gaussian.

\paragraph*{Methods.} We illustrate the performance of terminal embeddings (TE) as in \Cref{thm:mainfrechet} using the orthogonal projection construction of \Cref{lem:innerproduct} followed by a Johnson-Lindenstrauss (JL) embedding using matrices with i.i.d. Gaussian $N(0,1/\sqrt{t})$ entries reducing from the source dimension $d=50$ to a target dimension $t\in\{5,10,20\}$.
This is compared to the performance of standard JL for reducing all vertices of curves from the original dimension $d=2921$ to a target dimension $t\in\{5,10,20\}$. As another baseline, we used principal component analysis (PCA), reducing to the same target dimensions. For TE, we provide boxplots of the approximation factors of the \frechet{} distances from the query curve to the input curves. For the two baseline methods JL and PCA, we provide boxplots of the approximation factors of the \frechet{} distances across all pairs of input time-series.

\paragraph*{Research Question.} We remark that our terminal embeddings are based on subspace embeddings via JL-transforms \citep{JohnsonL1984}. JL-transforms are the de-facto standard randomized oblivious dimension reduction technique, while PCA \citep{Pearson1901} is the de-facto standard deterministic data-dependent dimension reduction technique not only for high-dimensional time-series data \citep[cf.][]{PaparrizosYL24}.\\

We thus address the following question:
\begin{center}
    \emph{How do the extended dimension reduction abilities of terminal embeddings (TE)\\perform in comparison to standard Johnson-Lindenstrauss (JL) transforms and \\in comparison to PCA for preserving the \frechet{} distance
    between time-series?}
\end{center}

\paragraph*{Results.} 
The results are displayed in \cref{fig:boxplot}. For each target dimension, we compare the resulting approximation ratios for TE against plain JL and PCA.
It can be seen that all three become better as the target dimension increases.
TE and JL perform consistently better than PCA across all dimensions, in particular TE and JL perform very well already at much lower target dimensions than PCA. 

TE perform very similarly, though slightly worse than plain JL at the same target dimension. In this context it must be noted that in contrast to plain JL and PCA, only TE can preserve the \frechet{} distance to arbitrary curves in ambient space. 
This comes at the price of the additional approximation component induced by the orthogonal projection which results in the slightly larger approximation ratios observed in \Cref{fig:boxplot} in comparison to plain JL. As shown in our proof of \Cref{thm:mainfrechet}, the errors of the orthogonal projection and the subsequent JL mapping accumulate only by $(1+\eps)^2 \leq (1+3\eps)$. The observed additional error (over plain JL) is thus bounded by a small constant and can be compensated by increasing the target dimension according to a constant rescaling of the approximation parameter $\eps$.

\section*{Acknowledgements}
C.S. is supported by a Google Research Award.
A.M. is supported by the German Research Foundation (DFG) -- grant MU 4662/2-1 (535889065) and by the TU Dortmund -- Center for Data Science and Simulation (DoDaS).

{\sloppy
\bibliography{literature}

\begin{thebibliography}{60}
\providecommand{\natexlab}[1]{#1}
\providecommand{\url}[1]{\texttt{#1}}
\expandafter\ifx\csname urlstyle\endcsname\relax
  \providecommand{\doi}[1]{doi: #1}\else
  \providecommand{\doi}{doi: \begingroup \urlstyle{rm}\Url}\fi

\bibitem[Alon and Klartag(2017)]{AlonK17}
Noga Alon and Bo'az Klartag.
\newblock Optimal compression of approximate inner products and dimension reduction.
\newblock In Chris Umans, editor, \emph{58th {IEEE} Annual Symposium on Foundations of Computer Science, {FOCS} 2017, Berkeley, CA, USA, October 15-17, 2017}, pages 639--650. {IEEE} Computer Society, 2017.
\newblock URL \url{https://doi.org/10.1109/FOCS.2017.65}.

\bibitem[Aronov et~al.(2006)Aronov, Har-Peled, Knauer, Wang, and Wenk]{aronov2006frechet}
Boris Aronov, Sariel Har-Peled, Christian Knauer, Yusu Wang, and Carola Wenk.
\newblock Fr{\'e}chet distance for curves, revisited.
\newblock In \emph{European symposium on algorithms {(ESA)}}, pages 52--63. Springer, 2006.
\newblock URL \url{http://doi.org/10.1007/11841036_8}.

\bibitem[Bansal et~al.(2024)Bansal, Cohen{-}Addad, Prabhu, Saulpic, and Schwiegelshohn]{BCPSS24}
Nikhil Bansal, Vincent Cohen{-}Addad, Milind Prabhu, David Saulpic, and Chris Schwiegelshohn.
\newblock Sensitivity sampling for k-means: worst case and stability optimal coreset bounds.
\newblock In \emph{65th {IEEE} Annual Symposium on Foundations of Computer Science, {FOCS} 2024, Chicago, IL, USA, October 27-30, 2024}, pages 1707--1723. {IEEE}, 2024.
\newblock URL \url{https://doi.org/10.1109/FOCS61266.2024.00106}.

\bibitem[Becchetti et~al.(2019)Becchetti, Bury, Cohen{-}Addad, Grandoni, and Schwiegelshohn]{BecchettiBC0S19}
Luca Becchetti, Marc Bury, Vincent Cohen{-}Addad, Fabrizio Grandoni, and Chris Schwiegelshohn.
\newblock Oblivious dimension reduction for \emph{k}-means: beyond subspaces and the {J}ohnson-{L}indenstrauss lemma.
\newblock In \emph{Symposium on Theory of Computing, {STOC}}, pages 1039--1050, 2019.
\newblock URL \url{https://doi.org/10.1145/3313276.3316318}.

\bibitem[Braverman et~al.(2021)Braverman, Jiang, Krauthgamer, and Wu]{BravermanJKW21}
Vladimir Braverman, Shaofeng~H.{-}C. Jiang, Robert Krauthgamer, and Xuan Wu.
\newblock Coresets for clustering in excluded-minor graphs and beyond.
\newblock In D{\'{a}}niel Marx, editor, \emph{Proceedings of the 2021 {ACM-SIAM} Symposium on Discrete Algorithms, {SODA} 2021, Virtual Conference, January 10 - 13, 2021}, pages 2679--2696. {SIAM}, 2021.
\newblock URL \url{https://doi.org/10.1137/1.9781611976465.159}.

\bibitem[Braverman et~al.(2022)Braverman, Cohen{-}Addad, Jiang, Krauthgamer, Schwiegelshohn, Toftrup, and Wu]{BravermanCJKST022}
Vladimir Braverman, Vincent Cohen{-}Addad, Shaofeng~H.{-}C. Jiang, Robert Krauthgamer, Chris Schwiegelshohn, Mads~Bech Toftrup, and Xuan Wu.
\newblock The power of uniform sampling for coresets.
\newblock In \emph{63rd {IEEE} Annual Symposium on Foundations of Computer Science, {FOCS} 2022, Denver, CO, USA, October 31 - November 3, 2022}, pages 462--473. {IEEE}, 2022.
\newblock URL \url{https://doi.org/10.1109/FOCS54457.2022.00051}.

\bibitem[Br{\"{u}}ning and Driemel(2023)]{BD23}
Frederik Br{\"{u}}ning and Anne Driemel.
\newblock Simplified and improved bounds on the {VC}-dimension for elastic distance measures.
\newblock \emph{CoRR}, abs/2308.05998, 2023.
\newblock URL \url{https://doi.org/10.48550/arXiv.2308.05998}.

\bibitem[Bucarelli et~al.(2023)Bucarelli, Larsen, Schwiegelshohn, and Toftrup]{BucarelliLST23}
Maria~Sofia Bucarelli, Matilde~Fjelds{\o} Larsen, Chris Schwiegelshohn, and Mads Toftrup.
\newblock On generalization bounds for projective clustering.
\newblock In \emph{Advances in Neural Information Processing Systems, {(NeurIPS)}}, pages 71723--71754, 2023.

\bibitem[Buchin and Rohde(2022)]{BuchinR22}
Maike Buchin and Dennis Rohde.
\newblock Coresets for $(k, \ell )$-median clustering under the {F}r{\'{e}}chet distance.
\newblock In Niranjan Balachandran and R.~Inkulu, editors, \emph{Algorithms and Discrete Applied Mathematics - 8th International Conference, {CALDAM} 2022, Puducherry, India, February 10-12, 2022, Proceedings}, volume 13179 of \emph{Lecture Notes in Computer Science}, pages 167--180. Springer, 2022.
\newblock URL \url{https://doi.org/10.1007/978-3-030-95018-7\_14}.

\bibitem[Buchin et~al.(2023)Buchin, Driemel, and Rohde]{BuchinDR23}
Maike Buchin, Anne Driemel, and Dennis Rohde.
\newblock Approximating (\emph{k,{\(\mathscr{l}\)}})-median clustering for polygonal curves.
\newblock \emph{{ACM} Trans. Algorithms}, 19\penalty0 (1):\penalty0 4:1--4:32, 2023.
\newblock URL \url{https://doi.org/10.1145/3559764}.

\bibitem[Chapados and Bengio(2008)]{ChapadosB08}
Nicolas Chapados and Yoshua Bengio.
\newblock Augmented functional time series representation and forecasting with gaussian processes.
\newblock In \emph{Advances in Neural Information Processing Systems 20}, pages 265--272. ~, 2008.

\bibitem[Chenakkod et~al.(2024)Chenakkod, Derezinski, Dong, and Rudelson]{ChenakkodDDR24}
Shabarish Chenakkod, Michal Derezinski, Xiaoyu Dong, and Mark Rudelson.
\newblock Optimal embedding dimension for sparse subspace embeddings.
\newblock In Bojan Mohar, Igor Shinkar, and Ryan O'Donnell, editors, \emph{Proceedings of the 56th Annual {ACM} Symposium on Theory of Computing, {STOC} 2024, Vancouver, BC, Canada, June 24-28, 2024}, pages 1106--1117. {ACM}, 2024.
\newblock URL \url{https://doi.org/10.1145/3618260.3649762}.

\bibitem[Cheng and Huang(2023)]{ChengH23}
Siu{-}Wing Cheng and Haoqiang Huang.
\newblock Curve simplification and clustering under fr{\'{e}}chet distance.
\newblock In Nikhil Bansal and Viswanath Nagarajan, editors, \emph{Proceedings of the 2023 {ACM-SIAM} Symposium on Discrete Algorithms, {SODA} 2023, Florence, Italy, January 22-25, 2023}, pages 1414--1432. {SIAM}, 2023.
\newblock URL \url{https://doi.org/10.1137/1.9781611977554.ch51}.

\bibitem[Cheng and Huang(2024)]{ChengH24}
Siu{-}Wing Cheng and Haoqiang Huang.
\newblock Solving {F}r{\'{e}}chet distance problems by algebraic geometric methods.
\newblock In David~P. Woodruff, editor, \emph{Proceedings of the 2024 {ACM-SIAM} Symposium on Discrete Algorithms, {SODA} 2024, Alexandria, VA, USA, January 7-10, 2024}, pages 4502--4513. {SIAM}, 2024.
\newblock URL \url{https://doi.org/10.1137/1.9781611977912.158}.

\bibitem[Cherapanamjeri and Nelson(2024)]{CherapanamjeriN24}
Yeshwanth Cherapanamjeri and Jelani Nelson.
\newblock Terminal embeddings in sublinear time.
\newblock \emph{TheoretiCS}, 3, 2024.
\newblock URL \url{https://doi.org/10.46298/theoretics.24.6}.

\bibitem[Clarkson and Woodruff(2009)]{ClarksonW09}
Kenneth~L. Clarkson and David~P. Woodruff.
\newblock Numerical linear algebra in the streaming model.
\newblock In Michael Mitzenmacher, editor, \emph{Proceedings of the 41st Annual {ACM} Symposium on Theory of Computing, {STOC} 2009, Bethesda, MD, USA, May 31 - June 2, 2009}, pages 205--214. {ACM}, 2009.
\newblock URL \url{https://doi.org/10.1145/1536414.1536445}.

\bibitem[Clarkson and Woodruff(2013)]{ClarksonW13}
Kenneth~L. Clarkson and David~P. Woodruff.
\newblock Low rank approximation and regression in input sparsity time.
\newblock In Dan Boneh, Tim Roughgarden, and Joan Feigenbaum, editors, \emph{Symposium on Theory of Computing Conference, STOC'13, Palo Alto, CA, USA, June 1-4, 2013}, pages 81--90. {ACM}, 2013.
\newblock URL \url{https://doi.org/10.1145/2488608.2488620}.

\bibitem[Cohen(2016)]{Cohen16}
Michael~B. Cohen.
\newblock Nearly tight oblivious subspace embeddings by trace inequalities.
\newblock In Robert Krauthgamer, editor, \emph{Proceedings of the Twenty-Seventh Annual {ACM-SIAM} Symposium on Discrete Algorithms, {SODA} 2016, Arlington, VA, USA, January 10-12, 2016}, pages 278--287. {SIAM}, 2016.
\newblock URL \url{https://doi.org/10.1137/1.9781611974331.ch21}.

\bibitem[Cohen{-}Addad et~al.(2021)Cohen{-}Addad, Saulpic, and Schwiegelshohn]{Cohen-AddadSS21}
Vincent Cohen{-}Addad, David Saulpic, and Chris Schwiegelshohn.
\newblock A new coreset framework for clustering.
\newblock In Samir Khuller and Virginia~Vassilevska Williams, editors, \emph{{STOC} '21: 53rd Annual {ACM} {SIGACT} Symposium on Theory of Computing, Virtual Event, Italy, June 21-25, 2021}, pages 169--182. {ACM}, 2021.
\newblock URL \url{https://doi.org/10.1145/3406325.3451022}.

\bibitem[Cohen{-}Addad et~al.(2022{\natexlab{a}})Cohen{-}Addad, Larsen, Saulpic, and Schwiegelshohn]{Cohen-AddadLSS22}
Vincent Cohen{-}Addad, Kasper~Green Larsen, David Saulpic, and Chris Schwiegelshohn.
\newblock Towards optimal lower bounds for $k$-median and $k$-means coresets.
\newblock In Stefano Leonardi and Anupam Gupta, editors, \emph{{STOC} '22: 54th Annual {ACM} {SIGACT} Symposium on Theory of Computing, Rome, Italy, June 20 - 24, 2022}, pages 1038--1051. {ACM}, 2022{\natexlab{a}}.
\newblock URL \url{https://doi.org/10.1145/3519935.3519946}.

\bibitem[Cohen{-}Addad et~al.(2022{\natexlab{b}})Cohen{-}Addad, Larsen, Saulpic, Schwiegelshohn, and Sheikh{-}Omar]{Cohen-AddadLSSS22}
Vincent Cohen{-}Addad, Kasper~Green Larsen, David Saulpic, Chris Schwiegelshohn, and Omar~Ali Sheikh{-}Omar.
\newblock Improved coresets for {E}uclidean k-means.
\newblock In Sanmi Koyejo, S.~Mohamed, A.~Agarwal, Danielle Belgrave, K.~Cho, and A.~Oh, editors, \emph{Advances in Neural Information Processing Systems 35: Annual Conference on Neural Information Processing Systems 2022, NeurIPS 2022, New Orleans, LA, USA, November 28 - December 9, 2022}, 2022{\natexlab{b}}.

\bibitem[Cohen{-}Addad et~al.(2023)Cohen{-}Addad, Saulpic, and Schwiegelshohn]{CSS23}
Vincent Cohen{-}Addad, David Saulpic, and Chris Schwiegelshohn.
\newblock Deterministic clustering in high dimensional spaces: sketches and approximation.
\newblock In \emph{64th {IEEE} Annual Symposium on Foundations of Computer Science, {FOCS} 2023, Santa Cruz, CA, USA, November 6-9, 2023}, pages 1105--1130. {IEEE}, 2023.
\newblock URL \url{https://doi.org/10.1109/FOCS57990.2023.00066}.

\bibitem[Cohen{-}Addad et~al.(2025{\natexlab{a}})Cohen{-}Addad, Draganov, Russo, Saulpic, and Schwiegelshohn]{Cohen-AddadD0SS25}
Vincent Cohen{-}Addad, Andrew Draganov, Matteo Russo, David Saulpic, and Chris Schwiegelshohn.
\newblock A tight {VC}-dimension analysis of clustering coresets with applications.
\newblock In Yossi Azar and Debmalya Panigrahi, editors, \emph{Proceedings of the 2025 Annual {ACM-SIAM} Symposium on Discrete Algorithms, {SODA} 2025, New Orleans, LA, USA, January 12-15, 2025}, pages 4783--4808. {SIAM}, 2025{\natexlab{a}}.
\newblock URL \url{https://doi.org/10.1137/1.9781611978322.162}.

\bibitem[Cohen{-}Addad et~al.(2025{\natexlab{b}})Cohen{-}Addad, Lattanzi, and Schwiegelshohn]{Cohen-AddadLS25}
Vincent Cohen{-}Addad, Silvio Lattanzi, and Chris Schwiegelshohn.
\newblock Almost optimal {PAC} learning for k-means.
\newblock In Michal Kouck{\'{y}} and Nikhil Bansal, editors, \emph{Proceedings of the 57th Annual {ACM} Symposium on Theory of Computing, {STOC} 2025, Prague, Czechia, June 23-27, 2025}, pages 2019--2030. {ACM}, 2025{\natexlab{b}}.
\newblock \doi{10.1145/3717823.3718180}.
\newblock URL \url{https://doi.org/10.1145/3717823.3718180}.

\bibitem[Conradi et~al.(2024)Conradi, Kolbe, Psarros, and Rohde]{ConradiKPR24}
Jacobus Conradi, Benedikt Kolbe, Ioannis Psarros, and Dennis Rohde.
\newblock Fast approximations and coresets for $(k,\ell)$-median under dynamic time warping.
\newblock In Wolfgang Mulzer and Jeff~M. Phillips, editors, \emph{40th International Symposium on Computational Geometry, SoCG 2024, June 11-14, 2024, Athens, Greece}, volume 293 of \emph{LIPIcs}, pages 42:1--42:17. Schloss Dagstuhl - Leibniz-Zentrum f{\"{u}}r Informatik, 2024.
\newblock URL \url{https://doi.org/10.4230/LIPIcs.SoCG.2024.42}.

\bibitem[Driemel and Krivosija(2018)]{DriemelK18}
Anne Driemel and Amer Krivosija.
\newblock Probabilistic embeddings of the {F}r{\'{e}}chet distance.
\newblock In Leah Epstein and Thomas Erlebach, editors, \emph{Approximation and Online Algorithms - 16th International Workshop, {WAOA} 2018, Helsinki, Finland, August 23-24, 2018, Revised Selected Papers}, volume 11312 of \emph{Lecture Notes in Computer Science}, pages 218--237. Springer, 2018.
\newblock URL \url{https://doi.org/10.1007/978-3-030-04693-4\_14}.

\bibitem[Driemel et~al.(2016)Driemel, Krivosija, and Sohler]{DriemelKS16}
Anne Driemel, Amer Krivosija, and Christian Sohler.
\newblock Clustering time series under the {F}r{\'{e}}chet distance.
\newblock In Robert Krauthgamer, editor, \emph{Proceedings of the Twenty-Seventh Annual {ACM-SIAM} Symposium on Discrete Algorithms, {SODA} 2016, Arlington, VA, USA, January 10-12, 2016}, pages 766--785. {SIAM}, 2016.
\newblock URL \url{https://doi.org/10.1137/1.9781611974331.ch55}.

\bibitem[Driemel et~al.(2021)Driemel, Nusser, Phillips, and Psarros]{DriemelNPP21}
Anne Driemel, Andr{\'{e}} Nusser, Jeff~M. Phillips, and Ioannis Psarros.
\newblock The {VC} dimension of metric balls under {F}r{\'{e}}chet and {H}ausdorff distances.
\newblock \emph{Discrete and Computational Geometry}, 66\penalty0 (4):\penalty0 1351--1381, 2021.
\newblock URL \url{https://doi.org/10.1007/s00454-021-00318-z}.

\bibitem[Elkin et~al.(2017)Elkin, Filtser, and Neiman]{ElkinFN17}
Michael Elkin, Arnold Filtser, and Ofer Neiman.
\newblock Terminal embeddings.
\newblock \emph{Theor. Comput. Sci.}, 697:\penalty0 1--36, 2017.
\newblock URL \url{https://doi.org/10.1016/j.tcs.2017.06.021}.

\bibitem[Feldman(2020)]{Feldman20}
Dan Feldman.
\newblock Core-sets: An updated survey.
\newblock \emph{WIREs Data Mining Knowl. Discov.}, 10\penalty0 (1), 2020.
\newblock \doi{10.1002/WIDM.1335}.
\newblock URL \url{https://doi.org/10.1002/widm.1335}.

\bibitem[H{\o}gsgaard et~al.(2024)H{\o}gsgaard, Kamma, Larsen, Nelson, and Schwiegelshohn]{schwiegelshohn23}
Mikael~M{\o}ller H{\o}gsgaard, Lior Kamma, Kasper~Green Larsen, Jelani Nelson, and Chris Schwiegelshohn.
\newblock Sparse dimensionality reduction revisited.
\newblock In \emph{Forty-first International Conference on Machine Learning, {ICML} 2024, Vienna, Austria, July 21-27, 2024}. OpenReview.net, 2024.
\newblock URL \url{https://openreview.net/forum?id=ufgVvFmUom}.

\bibitem[Huang and Vishnoi(2020)]{HuangV20}
Lingxiao Huang and Nisheeth~K. Vishnoi.
\newblock Coresets for clustering in {E}uclidean spaces: importance sampling is nearly optimal.
\newblock In Konstantin Makarychev, Yury Makarychev, Madhur Tulsiani, Gautam Kamath, and Julia Chuzhoy, editors, \emph{Proceedings of the 52nd Annual {ACM} {SIGACT} Symposium on Theory of Computing, {STOC} 2020, Chicago, IL, USA, June 22-26, 2020}, pages 1416--1429. {ACM}, 2020.
\newblock URL \url{https://doi.org/10.1145/3357713.3384296}.

\bibitem[Huang et~al.(2020)Huang, Sudhir, and Vishnoi]{HuangSV20}
Lingxiao Huang, K.~Sudhir, and Nisheeth~K. Vishnoi.
\newblock Coresets for regressions with panel data.
\newblock In Hugo Larochelle, Marc'Aurelio Ranzato, Raia Hadsell, Maria{-}Florina Balcan, and Hsuan{-}Tien Lin, editors, \emph{Advances in Neural Information Processing Systems 33: Annual Conference on Neural Information Processing Systems 2020, NeurIPS 2020, December 6-12, 2020, virtual}, 2020.

\bibitem[Huang et~al.(2021)Huang, Sudhir, and Vishnoi]{HuangSV21}
Lingxiao Huang, K.~Sudhir, and Nisheeth~K. Vishnoi.
\newblock Coresets for time series clustering.
\newblock In Marc'Aurelio Ranzato, Alina Beygelzimer, Yann~N. Dauphin, Percy Liang, and Jennifer~Wortman Vaughan, editors, \emph{Advances in Neural Information Processing Systems 34: Annual Conference on Neural Information Processing Systems 2021, NeurIPS 2021, December 6-14, 2021, virtual}, pages 22849--22862, 2021.

\bibitem[Huang et~al.(2024)Huang, Li, and Wu]{HuangLW24}
Lingxiao Huang, Jian Li, and Xuan Wu.
\newblock On optimal coreset construction for {E}uclidean $(k, z)$-clustering.
\newblock In Bojan Mohar, Igor Shinkar, and Ryan O'Donnell, editors, \emph{Proceedings of the 56th Annual {ACM} Symposium on Theory of Computing, {STOC} 2024, Vancouver, BC, Canada, June 24-28, 2024}, pages 1594--1604. {ACM}, 2024.
\newblock URL \url{https://doi.org/10.1145/3618260.3649707}.

\bibitem[Johnson and Lindenstrauss(1984)]{JohnsonL1984}
William~B Johnson and Joram Lindenstrauss.
\newblock Extensions of {Lipschitz} mappings into a {Hilbert} space.
\newblock In \emph{Conference in modern analysis and probability}, volume~26, pages 189--206. American Mathematical Society, 1984.

\bibitem[Kane and Nelson(2014)]{KaneN14}
Daniel~M. Kane and Jelani Nelson.
\newblock Sparser {J}ohnson-{L}indenstrauss transforms.
\newblock \emph{J. {ACM}}, 61\penalty0 (1):\penalty0 4:1--4:23, 2014.
\newblock URL \url{https://doi.org/10.1145/2559902}.

\bibitem[Krivosija et~al.(2025)Krivosija, Munteanu, Nusser, and Schwiegelshohn]{KrivosijaMNS25}
Amer Krivosija, Alexander Munteanu, Andr{\'{e}} Nusser, and Chris Schwiegelshohn.
\newblock Improved learning via k-dtw: {A} novel dissimilarity measure for curves.
\newblock In \emph{Forty-second International Conference on Machine Learning (ICML)}, 2025.
\newblock URL \url{https://openreview.net/forum?id=VCjPjexvpM}.

\bibitem[Larsen and Nelson(2017)]{LarsenN17}
Kasper~Green Larsen and Jelani Nelson.
\newblock Optimality of the {J}ohnson-{L}indenstrauss lemma.
\newblock In Chris Umans, editor, \emph{58th {IEEE} Annual Symposium on Foundations of Computer Science, {FOCS} 2017, Berkeley, CA, USA, October 15-17, 2017}, pages 633--638. {IEEE} Computer Society, 2017.
\newblock URL \url{https://doi.org/10.1109/FOCS.2017.64}.

\bibitem[Lucas et~al.(2015)Lucas, Yver~Kwok, Cameron-Smith, Graven, Bergmann, Guilderson, Weiss, and Keeling]{Lucasetal15}
D.~D. Lucas, C.~Yver~Kwok, P.~Cameron-Smith, H.~Graven, D.~Bergmann, T.~P. Guilderson, R.~Weiss, and R.~Keeling.
\newblock Designing optimal greenhouse gas observing networks that consider performance and cost.
\newblock \emph{Geoscientific Instrumentation, Methods and Data Systems}, 4\penalty0 (1):\penalty0 121--137, 2015.
\newblock \doi{10.5194/gi-4-121-2015}.
\newblock URL \url{https://www.geosci-instrum-method-data-syst.net/4/121/2015/}.

\bibitem[Mahabadi et~al.(2018)Mahabadi, Makarychev, Makarychev, and Razenshteyn]{MahabadiMMR18}
Sepideh Mahabadi, Konstantin Makarychev, Yury Makarychev, and Ilya~P. Razenshteyn.
\newblock Nonlinear dimension reduction via outer {B}i-{L}ipschitz extensions.
\newblock In Ilias Diakonikolas, David Kempe, and Monika Henzinger, editors, \emph{Proceedings of the 50th Annual {ACM} {SIGACT} Symposium on Theory of Computing, {STOC} 2018, Los Angeles, CA, USA, June 25-29, 2018}, pages 1088--1101. {ACM}, 2018.
\newblock URL \url{https://doi.org/10.1145/3188745.3188828}.

\bibitem[Makarychev et~al.(2019)Makarychev, Makarychev, and Razenshteyn]{MakarychevMR19}
Konstantin Makarychev, Yury Makarychev, and Ilya~P. Razenshteyn.
\newblock Performance of {J}ohnson-{L}indenstrauss transform for \emph{k}-means and \emph{k}-medians clustering.
\newblock In \emph{Proceedings of the 51st Annual {ACM} {SIGACT} Symposium on Theory of Computing, {STOC} 2019, Phoenix, AZ, USA, June 23-26, 2019}, pages 1027--1038, 2019.
\newblock URL \url{https://doi.org/10.1145/3313276.3316350}.

\bibitem[Massart(2007)]{Massart2007ConcentrationIA}
Pascal Massart.
\newblock \emph{Exponential and Information Inequalities}, chapter~2, pages 15--51.
\newblock Springer Berlin Heidelberg, 2007.

\bibitem[Meintrup et~al.(2019)Meintrup, Munteanu, and Rohde]{MeintrupMR19}
Stefan Meintrup, Alexander Munteanu, and Dennis Rohde.
\newblock Random projections and sampling algorithms for clustering of high-dimensional polygonal curves.
\newblock In Hanna~M. Wallach, Hugo Larochelle, Alina Beygelzimer, Florence d'Alch{\'{e}}{-}Buc, Emily~B. Fox, and Roman Garnett, editors, \emph{Advances in Neural Information Processing Systems 32: Annual Conference on Neural Information Processing Systems 2019, NeurIPS 2019, December 8-14, 2019, Vancouver, BC, Canada}, pages 12807--12817, 2019.

\bibitem[Meng and Mahoney(2013)]{MengM13}
Xiangrui Meng and Michael~W. Mahoney.
\newblock Low-distortion subspace embeddings in input-sparsity time and applications to robust linear regression.
\newblock In Dan Boneh, Tim Roughgarden, and Joan Feigenbaum, editors, \emph{Symposium on Theory of Computing Conference, STOC'13, Palo Alto, CA, USA, June 1-4, 2013}, pages 91--100. {ACM}, 2013.
\newblock URL \url{https://doi.org/10.1145/2488608.2488621}.

\bibitem[Munteanu and Schwiegelshohn(2018)]{MunteanuS18}
Alexander Munteanu and Chris Schwiegelshohn.
\newblock Coresets-methods and history: {A} theoreticians design pattern for approximation and streaming algorithms.
\newblock \emph{K{\"{u}}nstliche Intell.}, 32\penalty0 (1):\penalty0 37--53, 2018.
\newblock URL \url{https://doi.org/10.1007/s13218-017-0519-3}.

\bibitem[Narayanan and Nelson(2019)]{NarayananN19}
Shyam Narayanan and Jelani Nelson.
\newblock Optimal terminal dimensionality reduction in {E}uclidean space.
\newblock In Moses Charikar and Edith Cohen, editors, \emph{Proceedings of the 51st Annual {ACM} {SIGACT} Symposium on Theory of Computing, {STOC} 2019, Phoenix, AZ, USA, June 23-26, 2019}, pages 1064--1069. {ACM}, 2019.
\newblock URL \url{https://doi.org/10.1145/3313276.3316307}.

\bibitem[Nelson and Nguy{\^e}n(2013)]{nelson2013osnap}
Jelani Nelson and Huy~L Nguy{\^e}n.
\newblock Osnap: Faster numerical linear algebra algorithms via sparser subspace embeddings.
\newblock In \emph{54th Annual {IEEE} Symposium on Foundations of Computer Science, {FOCS}}, 2013.

\bibitem[Paparrizos et~al.(2024)Paparrizos, Yang, and Li]{PaparrizosYL24}
John Paparrizos, Fan Yang, and Haojun Li.
\newblock Bridging the gap: {A} decade review of time-series clustering methods.
\newblock \emph{CoRR}, abs/2412.20582, 2024.
\newblock \doi{10.48550/ARXIV.2412.20582}.
\newblock URL \url{https://doi.org/10.48550/arXiv.2412.20582}.

\bibitem[Pearson(1901)]{Pearson1901}
Karl Pearson.
\newblock {LIII}. on lines and planes of closest fit to systems of points in space.
\newblock \emph{The London, Edinburgh, and Dublin Philosophical Magazine and Journal of Science}, 2\penalty0 (11):\penalty0 559--572, 1901.
\newblock \doi{10.1080/14786440109462720}.
\newblock URL \url{https://doi.org/10.1080/14786440109462720}.

\bibitem[Psarros and Rohde(2023)]{PsarrosR23}
Ioannis Psarros and Dennis Rohde.
\newblock Random projections for curves in high dimensions.
\newblock In Erin~W. Chambers and Joachim Gudmundsson, editors, \emph{39th International Symposium on Computational Geometry, SoCG 2023, June 12-15, 2023, Dallas, Texas, {USA}}, volume 258 of \emph{LIPIcs}, pages 53:1--53:15. Schloss Dagstuhl - Leibniz-Zentrum f{\"{u}}r Informatik, 2023.
\newblock URL \url{https://doi.org/10.4230/LIPIcs.SoCG.2023.53}.

\bibitem[Psarros and Rohde(2025)]{PsarrosR25}
Ioannis Psarros and Dennis Rohde.
\newblock Random projections for curves in high dimensions.
\newblock \emph{Discret. Comput. Geom.}, 74\penalty0 (2):\penalty0 374--398, 2025.
\newblock URL \url{https://doi.org/10.1007/s00454-024-00710-5}.

\bibitem[Rudra and Wootters(2014)]{RudraW14}
Atri Rudra and Mary Wootters.
\newblock Every list-decodable code for high noise has abundant near-optimal rate puncturings.
\newblock In David~B. Shmoys, editor, \emph{Symposium on Theory of Computing, {STOC} 2014, New York, NY, USA, May 31 - June 03, 2014}, pages 764--773. {ACM}, 2014.
\newblock URL \url{https://doi.org/10.1145/2591796.2591797}.

\bibitem[Sarl{\'{o}}s(2006)]{Sarlos06}
Tam{\'{a}}s Sarl{\'{o}}s.
\newblock Improved approximation algorithms for large matrices via random projections.
\newblock In \emph{47th Annual {IEEE} Symposium on Foundations of Computer Science {(FOCS} 2006), 21-24 October 2006, Berkeley, California, USA, Proceedings}, pages 143--152. {IEEE} Computer Society, 2006.
\newblock URL \url{https://doi.org/10.1109/FOCS.2006.37}.

\bibitem[Talagrand(1996)]{Talagrand96}
Michel Talagrand.
\newblock Majorizing measures: The generic chaining.
\newblock \emph{The Annals of Probability}, 24\penalty0 (3):\penalty0 1049--1103, 1996.
\newblock ISSN 00911798, 2168894X.
\newblock URL \url{http://www.jstor.org/stable/2244967}.

\bibitem[Wang et~al.(2025)Wang, Ma, and Cohen]{WangMC25}
Mengyu Wang, Tiejun Ma, and Shay~B. Cohen.
\newblock Pre-training time series models with stock data customization.
\newblock In Luiza Antonie, Jian Pei, Xiaohui Yu, Flavio Chierichetti, Hady~W. Lauw, Yizhou Sun, and Srinivasan Parthasarathy, editors, \emph{Proceedings of the 31st {ACM} {SIGKDD} Conference on Knowledge Discovery and Data Mining, V.2, {KDD} 2025, Toronto ON, Canada, August 3-7, 2025}, pages 3019--3030. {ACM}, 2025.
\newblock \doi{10.1145/3711896.3737005}.
\newblock URL \url{https://doi.org/10.1145/3711896.3737005}.

\bibitem[Woodruff(2014)]{Woodruff14}
David~P. Woodruff.
\newblock Sketching as a {T}ool for {N}umerical {L}inear {A}lgebra.
\newblock \emph{Found. Trends Theor. Comput. Sci.}, 10\penalty0 (1-2):\penalty0 1--157, 2014.
\newblock URL \url{https://doi.org/10.1561/0400000060}.

\bibitem[Zhang et~al.(2007)Zhang, Roy, Devadiga, and Zheng]{ZhangRSZ07}
Jingxiong Zhang, David Roy, Sadashiva Devadiga, and Min Zheng.
\newblock Anomaly detection in {MODIS} land products via time series analysis.
\newblock \emph{Geo-spatial Information Science}, 10\penalty0 (1):\penalty0 44--50, Mar 2007.
\newblock ISSN 1993-5153.
\newblock \doi{10.1007/s11806-007-0003-6}.
\newblock URL \url{https://doi.org/10.1007/s11806-007-0003-6}.

\bibitem[Zhu et~al.(2024)Zhu, Tian, Huang, and Huang]{ZhuTHH24}
Xiaoyi Zhu, Yuxiang Tian, Lingxiao Huang, and Zengfeng Huang.
\newblock Space complexity of {E}uclidean clustering.
\newblock In Wolfgang Mulzer and Jeff~M. Phillips, editors, \emph{40th International Symposium on Computational Geometry, SoCG 2024, June 11-14, 2024, Athens, Greece}, volume 293 of \emph{LIPIcs}, pages 82:1--82:16. Schloss Dagstuhl - Leibniz-Zentrum f{\"{u}}r Informatik, 2024.
\newblock URL \url{https://doi.org/10.4230/LIPIcs.SoCG.2024.82}.

\bibitem[Zimmer et~al.(2018)Zimmer, Meister, and Nguyen-Tuong]{ZimmerMN18}
Christoph Zimmer, Mona Meister, and Duy Nguyen-Tuong.
\newblock Safe active learning for time-series modeling with gaussian processes.
\newblock In \emph{Advances in Neural Information Processing Systems 31}, pages 2730--2739. ~, 2018.

\end{thebibliography}
}

\newpage
\appendix
\onecolumn
\allowdisplaybreaks

\section{Further Related Work}
\label{sec:related}
\paragraph{Terminal Embeddings.} Terminal embeddings were introduced by Elkin, Filtser, Neiman (TCS'17) \citep{ElkinFN17}. For preserving all Euclidean distances of points in given finite point set $P$ of size $n$, the original reference gave only a constant factor distortion guarantee using an embedding into $O(\log n)$ dimensions. This was soon extended by Mahabadi, Makarychev, Makarychev, Razenshteyn (STOC'18) \citep{MahabadiMMR18} to achieve a strict generalization of the Johnson-Lindenstrauss embedding with $(1\pm\varepsilon)$ distortion and a target dimension of $O(\varepsilon^{-4}\log n)$. Finally, Narayanan \& Nelson (STOC'19) \citep{NarayananN19} proved that $O(\varepsilon^{-2}\log n)$ dimensions suffice, which is remarkable, since this matches the lower bound for the weaker Johnson-Lindenstrauss guarantee of Larsen \& Nelson (FOCS'17) \citep{LarsenN17}. However, the running time of the embedding was still dominated by solving a semidefinite program for each query point. This issue was resolved by Cherapanamjeri \& Nelson (FOCS'21, TCS'24) \citep{CherapanamjeriN24} providing sublinear time embeddings based on approximate nearest neighbor techniques.

\paragraph{Coresets for $(k,\ell,z)$-Clustering of Polygonal Curves under the \frechet{} Distance.} $(k,\ell,z)$-clustering of polygonal curves under the \frechet{} distance was introduced by Driemel et al. in (SODA'16) \citep{DriemelKS16}. Although it gave no explicit coreset construction, it already contained similar discretization and approximation approaches used in the clustering theory of metric spaces with bounded doubling dimension. The next step towards coresets was taken by Driemel et al. (SoCG'19) \citep{DriemelNPP21} by bounding the VC dimension and thus enabling the sensitivity framework to the problem. This gave rise to first but large coreset constructions and centroid sets by Buchin \& Rohde \citep{BuchinR22}, which were significantly improved recently by Braverman et al. (FOCS'22) \citep{BravermanCJKST022} $\tilde{O}(k^3 \cdot \varepsilon^{-3}\cdot d \cdot \ell \cdot \log m)$, Conradi et al. (SoCG'24) \citep{ConradiKPR24}  $\tilde{O}(k^2 \cdot \varepsilon^{-2}\cdot \log n\cdot d\cdot \ell\cdot \log m)$, and Cohen-Addad et al. (SODA'25) \citep{Cohen-AddadD0SS25} $\tilde{O}(k\cdot \varepsilon^{-1-z} \cdot d\cdot \ell\cdot \log m)$ using more and more refined techniques. See also \Cref{tbl:frechetcoresets}.
For algorithms to compute a solution to $(k, \ell)$-clustering,  Buchin, Driemel \& Rohde (SODA'21) \citep{BuchinDR23} presented a bi-criteria approximation scheme, and Cheng \& Huang (SODA'23) \citep{ChengH23} improved it to find a proper approximation scheme. They use curve simplification techniques that seem orthogonal to ours.

\paragraph{Dimension Reduction for Polygonal Curves.} \label{app:related_work}
The first contribution of random projections for preserving \frechet{} distance was given by Driemel \& Krivo\v{s}ija \citep{DriemelK18}, followed by a Johnson-Lindenstrauss equivalent for preserving the pairwise distances of input curves, by Meintrup et al. (NeurIPS'19) \citep{MeintrupMR19} within $O(\varepsilon^{-2}\log(nm))$ dimensions. Albeit introducing additive errors, this enabled efficient algorithms for $k$-median, where centers are restricted to the input curves. Recently, Psarros et al. (SoCG'24,DCG'25) \citep{PsarrosR23,PsarrosR25} refined to multiplicative errors for the pairwise distances within the same target dimension applying Johnson-Lindenstrauss embeddings to a sophisticated `skeleton' net construction. We note that the problem can be solved much simpler by embedding low dimensional subspaces containing line segments. However, the method is restricted to discrete clustering objectives, and estimating the clustering cost, as described in \citep{PsarrosR23,PsarrosR25}. To our knowledge, no terminal embeddings exist for the problem, and these are the missing key methodology to remove dimension dependence in continuous clustering applications and coreset constructions.

\section{Omitted Proofs of the Main Part}\label{sec:omitted}

\lemcentroidset*
\begin{proof}[Proof of \Cref{lem:centroidset}]
Consider an arbitrary $q\in Q_\ell$ and fix an arbitrary $a_0 \in \operatorname{argmin}\nolimits_{a\in P\colon b\in q, q\in Q_\ell} \|a-b\|$, that is, $a_0$ is any point in $P$ that is closest to $q$.
Denote by $b_0$ the projection of $a_0$ onto $q$, that is $b_0\in  {\operatorname{argmin}_{b\in q}} \|a_0-b\|$, ties are again resolved arbitrarily. 
Consider arbitrary points $a\in P$ and $b \in Q_\ell$. For any orthogonal projection matrix $\Pi$, we can write
\begin{equation}
\label{eq:basic}
\|a-b\|^2 = \|\Pi(a-b)\|^2 + \|(I-\Pi)a\|^2 + \|(I-\Pi)b\|^2 - 2a^T(I-\Pi)b.
\end{equation}
For any point $b\in q$, we may write  $b= b_0 + \alpha\cdot v = a_0 +(b_0-a_0)+ \alpha\cdot v$, where $v$ is a unit vector and $\alpha$ is a scaling factor.
Plugging this into \Cref{eq:basic}, we have 
\begin{align*}
\|a-b\|^2 &= \|\Pi(a-(b_0+ \alpha\cdot v))\|^2 + \|(I-\Pi)(a-a_0)\|^2 \\
&\qquad+ \|(I-\Pi)(b_0-a_0+ \alpha\cdot v)\|^2 - 2(a-a_0)^T(I-\Pi)(b_0 - a_0 + \alpha\cdot v).
\end{align*}
If we always include the line-segment containing $a_0$ to the set of points that define the span of $\Pi$, then $(I-\Pi)a_0 = 0$ and the above term simplifies again to
\begin{align*}
\|a-b\|^2 &=\|\Pi(a-b)\|^2 + \|(I-\Pi)a\|^2 + \|(I-\Pi)b\|^2 - 2(a-a_0)^T(I-\Pi)(b_0 - a_0 + \alpha\cdot v).
\end{align*}
Our goal thus reduces to select $\Pi$ s.t. 
$$|2(a-a_0)^T(I-\Pi)(b_0 - a_0 + \alpha\cdot v)| \leq \varepsilon \cdot \|a-b\|^2,$$
which implies the equation in \Cref{lem:centroidset} as desired.
Specifically, by the triangle inequality and a suitable constant rescaling of $\varepsilon$, it suffices to show
\begin{equation}
\label{eq:offset}
    |2(a-a_0)^T(I-\Pi)(b_0-a_0)| \leq \varepsilon\cdot \|(I-\Pi)(a-a_0)\|\cdot \|a-b\|
\end{equation}
and
\begin{equation}
\label{eq:scale}
    |2(a-a_0)^T(I-\Pi)\alpha\cdot v| \leq \varepsilon\cdot \|(I-\Pi)(a-a_0)\|\cdot \|a-b\|.
\end{equation}

We first argue Equation \eqref{eq:offset}. Define $s = b_0-a_0$. Notice that $\|s\| = \|a_0-b_0\|\leq \|a-b\|$. Note that for $Q_\ell$ there exist at most $\ell$ such vectors. Therefore, invoking \Cref{lem:innerproduct}, there exists a $U$ with $|U| = O(\ell\cdot \varepsilon^{-2})$ such that 
\begin{equation}
    \label{eq:offset1}
|(a-a_0)^T(I-UU^T)s| \leq \varepsilon\cdot \|(I-UU^T)(a-a_0)\|\cdot \|s\| \leq \varepsilon\cdot \|(I-UU^T)(a-a_0)\|\cdot \|a-b\| 
\end{equation}

For Equation \eqref{eq:scale}, we require several additional arguments. First recall that $b=b_0 + \alpha\cdot v$. Notice that 
\begin{equation}
\label{eq:scale0}
  \|a-(a_0+\alpha\cdot v)\| \leq \|a-(b_0 + \alpha\cdot v)\| + \|a_0-b_0\| = \|a-b\| + \|a_0-b_0\|\leq 2\|a-b\|.  
\end{equation}
Next, due to the Pythagorean theorem, we have the following lower bound
\begin{align*}
    \|a-(a_0+\alpha\cdot v)\|^2 &= \|(I-vv^T)(a-a_0)\|^2 + \|v(v^T(a-a_0)-\alpha)\|^2 \\
    ~&\geq \max(\|(I-vv^T)(a-a_0)\|, \|v(v^T(a-a_0)-\alpha)\|)^2,
\end{align*}
which together with Equation \eqref{eq:scale0} implies
\begin{equation}
\label{eq:scale2}
  \max\left(\|(I-vv^T)(a-a_0)\|, \|v(\alpha - v^T(a-a_0)\|\right) \leq 2\|a-b\|.  
\end{equation}
Furthermore, we have
\begin{equation}
    \label{eq:scale1}
(a-a_0)^T(I-\Pi)\alpha\cdot v =  (a-a_0)^T(I-\Pi)(vv^T(a-a_0) + v(\alpha - v^T(a-a_0))),
\end{equation}
allowing us to control the left hand side via the two terms on the right hand side by means of Equation \eqref{eq:scale2}.
For the latter term, we have using \Cref{lem:innerproduct} and Equation \eqref{eq:scale2}
\begin{eqnarray}
\nonumber
    &  & |(a-a_0)^T(I-\Pi)v(\alpha - v^T(a-a_0)| \\ 
\nonumber
    &\leq & \varepsilon\cdot\|(I-\Pi)(a-a_0)\|\cdot \|v(\alpha - v^T(a-a_0)\| \\
    \label{eq:scale3}
    &\leq & 2\varepsilon\cdot \|(I-\Pi)(a-a_0)\|\cdot \|a-b\|.
\end{eqnarray}
For the former term, suppose $\Pi$ is initialized with $u:=\frac{u'}{\|u'\|}$ where $u'={\operatorname{argmax}\nolimits_{(a-a_0)\in P}}{|(a-a_0)^Tv|}$.

We then get, using \Cref{lem:innerproduct}, that
\begin{eqnarray}
\nonumber
    & & |(a-a_0)^T(I-\Pi)(I-uu^T)vv^T(a-a_0)| \\ 
\nonumber
    &\leq & \varepsilon\cdot\|(I-\Pi)(a-a_0)\|\cdot \|(I-uu^T)v\|\cdot |v^T(a-a_0)| \\
\nonumber
    &\leq &  \varepsilon\cdot\|(I-\Pi)(a-a_0)\|\cdot \left\|\left(I-\frac{(a-a_0)(a-a_0)^T}{\|a-a_0\|^2}\right)v\right\|\cdot |v^T(a-a_0)| \\
\nonumber
    & = & \varepsilon\cdot\|(I-\Pi)(a-a_0)\|\cdot \|(I-vv^T)(a-a_0)\|\cdot \frac{|v^T(a-a_0)|}{\|a-a_0\|}\\
\nonumber
    & \leq & \varepsilon\cdot\|(I-\Pi)(a-a_0)\|\cdot \|(I-vv^T)(a-a_0)\|\\
\label{eq:scale4}
    &\leq & 2\varepsilon\cdot \|(I-\Pi)(a-a_0)\|\cdot \|a-b\|,
\end{eqnarray}
where the final inequality follows from Equation \eqref{eq:scale2}.
Combining Equations \eqref{eq:scale1}, \eqref{eq:scale3} and \eqref{eq:scale4}, we then obtain
\begin{align*}
    |2(a-a_0)^T(I-\Pi)v(\alpha - v^T(a-a_0))| 
    &\leq 8 \cdot \varepsilon\cdot \|(I-\Pi)(a-a_0)\|\cdot \|a-b\|
\end{align*}
which by a suitable constant rescaling of $\varepsilon$ implies Equation \eqref{eq:scale}.
\end{proof}

\section{Coresets for Clustering Polygonal Curves under the \frechet{} Distance}\label{sec:continuous}

In this section, we give our improved coreset construction for the \frechet{} distance.
Our algorithm has the same running time as all existing coreset constructions for the \frechet{} distance.
That is, assuming that we are given a constant factor approximate clustering or a bicriteria approximation\footnote{In this case, an $(\alpha,\beta)$-bicriteria approximation means that the algorithm may use more than $\beta\cdot k$ clusters and that the clustering cost is within a factor of $\alpha$ of the optimal $k$-clustering} along with an assignment of input curves to centers of the solution, the algorithm runs in linear time.
Conradi et al. \cite{ConradiKPR24} showed how to compute a suitable bicriteria approximation in $O(nm^2\text{poly}(k))$ running time, which, to the best of our knowledge, is the state of the art for our purposes.

Since we aim to prove slightly better bounds than a black-box application of our terminal embedding could provide, we have to introduce and then refine a lot of the standard coreset machinery.

\subsection{Improved Coreset Size Bounds for the \texorpdfstring{\cite{Cohen-AddadSS21}}{[CSS21]} Framework}

The coreset framework presented in \cite{Cohen-AddadSS21} provided an algorithm for computing coresets and an enumeration condition which determines the coreset size.

In order to enumerate over all solutions, the coreset framework presented by \cite{Cohen-AddadSS21}  uses a type of net they call an approximate centroid set defined as follows.
\begin{definition}[$(\alpha, k, z)$-approximate centroid set]\label{def:apx-centroid-intro}
    Let $\cM = (\cX, d)$ be a metric space, $P$ a set of points, $k, z$ two positive integers, and let $\alpha \in (0, \frac{1}{2})$ be a precision parameter. Consider an (optimal or approximate) solution $\cA$, and let $\cC \in \cM^{k}$ be a set of (potentially infinite) $k$-clusterings for the $(k,z)$-clustering objective. We say that $\cN$ is an $(\alpha, k, z, \cA)$-approximate centroid set of $P$ if, for every solution $\cS \in \cC$, there exists another solution $\tilde \cS \in \cN$ such that
    \begin{align*}
        |\cost(p, \tilde \cS) - \cost(p, \cS)| &\leq \frac{\alpha}{z\log (z/\alpha)} \cdot (\cost(p, \cS) + \cost(p, \cA)),
    \end{align*}
    for all $p \in P$ with $\cost(p, \cS) \leq \left(\frac{4z}{\alpha}\right)^z \cdot \cost(p, \cA)$.
\end{definition}

Given a bound $|\mathcal{N}|$ on the size of an $(\alpha,k,z, \cA)$ approximate centroid set, the analysis of \cite{Cohen-AddadSS21} presents a coreset construction as follows.
\begin{theorem}[Theorem 1 of \cite{Cohen-AddadSS21}]
\label{thm:framework-weak}
    Let $\cM = (\cX, d)$ be a metric space. Assume that for any set of points $P$ there exists an $(\varepsilon, k, z, \cA)$-approximate centroid set $\cN$.
    Then, there exists an $(\varepsilon, k, z)$-coreset of size
    \[
        |\Omega| \leq \tilde O\left(2^{O(z\log z)} \cdot k \cdot (\varepsilon^{-2}+\varepsilon^{-z})\cdot \log(|\cN|)  \right).
    \]
\end{theorem}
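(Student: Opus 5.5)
The plan is the importance-sampling-plus-union-bound argument of \cite{Cohen-AddadSS21}. Let $\cA$ be a constant-factor approximation with clusters $C_1,\dots,C_k$ and centers $c_i$. First we split $P$ into \emph{rings}: $R_{i,j}$ collects the points $p\in C_i$ with $\cost(p,\cA)\approx 2^j\cdot \cost(C_i,\cA)/|C_i|$. Points in rings with $j<z\log(\varepsilon/z)$ (very cheap) or $j>2z\log(z/\varepsilon)$ (very expensive) are set aside. The remaining $O(z\log(z/\varepsilon))$ ring indices per cluster are collected into \emph{groups} $G_{j,b}$: all rings with the same $j$ whose total cost lies in a given dyadic band $b$. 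Within a group every point then has cost $\cost(p,\cA)$ equal to roughly $\cost(G_{j,b},\cA)/|G_{j,b}|$ up to a constant, and each cluster contributes a comparable amount. The cheap rings are handled by moving each point onto its center, charging $O(\varepsilon)\cost(P,\cA)$. The outer rings are treated as separate groups whose points are sampled with probability proportional to $\cost(p,\cA)$.

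For each group we draw $\delta=\tilde O(k\,\varepsilon^{-2}\log|\cN|)$ samples (respectively $\tilde O(k\varepsilon^{-z}\log|\cN|)$ for outer groups), with probability proportional to $\cost(p,\cA)$, and weight them inversely. Fix one solution $\tilde\cS\in\cN$. We then bound the deviation of $\sum_{p\in G}\cost(p,\tilde\cS)$ from its estimator using Bernstein's inequality.

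The key is a variance bound in which we split the points of $G$ into two kinds, determined by $\tilde\cS$.
\begin{itemize}
\item \emph{Close} points satisfy $\cost(p,\tilde\cS)\le (4z/\varepsilon)^z\cost(p,\cA)$. Here the generalized triangle inequality lets us replace $\cost(p,\tilde\cS)$ by $\cost(p,\tilde\cS)-\cost(c_i,\tilde\cS)$, which is bounded by $\varepsilon\cost(p,\tilde\cS)+(z/\varepsilon)^{z-1}\cost(p,\cA)$. Subtracting the center's cost is a deterministic shift that is preserved exactly, because the sampling weights preserve cluster sizes up to $1\pm\varepsilon$; this is arranged by also sampling proportionally to cluster mass. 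The variance becomes $O(\varepsilon^{-2}\cdot k^{-1})$ relative to $(\cost(G,\tilde\cS)+\cost(G,\cA))^2$.
\item \emph{Far} points are those whose cluster $C_i$ is far from $\tilde\cS$. For them $\cost(p,\tilde\cS)=(1\pm\varepsilon)\cost(c_i,\tilde\cS)$, so their total is determined by the (preserved) cluster mass.
\end{itemize}
A union bound over the $|\cN|$ solutions gives the $\log|\cN|$ factor.

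To extend from $\cN$ to an arbitrary solution $\cS$, pick $\tilde\cS$ as in \Cref{def:apx-centroid-intro}. On close points the definition gives an additive error $\frac{\varepsilon}{z\log(z/\varepsilon)}(\cost(p,\cS)+\cost(p,\cA))$, which sums over the $O(z\log(z/\varepsilon))$ groups to $O(\varepsilon)(\cost(P,\cS)+\cost(P,\cA))$, both for $P$ and for the coreset $\Omega$. Far points are handled via their center as above.

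I expect the main obstacle to be the variance bound that yields $\varepsilon^{-2}$ rather than $\varepsilon^{-2-z}$ for the main groups. This needs the ring/group structure together with the subtraction of $\cost(c_i,\tilde\cS)$, so that the unbounded ratio $\cost(p,\tilde\cS)/\cost(p,\cA)$ never enters the variance. The $\varepsilon^{-z}$ term comes only from the outer groups, where this cancellation is weaker.
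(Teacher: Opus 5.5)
Your architecture is the standard one of \cite{Cohen-AddadSS21}: rings and groups around $\cA$, snapping the cheap part to the centers, sampling $\propto\cost(p,\cA)$ in each group, Bernstein plus a union bound over $\cN$, a cluster-size event for clusters far from the solution, and finally the transfer from $\tilde\cS$ to $\cS$. The paper only cites this statement, but the appendix proof of its chaining version (Theorem~\ref{thm:framework}) uses the same decomposition.

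The gap is the step you yourself flag as the crux. The main-group variance bound is asserted rather than derived, and the mechanism you propose does not deliver it. With $\vartheta=\varepsilon$ fixed in Fact~\ref{fct:triangle}, your deviation bound carries the term $(z/\varepsilon)^{z-1}\cost(p,\cA)$. The variance bound $\frac{\cost(G,\cA)}{\delta}\sum_{p\in G}x_p^2/\cost(p,\cA)$ you can derive from it is therefore of order $\varepsilon^{-2(z-1)}(\cost(G,\tilde\cS)+\cost(G,\cA))^2/\delta$. This loss already occurs when $\tilde\cS$ is close to $\cA$, where $\cost(c_i,\tilde\cS)\approx 0$ and the true deviation is only $O(\cost(p,\cA))$. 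Bernstein then needs $\delta\gtrsim\varepsilon^{-2z}\log|\cN|$, which exceeds $k(\varepsilon^{-2}+\varepsilon^{-z})\log|\cN|$ for $z\ge 2$ and $k<\varepsilon^{-z}$.

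The figure you state does not match your sample size either. If one normalized sample has relative variance $V$, Bernstein needs $\delta\approx V\varepsilon^{-2}\log|\cN|$. So $\delta=\tilde O(k\varepsilon^{-2}\log|\cN|)$ corresponds to $V=O(k)$, not $O(\varepsilon^{-2}k^{-1})$.

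Subtracting $\cost(c_i,\tilde\cS)$ only pays off when $\rho=\cost(c_i,\tilde\cS)/\cost(p,\cA)$ is large. Then $\vartheta$ must scale like $\rho^{-1/z}$, which gives deviation $2^{O(z)}\rho^{1-1/z}\cost(p,\cA)$ and $V\lesssim\rho^{1-2/z}\le 2^{O(z\log z)}\varepsilon^{-(z-2)}$. This is where a $k$-free analysis picks up its $\varepsilon^{-z}$, namely in the main groups. Non-far points of outer groups have cost ratio at most $4^z$, hence $V\le 16^z$ (cf.\ Lemma~\ref{lem:variance-outer}), so the $\varepsilon^{-z}$ does not come from the outer groups as you claim.

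Since the statement carries a factor $k$, the simplest repair is to drop the subtraction and use the group structure directly:
\begin{itemize}
\item By Claim~\ref{cl:group-property}, $\cost(G,\cA)/\cost(p,\cA)\le 4k|C\cap G|$.
\item By \eqref{eq:triangle}, every $p\in C\cap G$ satisfies $\cost(p,\tilde\cS)\le 2^{O(z)}(\cost(C\cap G,\tilde\cS)+\cost(C\cap G,\cA))/|C\cap G|$ for an arbitrary solution.
\end{itemize}
Hence, already for each cluster separately, the estimator of $\cost(C\cap G,\tilde\cS)$ has $V\le 2^{O(z)}k$ relative to $(\cost(C\cap G,\tilde\cS)+\cost(C\cap G,\cA))^2$. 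Its summands are at most $2^{O(z)}k(\cost(C\cap G,\tilde\cS)+\cost(C\cap G,\cA))/\delta$. This is the same computation as the $k$-dependent case in the proof of Lemma~\ref{lem:variance-main}, but in expectation, so no conditioning on $\cE^M_G$ is needed. With $\delta=2^{O(z)}k\varepsilon^{-2}\log(k|\cN|)$, Bernstein and a union bound over all pairs $(\tilde\cS,C)$ give error $\varepsilon(\cost(C\cap G,\tilde\cS)+\cost(C\cap G,\cA))$ for every pair.

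The per-cluster form matters because the close/far split has to be taken with respect to $\cS$, not $\tilde\cS$ as you wrote. The centroid-set guarantee only covers points with $\cost(p,\cS)\le(4z/\varepsilon)^z\cost(p,\cA)$, and per-cluster guarantees can be summed over whichever clusters are close for $\cS$.

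Finally, your far-point argument does not carry over to outer groups. There $\cost(p,\cA)$ varies by unbounded factors inside $C\cap G$, so sampling $\propto\cost(p,\cA)$ does not preserve $|C\cap G|$. Also, the points of a far cluster need not satisfy $\cost(p,\cS)=(1\pm\varepsilon)\cost(c_i,\cS)$. The fix, as in Lemma~\ref{lem:far-cost-point}, is to normalize by $\cost(P^G,\cS)+\cost(P^G,\cA)$. Markov's inequality then gives $|C\cap G|\le(\varepsilon/z)^{2z}|C|$, which shows that far outer points contribute at most $\varepsilon\cost(C,\cS)$, both in $P$ and in the weighted sample.
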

Unfortunately, this easy-to-use reduction is not always tight and it also turn out to not be tight here.
A refinement of the analysis, based on the chaining technique from Gaussian process theory \cite{Talagrand96}, yields the following theorem:

\begin{theorem}\label{thm:framework-intro}
    Let $\cM = (\cX, d)$ be a metric space, and let $U\geq 0$ be a value inherent to the metric space and let $c>0$ be an absolute constant. Assume that for any set of points $P$ and for every $h > 0$, the $(2^{-h}, k, z, \cA)$-approximate centroid set (succinctly denoted as) $\cN_h$ satisfies
    \[
        |\cN_h| \leq \exp\left(Ukz^2\log |P| \cdot 2^{ch}\log(2^h\varepsilon^{-1})\right).
    \]
    Then, there exists an $(\varepsilon, k, z)$-coreset of size
    \[
        |\Omega| \leq \tilde O\left(2^{O(z\log z)} \cdot Uk \cdot (\varepsilon^{-c} + \varepsilon^{-2}) \cdot  \min(\varepsilon^{-z}, k)\right).
    \]
\end{theorem}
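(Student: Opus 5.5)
We prove \Cref{thm:framework-intro} by a chaining refinement of the analysis behind \Cref{thm:framework-weak}. We run the same algorithm as \cite{Cohen-AddadSS21}. Starting from the solution $\cA$, we partition $P$ into rings (groups of points whose cost to $\cA$ is within a factor $2$ and whose cluster cost is within a factor $2$ of a fixed level). We discard or handle deterministically the groups of negligible total cost. From each remaining group we sample $\delta$ points with probability proportional to $\cost(p,\cA)$ and reweight them.

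We reduce the error analysis to bounding, for each group $G$, the expected supremum over all solutions $\cS$ of the normalized deviation
\[
X_\cS = \Big|\tfrac{1}{\delta}\sum_{j=1}^{\delta} w_j \cost(p_j,\cS) - \cost(G,\cS)\Big| \Big/ \big(\cost(G,\cS)+\cost(G,\cA)\big).
\]
Symmetrization with Rademacher or Gaussian signs turns this into the supremum of a subgaussian process indexed by solutions. Points $p$ with $\cost(p,\cS) > (4z/\varepsilon)^z\cost(p,\cA)$ are treated separately, as in \cite{Cohen-AddadSS21}: for such points $\cS$ is so far away that their contribution is controlled by the contribution of $\cA$ via the triangle inequality. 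This case split is the reason \Cref{def:apx-centroid-intro} only constrains non-far points.

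For the remaining points we chain. For each $h$ from $\log(1/\varepsilon)$ down to $1$, we let $\pi_h(\cS)\in\cN_h$ be the approximant of $\cS$ given by the $(2^{-h},k,z,\cA)$-approximate centroid set, and we write the telescoping sum
\[
\cost(p,\cS) = \cost(p,\pi_{h_0}(\cS)) + \sum_h \big(\cost(p,\pi_{h+1}(\cS)) - \cost(p,\pi_h(\cS))\big),
\]
where the last level is exact up to $\varepsilon$. Each increment has magnitude at most $2^{-h}(\cost(p,\cS)+\cost(p,\cA))$ up to $z\log$ factors. The subgaussian variance of an increment, after normalizing by the group's cost, is therefore about $2^{-2h}\cdot 2^{O(z\log z)}\min(\varepsilon^{-z},k)/\delta$. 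The $\min(\varepsilon^{-z},k)$ factor comes from the ratio between a point's cost in $\cS$ and in $\cA$, which is handled exactly as in the $k$-means chaining analyses of \cite{Cohen-AddadLSSS22}. A union bound over the at most $|\cN_h||\cN_{h+1}|$ increment pairs at level $h$ gives an expected supremum bound of order
\[
\sum_h 2^{-h}\sqrt{\log|\cN_h|\cdot \min(\varepsilon^{-z},k)/\delta}.
\]

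We plug in $\log|\cN_h| \le Ukz^2\log|P|\, 2^{ch}\log(2^h/\varepsilon)$. The level-$h$ term becomes $2^{(c/2-1)h}\sqrt{Uk\cdot\mathrm{polylog}\cdot\min(\varepsilon^{-z},k)/\delta}$. Summed up to $2^h=1/\varepsilon$, this is dominated by $\varepsilon^{-\max(0,c/2-1)}$ up to a logarithmic factor when $c=2$. Requiring the total to be at most $\varepsilon$ gives
\[
\delta = \tilde O\big(2^{O(z\log z)}\,Uk\,(\varepsilon^{-c}+\varepsilon^{-2})\min(\varepsilon^{-z},k)\big).
\]
Summing over the $\mathrm{polylog}$ many groups completes the size bound. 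The $\log|P|$ factor can be eliminated by the standard trick of first reducing $|P|$ to $\mathrm{poly}(k/\varepsilon)$ with a weaker coreset, which hides it in $\tilde O$.

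The main obstacle is the interaction between the chaining increments and the far-point exclusion. Since $\pi_h(\cS)$ only approximates the costs of non-far points, a point's far or non-far status can change between levels. We expect to handle this by defining the far set with respect to $\cS$ once, using the slack between thresholds $(4z/\alpha)^z$ for different $\alpha$, and bounding the leftover by the $\cost(p,\cA)$ term. A secondary difficulty is getting $\min(\varepsilon^{-z},k)$ rather than $\varepsilon^{-z}$, which requires the variance bound via $\sum_p \cost(p,\cS)^2/\cost(p,\cA)$ over clusters rather than pointwise.
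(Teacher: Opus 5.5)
Your outline has the same architecture as the paper's proof: group decomposition around $\cA$, sampling proportional to $\cost(p,\cA)$, separating the huge part, Gaussian symmetrization as in \Cref{lem:symm}, chaining through the $\cN_h$ with level-$h$ variance $2^{-2h}2^{O(z\log z)}\min(\varepsilon^{-z},k)/\omega$, and the same final computation of the sample size. However, the two points you defer are where the argument actually lives, and your plan for them does not close.

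\emph{First, the huge part.} The triangle inequality only shows that the far points of one cluster have nearly equal $\cS$-costs. Their total is at least $(4z/\varepsilon)^z$ times their $\cA$-cost, so it must be \emph{estimated}, not charged to $\cA$. That requires the sample to estimate, uniformly over $\cS$, how many such points each cluster contains. With a pointwise far set these counts depend on $\cS$. The paper makes hugeness a property of the whole piece $C\cap G$ (\Cref{def:huge}). Since $C\cap G$ is a single ring, all its points then have $\cS$-cost within $1\pm 2\varepsilon$ of each other (\Cref{lem:huge-cost-point}). It then suffices that the $\cS$-independent event $\cE^M_G$ holds, i.e.\ that the sampled weights estimate $|C\cap G|$ for all $k$ clusters at once (\Cref{lem:good-estimate}).

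The same event is what yields $\min(\varepsilon^{-z},k)$. The Gaussian chaining is conditional on $\Omega$, so the variance bound must hold for every $\cS$ on the realized sample. A bound on the expected variance, proportional to $\sum_p \cost(p,\cS)^2/\cost(p,\cA)$, for each fixed $\cS$ does not provide this. On $\cE^M_G$ the uniform bound follows from $\cost(p,\cS)\le (2^z\cost(C\cap G,\cS)+4^z\cost(C\cap G,\cA))/|C\cap G|$, summed cluster by cluster. On the complement, which has probability at most $\varepsilon^{2z}$, the crude bound $2^{O(z\log z)}\varepsilon^{-2z}/\omega$ suffices.

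\emph{Second, the status change along the chain.} Your patch goes the wrong way. A $(2^{-h},k,z,\cA)$-approximate centroid set constrains only points with $\cost(p,\cS)\le (4z2^h)^z\cost(p,\cA)$, and for $2^h<1/\varepsilon$ this threshold is \emph{below} the far threshold $(4z/\varepsilon)^z$. For non-far points whose ratio lies in between, $\cost(p,\pi_h(\cS))$ is unconstrained. The leftover is then of order $\cost(p,\cS)$ and cannot be charged to $2^{-h}\cost(p,\cA)$.

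The paper never lets the status move. Huge pieces are fixed once per $\cS$ at the target threshold $(4z/\varepsilon)^z$. Every $\cN_h$ is a clustering net in the sense of \Cref{def:nets}: its vectors vanish on huge pieces and are $2^{-h}$-accurate on \emph{all} points of non-huge pieces. This reading is consistent with the $\log(2^h\varepsilon^{-1})$ factor in the assumed size bound.

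Two smaller differences. First, the paper removes $\log|P|$ by applying the size hypothesis to the sample itself, so it becomes $\log\omega$ and is absorbed via $\omega\ge 2\omega'\log\omega'$. You instead pre-reduce $P$, which would require your construction to handle weighted inputs. Second, the paper merges points whose $\cA$-cost is far above their cluster's average into outer groups normalized by $\cost(P^G,\cS)+\cost(P^G,\cA)$, charging far pieces to their whole cluster (\Cref{lem:far-cost-point}). Your ring partition instead produces a number of rings growing with $\log|P|$.
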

Some of the arguments used to prove Theorem~\ref{thm:framework-intro} are implicit in previous works, most notably \cite{Cohen-AddadLSS22}, but were not given in full generality as expressed above. We provide a self-contained proof in Appendix~\ref{sec:coreset_framework}. We note that the space of polygonal curves equipped with the \frechet{} distance is known to be a metric space. We stress that the above theorem goes beyond bounding coreset sizes for this notion of a \emph{\frechet{} metric}; it is, in fact, general and may be convenient for other researchers attempting to prove coreset bounds for other metrics of interest. 

The algorithm to build the coresets from \Cref{thm:framework-weak} and \Cref{thm:framework-intro} is the same, from \cite{Cohen-AddadSS21}. A brief description of this algorithm is as follows. First, compute a bi-criteria approximate solution (with cost whithin a constant factor of the optimal cost, but using $O(k)$ centers instead of exactly $k$), using \cite{ConradiKPR24}. Then, partition each cluster of that solution into annuli of exponentially growing radius. Group the radius of the different clusters into "group", where annulus in the same group have the same ratio of (radius of the annulus) / (average distance to the center in the cluster). In each of them, take a uniform sample: the union of those samples forms the desired coreset.

\subsection{Centroid Sets for the \frechet{} Distance}

The difficulty in applying the above logic to the metric space of curves under the Fréchet distance is that modeling the points of the polygonal curve is no longer sufficient for the entire curve. For example, consider the setting where every curve is a long parallel line of only two vertices. Centers can now be somewhere in the middle of these long lines, making it challenging to enumerate over all potential center sites. 
Straightforward ways of limiting the number of candidate centers by imposing a discretization along the lines lose parameters depending on the bit complexity of the input.
Notably, the VC dimension is our only tool of addressing this challenge, as it is scale-invariant and thus depends only on combinatorial parameters.
Unfortunately, the scale-invariance loses a dependence on the ambient dimension $d$, which can be mitigated somewhat with dimension reduction techniques, but will never be optimal in $\varepsilon$.

Despite these challenges, we will prove that we can efficiently enumerate over all potential solutions in a scale sensitive way:

\begin{lemma}
    \label{thm:coresetMain}
    Let $P$ be a set of $n$ curves in $\curves{m}$ and let $\cA \in {\curves{m}}^k$ be a candidate solution.
    Then there exists an $(\alpha, k, z, \cA)$-approximate centroid set $\cN$ for $(k, z, \ell)$-clustering on $P$ under the  Fréchet distance such that
    \[ |\cN| \leq \exp{O\left(\ell(k\log(n m)+kd\log (\ell/\alpha)\right)}. \]
\end{lemma}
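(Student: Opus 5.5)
The plan is to construct $\cN$ one center at a time and then take a $k$-fold product. It suffices to build, for a single center curve $\tau\in\mathcal{Q}_\ell$, a candidate family $\cN^{(1)}$ of size $\exp(O(\ell\log(nm)+\ell d\log(\ell/\alpha)))$ with the following property. For every $\tau$ there is $\tilde\tau\in\cN^{(1)}$ such that, for every input curve $p$ with $d_F(p,\tau)^z\le (4z/\alpha)^z\cost(p,\cA)$,
\[
|d_F(p,\tilde\tau)-d_F(p,\tau)|\le \tfrac{\alpha'}{z}\bigl(d_F(p,\tau)+d_F(p,\cA)\bigr),
\qquad \alpha'=\alpha/(z\log(z/\alpha)).
\]
Raising to the $z$-th power, with the usual $(1+x)^z$ estimates, gives the required cost bound in \Cref{def:apx-centroid-intro}. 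The $\log z$ factors are absorbed into $\log(\ell/\alpha)$ up to constants.

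\textbf{Reduction to a local scale.} Fix $\tau$ and let $p^\ast$ be a relevant input curve minimizing $d_F(p^\ast,\cA)$; call this value $r$. I would not argue that one anchor works for all relevant $p$. Instead I would guess the pair $(p^\ast, A_j)$, where $A_j$ is the center of $\cA$ assigned to $p^\ast$, and a scale $R$. This costs $n\cdot k\cdot O(\log(nm))$ choices, where the scale is taken relative to distances of $p^\ast$ to $\cA$ and only $O(\log)$ many dyadic values are needed because relevant curves have bounded ratio. By the triangle inequality, $d_F(\tau,p^\ast)\le (4z/\alpha)r$. So every vertex of $\tau$ lies within distance $O(zr/\alpha)$ of the curve $p^\ast$, though not necessarily near a vertex of $p^\ast$. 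For any other relevant $p$ we have $d_F(p,\tau)+d_F(p,\cA)\ge d_F(p,\cA)\ge r$. Hence an additive error of $\alpha' r/z$ on $d_F(p,\tau)$ is acceptable, and by the $1$-Lipschitzness of $d_F$ under vertex perturbation it suffices to move each vertex of $\tau$ by at most $\alpha' r/z$.

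\textbf{Discretizing each vertex.} Each vertex $t_i$ of $\tau$ lies within $O(zr/\alpha)$ of some point $x_i$ on a segment $[a,a']$ of $p^\ast$. I would guess which segment, at $m$ choices per vertex. Then I split $t_i=x_i+(t_i-x_i)$. The offset $t_i-x_i$ lies in a $d$-dimensional ball of radius $O(zr/\alpha)$, which I cover by an $(\alpha' r/z)$-net of size $(z^2\log(z/\alpha)/\alpha^2)^{O(d)}=\exp(O(d\log(\ell/\alpha)))$. The along-segment position $x_i$ is the difficult part, because $\|a-a'\|$ may be huge compared to $r$, so a uniform grid would depend on the aspect ratio. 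Here I would use the idea flagged in the techniques section. If consecutive vertices $t_i,t_{i+1}$ both lie near the same long segment far from its endpoints, then the part of $\tau$ between them is, up to the allowed error, parallel to that segment. Its exact position along the segment only matters relative to where other relevant curves' vertices project. Relevant curves are within $O(zr/\alpha)$-ish Fréchet distance of $\tau$ in scale units, and in any case their vertices number at most $nm$. So I would snap $x_i$ either to within $O(zr/\alpha)$ of one of the $nm$ input vertices projected onto the segment, discretized by an $\alpha' r/z$ grid on that window, or declare it ``free'' and place it canonically, for example at the segment midpoint. A free vertex does not affect any $d_F(p,\tau)$ beyond additive $\alpha' r/z$, since the matching can slide along parallel stretches. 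This gives $nm\cdot\mathrm{poly}(z/\alpha)$ choices per vertex.

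\textbf{Counting and main obstacle.} Multiplying over the $\ell$ vertices gives $\exp(O(\ell\log(nm)+\ell d\log(\ell/\alpha)))$ per center, and the $k$-fold product yields the stated bound. The main obstacle is justifying the ``free vertex'' snapping. I must show rigorously that moving a vertex of $\tau$ along a long input segment, far from all projected input vertices, changes $d_F(p,\tau)$ by at most the additive slack for every relevant $p$ simultaneously. The first attempt would be a direct reparameterization argument. The optimal matching of $p$ to $\tau$ maps a neighborhood of that vertex to a portion of $p$ which, lacking vertices in the window, is a single segment nearly parallel to $[a,a']$. Sliding the vertex then only requires adjusting the reparameterization, and the pointwise distances change by $O(\alpha' r/z)$ using the convexity of distance between two segments.
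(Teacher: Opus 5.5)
\textbf{Verdict: there is a genuine gap.} It sits in the free-vertex step, and the claim made there is false, not merely unproven.

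\textbf{Why the free-vertex step fails.} You assume that, away from the projections of input \emph{vertices}, every relevant input curve runs parallel to the segment $[a,a']$ of $p^\ast$, so a vertex of $\tau$ can slide along it. But an input segment with no vertex near the window can be tilted. Take $d=2$, $k=1$, $p^\ast=(0,0)\to(L,0)$, a second input curve $p=(0,h)\to(L,-h)$, $\cA=\{(0,h/2)\to(L,h/2)\}$ (so $r=h/2$), $L\gg zh/\alpha$, and $\tau=(0,0)\to(L/4,-h)\to(L,0)$.

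Both inputs are relevant. The middle vertex of $\tau$ is at distance $L/4$ from every projected input vertex, so it is free. Yet $d_F(p,\tau)\approx 3h/2$, while $d_F(p,\cdot)=h$ once that vertex is moved to $(L/2,-h)$ or $(3L/4,-h)$. Meanwhile $d_F(p^\ast,\cdot)=h$ throughout. The change $h/2=r$ is far above the allowed $O(\alpha' r/z)$.

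More generally, $d_F(p,\cdot)$ depends on the vertex position $x$ roughly as $\max(h,2h(1-x/L))$. So no single canonical point per zone can work. The along-segment position must itself be discretized, at a resolution that is a small fraction of the length of every input segment crossing it. Your reparameterization sketch breaks exactly at ``a single segment nearly parallel to $[a,a']$''.

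The midpoint rule has two further defects. Sending a free vertex past a later non-free vertex creates backtracking of length $\Omega(L)$, which destroys even $d_F(p^\ast,\tilde\tau)$. And relevant curves with large $d_F(p,\cA)$ are not within $O(zr/\alpha)$ of $\tau$, so they need their own relative-error argument.

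\textbf{What the paper supplies that you are missing.} There are three ingredients.

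First, the anchor set $NC_{s^{\min}}$ on each segment of the anchor curve consists of $\Theta(\alpha^{-3})$ regularly spaced points on the projection onto $s^{\min}$ of \emph{every} input segment, and likewise on the back-projection of $s^{\min}$ onto each segment. This is scale-free and still yields $\mathrm{poly}(nm/\alpha)$ anchors. Between two consecutive anchors, each crossing input segment then covers only an $O(\alpha^3)$-fraction of its projected length. Its endpoints are within $O(\Delta/\alpha^2)$ of $s^{\min}$: the border-case lemmas show that otherwise either the vertex of $\tau$ is already close to an anchor or input vertex, or the segment can be clipped to the part facing $s^{\min}$. Hence its perpendicular drift inside the window is $O(\alpha\Delta)$, and it may be replaced by a parallel segment (Lemma~\ref{lem:natural}).

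Second, the vertices of $\tau$ inside a window are not snapped one by one. They are moved jointly by the chunk-removal shift of Lemma~\ref{lem:parallel}, which deletes only empty chunks of length $3\Delta/\alpha$. This preserves the order and relative geometry of the vertices, and hence the distances to all parallel segments. Every vertex ends up within $O(\ell\Delta/\alpha)$ of an anchor and can be rounded by an $\alpha\Delta$-net.

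Third, curves with $d_F(p,\tau)\ge 25\Delta/\alpha^2$ are handled via $d_F(\tau,\tilde\tau)\le d_F(\tau,p^{\min})+d_F(p^{\min},\tilde\tau)=O(\Delta/\alpha)$ (Lemma~\ref{lem:largecurves}).

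With these replacements your count of $nm\cdot\mathrm{poly}(1/\alpha)\cdot(\ell/\alpha)^{O(d)}$ choices per vertex goes through unchanged.
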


In the notation of Theorem~\ref{thm:framework-intro}, the above bound can be written as $|\cN| \leq \exp{Uk\log n \cdot \log \ell\alpha^{-1}}$, with $U \in O(\ell d\log m)$. Together, Theorem~\ref{thm:framework-weak} and Lemma~\ref{thm:coresetMain} imply the following: \begin{corollary}\label{cor:cont_frechet_coresets}
    Let $P \in \curves{m}$ be a set of input curves to be clustered under the Fréchet distance. Then the $(k, 1, \ell)$-clustering objective admits $(\eps, k, 1)$-coresets of size
    \[ |\Omega| \leq \tilde{O}\left( k \eps^{-2} \cdot \ell d \log m \right). \]
\end{corollary}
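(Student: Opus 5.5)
The corollary follows by plugging the centroid-set bound of Lemma~\ref{thm:coresetMain} into the generic reduction of Theorem~\ref{thm:framework-weak} with $z=1$.

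\textbf{Step 1: fix the reference solution.} Take $\cA$ to be a constant-factor (or bicriteria) approximate solution. This is the solution used by the framework of \cite{Cohen-AddadSS21}, and one can be computed with \cite{ConradiKPR24}.

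\textbf{Step 2: bound $\log|\cN|$.} Apply Lemma~\ref{thm:coresetMain} with $z=1$ and precision $\alpha=\varepsilon$. This gives an $(\varepsilon,k,1,\cA)$-approximate centroid set $\cN$ with
\[
\log|\cN| \le O\bigl(\ell k\log(nm) + \ell k d\log(\ell/\varepsilon)\bigr).
\]

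\textbf{Step 3: plug into Theorem~\ref{thm:framework-weak}.} With $z=1$ we have $2^{O(z\log z)}=O(1)$ and $\varepsilon^{-2}+\varepsilon^{-z}=O(\varepsilon^{-2})$. The coreset size is therefore
\[
\tilde O\bigl(k\,\varepsilon^{-2}\log|\cN|\bigr) = \tilde O\bigl(k\varepsilon^{-2}\cdot \ell k(\log(nm)+d\log(\ell/\varepsilon))\bigr).
\]

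\textbf{Main obstacle: removing the $\log n$ and the extra factor $k$.} A naive plug-in leaves a $\log n$ term and an extra factor of $k$, neither of which appears in the claimed bound.

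\emph{Removing $\log n$.} The $\log n$ should disappear by the standard trick of the framework. One first reduces $P$ to a set of size $\mathrm{poly}(k,\varepsilon^{-1},d,\ell,m)$, for instance by applying the weaker coreset once or by iterating the construction. The centroid set only needs to be valid on the sampled points, so $\log(nm)$ becomes polylogarithmic in the other parameters and is absorbed into the $\tilde O$.

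\emph{Removing the extra $k$.} The statement writes the bound as $\exp(Uk\log n\cdot\log(\ell/\alpha))$ with $U=O(\ell d\log m)$, i.e.\ the factor $k$ is counted once inside $\log|\cN|$. I would read Theorem~\ref{thm:framework-weak} as producing size $\tilde O(\varepsilon^{-2}\log|\cN|)$ when $\log|\cN|$ already carries the factor $k$; this is how such statements are normally used, since the $k$ in that theorem comes from the number of clusters of $\cA$, which is charged once. With that reading, the remaining $d\log(\ell/\varepsilon)\log m$ factors become $\ell d\log m$ up to polylogarithmic factors, giving
\[
|\Omega|\le\tilde O\bigl(k\varepsilon^{-2}\ell d\log m\bigr).
\]

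\textbf{Where I expect difficulty.} The hardest part is checking that the $k$ and $\log n$ factors are not double counted. I would first verify this by matching the parameterization $|\cN|\le\exp(Uk\log|P|\cdot\ldots)$ of Theorem~\ref{thm:framework-intro}, specialized to $c=0$, which is exactly the form used here.
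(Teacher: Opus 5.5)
Your proposal takes the same route as the paper, which gets this corollary in one line. The paper rewrites Lemma~\ref{thm:coresetMain} as $|\cN|\le\exp(Uk\log n\cdot\log(\ell/\alpha))$ with $U=O(\ell d\log m)$ and plugs it into Theorem~\ref{thm:framework-weak} with $z=1$; along the way it implicitly charges $k$ only once and absorbs $\log n$ as in the $\log|P|$ parameterization of Theorem~\ref{thm:framework-intro}, and these are exactly the two adjustments you make explicit.

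One correction: $k$ should be charged once not because it counts the clusters of $\cA$, but because the original bound of \cite{Cohen-AddadSS21} is $\varepsilon^{-2}\cdot k\log|\mathbb{C}|$ for a set $\mathbb{C}$ of candidate centers. Here that set is $\candCurve$, with $\log|\candCurve|=O(\ell\log(nm)+\ell d\log(\ell/\varepsilon))$, so $k\log|\mathbb{C}|$ is already the logarithm of the number of candidate solutions.
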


\paragraph{Removing the Dependence on $d$.}
This corollary can be easily combined with the linear terminal embeddings, to get coreset of size independent of the dimension $d$.
A first, direct corollary follows from Theorem \ref{thm:mainfrechet}. Indeed, if $f$ is a terminal embedding, the pre-image of a coreset for $f(P)$ is a coreset for $P$ \cite{HuangV20}. 
Thus, this directly replaces the dependence on $d$ in Corollary \ref{cor:cont_frechet_coresets} by $\ell \eps^{-4} \log(nm)$. 

We can reduce even further the dependence on $\ell$: instead of creating one embedding that works simultaneously for all center curves, we can use \Cref{cor:centroidset} to create $\text{exp}(O(\ell \varepsilon^{-2} \log(nm)))$ many different embeddings, such that for each center curve there is one correct embedding. Building nets for each embedding then yields a net for the original space with the benefit of saving an $\eps^{-2}$ factor.

\frechetcoreset*
\begin{proof}
    \Cref{cor:centroidset} provides a collection $S$ of $\text{exp}(O(\ell \varepsilon^{-2} \log(nm)))$ projection matrices such that, for any $\sigma \in \curves{\ell}$, there is a matrix $\Pi \in S$ such that the equation of \Cref{lem:innerproduct} hold.

    For fixed $\Pi \in S$ and $\sigma \in \curves{\ell}$, we can build on the equation of \Cref{lem:innerproduct} to compute linear mappings $f$ (independent of $\sigma$) and $g$ such that, for any point $a$ in an input line and any $b \in \sigma$, $\|f(a) - g(b)\| = (1\pm \eps)\|a-b\|$. 
    For this, $f$ and $g$ map $\R^d$ to $\R^{\ell \eps^{-2} + 2}$: it is merely enough to add $2$ ancillary coordinates to encode $\|(I-\Pi)a\|$ and $\|(I-\Pi)b\|$.

    Hence, it is enough to build a net in a space of dimension $\ell \eps^{-2} + 2$. For this, we use \Cref{thm:coresetMain} as a black-box, to get a net of size  
    \[ |\cN_\Pi| \leq \exp{O\left(\ell(k\log(n m)+k \ell \eps^{-2} \log (\ell/\alpha)\right)}. \]

    This net is valid only for the curves in $\curves{\ell}$ that respect the equation of \Cref{lem:innerproduct} with this particular $\Pi$. To get a net for all curves in $\curves{\ell}$, it is enough to take the union of nets for all the $\exp(O(\ell \varepsilon^{-2} \log(nm)))$ many $\Pi$, hence providing a net with size 
        \[ |\cN| \leq \exp{O\lpar \ell \varepsilon^{-2} \log(nm)\rpar} \cdot \exp{O\left(\ell(k\log(n m)+k \ell \eps^{-2} \log (\ell/\alpha)\right)}. \]
    Combined with the chaining framework provided in \cref{thm:framework-intro}, this provides the desired coreset size.
\end{proof}


The remainder of this section is dedicated to providing a proof of Lemma~\ref{thm:coresetMain}. Throughout this section, we use $\frechetCont(a, b)$ to indicate the  Fréchet distance.

\subsection{A Warm Up: Parallel Lines}
\label{sec:parallel}
As a warm-up for the  Fréchet distance, we will show how to build an approximate centroid set in the particular case where \emph{all input curves are parallel}.

\subsubsection{Set Up}
Assume
that there is a single center $\cc$, and that all input curves satisfy $\frechetCont(\inp^i, \cc) \leq \frac{8}{\alpha} \frechetCont(\inp_i, \calA)$. Let $\Delta = \min \frechetCont(\inp^i, \calA) = \frechetCont(\inp^{\min}, \calA)$, and assume that all input curves are at distance at most $\Delta\alpha^{-2}$ from $\cc$. We will see later that those assumptions are without loss of generality.

If all vertices of $\cc$ were close to input vertices, then a simpler discretization of the Euclidean balls centered input vertices is sufficient. Hence, the issue is when some vertex from $\cc$ is far from every input vertex. This implies that all input curves have "long" edges, and the center's vertices are in the middle of those long edges.
Furthermore, those edges are at distance at most $\Delta\alpha^{-2}$ from the corresponding edge of $\inp^{\min}$: the angle between those segments is therefore somewhat tiny, and we will assume for now the segments are even parallel. We will show later how it is possible to reduce to that case.

All those assumptions simplify the input: it simply consists of parallel segments of same length. Also, suppose that $\inp^{\min} = \inp^{1} = \langle\inp^1_1, \inp^1_2\rangle$. 
Let $\dir := \frac{\inp^1_2 - \inp^1_1}{\|\inp^1_2 - \inp^1_1\|}$ be the common direction of the segments $\inp^i$.

\subsubsection{Construction of \texorpdfstring{$\tcc$}{γ} for Parallel Lines}
For a given center curve $\cc$, our goal will be to construct a curve $\tcc$ with the same distance to any segment as $\cc$, but where all vertices are "close" to the vertex $\inp^1_1$. That way, it will be possible to take a net of the space close to $\inp^1_1$, so as to bound the number of possible curves $\tcc$.

For that, we start by breaking the cylinder $\cyl(\inp^1, \Delta\alpha^{-1})$ (all points at distance at most $\Delta\alpha^{-1}$ from $\inp^1$) into pieces of length $3\Delta\alpha^{-1}$. Call each piece a "chunk", an let $A_1, A_2, ...$ be the chunks taken in order -- from $\inp^1_1$ to $\inp^1_2$. Since $\cc$ has $\ell$ vertices, at most $\ell$ chunks contain one vertex from $\cc$. 
Note that since $\frechetCont(\inp^1, \cc) \leq \frac{8\Delta}{\alpha}$, all vertices from $\cc$ are in $\cyl(\inp^1, \Delta\alpha^{-1})$.

Our construction of $\tcc$ removes area $A_i$ if neither $A_{i-1}$ nor $A_i$ contain a vertex from $\cc$. By "removing", we mean that the vertices from subsequent chunks are shifted by $\frac{3\Delta}{\alpha} \cdot \dir$.
More precisely, the construction is as follows: for any $i$, let $R_{i}$ be the number of chunks $A_j$ such that $j < i$ and $A_{j-1}, A_j$ do not contain any vertex from $\cc$. 
$\tcc$ is now defined as follows: 
its $i$-th vertex is $\cc_i - R_{i'}\cdot \frac{3\Delta}{\alpha}\cdot \dir$, where $i'$ is the chunk that contains $\cc_i$. Furthermore, we complete the curve by adding a copy of the last vertex of $\cc$ at the end of $\tcc$.
The key property of that construction is the following:
\begin{restatable}{fact}{factdistShifted}\label{fact:distShifted}
All vertices of $\tcc$ are at distance at most $9\ell \cdot \Delta\alpha^{-1}$ of $\inp^1_1$.
\end{restatable}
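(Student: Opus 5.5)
The plan is to track, for each vertex of $\tcc$, two components: its displacement along the direction $\dir$ and its displacement orthogonal to $\dir$. Both are measured relative to $\inp^1_1$. The orthogonal part is immediate. Every vertex $\cc_i$ lies in $\cyl(\inp^1, \Delta\alpha^{-1})$, as noted before the statement. The shift applied to $\cc_i$ is a multiple of $\dir$, so it does not change the orthogonal component. Hence each vertex of $\tcc$ is within $\Delta\alpha^{-1}$ of the line through $\inp^1$ in the orthogonal direction.

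The main work is bounding the component along $\dir$. Fix a vertex $\cc_i$ lying in chunk $A_{i'}$. Its coordinate along $\dir$, measured from $\inp^1_1$, is at most $3\Delta\alpha^{-1}\cdot i'$, up to an additive $O(\Delta\alpha^{-1})$ for the cylinder's end caps. After the shift this becomes at most $3\Delta\alpha^{-1}(i' - R_{i'})$. Here $i'-R_{i'}$ is, up to one, the number of chunks $A_j$ with $j<i'$ that were \emph{not} removed. A chunk $A_j$ is kept only if $A_j$ or $A_{j-1}$ contains a vertex of $\cc$. Since $\cc$ has $\ell$ vertices, at most $\ell$ chunks contain a vertex. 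Each such chunk causes at most two chunks to be kept: itself and its successor. So the number of kept chunks is at most $2\ell$.

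This gives an along-$\dir$ displacement of at most $3\Delta\alpha^{-1}(2\ell+1)$. We also need a lower bound, so that the shifted coordinate is not too negative. The shift count $R_{i'}$ counts only removed chunks strictly before $A_{i'}$, so the shifted coordinate stays nonnegative, up to the cap slack. Combining with Pythagoras against the orthogonal bound gives a total distance at most $(6\ell+3)\Delta\alpha^{-1} + \Delta\alpha^{-1} \le 9\ell\Delta\alpha^{-1}$ for $\ell\ge 1$. The appended copy of the last vertex inherits the same bound.

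The main obstacle is bookkeeping at the boundaries. One issue is indexing: whether $R_{i}$ counts $j<i$ or $j\le i$, and how chunk $A_1$ (which has no predecessor) is treated. The other is the portion of the cylinder beyond the segment endpoints, where the cap can contribute an extra $\Delta\alpha^{-1}$ along $\dir$. I would handle both by adding an explicit slack of two chunk lengths to the count. This still fits inside the constant $9$, since the kept-chunk count $2\ell$ times $3$ leaves room up to $9\ell$.
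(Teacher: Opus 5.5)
Your argument is the paper's argument. After the removals, a vertex's chunk ends up at a position set by how many chunks survive before it, and the only surviving chunks are vertex-containing chunks and their immediate successors. The paper phrases this as ``occupied chunks are separated by at most two empty chunks, hence lie among the first $3\ell$ chunks.'' Your split into an along-$\dir$ part and an orthogonal part is more careful than the paper's appeal to a chunk ``diameter'' of $3\Delta\alpha^{-1}$. The orthogonal part is bounded by the cylinder radius and is untouched by the shift.

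The step that fails as written is the final constant check.
\begin{itemize}
\item The inequality $(6\ell+3)+1\le 9\ell$ is false for $\ell=1$; even exact Pythagoras gives $\sqrt{(6\ell+3)^2+1}=\sqrt{82}>9$ there.
\item Budgeting two extra chunk lengths ($6\Delta\alpha^{-1}$) for boundary effects makes it fail for $\ell=2$ as well. So the claim that this ``still fits inside the constant $9$'' is false for small $\ell$.
\end{itemize}

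The fix is a sharper count. The chunk $A_{i'}$ itself contains a vertex, and the chunks it keeps ($A_{i'}$ and $A_{i'+1}$) are not strictly before $A_{i'}$. Hence the kept chunks with index below $i'$ come from at most $\ell-1$ other vertex-containing chunks. There are therefore at most $2(\ell-1)$ of them, and $i'-R_{i'}\le 2\ell-1$.

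The front cap costs at most one extra $\Delta\alpha^{-1}$ along $\dir$, not two chunk lengths. The back cap is irrelevant once the chunks partition the cylinder, because the upper bound comes from the shifted chunk index. This gives a distance of at most
\[
\bigl(3(2\ell-1)+1+1\bigr)\Delta\alpha^{-1}=(6\ell-1)\Delta\alpha^{-1}\le 9\ell\,\Delta\alpha^{-1}\quad\text{for all } \ell\ge 1.
\]
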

\begin{proof}
All areas have diameter at most $3\Delta\alpha^{-1}$, and all areas containing vertices of $\tcc$ are separated by at most $2$ empty chunks: therefore, they are all among the first $3\ell$ areas.  
\end{proof}

Note that $\tcc$ may be far from $\cc$: we can nonetheless show it has same distance to every input curve, which is enough for our purposes.
This is done in the following lemma.
\begin{restatable}{lemma}{lemparallel}\label{lem:parallel}
For any input segment $\inp^i$ and any continuous bijection $h : [0, 1] \rightarrow [0,1]$ with $ \max_{t \in [0, 1]} \|\inp^i(h(t)) - \cc(t)\| \leq \Delta\alpha^{-1}$,  there exists a continuous bijection $\tilde h: [0, 1] \rightarrow [0, 1]$ such that 
\[\max_{t \in [0, 1]} \|\inp^i(\tilde h(t)) - \tcc(t)\|\leq \max_{t \in [0, 1]} \|\inp^i(h(t)) - \cc(t)\|.\]
Similarly, for all $h : [0, 1] \rightarrow [0,1]$ with $ \max_{t \in [0, 1]} \|\inp^i(h(t)) - \tcc(t)\| \leq \Delta\alpha^{-1}$, there exists a continuous bijection $\bar{h}:[0,1] \rightarrow [0,1]$ such that 
\[\max_{t \in [0, 1]} \|\inp^i(\bar h(t)) - \cc(t)\|\leq \max_{t \in [0, 1]} \|\inp^i(h(t)) - \tcc(t)\|.\]
\end{restatable}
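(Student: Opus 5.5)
We plan to compare matchings chunk by chunk. Everything lives inside the cylinder $\cyl(\inp^1,\Delta\alpha^{-1})$, so we write each point as $x = \lambda(x)\,\dir + x^\perp$: the scalar $\lambda(x)$ is the coordinate along $\dir$ measured from $\inp^1_1$, and $x^\perp$ is the orthogonal component. Since all input segments are parallel to $\dir$, a point $\inp^i(s)$ has a fixed orthogonal part $o_i$ and an along-axis coordinate that is affine in $s$. For any point $y$ this gives
\[
\|\inp^i(s)-y\|^2 = (\lambda(\inp^i(s))-\lambda(y))^2 + \|o_i - y^\perp\|^2 .
\]
The orthogonal term does not depend on $s$, and the shift by $R\cdot\frac{3\Delta}{\alpha}\dir$ changes only $\lambda(y)$. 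So every distance comparison reduces to a one-dimensional comparison of along-axis coordinates, while the orthogonal offsets stay untouched.

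For the first direction, fix $h$ with $\max_t\|\inp^i(h(t))-\cc(t)\|\le \Delta\alpha^{-1}$. We split $[0,1]$ into maximal parameter intervals on which $\cc(t)$ lies in a removed region, i.e.\ a run of consecutive chunks $A_j$ with $A_{j-1},A_j$ both vertex-free. On such an interval $\cc$ has no vertex, so it is a single straight piece between vertices lying before and after the run. Moreover, the run has length at least $3\Delta\alpha^{-1}$, while the matching error is at most $\Delta\alpha^{-1}$; hence the matched points $\inp^i(h(t))$ traverse the corresponding along-axis interval monotonically.

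We then define $\tilde h$ as follows. Outside removed runs, set $\inp^i(\tilde h(t)) := \inp^i(h(t)) - R\cdot\frac{3\Delta}{\alpha}\dir$, with $R$ the number of removed chunks preceding $t$'s chunk. This is legitimate because the shifted point still lies on the segment $\inp^i$: the segment extends far enough backwards, since the shift equals the total length of deleted material in front of it. The along-axis difference is then unchanged, so distances are preserved exactly. Inside a removed run, $\tcc$ contracts the straight piece; there we reparametrize $\tilde h$ so that $\tilde h$ stays continuous and monotone across the collapsed gap. The pointwise distance is at most the maximum of the distances at the two endpoints of the gap, because along a straight edge versus a parallel segment the distance under an affine matching is convex in the parameter. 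The extra copy of the last vertex appended to $\tcc$ absorbs any leftover portion of $\inp^i$ at the end, since that portion was already within the bound of the last vertex's match.

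The converse direction is symmetric: we re-insert the removed chunks by shifting forward and stretching straight edges of $\tcc$ across the inserted gap, again with convexity controlling the distance on stretched pieces.

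We expect the main obstacle to be the boundary bookkeeping. One issue is verifying that a chunk with no vertex, adjacent to a chunk with no vertex, guarantees that $\cc$ crosses it along a single edge without backtracking. The other is ensuring that shifted points stay within $\inp^i$, including near $\inp^i_1$ and $\inp^i_2$, where the segments of different curves are only equal-length and parallel, not identical. Requiring two empty consecutive chunks (rather than one) gives the slack of $3\Delta\alpha^{-1} > 2\Delta\alpha^{-1}$ needed so that matched input points at the run's boundary also lie strictly inside the removed portion.
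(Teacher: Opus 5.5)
There is a genuine gap in how you model $\tcc$ on edges that cross a removed run. Your step ``outside removed runs, set $\inp^i(\tilde h(t)) := \inp^i(h(t)) - R\cdot\frac{3\Delta}{\alpha}\dir$ \dots\ distances are preserved exactly'' assumes $\tcc(t)=\cc(t)-R\cdot\frac{3\Delta}{\alpha}\dir$ whenever $\cc(t)$ lies in a kept chunk. In other words, it treats $\tcc$ as $\cc$ with the removed material cut out and the remaining pieces translated.

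The construction does something else: it shifts only the \emph{vertices} and joins consecutive shifted vertices by straight segments. Suppose $\cc_i\in A_a$ and $\cc_{i+1}\in A_b$ lie in different zones, so $b\ge a+3$ and the shifts satisfy $R<R'$. In $\tcc$ this edge becomes the single segment $\tcc_i\tcc_{i+1}$. Its direction is $\cc_{i+1}-\cc_i-(R'-R)\frac{3\Delta}{\alpha}\dir$, which differs from that of $\cc_{i+1}-\cc_i$ unless the edge is parallel to $\dir$. So on the parts of this edge lying in the kept chunks $A_a$, $A_{a+1}$, $A_b$, no reparametrization makes $\tcc$ a translate of $\cc$, and exact preservation fails there.

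Your convexity step is also placed on the wrong interval, for two reasons.
\begin{itemize}
\item In your cut-and-glue picture, the two translated pieces do not even meet at the collapsed gap; they differ in the component orthogonal to $\dir$.
\item The matched input points at the run's boundaries may each be off by up to $\Delta\alpha^{-1}$ along $\dir$. After shifting them by $R$ and $R'$, their order can reverse, which breaks monotonicity of $\tilde h$.
\end{itemize}

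The paper avoids this by splitting by edges, not by the times at which $\cc(t)$ is in a removed chunk.
\begin{itemize}
\item On an edge whose endpoints lie in the same zone, $\tcc$ is an exact translate and $\tilde h$ is $h$ shifted, as in your proposal.
\item On a cross-zone edge, both curves are parameterized linearly with the same breakpoints $t_i$. Since all shifts are along $\dir$, $\cc(t)$ and $\tcc(t)$ then have the same orthogonal component, hence the same distance to the line of $\inp$ (the ``Thales'' property). The paper sets $\tilde h(t)$ to the orthogonal projection of $\tcc(t)$ onto $\inp$, clamped near $t_i,t_{i+1}$ for continuity. The perpendicular distance is a lower bound on $\|\inp(h(t))-\cc(t)\|$. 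Monotonicity uses that $\cc$ cannot backtrack more than $2\Delta\alpha^{-1}$ along $\dir$.
\end{itemize}

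Your convexity idea can be repaired, arguably more cleanly, by applying it to the \emph{whole} cross-zone edge. Match $\tcc_i\tcc_{i+1}$ affinely to the piece of $\inp^i$ from $\inp^i(h(t_i))-R\frac{3\Delta}{\alpha}\dir$ to $\inp^i(h(t_{i+1}))-R'\frac{3\Delta}{\alpha}\dir$. The endpoint distances equal $\|\inp^i(h(t_i))-\cc_i\|$ and $\|\inp^i(h(t_{i+1}))-\cc_{i+1}\|$, and convexity bounds everything in between.

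For monotonicity, note that $A_{a+1}$ is never removed, because $A_a$ contains $\cc_i$. Hence $R'-R\le b-a-2$. Meanwhile $\cc_i$ and $\cc_{i+1}$ are at least $(b-a-1)\frac{3\Delta}{\alpha}$ apart along $\dir$. The surplus $3\Delta\alpha^{-1}$ exceeds the total matching error $2\Delta\alpha^{-1}$. Keeping the chunk right after every vertex chunk is the real purpose of the ``two consecutive empty chunks'' rule, not the one you give in your last paragraph.
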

\begin{proof}
For simplicity, we drop the superscript $i$: $\inp := \inp^i$. We also assume that the segment $\inp$ is parameterized linearly: $\inp(t) = \inp(0) + (\inp(1) - \inp(0)) \cdot t$.

Let $t_i$ such that $t_i = \cc^{-1}(\cc_i)$. We assume, up to reparameterization, that $\tcc$ verifies $\tcc(t_i) = \tcc_i$ and that $\tcc$ is parameterized linearly between vertices: $\forall t \in [t_i, t_{i+1}),~ \tcc(t) = \tcc_i + \frac{\tcc_{i+1} - \tcc_i}{t_{i+1}-t_i}\cdot (t-t_i)$ -- and we do the same assumptions for $\cc$. Combined with Thales' theorem, this ensures the following property: 
\[\forall t, \|\cc(t) - \inp(\Pi_{\cc(t)})\| = \|\tcc(t) - \inp(\Pi_{\tcc(t)})\|,\]
where $\Pi$ is defined such that $\inp(\Pi_x)$ is the orthogonal projection of $x$ onto $\inp$.

In order to define $\tilde h$, let us defines zones: a zone is a maximal group of consecutive non-removed chunks, i.e., $A_i$ and $A_j$ are in the same zone if, for any $i' \in [i, j]$, $A_{i'}$ or $A_{i'-1}$ contain a vertex of $\cc$. All vertices from a zone have the same \textit{shift}, defined as $R_i = R_j$. 

Let us now define $\tilde h$. For all $t$ such that $t \in [t_i, t_{i+1}]$ and $t_i, t_{i+1}$ are in the same zone with shift $R$, let $\tilde h(t) = h(t) - R \cdot \frac{3\Delta}{\alpha}$. Hence, on those intervals, both the curves and their traversal are simply shifted by the same constant, and we get an equality $ \|\inp(\tilde h(t)) - \tcc(t)\| =  \|\inp(h(t)) - \cc(t)\|$. 

For times $t \in (t_i, t_{i+1})$ such that $t_i$ and $t_{i+1}$ are not in the same zone, we need to proceed differently. 
We consider intervals $(t_i, t_{i+1})$ in increasing order of 
$i$. In particular, this implies that $\tilde h(t_i)$ is defined we constructing $\tilde h$ for $t \in (t_i, t_{i+1})$.
We would ideally like to set $\tilde h(t) = \Pi_{\tcc(t)}$, so that $\|\inp(\tilde h(t)) - \tcc(t)\| = \|\inp(\Pi_{\cc(t)}) - \cc(t)\|\leq \|\inp(h(t)) - \cc(t)\|$, as the maximum is more than the distance from any point to its projection on $\inp$.
This however may break continuity, as it may be that $\Pi_{\cc(t_i)} \neq h(t_i)$ or $\Pi_{\cc(t_{i+1})} \neq h(t_{i+1})$. 

 We therefore construct $\tilde h$ the following way. For all $t$ where $\Pi_{\cc(t)} \leq h(t_i)$, we let $\tilde h(t) = \tilde h(t_i)$. This  only decreases the distance, compared to $h$: indeed, it holds that $\Pi_{\cc(t)} \leq h(t_i) \leq t$, so it must be that $\|\cc(t) - \inp(h(t_i))\| \leq \|\cc(t) - \inp(h(t))\|$. 
This handles the case  where $\Pi_{\cc(t_i)} < h(t_i)$. For the case $\Pi_{\cc(t_i)} > h(t_i)$, we proceed similarly: for all $t$ with $h(t) \leq \Pi_{\cc(t_i)}$, define $\tilde h(t) = h(t_i)$. This also ensures that $\|\cc(t) - \inp(h(t_i))\| \leq \|\cc(t) - \inp(h(t))\|$.
 
 The case $\Pi_{\cc(t_{i+1})} \neq h(t_{i+1})$ is handled exactly the same way. 
For all $t$ where $\Pi_{\cc(t)} \geq h(t_{i+1})$, we let $\tilde h(t) = \tilde h(t_{i+1})$, and  for all $t$ with $h(t) > \Pi_{\cc(t_{i+1})}$, define $\tilde h(t) = h(t_{i+1})$.

In those cases, we get $\|\inp(\tilde h(t)) - \tcc(t)\|\leq  \|\inp(h(t)) - \cc(t)\|$.
For all other $t$, we let $\tilde h(t) = \Pi_{\tcc(t)}$. 
By property of orthogonal projection, this ensures that $\|\inp(\tilde h(t)) - \tcc(t)\| = \|\cc(t) - \inp(\Pi_{\cc(t)})\|\leq \max_{t \in [0, 1]} \|\inp(h(t)) - \cc(t)\|$, as the maximum is more than the distance from any point to its projection on $\inp$.

The function $\tilde h$ defined that way is non-decreasing and continuous: when $t_i$ and $t_{i+1}$ are in the same zone, this is a direct consequence of the monotonicity and continuity of $h$. When they are not in the same zone, the first property holds because $\Pi_{\cc(\cdot)}$ is non-increasing on $[t_i, t_{i+1}]$. Indeed, when $\cc_i$ and $\cc_{i+1}$ are not in two consecutive areas, it must be that $\Pi_{\cc_{i+1}} > \Pi_{\cc_i}$, as otherwise either $\|\inp(h(t_i)) - \cc_i\| > \Delta\alpha^{-1}$ or $\|\inp(h(t_{i+1})) - \cc_{i+1}\| > \Delta\alpha^{-1}$. Last, continuity holds at $t_i$ and $t_{i+1}$ by our construction, and in between by continuity of $\Pi_{\cc(t)}$. Hence, we get 
\[\max_{t \in [0, 1]} \|\inp^i(\tilde h(t)) - \tcc(t)\|\leq \max_{t \in [0, 1]} \|\inp^i(h(t)) - \cc(t)\|.\]
The definition of $\bar{h}$ is completely symmetric.
\end{proof}

\subsection{General Case}
We now turn to the general case, and show the main lemma \ref{thm:coresetMain}.
We first describe how the set $\candCurve$ is constructed, to then prove it is an approximate centroid set.

\subsubsection{Construction of \texorpdfstring{$\candCurve$}{NC}}
To build $\candCurve$, we first discretize the potential locations of vertices from the center curves. We first build a set $\cand$ consisting of those.
For all input curves $\inpmin$ (consisting of a ``guess" of the closest input curve to the center curve), we add several sets to $\cand$, as follows. 
Let $\Delta = \frechetCont(\inpmin, \calA)$. 

\begin{itemize}
\item First, to handle the case where all vertices from the center curves are close to input vertices, we first add to $\cand$ precise nets around each input vertex: for all input curves $\inp^i$ and every vertex $\inp^i_j$ on it, let $N_{i,j}$ be an $\alpha \Delta$-net of $B(\inp^i_j, \ell\cdot 100 \Delta\alpha^{-2})$. 
\item Next, for the case where vertices from the center curves are far away from input vertices, we proceed as follows. 
Fix a segment $s^{\min}$ of $\inpmin$: we first construct a set $NC_{s^{\min}}$ of points of $s^{\min}$. For any segment $s$ from another input curve. Let $I$ be the projection of $s$ onto $s^{\min}$ and let $I_1, ..., I_{100/\alpha^3}$ be $1/\alpha$ points regularly spaced on $I$. Add those points to $NC_ {s^{\min}}$. Further, let $s'$ be the projection of $s^{\min}$ onto $s$, and $I'$ the projection of $s'$ onto $s^{\min}$: add to $NC_{s^{\min}}$ $100/\alpha^3$ points regularly spaced on $I'$.
 
  Now, for each point $v \in CN_{s^{\min}}$, let $N_{s^{\min}, v}$ be an $\alpha \Delta$-net of $B(v, 100 \ell \cdot \Delta\alpha^{-2})$ ($v$ is the ``center" for net $N_{s^{\min}, v}$, hence the name $NC_{s^{\min}}$ for ``net centers").
\end{itemize}
 $\cand$ is now made the union of all $N_{i,j}$'s and all $N_{s^{\min}, v}$'s.
 
 The approximate centroid set $\candCurve$ consists of all curves consisting of $\ell$ vertices from $\cand$. In other words, we restrict the candidate centers to have $\ell$ vertices from $\cand$.
 
%
%

\subsubsection{Proof of Lemma~\ref{thm:coresetMain}}
We can now show the proof of our main lemma~\ref{thm:coresetMain}, by showing $\candCurve$ has small size and is an approximate centroid set. We first show the size guarantee. 
\begin{restatable}{lemma}{lemcurveguarantee}\label{lem:curveguarantee}
$\candCurve$ has size $\lpar n^2 m\rpar^{k\ell} \cdot \lpar \frac{\ell}{\alpha}\rpar^{O(dk\ell)}$.
\end{restatable}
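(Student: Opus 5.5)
The plan is a direct counting argument. First bound $|\cand|$ as (number of net centers) $\times$ (size of one net). Then use that every candidate in $\candCurve$ is a choice of $k$ curves with $\ell$ vertices each from $\cand$, so $|\candCurve| \le |\cand|^{k\ell}$. The task is to arrange the count so that the combinatorial part is exactly $n^2m$ per vertex and everything else is $(\ell/\alpha)^{O(d)}$ per vertex.

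\emph{Step 1 (size of a single net).} Each net $N_{i,j}$ and $N_{s^{\min},v}$ is an $\alpha\Delta$-net of a ball of radius $100\ell\Delta\alpha^{-2}$ in $\R^d$. By the standard volume argument, such a net exists with at most $(1+2\cdot 100\ell\Delta\alpha^{-2}/(\alpha\Delta))^d = (O(\ell/\alpha^3))^d = (\ell/\alpha)^{O(d)}$ points. The scale $\Delta$ cancels, so this bound is independent of the input geometry; this is the only place $d$ enters.

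\emph{Step 2 (number of net centers).} There are two kinds of centers.
\begin{itemize}
\item The input vertices $\inp^i_j$: at most $nm \le n^2m$ of them.
\item The points of the sets $NC_{s^{\min}}$. Each such point is indexed by the guessed curve $\inpmin$ ($n$ choices), the other input segment $s$ ($nm$ choices), and one of $O(\alpha^{-3})$ regularly spaced positions on $I$ or $I'$.
\end{itemize}
The key bookkeeping is to charge this second family as $n\cdot nm = n^2m$ combinatorial choices times $\alpha^{-O(1)}$ positions. This requires the segment $s^{\min}$ of $\inpmin$ to be determined by the pair $(\inpmin,s)$, namely as the segment of $\inpmin$ whose projection interacts with $s$, rather than being an independent factor of $m$. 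I expect this to be the main obstacle, since a naive count of $(s^{\min},s)$ pairs gives $nm^2$ instead of $n^2m$.

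My first attempt will be to re-index $NC_{s^{\min}}$ by $s$ together with the curve $\inpmin$. For a fixed $s$ and $\inpmin$, only the $O(1)$ segments of $\inpmin$ within distance $\Delta\alpha^{-2}$ of $s$ can matter for vertices of centers satisfying the centroid-set condition. If this localization does not hold verbatim, the fallback is to note that the $m$ segment choices can be folded together with the $n$ curve choices into one index over the $nm$ input segments. Either way the goal is $|\cand| \le n^2m\cdot(\ell/\alpha)^{O(d)}$, and adding the two families only changes the constant inside the $O(d)$.

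\emph{Step 3 (assembling).} Each center curve has $\ell$ vertices, each chosen from $\cand$, and a solution consists of $k$ such curves. Hence
\[|\candCurve| \le |\cand|^{k\ell} \le \lpar n^2m\rpar^{k\ell}\cdot \lpar \ell/\alpha\rpar^{O(dk\ell)},\]
which is the claimed bound. The additional copy of the last vertex used in the parallel-line construction of $\tcc$ does not add new points: it repeats a vertex already in $\cand$, and it affects only the constant in the exponent of $\ell/\alpha$.
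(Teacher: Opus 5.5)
Your Steps 1 and 3 are exactly the paper's argument: each $\alpha\Delta$-net of a ball of radius $100\ell\Delta\alpha^{-2}$ has $(\ell/\alpha)^{O(d)}$ points, and $|\candCurve|\le|\cand|^{k\ell}$. The gap is in Step 2: neither of your remedies yields $n^2m$ net centers for the second family.

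\begin{itemize}
\item \textbf{The localization claim is false.} Nothing prevents all $m$ segments of $\inpmin$ from running within distance $\Delta\alpha^{-2}$ alongside one long segment $s$.
\item \textbf{It would also count the wrong set.} The lemma counts $\cand$ \emph{as constructed}. The construction puts $\alpha^{-O(1)}$ points into $NC_{s^{\min}}$ for \emph{every} segment $s^{\min}$ of \emph{every} guessed $\inpmin$ and every other input segment $s$, with no distance condition. Arguing that some of these points ``do not matter'' would change the construction, and you would then have to re-prove the centroid-set property.
\item \textbf{The fallback does not remove a factor.} Indexing $s^{\min}$ by the $nm$ input segments and $s$ by the $nm$ input segments gives $(nm)^2$ pairs. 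Likewise, the naive count is $n^2m^2$, not $nm^2$.
\item \textbf{For the first family you must count nets, not centers.} $N_{i,j}$ depends on the scale $\Delta=\frechetCont(\inpmin,\calA)$, so there are $n\cdot nm=n^2m$ such nets. That is where the $n^2m$ in the statement really comes from.
\end{itemize}
The honest total is $|\cand|\le\bigl(n^2m+n^2m^2\alpha^{-O(1)}\bigr)(\ell/\alpha)^{O(d)}\le n^2m^2(\ell/\alpha)^{O(d)}$.

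The paper's own proof has the same slip. It counts $nm$ segments $s^{\min}$, each with $|NC_{s^{\min}}|\le 2nm/\alpha$, which is $\Theta(n^2m^2)$ net centers, and then writes $n^2m$. What the counting actually gives is $|\candCurve|\le(n^2m^2)^{k\ell}(\ell/\alpha)^{O(dk\ell)}\le(n^2m)^{2k\ell}(\ell/\alpha)^{O(dk\ell)}$. This is the stated bound up to a constant factor in the exponent of the combinatorial term. That is harmless, because Lemma~\ref{thm:coresetMain} (and everything downstream) only uses $\log|\candCurve|=O\bigl(k\ell\log(nm)+dk\ell\log(\ell/\alpha)\bigr)$. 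So rather than trying to engineer away the extra factor $m$, state and prove the lemma in this weaker form. With Step 2 replaced by the honest count above, your proof is complete.
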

\begin{proof}
There are $n$ curves $\inpmin$, hence $n^2 m$ many different $N_{i,j}$. Each of them has size $\lpar \frac{\ell}{\alpha}\rpar^{O(d)}$. 
Similarly, there are $n m$ many segments $\smin$, and for each of them the set $NC_{\smin}$ contains at most $2nm / \alpha$ many points. Each net $N_{\smin, v}$ has size  $\lpar \frac{\ell}{\alpha}\rpar^{O(d)}$. Hence, in total, there are $n^2 m \cdot \lpar \frac{\ell}{\alpha}\rpar^{O(d)}$ many points in $\cand$.

Since the approximate centroid set $\candCurve$ is constructed by choosing $k$ curves made of $\ell$ vertices from $\cand$ for each candidate solution, it has size at most $\lpar n^2 m\rpar^{k \ell} \cdot \lpar \frac{\ell}{\alpha}\rpar^{O(dk \ell)}$.
\end{proof}

To show that $\candCurve$ is an approximate centroid set, we proceed as follows. First, we fix a center curve $\cc$: we want to construct $\tcc$ whose vertices are all in $\cand$ such that, for any input curve $\inp$ with $\frechetCont(\inp, \cc) \leq \frac{8}{\alpha}\frechetCont(\inp, \calA)$, then 
\[|\frechetCont(\inp, \cc) - \frechetCont(\inp, \tcc)| \leq \frac{\alpha}{\log(1/\alpha)} (\frechetCont(\inp, \cc) + \frechetCont(\inp, \calA)).\]
Doing so for all the $k$ curves in a candidate solution $\calS$ shows the existence of the solution $\tilde \calS$ from the approximate centroid set that approximates $\calS$.

We assume for simplicity that all input curves satisfy $\frechetCont(\inp, \cc) \leq \frac{8}{\alpha}\frechetCont(\inp, \calA)$: there is nothing to prove for the others. 
Next, we let $\inpmin$ be the curve closest to $\calA$, and let $\Delta = \frechetCont(\inpmin, \calA)$. 

First, we note that all curves $\inp$ with $\frechetCont(\inp, \cc) \geq \frac{25 \Delta}{\alpha^2}$ are dealt with easily. 
\begin{restatable}{lemma}{lemlargecurves}\label{lem:largecurves}
If  $\frechetCont(\inpmin, \cc) = (1\pm \alpha)\frechetCont(\inpmin, \tcc) + \alpha \Delta$, then for all curves $\inp$ with $\frechetCont(\inp, \cc) \geq \frac{25 \Delta}{\alpha^2}$, it holds that
\[\frechetCont(\inp, \tcc) \in (1\pm \alpha)\frechetCont(\inp, \cc).\]
\end{restatable}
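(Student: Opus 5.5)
The plan is to use the triangle inequality for the Fréchet distance twice: once to move from $\inp$ to $\inpmin$, and once to bound $\frechetCont(\cc,\tcc)$. The only quantitative input is the hypothesis relating $\frechetCont(\inpmin,\cc)$ and $\frechetCont(\inpmin,\tcc)$. We combine it with the fact that $\inp$ is far from $\cc$ compared to the scale $\Delta/\alpha^2$.

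\textbf{Step 1: bound $\frechetCont(\inpmin,\cc)$.} We want this to be at most $O(\Delta/\alpha)$. By the standing assumption, every input curve, in particular $\inpmin$, satisfies $\frechetCont(\inpmin,\cc)\le \frac{8}{\alpha}\frechetCont(\inpmin,\calA)=\frac{8\Delta}{\alpha}$. The hypothesis then gives
$\frechetCont(\inpmin,\tcc)\le \frac{\frechetCont(\inpmin,\cc)+\alpha\Delta}{1-\alpha}\le \frac{10\Delta}{\alpha}$ for $\alpha<1/2$, up to a constant we fix as needed.

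\textbf{Step 2: transfer to $\inp$.} By the triangle inequality (the Fréchet distance is a metric),
\[
\frechetCont(\inp,\tcc)\le \frechetCont(\inp,\cc)+\frechetCont(\cc,\inpmin)+\frechetCont(\inpmin,\tcc)\le \frechetCont(\inp,\cc)+\frac{18\Delta}{\alpha}.
\]
Symmetrically, $\frechetCont(\inp,\tcc)\ge \frechetCont(\inp,\cc)-\frac{18\Delta}{\alpha}$. Since $\frechetCont(\inp,\cc)\ge \frac{25\Delta}{\alpha^2}$, we have $\frac{18\Delta}{\alpha}\le \frac{18}{25}\alpha\cdot\frechetCont(\inp,\cc)\le\alpha\,\frechetCont(\inp,\cc)$. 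This yields $\frechetCont(\inp,\tcc)\in(1\pm\alpha)\frechetCont(\inp,\cc)$.

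\textbf{Main obstacle.} The delicate point is Step 1. The constants must close so that the additive slack $O(\Delta/\alpha)$ is at most $\alpha\cdot 25\Delta/\alpha^2$. With the constants $8$ and $1/(1-\alpha)$ this gives roughly $8+10=18\le 25$, so it works. If the hypothesis is meant with a different orientation, we simply apply it in whichever direction bounds $\frechetCont(\inpmin,\tcc)$ by $O(\Delta/\alpha)$. Also, $\inpmin$ must satisfy the assumption $\frechetCont(\inpmin,\cc)\le \frac{8}{\alpha}\Delta$. If it does not, we instead argue directly via $\frechetCont(\inpmin,\calA)$, the triangle inequality through $\calA$, and the closeness of $\tcc$ and $\cc$ near $\inpmin$, which is the same computation with slightly worse constants.
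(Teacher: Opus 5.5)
Your proof is correct and follows the paper's argument. The paper likewise combines the standing assumption $\frechetCont(\inpmin,\cc)\le 8\Delta/\alpha$ with the hypothesis to get $\frechetCont(\cc,\tcc)\le\frechetCont(\cc,\inpmin)+\frechetCont(\inpmin,\tcc)\le 25\Delta/\alpha$, and then concludes by the triangle inequality because $25\Delta/\alpha\le\alpha\,\frechetCont(\inp,\cc)$.

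One small correction to your constants. For all $\alpha<1/2$, the hypothesis only gives $\frechetCont(\inpmin,\tcc)\le(8\Delta/\alpha+\alpha\Delta)/(1-\alpha)\le 16\Delta/\alpha+2\alpha\Delta$; your bound of $10\Delta/\alpha$ requires roughly $\alpha\lesssim 1/5$. The argument still closes, since the total slack is $24\Delta/\alpha+2\alpha\Delta\le 25\Delta/\alpha$, which is exactly the paper's constant.
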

\begin{proof}
The assumption implies that $\frechetCont(\cc, \tcc) \leq \frechetCont(\cc, \inpmin) + \frechetCont(\tcc, \inpmin) \leq 3\frechetCont(\cc, \inpmin) + \alpha \Delta \leq \frac{25}{\alpha}\Delta$.

Hence, if $\Delta \leq \frac{\alpha^2}{25} \frechetCont(\inp, \cc)$, then it directly holds that $\frechetCont(\inp, \tcc) \in (1\pm \alpha)\frechetCont(\inp, \cc)$.
\end{proof}

Thus, from now on, we may assume all input curves satisfy $\frechetCont(\inp, \cc) \leq \frac{25 \Delta}{\alpha^2}$.

\subsubsection{Construction of \texorpdfstring{$\tcc$}{γ}}
We construct $\tcc$ as follows. For each vertex $\cc_v$ of $\cc$, we distinguish two cases. 

First, if $\cc_v$ is in some $B(\inp^i_j, \ell\cdot 100 \Delta\alpha^{-2})$ we simply define $\tcc_v$ to be the closest point to $\cc_v$ in $N_{i,j} \subset \cand$. 
Let $\tcc^1$ be the curve obtained after all those replacements. 

\begin{restatable}{fact}{factdistRoundNet}\label{fact:distRoundNet}
$\frechetCont(\cc, \tcc^1) \leq \alpha \Delta$.
\end{restatable}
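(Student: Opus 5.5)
The plan is to bound the Fréchet distance between $\cc$ and $\tcc^1$ by a direct vertex-wise argument, using the elementary fact that for two polygonal curves with the same number of vertices, $\frechetCont$ is at most the maximum distance between corresponding vertices. Concretely, $\tcc^1$ is obtained from $\cc$ by replacing some vertices $\cc_v$ by a nearby net point $\tcc_v \in N_{i,j}$ and leaving all other vertices untouched. Both curves therefore have exactly the same vertex count, and $\tcc^1$ has the same combinatorial structure as $\cc$.

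First, I will quantify how far each vertex moves. For a replaced vertex, $\cc_v$ lies in $B(\inp^i_j, \ell\cdot 100\Delta\alpha^{-2})$, and $N_{i,j}$ is an $\alpha\Delta$-net of this ball. The closest point $\tcc_v\in N_{i,j}$ therefore satisfies $\|\cc_v-\tcc_v\|\le \alpha\Delta$. For a non-replaced vertex, the displacement is $0$.

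Second, I will parameterize both curves with the same breakpoints $t_0<\dots<t_\ell$, so that $\cc(t_v)=\cc_v$ and $\tcc^1(t_v)=\tcc^1_v$, with linear interpolation in between. I then take the identity reparameterization. For $t\in[t_v,t_{v+1}]$ with $\lambda=\frac{t-t_v}{t_{v+1}-t_v}$, both points are the same convex combination of their respective endpoints. The triangle inequality gives
\[
\|\cc(t)-\tcc^1(t)\| \le (1-\lambda)\|\cc_v-\tcc^1_v\| + \lambda\|\cc_{v+1}-\tcc^1_{v+1}\| \le \alpha\Delta .
\]
Taking the supremum over $t$ and then the infimum over reparameterizations yields $\frechetCont(\cc,\tcc^1)\le\alpha\Delta$.

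I do not expect a real obstacle here. The only point needing care is that the replacements must not change the vertex count or ordering, so that the identity matching is admissible. This is immediate because each vertex is replaced individually and in place.
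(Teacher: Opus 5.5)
Your proposal is correct and takes the same approach as the paper: each vertex of $\cc$ is matched to its counterpart in $\tcc^1$ with displacement at most $\alpha\Delta$, and this maximum vertex displacement bounds the Fréchet distance. The only difference is that you prove this bound yourself, by parameterizing both curves with common breakpoints and writing $\cc(t)-\tcc^1(t)$ as a convex combination of endpoint displacements, whereas the paper simply cites it as a well-known fact.
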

\begin{proof}
By construction of $\tcc^1$, we can map each vertex of $\cc$ with one of $\tcc^1$ such that pair of vertices are at distance at most $\alpha \Delta$. It is well-known (see e.g. Section 2 in \cite{aronov2006frechet}) that this is an upper-bound on  Fréchet distance. 
\end{proof}

Now, we deal from the vertices of $\tcc^1$ that are not in any ball  $B(\inp^i_j,  \ell\cdot 100 \Delta\alpha^{-2})$, by using the construction of Section~\ref{sec:parallel}. 

We start by making some simplifying assumptions and fixing some notations. For any input curve $\inp^i$ let $h_i$ a bijection such that 
$\frechetCont(\inp^i, \tcc) = \max_{t \in [0,1]} \|\tcc(t) - \inp^i(h^i(t))\|$. 
Up to parameterizing each input curve $\inp^i$ differently, we can assume $h^i$ to be the identity. This implies the following:

\begin{restatable}{fact}{obscloset}\label{obs:closet}
For any $t$, $\|\inpmin(t) - \inp(t)\|\leq \frac{33\Delta}{\alpha^2}$.
\end{restatable}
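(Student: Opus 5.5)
The plan is to use the triangle inequality in the Fréchet metric, routed through the center curve $\cc$, together with the common-parameterization convention fixed just before the statement. By the standing assumption, every input curve $\inp$ satisfies $\frechetCont(\inp,\cc)\le \frac{25\Delta}{\alpha^2}$, and in particular so does $\inpmin$. I would first bound $\frechetCont(\inp,\tcc)$ and $\frechetCont(\inpmin,\tcc)$ in terms of $\frechetCont(\cdot,\cc)$. For this I use Fact~\ref{fact:distRoundNet}, which gives $\frechetCont(\cc,\tcc^1)\le\alpha\Delta$; the later shifting step (Lemma~\ref{lem:parallel}) preserves distances to input curves, so in either case we get something like $\frechetCont(\inp,\tcc)\le \frac{25\Delta}{\alpha^2}+\alpha\Delta\le \frac{26\Delta}{\alpha^2}$ for every input curve, including $\inpmin$.

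Next, each $\inp^i$ has been reparameterized so that its optimal matching to $\tcc$ is the identity map. Then for every $t$ we have $\|\inp^i(t)-\tcc(t)\|\le \frechetCont(\inp^i,\tcc)$, attained by the bijection $h^i=\mathrm{id}$. Hence, pointwise,
\[
\|\inpmin(t)-\inp(t)\|\le \|\inpmin(t)-\tcc(t)\|+\|\tcc(t)-\inp(t)\|\le \frechetCont(\inpmin,\tcc)+\frechetCont(\inp,\tcc).
\]
Plugging in the bounds from the first step gives a constant times $\Delta/\alpha^2$. To reach exactly $33$, I would bound one term by $\frac{25\Delta}{\alpha^2}+\alpha\Delta$ and the other more tightly: $\frechetCont(\inpmin,\cc)\le \frac{8}{\alpha}\frechetCont(\inpmin,\calA)=\frac{8\Delta}{\alpha}$. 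Since $\alpha<1$, this is at most $\frac{8\Delta}{\alpha^2}$, so the total is at most $\frac{33\Delta}{\alpha^2}$ after absorbing the $\alpha\Delta$ terms for small $\alpha$.

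The main obstacle is that the matchings $h^i$ are defined relative to $\tcc$, which is still being constructed. One must make sure the distances to $\tcc$ are controlled before the far vertices are handled. I would resolve this by fixing the reparameterizations with respect to $\tcc^1$, or equivalently $\cc$, whose distances are controlled by Fact~\ref{fact:distRoundNet}. I would also note that Lemma~\ref{lem:parallel} transfers matchings without increasing distances, so the same bound holds for the final $\tcc$. Only the $\frac{8\Delta}{\alpha}$ bound for $\inpmin$ and the $\frac{25\Delta}{\alpha^2}$ bound for general curves are needed.
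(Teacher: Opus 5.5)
Your proposal is correct, and once you fix the common parameterization with respect to $\cc$ (as in your last paragraph) it is exactly the paper's argument: the triangle inequality through $\cc(t)$, with $\|\inpmin(t)-\cc(t)\|\le 8\Delta/\alpha$ and $\|\inp(t)-\cc(t)\|\le 25\Delta/\alpha^2$, gives $8\Delta/\alpha+25\Delta/\alpha^2\le 33\Delta/\alpha^2$ directly. You should drop the detour through the final $\tcc$ and its $\alpha\Delta$ slack: it is unnecessary, and it would be circular, since the construction of $\tcc$ (via the border-case lemmas) itself relies on this Fact.
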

\begin{proof}
By the parameterization chosen above, it holds that
$\|\inpmin(t) - \cc(t)\|\leq \frechetCont(\inpmin, \cc) \leq \frac{8\Delta}{\alpha}$ and $\|\inp(t) - \cc(t)\|\leq \frechetCont(\inp, \cc) \leq \frac{25\Delta}{\alpha^2}$. We conclude with triangle inequality.
\end{proof}

Let $\tcc^1_v$ be a vertex not in any ball  $B(\inp^i_j, \ell\cdot 100 \Delta\alpha^{-2})$, and let
$t$ be such that $\tcc^1(t) = \tcc^1_v = \cc_v$. Let $\smin$ be the segment of $\inpmin$ that is traveled along at time $t$, and let $\smin_1$ and $\smin_2$ be its two endpoints. 
For $i = 1, 2$, let $\tmin_i$ be such that $\inpmin(\tmin_i) = \smin_i$. 
We parameterize $\smin$ such that $\forall t' \in [\tmin_1, \tmin_2], \smin(t') = \inpmin(t')$. 

Let $t_p$ (as predecessor) and $t_s$ (as successor) such that $\smin(t_p)$ and $\smin(t_s)$ are the points from $NC_\smin$ that are respectively the last point of $NC_\smin$ before $t$, and the first one after $t$. The idea of our analysis is that, if $\tcc^1_v$ is far from $\smin(t_p)$ and $\smin(t_s)$ (which corresponds to our assumption on $\tcc^1_v$), then each input curve can be made parallel to $\smin$, in between time $t_p$ and $t_s$. This allows us to apply Lemma~\ref{lem:parallel}.

For that, we introduce three conditions, that are satisfied by any ``natural" input (and we will see afterwards how to deal with the particular cases where they are not fulfilled).

\begin{figure*}
\centering
\includegraphics[scale=0.5]{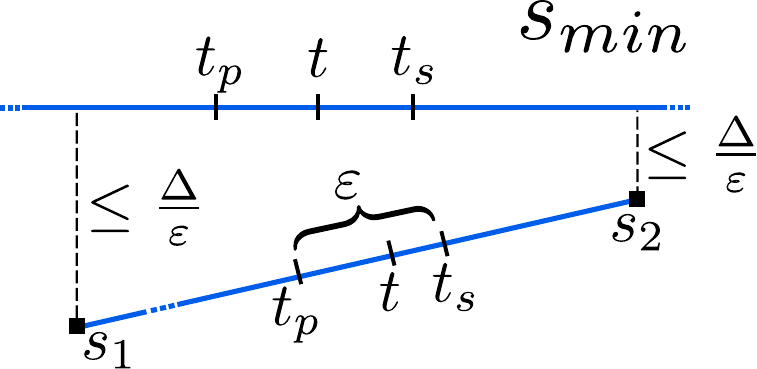}
\caption{Illustration of the natural conditions. $\smin$ is the top segment, $s$ the lower one.}
\label{fig:natural}
\end{figure*}

For any input curve $\inp$ and segment $s$ on $\inp$ that is traveled along at time $t$, we have:
\renewcommand{\theenumi}{(\roman{enumi})}
\begin{enumerate}
\item \label{cond:intersect} The projection of $s$ onto $\smin$ intersects $\smin(t_p, t_s)$.
\item \label{cond:fraction} $s[t_p: t_s]$ is an $\alpha^3/33$-fraction of $s$.
\item \label{cond:close} Let $s_1$ and $s_2$ be the extremities of $s$: it holds that $\dist(s_1, \smin) \leq 33 \Delta\alpha^{-2}$ and $\dist(s_2, \smin) \leq 33\Delta\alpha^{-2}$.
\end{enumerate} 

Those conditions are illustrated in Figure~\ref{fig:natural}. 
Our main conceptual idea here is that, when for all such $s$ the three conditions are fulfilled, then it is possible to reduce to the case where all segments are parallel. We show later how to handle the case where they are not fulfilled.

\begin{restatable}{lemma}{lemnatural}\label{lem:natural}
Assume that, for all input curves $\inp^i$ and segment $s^i$ on $\inp$ that is traveled along at time $t$, the three conditions \ref{cond:intersect}, \ref{cond:fraction} and \ref{cond:close} are fulfilled. 
Then, there exists curves $\tilde \inp^i$ such that 
\[\forall i, \frechetCont(\inp^i, \tilde \inp^i) \leq \alpha \Delta,\]
and $\tilde \inp^i$ is parallel to $\smin$ between time $t_p$ and $t_s$.
\end{restatable}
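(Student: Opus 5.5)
To prove Lemma~\ref{lem:natural}, I would modify each input curve $\inp^i$ only on the time window $[t_p,t_s]$. Inside that window I replace the segment $s^i$ by a segment parallel to $\smin$, and outside it I leave the curve unchanged up to a small reconnection. The Fréchet bound then follows from the identity parameterization: the two curves are traversed simultaneously, so it suffices to show $\|\inp^i(t)-\tilde\inp^i(t)\|\le \alpha\Delta$ for every $t$.

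First I fix notation for a single curve $\inp:=\inp^i$ and its segment $s$ with endpoints $s_1,s_2$. Condition~\ref{cond:close} says both endpoints lie within $33\Delta\alpha^{-2}$ of the line through $\smin$. Hence the component of $s_2-s_1$ orthogonal to $\dir$ has norm at most $66\Delta\alpha^{-2}$.

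Condition~\ref{cond:fraction} says the piece $s[t_p:t_s]$ is at least an $\alpha^3/33$ fraction of $s$. Because $\tcc^1_v$ lies outside every ball $B(\inp^i_j,100\ell\Delta\alpha^{-2})$, I expect the net centers of $NC_\smin$ to be spaced so that $s[t_p:t_s]$ is long compared with $\Delta\alpha^{-2}$. Together with Condition~\ref{cond:intersect}, which guarantees that this piece really overlaps $\smin(t_p,t_s)$ in projection, this forces a good alignment. Concretely, I aim to show that the angle $\theta$ between $s$ and $\dir$ satisfies $\sin\theta\cdot|s[t_p:t_s]|\le \alpha\Delta$. That is, across the window the piece drifts away from a parallel line by at most $\alpha\Delta$.

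I then define $\tilde\inp$ on $[t_p,t_s]$ as the segment starting at $\inp(t_p)$ with direction $\dir$ and with the same projected speed along $\dir$, and keep $\tilde\inp=\inp$ elsewhere. To stay continuous at $t_s$, I insert a short connecting edge from $\tilde\inp(t_s)$ back to $\inp(t_s)$; its length is at most the accumulated orthogonal drift, so at most $\alpha\Delta$. Pointwise, $\|\inp(t)-\tilde\inp(t)\|$ equals the orthogonal drift accumulated since $t_p$, which is bounded by $\sin\theta\cdot|s[t_p:t_s]|\le\alpha\Delta$. This gives $\frechetCont(\inp,\tilde\inp)\le\alpha\Delta$. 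Repeating the construction for every $i$ yields all the curves $\tilde\inp^i$, each parallel to $\smin$ on $[t_p,t_s]$.

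The main obstacle is the quantitative alignment step. Condition~\ref{cond:close} only bounds the orthogonal offset over the whole segment $s$ by $66\Delta\alpha^{-2}$, which is a factor $\alpha^{-3}$ larger than the target $\alpha\Delta$. Closing this gap requires a genuine ratio argument: the full segment $s$ must be at least roughly $\alpha^{-3}$ times longer than the window piece, otherwise the window would approach a vertex of $\inp$ and contradict the distance of $\tcc^1_v$ from the input vertices. I would first derive this from the choice of $100/\alpha^3$ regularly spaced points in $NC_\smin$, which limits the length of the window relative to $I$ and $I'$, together with Fact~\ref{obs:closet}. The constants $33$ and $100$ may need retuning to make this step go through.
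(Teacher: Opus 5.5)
There is a genuine gap, and it comes from reading condition~\ref{cond:fraction} backwards. You take it to say that $s[t_p:t_s]$ is \emph{at least} an $\alpha^3/33$-fraction of $s$. In the paper it is an \emph{upper} bound: the window piece is at most an $\alpha^3/33$-fraction of $s$. That is how the proof uses it, and Lemma~\ref{lem:fracNotSat}, which treats the failure of (ii), starts from the window piece being \emph{more} than an $\alpha^3/33$-fraction of the projection of $s$.

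Under your reading, the bound $\sin\theta\cdot|s[t_p:t_s]|\le\alpha\Delta$ cannot follow from (i)--(iii). A window covering all of a tilted $s$, with endpoints on opposite sides of the line through $\smin$, has orthogonal drift of order $\Delta\alpha^{-2}$. Your heuristic that a \emph{long} window forces alignment also points the wrong way. Length only makes $\theta$ small, while the drift $\sin\theta\cdot|s[t_p:t_s]|$ is governed by the ratio $|s[t_p:t_s]|/|s|$ alone.

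As a result, the one inequality the proof needs is never established. You defer it to a derivation from the spacing of $NC_{\smin}$ and the distance of $\tcc^1_v$ from input vertices, which you do not carry out and which is not how the paper handles it. The spacing of $NC_{\smin}$ bounds the window on $\smin$, not the piece traversed on the differently parameterized $s$. When that piece is long, Lemma~\ref{lem:fracNotSat} concludes that $\cc_v$ lies within $140\Delta\alpha^{-2}$ of $\smin(t_p)$ or $\smin(t_s)$. Those are points of $NC_{\smin}$, not input vertices, and this is a separate border case handled outside this lemma.

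With (ii) read correctly, your own construction finishes the proof in one line, and it becomes the paper's argument. By (iii), the component of $s_2-s_1$ orthogonal to $\dir$ is at most $66\Delta\alpha^{-2}$. By similar triangles (the paper's ``Thales''), the drift accumulated on $[t_p,t_s]$, which is maximal at $t_s$, is at most $\frac{\alpha^3}{33}\cdot 66\Delta\alpha^{-2}=2\alpha\Delta$. The paper bounds the orthogonal extent by $33\Delta\alpha^{-2}$ and states $\alpha\Delta$; your extra factor $2$ only costs a constant.

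Two of your choices make precise points the paper glosses over. Matching the projected speed along $\dir$ guarantees that $\inp(t_s)-\tilde\inp(t_s)$ is orthogonal to $\smin$. Adding a short reconnecting edge at $t_s$ keeps $\tilde\inp$ continuous.
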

\begin{proof}
We focus on a curve $\inp^i$ that fulfills conditions \ref{cond:intersect}, \ref{cond:fraction} and \ref{cond:close}. We drop the index $i$ for simplicity. 
To build $\tilde \inp$, we simply replace the curve between $\inp(t_p)$ and $\inp(t_s)$ by a segment of same length parallel to $\smin$ that starts at $\inp(t_p)$.

We claim that $\max_{t \in [0,1]} \|\inp(t) - \tilde \inp(t)\| \leq \alpha \Delta$. For this, it is enough to focus on $t \in [t_p, t_s]$: further, by construction, the maximal distance is attained at $t_s$.
We show that the $\|\inp(t_s) - \tilde \inp(t_s)\|$ is at most an $\alpha^3/33$-fraction of the projected distance to $\smin$, which is at most $33 \Delta\alpha^{-2}$ by condition~\ref{cond:close}: this would conclude.

This turns out to be a direct consequence of Thales' theorem. Indeed, by condition~\ref{cond:close}, $s[t_p:t_s]$ is a $\alpha^3 / 33$ fraction of $s$, and $\inp(t_s) - \tilde \inp(t_s)$ is orthogonal to $\smin$. Therefore, Thales' theorem shows that $\|\inp(t_s) - \tilde \inp(t_s)\|$ is an $\alpha^3 / 33$ fraction of the maximum distance between the points of $s$ projected onto the orthogonal of $\smin$. By condition~\ref{cond:close}, this maximum distance is at most $33 \Delta\alpha^{-2}$, which concludes the lemma.
\end{proof}

As a direct corollary of this lemma, one can apply Lemma~\ref{lem:parallel} with curves $\tilde \inp^i$ to build the curve $\tcc$ between time $t_p$ and $t_s$: by Lemma~\ref{lem:natural}, this curve $\tcc$ will at the same distance as $\cc$ to all $\inp^i$. We give more details in Section~\ref{sec:putTogether}.

\subsubsection{Dealing with Border Cases}\label{sec:border-cases}
We now turn to the cases where the natural conditions are not satisfied. 
In that case, we can show that the point $\tcc^1_v$ is actually in some ball $B(\inp^i_j,  \ell\cdot 100 \Delta\alpha^{-2})$, which contradicts its definition.

For the following lemmas, we consider for all input curves $\inp$ the segment $s$ that is traveled along at time $t$ (i.e., $\inp(t) \in s$).

\begin{restatable}{lemma}{lemintersectNotSat}\label{lem:intersectNotSat}
If condition \ref{cond:intersect} is not satisfied for some segment $s$, then either $\cc_v \in B\lpar\inpmin(t_p), \frac{60\Delta}{\alpha^2}\rpar$, or $\cc_v \in B\lpar\inpmin(t_s), \frac{60\Delta}{\alpha^2}\rpar$.
\end{restatable}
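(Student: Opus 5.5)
Suppose condition \ref{cond:intersect} fails for a segment $s$ of an input curve $\inp$ with $\inp(t)\in s$. Then the orthogonal projection of $s$ onto the line of $\smin$ is an interval $J$ disjoint from the open window $\smin(t_p,t_s)$. Since $J$ is an interval, it lies entirely on one side of the window. We treat the case where $J$ lies on or before $\smin(t_p)$ in the order of travel along $\smin$; the other case is symmetric and gives the ball around $\inpmin(t_s)$.

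The plan is to show that $\cc_v$ must be close to $\inpmin(t_p)$ along the direction $\dir$ of $\smin$, and also close in the orthogonal directions.

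\textbf{Step 1: orthogonal closeness.} Under the parameterization fixed above, $\|\inpmin(t)-\cc(t)\| \le 8\Delta\alpha^{-1}$, and $\cc(t)=\cc_v$ up to the $\alpha\Delta$ rounding of Fact~\ref{fact:distRoundNet}. So $\cc_v$ lies within $9\Delta\alpha^{-1}$ of the point $\inpmin(t)\in\smin$. In particular, its distance to the line of $\smin$ is $O(\Delta\alpha^{-1})$.

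\textbf{Step 2: bounding the projection of $\inp(t)$.} Fact~\ref{obs:closet} gives $\|\inp(t)-\inpmin(t)\| \le 33\Delta\alpha^{-2}$. Projection onto the line of $\smin$ is $1$-Lipschitz and fixes $\inpmin(t)$, so the projection of $\inp(t)$ lies within $33\Delta\alpha^{-2}$ of $\inpmin(t)$. That projection belongs to $J$, which ends at or before $\smin(t_p)$, while $\inpmin(t)$ lies after $\smin(t_p)$ because $t_p<t$. Hence $\inpmin(t)$ is at most $33\Delta\alpha^{-2}$ past $\smin(t_p)$ along $\dir$:
\[
\dir^T\bigl(\inpmin(t)-\inpmin(t_p)\bigr)\le 33\Delta\alpha^{-2}.
\]

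\textbf{Step 3: combining.} Both $\inpmin(t)$ and $\inpmin(t_p)$ lie on $\smin$, so their distance equals the displacement along $\dir$, which is at most $33\Delta\alpha^{-2}$. By the triangle inequality,
\[
\|\cc_v-\inpmin(t_p)\| \le 9\Delta\alpha^{-1}+33\Delta\alpha^{-2} \le 60\Delta\alpha^{-2}
\]
for $\alpha<1/2$. This is the claimed ball $B\bigl(\inpmin(t_p),60\Delta\alpha^{-2}\bigr)$.

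\textbf{Main obstacle.} The delicate point is Step 2: making sure the one-dimensional ordering argument is valid. Three things need checking.
\begin{itemize}
\item The projection of $s$ may extend beyond the segment $\smin$ itself. Only its order relative to $\smin(t_p)$ matters, which is why we project onto the whole line of $\smin$.
\item The degenerate case where no point of $NC_\smin$ precedes $t$. Then we take $t_p=\tmin_1$, and $\inpmin(t_p)=\smin_1$ is an input vertex. This gives the same conclusion, and also the contradiction used in Section~\ref{sec:border-cases}.
\item $\cc_v$ versus $\tcc^1_v$: the $\alpha\Delta$ rounding from Fact~\ref{fact:distRoundNet} must be absorbed into the constant. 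This is easy since $\alpha\Delta \le \Delta\alpha^{-2}$.
\end{itemize}
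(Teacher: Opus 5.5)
Your proof is correct, but it routes the triangle inequality through a different intermediate point than the paper does.

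\textbf{The paper's route.} The paper works on the other curve's segment $s$. It sets $\tau_1=\max(t_1,t_p)$ and $\tau_2=\min(t_2,t_s)$. It then shows that both $s(\tau_1)$ and $s(\tau_2)$ lie within roughly $34\Delta\alpha^{-2}$ of $\smin(t_p)$: their projections fall before $\smin(t_p)$ while $\smin(\tau_i)$ falls after it, so $\smin(t_p)$ is the nearer of the two. Convexity of the ball along $s$ transfers this bound to $s(t)$. Finally it adds $\|\cc(t)-s(t)\|\le d_F(\inp,\cc)\le 25\Delta\alpha^{-2}$.

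\textbf{Your route.} You pass through $\inpmin(t)$ on $\smin$ itself. The one-dimensional ordering $\dir^T\inp(t)\le \dir^T\smin(t_p)\le \dir^T\inpmin(t)$, together with Observation~\ref{obs:closet}, pins $\inpmin(t)$ within $33\Delta\alpha^{-2}$ of $\smin(t_p)$. You then use the tighter bound $\|\cc(t)-\inpmin(t)\|\le 8\Delta\alpha^{-1}$.

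\textbf{Comparison.} Your version avoids the $\tau_i$ case split and the convexity step, and it needs nothing about $s$ beyond the single point $\inp(t)\in s$. It also gives the better constant $33\Delta\alpha^{-2}+O(\Delta\alpha^{-1})$ instead of roughly $58\Delta\alpha^{-2}$. Both arguments rest on the same common-parameterization assumption that underlies Observation~\ref{obs:closet}.
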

\begin{proof}
The projection of $s$ onto $\smin$ falls either before or after $\smin(t_p, t_s)$. Up to considering a symmetric input, we can assume it falls before.

For $i = 1,2$   let $t_i$ such that $s(t_i) = s_i$ (where $s_1$ and $s_2$ are the two extremities of $s$).
We define $\tau_1 = \max(t_1, t_p)$ and $\tau_2 = \min(t_2, t_s)$. 
Note that the time $t$ such that $\cc(t) = \cc_v$ is comprised in the interval $[\tau_1, \tau_2]$, as it is both in $[t_1, t_2]$ and $[t_p, t_s]$.

Our first step is to show that, for $i = 1,2$:
\begin{equation}
\label{eq:tauClose}
\|\smin(t_p) - s(\tau_i)\| \leq 34\Delta\alpha^{-2}.
\end{equation} 

If $\tau_1 = t_p$,  Equation~\eqref{eq:tauClose} holds by Observation~\ref{obs:closet}.
In the other case, we use the fact that the projection of $s_1 = s(\tau_1)$ onto $\smin$ is before $\smin(t_p)$ on the segment $\smin$ (which holds by the lemma's assumption), which is itself before $\smin(\tau_1)$ (by definition of $\tau_1$): therefore, 
$\|s(\tau_1) - \smin(t_p)\| \leq \|s(\tau_1) - \smin(\tau_1)\| \leq 34\Delta\alpha^{-2}$, where the last inequality follows from Observation~\ref{obs:closet}.
Hence, Equation~\eqref{eq:tauClose} holds for $i=1$.

Consider now the case $i=2$. The projection of $s(\tau_2)$ lies before $\smin(t_1)$, and $\tau_2 \geq t_1$ (since $\tau_2 \geq t \geq t_1$): hence, as previously,
$\|\smin(t_1) - s(\tau_2)\| \leq \|\smin(\tau_2) - s(\tau_2)\| \leq 33\Delta\alpha^{-2}$.
This concludes the proof of Equation~\eqref{eq:tauClose}.

Now, we conclude the lemma: since $s$ is a segment, and $t \in [\tau_1, \tau_2]$, it holds that 
\[\|s(t) - \smin(t_p)\| \leq \max_i \|\smin(t_p) - s(\tau_i)\| \leq 33 \Delta\alpha^{-2}.\]
Therefore,
\begin{align*}
\|\cc(t) - \smin(t_p)\| \leq \|\cc(t) - s(t)\| + \|s(t) - \smin(t_p)\| \leq 25\Delta\alpha^{-2} + 33 \Delta\alpha^{-2}.
\end{align*}
This concludes the proof.
\end{proof}

\begin{restatable}{lemma}{lemfracNotSat}\label{lem:fracNotSat}
If there is a segment $s$ such that condition \ref{cond:intersect} is fulfilled but \ref{cond:fraction} is not, then either $\cc_v \in B\lpar\inpmin(t_p), \frac{140\Delta}{\alpha^2}\rpar$, or $\cc_v \in B\lpar\inpmin(t_s), \frac{140 \Delta}{\alpha^2}\rpar$.
\end{restatable}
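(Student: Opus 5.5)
The plan mirrors the proof of Lemma~\ref{lem:intersectNotSat}: I would show that $s(t)$ lies close to one of the two net centers $\smin(t_p)$ or $\smin(t_s)$, and then transfer this to $\cc_v$ via $\|\cc(t)-s(t)\|\leq 25\Delta\alpha^{-2}$.

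First I would interpret the failure of condition~\ref{cond:fraction}. The portion $s[t_p:t_s]$ being less than an $\alpha^3/33$-fraction of $s$ means that, projected onto $\smin$, the interval $\smin(t_p,t_s)$ is short compared with the projection $I$ of $s$. This is exactly what the regular spacing of $NC_{\smin}$ rules out. Recall that $100/\alpha^3$ equally spaced points of $I$, and likewise of $I'$, were added to $NC_{\smin}$. Since $t_p,t_s$ are consecutive points of $NC_{\smin}$ around $t$, the interval $[t_p,t_s]$ contains no such point in its interior. Hence the part of $I$ lying inside $\smin(t_p,t_s)$ has length at most $\alpha^3|I|/100$.

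So the obstruction must come from the parameterization rather than from the geometry. The traversal of $s$ in time $[t_p,t_s]$ covers a tiny fraction of $s$, while $\smin$ traverses the whole gap in that time.

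I would then split into two cases, according to whether $s$ is nearly orthogonal to $\smin$ (short projection $I$) or not.

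\emph{Case 1: $I$ is long compared with $|s|$.} Condition~\ref{cond:intersect} guarantees that the projection of $s$ meets $\smin(t_p,t_s)$. Together with the spacing argument, this forces an endpoint of the projection of $s[t_p:t_s]$ to lie within $\alpha^3|I|/100$ of $\smin(t_p)$ or $\smin(t_s)$. Observation~\ref{obs:closet} gives $\|s(t_p)-\smin(t_p)\|,\|s(t_s)-\smin(t_s)\|\leq 33\Delta\alpha^{-2}$. Because $s(t)$ lies on the segment between $s(\tau_1)$ and $s(\tau_2)$, as in Lemma~\ref{lem:intersectNotSat}, I would bound $\|s(t)-\smin(t_p)\|$ (or the $t_s$ analogue) by $\max_i\|s(\tau_i)-\smin(t_p)\|$. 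Each of these terms is estimated via Observation~\ref{obs:closet}, plus the short length of $s[t_p:t_s]$ relative to the gap.

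\emph{Case 2: $s$ is nearly orthogonal to $\smin$.} Here I would use the $I'$ points: the projection of $\smin$ onto $s$, reprojected onto $\smin$. These control where along $\smin$ the portion of $s$ that is within $33\Delta\alpha^{-2}$ of $\smin$ can lie. The relevant piece of $s$ then has length $O(\Delta\alpha^{-2})$, so $s(t)$ is within roughly $(33+33+\text{slack})\Delta\alpha^{-2}$ of $\smin(t_p)$ or $\smin(t_s)$. The final bound is
\[
\|\cc_v-\smin(t_\ast)\|\leq 25\Delta\alpha^{-2}+\|s(t)-\smin(t_\ast)\|\leq 140\Delta\alpha^{-2},
\]
where $t_\ast\in\{t_p,t_s\}$ is the closer of the two.

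The main obstacle is the middle step: turning ``small fraction traversed'' into an absolute distance bound of order $\Delta\alpha^{-2}$ between $s(t)$ and the nearer endpoint. I would first try the following argument. If $s(t)$ were farther than about $100\Delta\alpha^{-2}$ from both $\smin(t_p)$ and $\smin(t_s)$, then the projection of $s$ would straddle a long stretch of $[t_p,t_s]$, and by the regular spacing that stretch would contain a point of $NC_{\smin}$. This contradicts consecutiveness. The constant $140$ leaves room for the accumulated $33$'s, the $34$'s, and the $25$.
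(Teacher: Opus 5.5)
Your proof has a genuine gap, and it comes from reading the hypothesis backwards. Condition~\ref{cond:fraction} is an \emph{upper} bound on the traversed fraction. In the proof of Lemma~\ref{lem:natural} it is exactly what guarantees that replacing the curve on $[t_p,t_s]$ by a parallel segment moves it by at most $\frac{\alpha^3}{33}\cdot 33\Delta\alpha^{-2}=\alpha\Delta$. Its failure therefore means that during $[t_p,t_s]$ the curve traverses \emph{more} than an $\alpha^3/33$-fraction of $s$, not a tiny fraction.

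Under your reading the statement is false. Take $s$ to be a very long segment parallel to $\smin$, at distance $O(\Delta)$ and traversed in sync with it. Then condition~\ref{cond:intersect} holds and the traversed fraction is at most about $\alpha^3/100<\alpha^3/33$. But the gap $g=\|\smin(t_p)-\smin(t_s)\|$ is proportional to $|I|$ and can be far larger than $\Delta\alpha^{-2}$, so a vertex $\cc_v$ placed in the middle of the gap is far from both $\smin(t_p)$ and $\smin(t_s)$.

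The same example shows why your closing contradiction argument cannot work. Consecutive points of $NC_{\smin}$ are about $\alpha^3|I|/100$ apart, a length relative to $|I|$, so a stretch of length $100\Delta\alpha^{-2}$ need not contain any of them. Regular spacing alone never yields an absolute bound of order $\Delta\alpha^{-2}$, and your Case~2 via the points of $I'$ does not supply one either.

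The absolute bound has to come from Observation~\ref{obs:closet}, which pins $s(t_p)$ and $s(t_s)$ within $33\Delta\alpha^{-2}$ of $\smin(t_p)$ and $\smin(t_s)$; this is how the paper argues.
\begin{itemize}
\item Condition~\ref{cond:intersect} and the consecutiveness of $t_p,t_s$ give $g\le\alpha^3|I|/100$.
\item Since condition~\ref{cond:fraction} fails, the projection of $s[t_p:t_s]$ onto (the line through) $\smin$ has length more than $\alpha^3|I|/33>3g$.
\item By the triangle inequality, one of $\|\proj{\smin}{s(t_p)}-\smin(t_p)\|$ and $\|\proj{\smin}{s(t_s)}-\smin(t_s)\|$ is therefore at least $g$; say the first.
\item Projection does not increase the distance to the point $\smin(t_p)$ of $\smin$, so $g\le\|s(t_p)-\smin(t_p)\|\le 33\Delta\alpha^{-2}$. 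The paper loses a factor $2$ here via the triangle inequality.
\item Consequently $\|\cc_v-\smin(t_p)\|\le\|\cc(t)-\smin(t)\|+\|\smin(t)-\smin(t_p)\|\le 8\Delta\alpha^{-1}+g$, well within $140\Delta\alpha^{-2}$.
\end{itemize}
No case split on the angle of $s$ and no use of $I'$ is needed.
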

\begin{proof}
Since condition \ref{cond:intersect} holds, the projection of $s$ onto $\smin$ intersects $\smin[t_p:t_s]$. By maximality of $t_p$ and minimality of $t_s$, this implies that $\smin[t_p:t_s]$ is included in the projection, and is at most an $\alpha^3/100$-fraction of it (by construction of $t_p$ and $t_s$). 

Our first claim is that
\begin{align}
\label{eq:projCloseb} 
& \|\proj{\smin}{s(t_p)} - \smin(t_p)\| \geq \|\smin(t_p) - \smin(t_s)\|\\
\label{eq:projClosea} 
\text{or }& \|\proj{\smin}{s(t_s)} - \smin(t_s)\| \geq \|\smin(t_p) - \smin(t_s)\|.
\end{align}
If both equation did not hold, then by triangle inequality we would have 
$\|\proj{\smin}{s(t_s)} - \proj{\smin}{s(t_p)}\| \leq 3 \|\smin(t_p) - \smin(t_s)\|$, and therefore it would be a $\alpha^3/33$ fraction of the projection of $s$ onto $\smin$. This would contradict the assumption.
Hence, one of Equation~\eqref{eq:projCloseb} and Equation~\eqref{eq:projClosea} does not hold: assume without loss of generality that Equation~\eqref{eq:projCloseb} does not hold.

We use this equation to bound the following:
\begin{align*}
\|s(t_p) - \smin(t)\| &\leq \|s(t_p) - \smin(t_p)\| + \|\smin(t_p) - \smin(t)\|\\
&\leq \frac{33 \Delta}{\alpha^2} + \|\smin(t_p) - \smin(t_s)\| \tag{Observation \ref{obs:closet}}\\
&\leq \frac{33\Delta}{\alpha^2} + \|\proj{\smin}{s(t_p)} - \smin(t_p)\| \tag{Equation \eqref{eq:projCloseb}}\\
&\leq \frac{33\Delta}{\alpha^2} + \|\proj{\smin}{s(t_p)} - s(t_p)\| +\|s(t_p) - \smin(t_p)\|\\
&\leq \frac{33\Delta}{\alpha^2} + 2\|s(t_p) - \smin(t_p)\|\\
&\leq \frac{99\Delta}{\alpha^2}.
\end{align*}
We conclude with the triangle inequality:
\begin{align*}
\|\smin(t_p) - \cc(t)\| &\leq \|\smin(t_p) - s(t_p)\| + \|s(t_p) - \smin(t)\| + \|\smin(t) - \cc(t)\|\\
&\leq \frac{33\Delta}{\alpha^2} + \frac{99\Delta}{\alpha^2} + \frac{8\Delta}{\alpha} \\
&\leq \frac{140\Delta}{\alpha^2}.\qedhere
\end{align*}
\end{proof}

\begin{restatable}{lemma}{lemcloseNotSat}\label{lem:closeNotSat}
If there is an input $\inp$ with segment $s$ such that
 condition \ref{cond:close} is not fulfilled, then either $\cc_v \in B\lpar \smin_1, \frac{66\Delta}{\alpha^2}\rpar$, or $\cc_v \in B\lpar \smin_2, \frac{66\Delta}{\alpha^2}\rpar$, or (1) condition \ref{cond:close} holds for $s'$, the projection of $\smin$ onto $s$, and (2) $s(t) \in s'$.
\end{restatable}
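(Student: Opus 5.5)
The plan is to exploit that $s$ and $\smin$ are both traversed at time $t$, so Observation~\ref{obs:closet} gives $\|s(t)-\smin(t)\|\le 33\Delta\alpha^{-2}$, and that $\cc(t)=\cc_v$ lies within $8\Delta\alpha^{-1}$ of $\smin(t)$. Suppose condition~\ref{cond:close} fails, say $\dist(s_1,\smin)>33\Delta\alpha^{-2}$ (the case of $s_2$ is symmetric). Let $\tau_1,\tau_2$ be the times at which $\smin$ and $s$ are simultaneously traversed, i.e.\ $[\tau_1,\tau_2]=[t_1,t_2]\cap[\tmin_1,\tmin_2]$, where $s(t_i)=s_i$. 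At any time $t'\in[\tau_1,\tau_2]$, Observation~\ref{obs:closet} gives $\|s(t')-\smin(t')\|\le 33\Delta\alpha^{-2}$, so $\dist(s(t'),\smin)\le 33\Delta\alpha^{-2}$. Since $s_1$ violates this bound, $\tau_1\neq t_1$, which forces $\tau_1=\tmin_1$: the time window of $s$ starts strictly before the window of $\smin$.

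I split into cases according to the other endpoint. If $\tau_2=\tmin_2$ as well, the whole of $\smin$ is traversed while on $s$. The sub-segment $s[\tmin_1:\tmin_2]$ then has both endpoints within $33\Delta\alpha^{-2}$ of $\smin_1$ and $\smin_2$ respectively, and I take it as the segment $s'$. By convexity of distance to a segment, every point of $s'$ is within $33\Delta\alpha^{-2}$ of $\smin$, so condition~\ref{cond:close} holds for $s'$ (1). Also $t\in[\tmin_1,\tmin_2]$ gives $s(t)\in s'$ (2). To make this rigorous I must verify that this $s'$ coincides, up to the precision of the construction of $NC_{\smin}$, with the segment $s'$ used there, namely the projection of $\smin$ onto $s$. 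The argument is that the points of $s$ realizing proximity to $\smin_1,\smin_2$ bracket that projection. This identification is the step I expect to be most delicate.

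Otherwise $\tau_2=t_2<\tmin_2$. Then both $s(\tmin_1)$ and $s_2=s(t_2)$ are close to $\smin$, while $s_1$ is far. I claim $\cc_v$ is then near $\smin_1$. Because $t\in[\tmin_1,t_2]$ and $s$ is a segment, the projection of $s[\tmin_1:t_2]$ onto $\smin$ must lie near its start. Otherwise the monotone pairing of $s$ with $\smin$, combined with $\|s(\tmin_1)-\smin_1\|\le 33\Delta\alpha^{-2}$, would give $\|s(t)-\smin(t)\|\le33\Delta\alpha^{-2}$ with $s(t)$ close to $\smin_1$. I then chain
\[\|\cc_v-\smin_1\|\le\|\cc(t)-s(t)\|+\|s(t)-s(\tmin_1)\|+\|s(\tmin_1)-\smin_1\|.\]
The middle term is bounded by comparing distances along $s$ with those along $\smin$, in the same way as Equation~\eqref{eq:tauClose} in Lemma~\ref{lem:intersectNotSat}: the projection of $s(t)$ lies between $\smin_1$ and $\smin(t)$. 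This should give $66\Delta\alpha^{-2}$ once constants are absorbed (using $8\Delta\alpha^{-1}\le\Delta\alpha^{-2}$). The symmetric case, where $s_2$ is far, yields the ball around $\smin_2$.

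The main obstacle is the constant bookkeeping in the last case, namely bounding $\|s(t)-s(\tmin_1)\|$ via projections. If the direct chain is too lossy, I would instead bound $\|s(t)-\smin_1\|\le\|s(t)-\smin(\tmin_1)\|$ by arguing, as in Lemma~\ref{lem:intersectNotSat}, that the projection of $s(t)$ onto $\smin$ falls before $\smin(t)$.
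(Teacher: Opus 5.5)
Your proof has a genuine gap, and it sits in Case~B. When $t_1<\tmin_1\le t\le t_2<\tmin_2$, nothing forces $\cc_v$ to be near $\smin_1$.

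Take $\smin=[(0,0),(L,0)]$ and $s=[(-M,0),(0.9L,0)]$ with $M>33\Delta\alpha^{-2}$ and $L\gg\ell\Delta\alpha^{-2}$. Let the two curves be traversed in lockstep on the overlap, and let $\cc$ follow them with $\cc_v$ next to $s(t)=(L/2,0)$. Then $s_1$ violates condition~\ref{cond:close} and you are in Case~B. Yet $\cc_v$ is far from both $\smin_1$ and $\smin_2$. The lemma holds here only through alternative (1)+(2), because $\proj{s}{\smin}=[(0,0),(0.9L,0)]$ contains $s(t)$. Your supporting claim, that the projection of $s[\tmin_1:t_2]$ onto $\smin$ stays near its start, is exactly what fails.

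Case~B's false claim is the gap; Case~A also has an unresolved step, which you flagged yourself. The sub-segment $s[\tmin_1:\tmin_2]$ is not $\proj{s}{\smin}$, which is the set on which $NC_{\smin}$ places points. The point $s(\tmin_1)$ can sit up to $33\Delta\alpha^{-2}$ before $\proj{s}{\smin_1}$ along $s$. If $s(t)$ falls in that gap, (2) fails for the true $s'$, and you need the ball alternative, which Case~A never uses.

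The missing idea is to split on whether $s(t)\in s'=\proj{s}{\smin}$, not on where $t_2$ falls; this is the paper's route. Once $t_1<\tmin_1$, you have $\tmin_1\in[t_1,t_2]$, so $s(\tmin_1)$ is a point of $s$. Since $s'_1=\proj{s}{\smin_1}$ is the point of $s$ closest to $\smin_1$, you get $\|\smin_1-s'_1\|\le\|\smin_1-s(\tmin_1)\|\le 33\Delta\alpha^{-2}$; this is how the paper obtains (1).

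If $s(t)\in s'$, you have (2). Otherwise, since $t\ge\tmin_1$, $s(t)$ lies on $s$ between $s(\tmin_1)$ and $s'_1$, and both are within $33\Delta\alpha^{-2}$ of $\smin_1$. So $s(t)$ is close to $\smin_1$: the paper uses the triangle inequality to get $66\Delta\alpha^{-2}$, and convexity of $x\mapsto\|x-\smin_1\|$ along $s$ even gives $33\Delta\alpha^{-2}$.

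To pass to $\cc_v$, add $\|\cc(t)-s(t)\|\le 25\Delta\alpha^{-2}$, which gives $\|\cc_v-\smin_1\|\le 58\Delta\alpha^{-2}$. The bound is $25\Delta\alpha^{-2}$, not the $8\Delta\alpha^{-1}$ in your chain, because $s$ lies on $\inp$ rather than on $\inpmin$. This argument does not depend on where $t_2$ is, so no case split on the second endpoint is needed.
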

\begin{proof}
We first start with an observation.
For $i = 1,2$, let $t_i$ such that $s(t_i) = s_i$ (where $s_1$ and $s_2$ are the two extremities of $s$). If $t_i \in [\tmin_1, \tmin_2]$, then $\dist(s_i, \smin) \leq \|s(t_i) - \smin(t_i)\| \leq \frac{33\Delta}{\alpha^2}$ (from Observation~\ref{obs:closet}) and condition \ref{cond:close} is fulfilled. Hence, either $t_1 \notin [\tmin_1, \tmin_2]$ or $t_2 \notin [\tmin_1, \tmin_2]$.

Focus on the first case. 
Our key (simple) observation is that $\tmin_1 \in [t_1, t_2]$: indeed, $t$ is in both $[t_1, t_2]$ and $[\tmin_1, \tmin_2]$, so it holds that $t_1 \leq \tmin_1 \leq t \leq t_2$. 
Furthermore, $s'_1 = \proj{s}{\smin_1}$.
Hence, 
$\dist(\smin_1, s') \leq \|\smin_1 - s(\tmin_1)\| \leq \frac{33 \Delta}{\alpha^2}$, so condition (1) is satisfied.

If condition (2) is satisfied as well, we are done. Otherwise, we can show that $\cc_v$ is close to $\smin_1$. Indeed, in that case, it holds that $\inp(t) = s(t) \in [s(\tmin_1), \proj{s}{\smin_1}]$. 
Hence, 
\begin{align*}
\|\smin_1 - s(t)\| &\leq \|\smin_1 - s(\tmin_1)\| + \|\smin_1 -  \proj{s}{\smin_1}\| \\
&\leq 2\|\smin_1 - s(\tmin_1)\|\\
&\leq \frac{66 \Delta}{\alpha^2}, 
\end{align*}
where the second inequality holds by property of the projection. This concludes. 
\end{proof}

\subsubsection{Putting Everything Together: \texorpdfstring{$\candCurve$}{NC} is an Approximate Centroid Set}\label{sec:putTogether}
We now combine the previous lemmas to show how to build the curve $\tcc$ with vertices in $\candCurve$, and that $|\frechetCont(\inp^i, \tcc) - \frechetCont(\inp^i, \cc)| \leq \alpha (\frechetCont(\inp^i, \cc) + \Delta)$, which concludes that $\cand$ is an approximate centroid set.

As explained, start by constructing $\tcc^1$, the curve obtained by replacing all vertices of $\cc$ that are in some $B(\inp^i_j, \ell\cdot 100 \Delta\alpha^{-2})$ by their closest point $N_{i,j} \subset \cand$.  Fact~\ref{fact:distRoundNet} shows that this transformation preserves all distances from input curves to $\cc$.

Let $\cc_v$ be a vertex that is outside of all balls $B(\inp^i_j, \ell\cdot 100 \Delta\alpha^{-2})$, $t$ be such that $\cc(t) = \cc_v$, and $\smin$ the segment of $\inpmin$ that is traveled along at time $t$. We define $t_p$ and $t_s$ such that $\smin(t_p)$ and $\smin(t_s)$ are the points from $NC_\smin$ that are respectively the last point of $NC_\smin$ before $t$, and the first one after $t$, and we restrict all input curves to $[t_p:t_s]$.

Lemma~\ref{lem:natural} shows that, when the three conditions \ref{cond:intersect}, \ref{cond:fraction} and \ref{cond:close} are satisfied, then we can consider all those segments are parallel.
We furthermore restrict those segments to have same length as follows: let $L$ be the smallest of their length. Restrict each segment to one of length $L$, that starts at a point of $NC$ and contains $t$. The following Fact~\ref{fact:restrictSegm} shows that this restriction is doable.
We then build $\cc'$ using Lemma~\ref{lem:parallel}: with this construction, distances between $\cc'$ and all input segments are preserved, and Fact~\ref{fact:distShifted} shows that all vertices of $\cc'$ are at distance at most $9\ell \cdot \Delta\alpha^{-1}$ of an input vertex: they can therefore be replaced by points of $\cand$ that are close-by, preserving the Fréchet distance up to an additive $\alpha \Delta$.
Applying this strategy for all vertices of $\cc$ outside of all balls $B(\inp^i_j, \ell\cdot 100 \Delta\alpha^{-2})$, this builds a curve $\tcc$ with same distance as $\cc$ to every input curve, up to an additive $\alpha \Delta$.

Now, Lemma~\ref{lem:intersectNotSat} and Lemma~\ref{lem:fracNotSat} shows that the first two conditions are satisfied for all input segments. Lemma~\ref{lem:closeNotSat} shows that either the third is satisfied, it is satisfied for a subsegment $s'$ of $s$ that contains $s(t)$. In that case, we can simply replace $s$ by $s'$ in the application of Lemma~\ref{lem:parallel} -- as our construction of $\cand$ ensures that there are net points regularly placed on $s'$ as well.

\begin{restatable}{fact}{factrestrictSegm}\label{fact:restrictSegm}
Let $\inp[t_p:t_s]$ be an input segment. Then, there is a point of $NC_\inp$ at distance less than $L$ from $\inp(t)$.
\end{restatable}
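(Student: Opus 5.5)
The plan is a spacing argument along the common direction $\dir$ of $\smin$. After Lemma~\ref{lem:natural}, every restricted input piece $\inp[t_p:t_s]$ is replaced by a segment parallel to $\smin$. So all the relevant quantities (the positions of $\inp(t)$, the points of $NC$, the length $L$) can be compared through their orthogonal projections onto the line through $\smin$. I read ``a point of $NC_\inp$'' as a net center of $NC_{\smin}$ transported onto the parallel copy of $\inp$, i.e.\ the point of the parallel segment with the same projection onto $\smin$. Since the parallel segment is at constant orthogonal offset from $\smin$, distances between such transported points and $\inp(t)$ equal distances between their projections on $\smin$. This reduces the Fact to a one-dimensional statement.

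\textbf{Step 1: the grid from the shortest segment is fine enough.} Let $s^\ast$ be the input segment whose restricted length equals $L$, and let $I^\ast$ be its projection onto $\smin$. By construction, $NC_{\smin}$ contains $100/\alpha^3$ regularly spaced points on $I^\ast$, so consecutive ones are at distance at most $\alpha^3|I^\ast|/100$. By condition~\ref{cond:fraction}, the piece $s^\ast[t_p:t_s]$ is at least an $\alpha^3/33$-fraction of $s^\ast$. After parallelization its length is at least an $\alpha^3/33$-fraction of $|I^\ast|$, since projection does not increase length. Hence
\[
L \;\ge\; \tfrac{\alpha^3}{33}|I^\ast| \;>\; \tfrac{\alpha^3}{100}|I^\ast|,
\]
so the grid spacing on $I^\ast$ is strictly less than $L$.

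\textbf{Step 2: the projection of $\inp(t)$ lies where that grid is dense.} The projection of $\inp(t)$ falls in the window $\smin[t_p:t_s]$, up to the negligible offset controlled in the proof of Lemma~\ref{lem:natural}. By condition~\ref{cond:intersect} and the maximality of $t_p$ and minimality of $t_s$ (as used in Lemma~\ref{lem:fracNotSat}), this window is contained in every input projection, in particular in $I^\ast$. Alternatively, the endpoints $\smin(t_p)$ and $\smin(t_s)$ are themselves points of $NC_{\smin}$. The distance from $\inp(t)$'s projection to the nearer of them is at most the window length. That length is at most the length of the parallel piece, and then at most $L$ once every piece is restricted to length $L$ around a point of $NC$. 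Either route produces a net center whose projection is within distance $<L$ of the projection of $\inp(t)$. Transporting it to $\inp$ gives the claimed point.

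\textbf{Main obstacle.} The delicate part is Step 2: making sure the comparison is done on a single common window. The grid spacing bound of Step 1 concerns $I^\ast$, while the point $\inp(t)$ belongs to a possibly different segment. I would first try to use only the $NC$ points $\smin(t_p)$ and $\smin(t_s)$ bracketing $t$. I would then bound their gap by the minimal restricted length, using that the window $[t_p,t_s]$ is common to all curves under the identity parameterization fixed before Observation~\ref{obs:closet}. If the strict inequality needs slack, I would absorb it using the gap between $\alpha^3/33$ and $\alpha^3/100$.
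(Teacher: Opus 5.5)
There is a genuine gap, and it starts in Step~1. You read condition~\ref{cond:fraction} as a \emph{lower} bound (``$s^\ast[t_p:t_s]$ is at least an $\alpha^3/33$-fraction of $s^\ast$''), but in the paper it is an \emph{upper} bound. \Cref{lem:natural} needs the piece to be at most an $\alpha^3/33$-fraction, so that the Thales deviation is at most $\frac{\alpha^3}{33}\cdot 33\Delta\alpha^{-2}=\alpha\Delta$. Likewise, \Cref{lem:fracNotSat} treats ``the piece is at most such a fraction'' as contradicting the \emph{failure} of~\ref{cond:fraction}. With the correct direction, your comparison points the wrong way. $\smin[t_p:t_s]$ is the gap between \emph{consecutive} points of $NC_{\smin}$, so it is at most the grid spacing $\approx\alpha^3|I^\ast|/100$. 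By Observation~\ref{obs:closet}, the restricted pieces track this window up to an additive $O(\Delta\alpha^{-2})$. Hence $L\ge\frac{\alpha^3}{33}|I^\ast|$ can only hold when $|I^\ast|=O(\Delta\alpha^{-5})$, i.e.\ in the degenerate case where the window itself has length $O(\Delta\alpha^{-2})$. In the regime the Fact is about, it is the window that is bounded by the grid spacing, not $L$ that exceeds it.

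Your fallback in Step~2 does not repair this. The phrase ``at most $L$ once every piece is restricted to length $L$ around a point of $NC$'' is circular: that restriction is exactly what the Fact is meant to license. Also, ``window length $\le$ length of the parallel piece'' holds only up to the additive offset from Observation~\ref{obs:closet}, and only for the shortest piece, not for $\inp$. An additive error cannot be absorbed by the multiplicative slack between $\alpha^3/33$ and $\alpha^3/100$, as your last paragraph proposes.

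The missing ingredient is a lower bound on $L$ that dominates this offset, and this is what the paper uses. By the triangle inequality at the common times $t_p$ and $t_s$, every restricted piece has length at most $L+32\Delta\alpha^{-1}$. This is at most $2L$, since otherwise $\cc_v$ would be close to an input point, contradicting the assumption that it lies outside all balls $B(\inp^i_j,\ell\cdot 100\Delta\alpha^{-2})$. The net centres are therefore spaced at most $\alpha^3\cdot 2L/100=\alpha^3L/50<L$ along the piece. This gives the required point of $NC_\inp$, and a length-$L$ subsegment starting there that contains $\inp(t)$. Once you have such a lower bound on $L$, your bracketing idea with $\smin(t_p)$ and $\smin(t_s)$ can also be made to work, because then half the window plus the additive offset is below $L$.
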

\begin{proof}
Triangle inequality ensures that, for any $i$, $\|\inp^i(t_p) - \inp(t_p)\| \leq 16\Delta\alpha^{-1}$, and  $\|\inp^i(t_s) - \inp(t_s)\| \leq 16\Delta\alpha^{-1}$. Hence, the length of $\inp[t_p:t_s]$ is at most $L + 32\Delta\alpha^{-1}$, which is at most $2L$ as otherwise the vertex of $\cc_v$ would be close to an input point. Thus, points of $NC_\inp$ are placed at distance at most $\alpha^3 L/ 50$ along the segments, and there is one length $L$ segment starting at this point containing $\inp(t)$.
\end{proof}

\section{An Improved Coreset Construction Framework via Chaining}
\label{sec:coreset_framework}

\subsection{Outline of the Argument}

In our setting, let $P$ be a set of points in a metric space $\cM = (\cX, d)$, admitting an $(\varepsilon, k, z)$-clustering net (as per Definition~\ref{def:nets}), whose size is function of some $\varepsilon > 0$ (but does not grow too fast in $\varepsilon$). The goal is to describe a framework that, given a metric space and the size of its $(\varepsilon, k, z)$-clustering net, yields small and good coresets. Formally, we prove Theorem~\ref{thm:framework}, for which we provide an implication diagram for the reader to go back to as well as an outline of the argument below.

\begin{figure}[ht]
    \centering
    \resizebox{0.8\linewidth}{!}{
        \tikzset{every picture/.style={line width=0.75pt}} 
        
        \begin{tikzpicture}[x=0.75pt,y=0.75pt,yscale=-1,xscale=1]
        
        \draw   (83,42) .. controls (83,38.13) and (86.13,35) .. (90,35) -- (175,35) .. controls (178.87,35) and (182,38.13) .. (182,42) -- (182,68) .. controls (182,71.87) and (178.87,75) .. (175,75) -- (90,75) .. controls (86.13,75) and (83,71.87) .. (83,68) -- cycle ;
        \draw   (119,119) .. controls (119,115.13) and (122.13,112) .. (126,112) -- (211,112) .. controls (214.87,112) and (218,115.13) .. (218,119) -- (218,145) .. controls (218,148.87) and (214.87,152) .. (211,152) -- (126,152) .. controls (122.13,152) and (119,148.87) .. (119,145) -- cycle ;
        \draw   (250,41) .. controls (250,37.13) and (253.13,34) .. (257,34) -- (342,34) .. controls (345.87,34) and (349,37.13) .. (349,41) -- (349,67) .. controls (349,70.87) and (345.87,74) .. (342,74) -- (257,74) .. controls (253.13,74) and (250,70.87) .. (250,67) -- cycle ;
        \draw   (402,120) .. controls (402,116.13) and (405.13,113) .. (409,113) -- (494,113) .. controls (497.87,113) and (501,116.13) .. (501,120) -- (501,146) .. controls (501,149.87) and (497.87,153) .. (494,153) -- (409,153) .. controls (405.13,153) and (402,149.87) .. (402,146) -- cycle ;
        \draw   (30,204) .. controls (30,200.13) and (33.13,197) .. (37,197) -- (122,197) .. controls (125.87,197) and (129,200.13) .. (129,204) -- (129,230) .. controls (129,233.87) and (125.87,237) .. (122,237) -- (37,237) .. controls (33.13,237) and (30,233.87) .. (30,230) -- cycle ;
        \draw   (190,204) .. controls (190,200.13) and (193.13,197) .. (197,197) -- (282,197) .. controls (285.87,197) and (289,200.13) .. (289,204) -- (289,230) .. controls (289,233.87) and (285.87,237) .. (282,237) -- (197,237) .. controls (193.13,237) and (190,233.87) .. (190,230) -- cycle ;
        \draw   (337,204) .. controls (337,200.13) and (340.13,197) .. (344,197) -- (429,197) .. controls (432.87,197) and (436,200.13) .. (436,204) -- (436,230) .. controls (436,233.87) and (432.87,237) .. (429,237) -- (344,237) .. controls (340.13,237) and (337,233.87) .. (337,230) -- cycle ;
        \draw   (463,203) .. controls (463,199.13) and (466.13,196) .. (470,196) -- (555,196) .. controls (558.87,196) and (562,199.13) .. (562,203) -- (562,229) .. controls (562,232.87) and (558.87,236) .. (555,236) -- (470,236) .. controls (466.13,236) and (463,232.87) .. (463,229) -- cycle ;
        \draw    (181,56.5) -- (247,56.5) ;
        \draw [shift={(250,56.5)}, rotate = 180] [fill={rgb, 255:red, 0; green, 0; blue, 0 }  ][line width=0.08]  [draw opacity=0] (8.93,-4.29) -- (0,0) -- (8.93,4.29) -- cycle    ;
        \draw    (172,110.5) -- (291.11,77.31) ;
        \draw [shift={(294,76.5)}, rotate = 164.43] [fill={rgb, 255:red, 0; green, 0; blue, 0 }  ][line width=0.08]  [draw opacity=0] (8.93,-4.29) -- (0,0) -- (8.93,4.29) -- cycle    ;
        \draw    (452,112.5) -- (296.93,77.17) ;
        \draw [shift={(294,76.5)}, rotate = 12.84] [fill={rgb, 255:red, 0; green, 0; blue, 0 }  ][line width=0.08]  [draw opacity=0] (8.93,-4.29) -- (0,0) -- (8.93,4.29) -- cycle    ;
        \draw    (80,196.5) -- (166.29,155.78) ;
        \draw [shift={(169,154.5)}, rotate = 154.74] [fill={rgb, 255:red, 0; green, 0; blue, 0 }  ][line width=0.08]  [draw opacity=0] (8.93,-4.29) -- (0,0) -- (8.93,4.29) -- cycle    ;
        \draw    (246,195.5) -- (171.65,155.91) ;
        \draw [shift={(169,154.5)}, rotate = 28.03] [fill={rgb, 255:red, 0; green, 0; blue, 0 }  ][line width=0.08]  [draw opacity=0] (8.93,-4.29) -- (0,0) -- (8.93,4.29) -- cycle    ;
        \draw    (389,196.5) -- (450.53,154.2) ;
        \draw [shift={(453,152.5)}, rotate = 145.49] [fill={rgb, 255:red, 0; green, 0; blue, 0 }  ][line width=0.08]  [draw opacity=0] (8.93,-4.29) -- (0,0) -- (8.93,4.29) -- cycle    ;
        \draw    (514,195.5) -- (455.45,154.23) ;
        \draw [shift={(453,152.5)}, rotate = 35.18] [fill={rgb, 255:red, 0; green, 0; blue, 0 }  ][line width=0.08]  [draw opacity=0] (8.93,-4.29) -- (0,0) -- (8.93,4.29) -- cycle    ;
        
        \draw (94,47) node [anchor=north west][inner sep=0.75pt]   [align=left] {Lemma~\ref{lem:nogroup}};
        \draw (130,124) node [anchor=north west][inner sep=0.75pt]   [align=left] {Lemma~\ref{lem:main-groups}};
        \draw (257,46) node [anchor=north west][inner sep=0.75pt]   [align=left] {Theorem~\ref{thm:framework}};
        \draw (412,125) node [anchor=north west][inner sep=0.75pt]   [align=left] {Lemma~\ref{lem:outer-groups}};
        \draw (41,209) node [anchor=north west][inner sep=0.75pt]   [align=left] {Lemma~\ref{lem:huge}};
        \draw (201,209) node [anchor=north west][inner sep=0.75pt]   [align=left] {Lemma~\ref{lem:gp-main}};
        \draw (348,209) node [anchor=north west][inner sep=0.75pt]   [align=left] {Lemma~\ref{lem:far}};
        \draw (474,208) node [anchor=north west][inner sep=0.75pt]   [align=left] {Lemma~\ref{lem:gp-outer}};

        \end{tikzpicture}
    }
    \caption{Proof overview of the coreset construction framework guarantee: Theorem~\ref{thm:framework} is implied by Lemmas~\ref{lem:nogroup},~\ref{lem:main-groups},~\ref{lem:outer-groups}. In turn, Lemma~\ref{lem:main-groups} is implied by Lemmas~\ref{lem:huge},~\ref{lem:gp-main} and Lemma~\ref{lem:outer-groups} by Lemmas~\ref{lem:far},~\ref{lem:gp-outer}.}
    \label{fig:groups2}
\end{figure}

To prove Theorem~\ref{thm:framework}, we generalize the conditions under which clustering nets give coresets in general metric spaces. Although parts of this analysis are available elsewhere in the literature (see e.g., \cite{Cohen-AddadLSS22}), we provide the framework in its entirety in this section in hopes that it may help readers entering this field. We now present an overview of the arguments.

In order to show Theorem~\ref{thm:framework}, we first start from a constant factor approximation $\cA$ of the $(k,z)$-clustering objective for point set $P$ on space $\cM$. We refer to a cluster in the approximation as $C_i$, which represents the set of points that are closest to center $c_i \in \cA$. We then deconstruct the dataset into several non-overlapping \emph{groups} such that they have two properties. First, the intersection of any cluster in $\cA$ with a group has roughly equal cost. Second, every point in a group constitutes a roughly equal percentage of its cluster's cost. These conditions ensure\footnote{To see this, consider that we are picking a point from $k$ clusters of equal cost and, conditioned on having picked cluster $C_i$, there are then $|C_i|$ points to pick from.} that a point from cluster $C_i$ in the group gets sampled with probability roughly $\frac{1}{k|C_i|}$. Furthermore, there are a logarithmic number of such groups (as well as a set of outliers that do not satisfy these properties). Thus, by the fact that coreset property is preserved under composition, if we make a coreset for each group and one for the outliers then we have a coreset for the full dataset.

We thus want to show that sampling from a group gives us a coreset for that group. First, we define the smallest group so that it only induces an $O(\eps)$ fraction of the cost, so each point in this smallest group can be snapped to its closest center without breaking the coreset requirement. Similarly, some clusters in a group will only induce an $O(\varepsilon)$-fraction of the entire cost of $\cA$. These can be handled accordingly. The union of those points that can be handled in this way is discussed in Lemma~\ref{lem:nogroup}. What remains are those clusters that constitute a significant amount of the cost (the `standard' groups) and the outliers of the clusters (which we call the `outer' groups), both of which generally follow a similar argument. We therefore restrict ourselves to the standard groups for the purposes of this exposition.

Given any solution $\cS$, each point in the group has a cost with respect to $\cA$ and a cost with respect to $\cS$. It is the cost with respect to $\cS$ that we would like to approximate via the coreset. To do this, we will subdivide every cluster in our group into two sub-types, conditioned on $\cS$. The (relatively) easy type consists of those points who have significantly larger cost to $\cS$ than to $\cA$, by at least a factor of $O(\eps^{-2})$.
Intuitively, these points are much closer to each other than they are to their center in $\cS$. Thus, as long as we sample enough points from this type, its cost with respect to $\cS$ should be well-approximated. Lemma~\ref{lem:outer-groups} states that they are sampled sufficiently and that their cost is therefore well-approximated. Thus, those points which incur a huge cost with respect to solution $\cS$ are approximated by the sampling.

It remains to show that we appropriately represent those centers that have roughly equal cost in $\cS$ and $\cA$. This is the most involved part of the analysis. For it, we consider a infinite nested set of clustering nets over the costs of those centers in $\cS$ that are in the $O(\eps^{-2})$ range. Specifically, we define a $\frac{1}{2}$-net, a $\frac{1}{4}$-net, and so on. Since the limit of these nets will approach any point, it must be the case that the we can approximate the cost of a center by summing the difference from the first net to the second, from the second to the third, and so on. That is, since the clustering nets give the point in the limit and everything in this telescoping sum cancels, this is equal to the cost of the center with respect to any point. This equality between nested clustering nets and the cost vectors is precisely what we exploit to make the analysis go through: if we can bound the differences in costs between ever-finer epsilon nets, then we can bound the amount that our cost is distorted.

We first model the expected error over the telescoping nets as a Gaussian variable. In this sense, the supremum over the Gaussian variable bounds our desired term. Furthermore, this variable's supremum is less than the sum of suprema over each element in the telescoping sum. Thus, if we can bound the amount of variation at each step in this telescoping sum, we can bound the entire sum and, by extension, the expected error of our center.

A single step in the telescoping sum is expressed by all the possible interactions between the $2^{-i}$-net and the $2^{-(i+1)}$ net. Thus, we use the fact that the maximum of a set of random variables is bounded by their variance times the logarithm of the number of variables:
\[
    \E{}{\max_{p \in [n]} |g_p|} \leq 2\sigma \sqrt{\log n}.
\]
where there are $n$ random variables $g_i$ with maximum variance $\sigma$. In our setting, $n$ is the size of the $2^{-i}$ net times the size of the $2^{-(i+1)}$ net. It therefore suffices to bound $\sigma$ -- the variance in cost when shifting from an $\eps$-net to a $(\frac{\eps}{2})$-net. Our analysis concludes by doing precisely this.

\paragraph{Rings.} Let $\Delta_{C_i} = \frac{\cost(C_i, \cA)}{|C_i|}$ be the average cluster cost and define for each cluster:
\begin{itemize}
    \item Ring $R_{ij} = \{p \in C_i \mid 2^j\Delta_{C_i} \leq \cost(p, \cA) \leq 2^{j+1}\Delta_{C_i}\}$, i.e., the set of points that have roughly the same multiple (up to a factor of $2$) of the average point's cost in that cluster in the approximate solution.
    \item Inner ring $R_I(C_i) = \bigcup_{j \leq z\log(\varepsilon/z)} R_{ij}$, i.e., the set of points whose cost in the approximate solution is significantly smaller than the average point's cost in that cluster. We write $R_I = \bigcup_{i \in [k]} R_I(C_i)$.
    \item Outer ring $R_O(C_i) = \bigcup_{j > 2z\log(z/\varepsilon)} R_{ij}$, i.e., the set of points whose cost in the approximate solution is much larger than the average cost of the cluster they belong to. We write $R_O = \bigcup_{i \in [k]} R_O(C_i)$.
    \item All those rings that are neither inner nor outer are the main rings. In particular, we let $R_j = \bigcup_{i \in [k]} R_{ij}$, that is the set of points that contribute roughly the same amount of cost to their respective cluster centers.
\end{itemize}

The fundamental property of any ring $R_{ij}$ is that every point in it contributes roughly the same amount to $\cost(\cA)$.

\paragraph{Groups.} Before proceeding with the formal definition of a group $G$, we first give an intuition as to how they are constructed. Each group is comprised of sets of rings. Consider the set of $j$-th rings $R_j$ around the centers in $\cA$. Then each ring $R_{ij}$ has some total cost over all of its points. A group $G_{jb}$ is then the subset of these $j$-th rings that have roughly the same total cost (where the $b$ subscript determines how large this cost is). This means that a group $G_{jb}$ is comprised of the $j$-th rings around some subset of the centers in our approximate solution. As a result, groups satisfy the following two properties:
\begin{itemize}
    \item Each point contributes roughly the same amount to its cluster's cost.
    \item Each group's cost is evenly divided over the clusters that comprise it.
\end{itemize}

\noindent Formally, main rings are partitioned into main groups as follows. For each $j$, a group is defined as 
\[ G_{jb} = \bigg\{p \mid \,\exists~i \text{ s.t. } p \in R_{ij} \text{ and } \frac{2^b}{k}\left(\frac{\varepsilon}{4z}\right)^z \leq \frac{\cost(R_{ij}, \cA)}{\cost(R_j, \cA)} \leq \frac{2^{b+1}}{k}\left(\frac{\varepsilon}{4z}\right)^z\bigg\}. \]    
Similarly to rings, we define the ``cheap'' groups $G^M_{\min}$ to be the union of all groups with $b<0$. That is, $G^M_{\min} = \bigcup_{j} \bigcup_{b \leq 0} G_{jb}$. 

The ``main'' groups are all those that are not cheap, i.e. $G^M = \bigcup_{j} \bigcup_{b > 0} G_{jb}$.

With this definition, we say that a ring $R_{ij}$ ``belongs'' to a group $G_{jb}$ if it contributes roughly the same amount to the total cost as all the other rings in $G_{jb}$. We visualize this in the following diagram:

\begin{figure}[ht]
    \centering
    \resizebox{0.5\linewidth}{!}{
                \tikzset{every picture/.style={line width=0.75pt}} 
        
        \begin{tikzpicture}[x=0.75pt,y=0.75pt,yscale=-1,xscale=1]

        \coordinate (1) at (-50,0);
        \draw[fill=white!0] (1) circle (100);
        \draw[fill=gray!40] (1) circle (75);
        \draw[fill=white!0] (1) circle (50);
        
        \coordinate (2) at (60,240);
        \draw[fill=white!0] (2) circle (100);
        \draw[fill=gray!90] (2) circle (75);
        \draw[fill=white!0] (2) circle (50);
        
        \coordinate (3) at (400,80);
        \draw[fill=white!0] (3) circle (200);
        \draw[fill=gray!40] (3) circle (150);
        \draw[fill=white!00] (3) circle (100);
        
        \draw (-93,-135) node [anchor=north west][inner sep=0.75pt]    {\huge $R_{O}( C_{1})$};
        \draw (17,105) node [anchor=north west][inner sep=0.75pt]    {\huge $R_{O}( C_{2})$};
        \draw (360,-160) node [anchor=north west][inner sep=0.75pt]    {\huge $R_{O}( C_{3})$};
        
        \draw (-80,-35) node [anchor=north west][inner sep=0.75pt]    {\Large $R_{I}( C_{1})$};
        \draw[fill=black!] (-50, 0) circle (5);
        \draw (-60,10) node [anchor=north west][inner sep=0.75pt]    {\huge $c_1$};
        
        \draw (30,205) node [anchor=north west][inner sep=0.75pt]    {\Large $R_{I}( C_{2})$};
        \draw[fill=black!] (60, 240) circle (5);
        \draw (50,250) node [anchor=north west][inner sep=0.75pt]    {\huge $c_2$};
        
        \draw (360,0) node [anchor=north west][inner sep=0.75pt]    {\huge $R_{I}( C_{3})$};
        \draw[fill=black!] (400, 80) circle (5);
        \draw (390,90) node [anchor=north west][inner sep=0.75pt]    {\huge $c_3$};
        
        \draw (-217,-85) node [anchor=north west][inner sep=0.75pt]    {\huge $R_{1j}$};
        \draw    (-170,-60) -- (-105,-25);
        
        \draw (-100,300) node [anchor=north west][inner sep=0.75pt]    {\huge $R_{2j}$};
        \draw    (-60,300) -- (10,275);
        
        \draw (575,-100) node [anchor=north west][inner sep=0.75pt]    {\huge $R_{3j}$};
        \draw    (565,-75) -- (485,-10);

        \end{tikzpicture}
    }
    \caption{We plot the sample rings for $\cA = \{c_1, c_2, c_3\}$ in which $\cost(C_1) = \cost(C_2) < \cost(C_3)$. We have labeled the rings $R_{1j}, R_{2j}$ and $R_{3j}$ for $j=1$ -- note that the width of each ring is proportional to the cost of its cluster. Now let us assume that $R_{1j}$ and $R_{3j}$ have equal densities of points but that $R_{2j}$ has higher density (as evidenced by the shading). Then we might have $G_{jb} = R_{1j}$ and $G_{jb'} = R_{2j} \cup R_{3j}$ for $b' > b$, since rings $R_{2j}$ and $R_{3j}$ each have a higher total cost than $R_{1j}$.}
    \label{fig:groups}
\end{figure}

\noindent Analogous definitions can be given for outer rings:
\[ G_{b}^O = \bigg\{p \mid \exists~i, p \in R_{ij} \text{ and } \frac{2^b}{k} \left(\frac{\varepsilon}{4z}\right)^z \leq \frac{\cost(R_O(C_i), \cA)}{\cost(R_O, \cA)} \leq \frac{2^{b+1}}{k} \left(\frac{\varepsilon}{4z}\right)^z\bigg\}.\]
$G^O_{\min}$ and $G^O$ are defined similarly to $G^M_{\min}$ and $G^M$ above. That is, $G^O$ is the set of all outer groups that are not cheap. For an outer group $G \in G^O$, let $P^G = \{p \mid \exists C,~p \in C \text{ and } C \cap G \neq \emptyset\}$ be the set of points belonging to the union of clusters that intersect the group. We define the ``cheap'' group to be $G_{\min} = G^M_{\min} \cup G^O_{\min} \cup R_I.$ 

\paragraph{Cluster Types.}

When building coresets, we must show that the cost is preserved with respect to \emph{every} solution. It proves useful, however, to consider these solutions in two separate types. Namely, consider a candidate solution $\cS$ with respect to cluster $C_i \cap G$, for $C_i \in \cA$. If the cost of every point in $C_i \cap G$ is roughly similar in $\cS$ as in $\cA$, then it implies that there is a candidate center in $\cS$ somewhere near center $C_i$. This setting must be handled with care and is what necessitates the upcoming net arguments. We therefore provide the following definition:

\begin{definition}[Interesting Clusters]\label{def:interesting_main}
    Consider an arbitrary solution $\cS$ and a main group $G \in G^M$. A cluster $C_i \cap G$ induced by $\cS$ is said to be \textit{interesting} if there exists a point $p \in C_i \cap G$ such that $\cost(p, \cS) \leq \left(\frac{4z}{\varepsilon}\right)^z \cdot \cost(p, \cA)$.
\end{definition}

On the other hand, if the cost of every point in $C_i \cap G$ is significantly larger with respect to $\cS$ than it is to center $c_i$, then it implies that every center $\cS$ is far outside of the radius of $C_i \cap G$. In this case, every point is roughly equivalent in terms of $\cost(P, \cS)$ and it is straightforward to show that we preserve their cost if we sample them sufficiently. We therefore provide the following definitions of these clusters in the main and outer groups:

\begin{definition}[Huge Clusters]\label{def:huge}
    Consider an arbitrary solution $\cS$ and a main group $G \in G^M$. A cluster $C_i \cap G$ induced by $\cS$ is said to be \textit{huge} if there exists a point $p \in C_i \cap G$ such that $\cost(p, \cS) \geq \left(\frac{4z}{\varepsilon}\right)^z \cdot \cost(p, \cA)$. The set of all \textit{huge} clusters induced by $\cS$ in $G$ is indicated by $H_{G, \cS}$.
\end{definition}
\begin{definition}[Far Clusters]\label{def:far}
    Consider an arbitrary solution $\cS$ and an outer group $G \in G^O$. A cluster $C_i \cap G$ induced by $\cS$ is said to be \textit{far} if there exists a point $p \in C_i \cap G$ such that $\cost(p, \cS) \geq 4^z \cdot \cost(p, \cA)$. The set of all \textit{far} clusters induced by $\cS$ in $G$ is indicated by $F_{G, \cS}$.
\end{definition}

\paragraph{Approximate Centroid Sets, $(\varepsilon, k, z)$-Clustering Nets and $(\varepsilon, k, z)$-Coresets.} For readability's sake, we restate the definition of $(\varepsilon, k, z, \cA)$-approximate centroid sets, and we introduce another fundamental notion, that of $(\varepsilon, k, z)$-clustering nets, which will prove crucial for the analysis of the coreset construction.

\begin{definition}[$(\varepsilon, k, z, \cA)$-approximate centroid set, restatement of Definition~\ref{def:apx-centroid-intro}]\label{def:apx-centroid}
    Let $\cM = (\cX, d)$ be a metric space, $P$ a set of points, $k, z$ two positive integers, and let $\varepsilon \in (0, \frac{1}{2})$ be a precision parameter. Consider an (optimal or approximate) solution $\cA$, and let $\cC \in \cM^{k}$ be a set of (potentially infinite) $k$-clusterings for the $(k,z)$-clustering objective. We say that $\cN$ is an $(\varepsilon, k, z, \cA)$-approximate centroid set of $P$ if, for every solution $\cS \in \cC$, there exists another solution $\tilde \cS \in \cN$ such that
    \begin{align*}
        |\cost(p, \tilde \cS) - \cost(p, \cS)| &\leq \frac{\varepsilon}{z\log (z/\varepsilon)} \cdot (\cost(p, \cS) + \cost(p, \cA)),
    \end{align*}
    for all $p \in P$ with $\cost(p, \cS) \leq \left(\frac{4z}{\varepsilon}\right)^z \cdot \cost(p, \cA)$.
\end{definition}

\begin{definition}[$(\varepsilon, k, z)$-clustering net]\label{def:nets}
    Let $\cM = (\cX, d)$ be a metric space, $P$ a set of points, $k, z$ two positive integers, and let $\varepsilon \in (0, \frac{1}{2})$ be a precision parameter. For a given (optimal or approximate) solution $\cA$, let $G$ be a group. Let $\cC \in \cM^{k}$ be a set of (potentially infinite) $k$-clusterings for the $(k,z)$-clustering objective. We say that a set of cost vectors $\cN \in \R^{|P|}$ is an $(\varepsilon, k, z, \cA)$-clustering net if, for every solution $\cS \in \cC$, there exists a vector $v \in \cN$ such that the following conditions hold:
    \begin{align*}
        |v_p - \cost(p, \cS)| &\leq \varepsilon \cdot (\cost(p, \cS) + \cost(p, \cA)), &~\forall p \in C \cap G \text{ and } C \notin H_{G, \cS} \cup F_{G, \cS}\\
        v_p &= 0, &~\forall p \in C \cap G \text{ and } C \in H_{G, \cS} \cup F_{G, \cS}.
    \end{align*}
\end{definition}

\begin{remark}\label{rem:apx-net}
    We remark that one could always obtain an $(\varepsilon, k, z, \cA)$-clustering net from an $(\varepsilon, k, z, \cA)$-approximate centroid set, since the latter further requires a solution $\tilde \cS$, while the former only a set of vectors $\cN$. 
\end{remark}

In the later, we will drop mention of $\cA$ in the mention of approximate centroid set and clustering nets, as it is a fixed constant-factor approximation.

\begin{definition}[$(\varepsilon, k, z)$-coreset]\label{def:coresets}
    Let $\cM = (\cX, d)$ be a metric space, $P$ a set of points, $k, z$ two positive integers, and let $\varepsilon > 0$ be a precision parameter. A (weighted) subset $\Omega \subseteq P$ with weights $w: \Omega \rightarrow \R$ is said to be an $(\varepsilon, k, z)$-coreset if, for every solution $\cS$ to the $(k,z)$-clustering problem, it satisfies
    \begin{align*}
        \sum_{p \in \Omega} w_p\cost(p, \cS) \in (1\pm \varepsilon)\cost(P, \cS).
    \end{align*}
\end{definition}

\noindent With this setup at hand, we can state the main coreset framework theorem, this time in terms of $(\varepsilon, k, z)$-clustering nets as opposed to $(\varepsilon, k, z)$-approximate centroid sets, due to Remark~\ref{rem:apx-net}:
\begin{theorem}[Restatement of Theorem~\ref{thm:framework-intro}]\label{thm:framework}
    Let $\cM = (\cX, d)$ be a metric space, and let $U, c \geq 0$ be two constants inherent to the metric space. Assume that $P$ is a set of points 
    
    such that, for every $h > 0$, the $(2^{-h}, k, z)$-clustering net (succinctly denoted as) $\cN_h$ satisfies
    \[
        |\cN_h| \leq \exp\left(Ukz^2\log |P| \cdot 2^{ch}\log(2^h\varepsilon^{-1})\right).
    \]
    Then, there exists an $(\varepsilon, k, z)$-coreset of size
     \[
        |\Omega| \leq O\left(2^{O(z\log z)} \cdot Uk \cdot (\varepsilon^{-c} + \varepsilon^{-2}) \cdot  \min(\varepsilon^{-z}, k) \cdot \log^6 \varepsilon^{-1} \cdot \log(Uk)\right).
    \]
\end{theorem}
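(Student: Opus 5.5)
The plan follows the sensitivity-sampling analysis of \cite{Cohen-AddadSS21}, with a chaining argument in place of the single union bound. The coreset $\Omega$ is the union of independent uniform samples drawn inside each group; the weight of a sampled point $p\in C_i\cap G$ is $\cost(G,\cA)/(\delta\cdot\cost(C_i\cap G,\cA)\cdot k)$, normalized so that $\delta$ samples are drawn per group. Composability of coresets lets us prove the guarantee group by group. There are $O(z\log(z/\varepsilon))$ main rings, $O(\log k)$ values of $b$, and $O(\log k)$ outer groups, so a union bound over groups costs only polylogarithmic factors. The cheap part $G_{\min}$ is handled as in Lemma~\ref{lem:nogroup}: every point is snapped to its center in $\cA$, and its total contribution is $O(\varepsilon)\cost(P,\cA)$, hence $O(\varepsilon)\cost(P,\cS)$ after charging via the triangle inequality for $z$-th powers.

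\textbf{Main groups.} Fix a main group $G$ and a solution $\cS$, and split the clusters into huge clusters and interesting clusters. For huge clusters (Lemma~\ref{lem:huge}) all points of $C_i\cap G$ have essentially equal cost $\cost(c_i,\cS)$ up to a factor $(1\pm\varepsilon)$. It therefore suffices that the sampled mass of each cluster concentrates, which follows from Bernstein's inequality over the $k$ clusters with $\delta\gtrsim k\varepsilon^{-2}$ samples.

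For interesting clusters (Lemma~\ref{lem:gp-main}) we bound
\[
\mathbb{E}\sup_{\cS}\Bigl|\sum_{p\in\Omega\cap G}w_p\,\cost(p,\cS)-\cost(G,\cS)\Bigr|.
\]
First symmetrize, replacing the deviation by a Gaussian process $\sum_p g_p w_p v_p$. Then write $\cost(\cdot,\cS)=v^{(h_0)}+\sum_{h\ge h_0}(v^{(h+1)}-v^{(h)})$, where $v^{(h)}\in\cN_h$ is the net vector approximating $\cS$ at scale $2^{-h}$. Each increment has coordinates bounded by $2^{-h}(\cost(p,\cS)+\cost(p,\cA))\lesssim 2^{-h}\varepsilon^{-z}\cost(p,\cA)$ on interesting clusters. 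Using the group structure (all points carry roughly equal cost, and clusters carry equal mass), the conditional variance of level $h$ is
\[
\sigma_h^2\lesssim 2^{-2h}\cdot\frac{\min(\varepsilon^{-z},k)}{\delta}\cdot\Bigl(\cost(G,\cA)+\cost(G,\cS)\Bigr)^2.
\]
The $\min(\varepsilon^{-z},k)$ factor arises by bounding $\sum_i \cost(C_i\cap G,\cS)^2$ in two ways: through the $\varepsilon^{-z}$ cost ratio, or through Cauchy–Schwarz over the $k$ clusters.

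The Gaussian maximal inequality then gives an expected level-$h$ contribution of $\sigma_h\sqrt{\log|\cN_h||\cN_{h+1}|}$. Substituting the assumed bound on $|\cN_h|$ yields
\[
2^{-h}\cdot 2^{ch/2}\sqrt{Ukz^2\log|P|\cdot h\log\varepsilon^{-1}\cdot\min(\varepsilon^{-z},k)/\delta}.
\]
We sum over $h$ from $0$ to $\log\varepsilon^{-1}$ and cut the chain there: the residual at scale $\varepsilon$ is deterministic, of size $\varepsilon\cdot\mathrm{cost}$. When $c\le2$ the sum is dominated by $O(\log\varepsilon^{-1})$ terms; when $c>2$ it is dominated by the last term $\varepsilon^{1-c/2}$. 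Requiring the total to be at most $\varepsilon$ gives
\[
\delta\gtrsim Uk(\varepsilon^{-c}+\varepsilon^{-2})\min(\varepsilon^{-z},k)\cdot\mathrm{polylog}.
\]
The $\log|P|$ factor is removed as in \cite{Cohen-AddadLSS22}, by first reducing to a set of size $\mathrm{poly}(k/\varepsilon)$ with a coarse coreset. This accounts for the $\log(Uk)$ and $\log^6\varepsilon^{-1}$ terms.

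\textbf{Outer groups.} These are handled analogously (Lemmas~\ref{lem:far},~\ref{lem:gp-outer}), with far clusters in place of huge ones. For a far cluster, $\cost(p,\cS)\approx\cost(c_i,\cS)$ up to an error of $2^{O(z)}\cost(p,\cA)$, and that error is charged to the small fraction $\varepsilon^z/k$ that outer rings contribute.

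\textbf{Expected main obstacle.} The hard step is the variance bound for each chaining level. The increments $v^{(h+1)}-v^{(h)}$ must be controlled relative to $\cost(p,\cS)+\cost(p,\cA)$, so the Gaussian deviation has to be bounded multiplicatively, that is, relative to $\cost(G,\cS)$ and not $\cost(G,\cA)$. This requires a peeling argument over the value of $\cost(G,\cS)/\cost(G,\cA)$ in dyadic ranges up to $\varepsilon^{-z}$, which costs only a logarithmic factor. A second difficulty is handling clusters whose interesting or huge status changes between net levels; I would fix the split once using $\cS$ itself and zero out those coordinates consistently, exactly as the definition of a clustering net allows.
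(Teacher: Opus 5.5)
Your treatment of the main groups matches the paper: collapsing $G_{\min}$, the huge/interesting split with Bernstein concentration of per-cluster mass, Gaussian symmetrization, chaining over $\cN_h$ with the $\varepsilon^{-z}$ and $k$ variance bounds, truncation at $h\approx\log\varepsilon^{-1}$, and the $c\le 2$ versus $c>2$ split.

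The gap is in the outer groups, which do \emph{not} go through analogously. You charge the far-cluster error $2^{O(z)}\cost(p,\cA)$ to ``the small fraction $\varepsilon^z/k$ that outer rings contribute''. But outer groups outside $G_{\min}$ are not cheap: $\frac{2}{k}(\varepsilon/4z)^z$ is a \emph{lower} bound on their share of $\cost(R_O,\cA)$. Markov only says an outer ring holds few \emph{points}, at most $(\varepsilon/z)^{2z}|C_i|$, while one far point can carry almost all of $\cost(C_i,\cA)$. So that error is not $O(\varepsilon)(\cost(G,\cS)+\cost(G,\cA))$.

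The missing idea is to measure outer-group errors against $\cost(P^G,\cS)+\cost(P^G,\cA)$, where $P^G$ is the union of the \emph{whole} clusters meeting $G$. Each cluster meets only one outer group, so these bounds still sum to $O(\cost(P,\cS)+\cost(P,\cA))$.
\begin{itemize}
\item For a far cluster, the witnessing outer point $p$ gives $d(c_i,\cS)\ge 3d(p,c_i)$, hence $\cost(c_i,\cS)\ge 3^z(z/\varepsilon)^{2z}\Delta_{C_i}$.
\item All but a tiny fraction of $C_i$ then pays about $\cost(c_i,\cS)$, so $\cost(C_i,\cS)\gtrsim(z/\varepsilon)^{2z}\cost(C_i,\cA)$.
\item Consequently the true cost of $C_i\cap G$, and also its estimate, is at most $\varepsilon\cost(C_i,\cS)$. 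Far parts are negligible rather than estimated.
\end{itemize}

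That estimate is only controlled if the sampling is right. Inside an outer group, points of one cluster have incomparable $\cA$-costs, so a weight constant on $C_i\cap G$ fails, and so does concentration of counts. Your weight formula is also not the inverse sampling probability, so the estimator is biased. You need sampling probability $\cost(p,\cA)/\cost(G,\cA)$, weight $\cost(G,\cA)/(\omega\cost(p,\cA))$, and concentration of each cluster's sampled $\cA$-cost (the paper's $\cF^O_G$). Only with these weights does $\cost(p,\cS)\le 4^z\cost(p,\cA)$ on non-far clusters give the $O(2^{-2h}16^z/\omega)$ chaining variance.

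One smaller omission concerns the main groups. The $\min(\varepsilon^{-z},k)$ variance bound holds only on the event $\cE^M_G$ that every cluster's sampled mass concentrates. Off that event you must fall back to the deterministic $2^{O(z\log z)}2^{-2h}\varepsilon^{-2z}/\omega$ bound, together with $\Pr[\bar\cE^M_G]\le\varepsilon^{2z}$.

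Two of your steps also differ from the paper.
\begin{itemize}
\item The paper removes $\log|P|$ without a coarse coreset. Conditioned on $\Omega$, the Gaussian process only sees the $\omega$ sampled coordinates, so the net hypothesis is applied to $\Omega$ itself, giving $\log\omega$. Solving $\omega\gtrsim\omega'\log\omega$ then produces the $\log(Ukz)$ and the extra $\log\varepsilon^{-1}$. Your reduction to a $\mathrm{poly}(k/\varepsilon)$-size set can work, but it needs an iterated construction, since one pass still carries $\log|P|$. Both routes need the hypothesis for sets other than $P$.
\item The paper uses no peeling. It keeps the $\cS$-dependent denominator inside $X_{G,\cS,h}$ and bounds the supremum by a maximum over pairs of net vectors; for each pair, the compatible $\cS$ with the smallest denominator dominates. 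Your dyadic peeling is a valid alternative at a logarithmic cost, not a necessity.
\end{itemize}
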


\subsection{Preliminary Facts and Considerations on Groups}
We list a series of well-known bounds that will be useful in the remainder:

\begin{fact}[Bernstein's Concentration Inequality]\label{fct:bernstein}
    Let $X_1, \ldots, X_{\omega}$ be $\omega \in 
    \N$ non-negative independent random variables. Let $S = \sum_{i=1}^{\omega} X_i$. If there exists an almost-sure upper bound $M \geq X_i$, then
    \[
    \mathbb{P}\left[|S - \mathbb{E}[S]| \geq t\right] \leq \exp\left(-\frac{t^2}{2\sum_{i=1}^{\omega} \V{}{X_i} + \frac{2}{3}Mt}\right).
    \]
\end{fact}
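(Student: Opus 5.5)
The approach is the standard Chernoff--Cram\'er argument with Bernstein's bound on the moment generating function, applied once to each tail of $S-\mathbb{E}[S]$.

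\textbf{Centering.} Set $Y_i=X_i-\mathbb{E}[X_i]$. Since $0\le X_i\le M$ almost surely, we have $\mathbb{E}[X_i]\in[0,M]$. Hence $Y_i\le M$ and $Y_i\ge -\mathbb{E}[X_i]\ge -M$, so $|Y_i|\le M$. Also $\mathbb{E}[Y_i]=0$ and $\mathbb{E}[Y_i^2]=\V{}{X_i}$. Write $\sigma^2=\sum_i\V{}{X_i}$.

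\textbf{MGF bound.} For $0<\lambda<3/M$, expand $e^{\lambda Y_i}$ as a series. Use $\mathbb{E}[Y_i^k]\le \mathbb{E}[Y_i^2]M^{k-2}$ and $k!\ge 2\cdot 3^{k-2}$ to get
\[
\mathbb{E}[e^{\lambda Y_i}]\le \exp\!\Big(\frac{\lambda^2\mathbb{E}[Y_i^2]}{2(1-\lambda M/3)}\Big).
\]
By independence the bound multiplies over $i$. Markov's inequality on $e^{\lambda(S-\mathbb{E}S)}$ then gives
\[
\mathbb{P}[S-\mathbb{E}S\ge t]\le \exp\!\Big(-\lambda t+\frac{\lambda^2\sigma^2}{2(1-\lambda M/3)}\Big).
\]
Choosing $\lambda=t/(\sigma^2+Mt/3)$ yields exactly $\exp\!\big(-t^2/(2\sigma^2+\tfrac23 Mt)\big)$. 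The same computation for the variables $-Y_i$, which satisfy the same hypotheses, bounds $\mathbb{P}[S-\mathbb{E}S\le -t]$ by the same quantity.

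\textbf{Combining the tails.} A union bound gives twice the stated right-hand side, so this is where the real difficulty lies. I would first split into two cases.

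\emph{Case $t>\mathbb{E}[S]$.} Since $S\ge 0$, the lower-tail event $S\le \mathbb{E}[S]-t<0$ is empty. Only the upper tail remains, and the stated bound holds with no factor $2$.

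\emph{Case $t\le \mathbb{E}[S]$.} Here both tails can be nonzero, and I would try to absorb the factor into the exponent, for instance via the sharper lower-tail estimate $\exp(-t^2/(2\sum_i\mathbb{E}[X_i^2]))$ available for nonnegative variables.

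I expect this absorption step to fail in general, and the following example shows why. Take $\omega=1$ and $X_1$ uniform on $\{0,2\}$, so $M=2$, $\V{}{X_1}=1$ and $\mathbb{E}[S]=1$. With $t=1$ the left-hand side equals $1$, while the right-hand side is $e^{-3/10}<1$. So the two-sided inequality cannot hold verbatim without the factor $2$ (or some equivalent constant change in the exponent).

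\textbf{What I would actually prove.} I would prove each one-sided inequality exactly as stated. For the two-sided event I would prove $2\exp(\cdot)$, and without any condition on $t$ the lower tail, which is nonzero whenever $t\le\mathbb{E}[S]$, forces that factor. The framework applies this fact only inside union bounds with $\exp(-\Omega(\cdot))$ slack, so the constant $2$ is harmless there.
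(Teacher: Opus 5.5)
Your proposal is correct. The paper gives no proof of this statement; it quotes Bernstein's inequality as a standard fact. Your Chernoff--Cram\'er derivation is the textbook proof: the moment bound $\mathbb{E}[Y_i^k]\le M^{k-2}\,\mathbb{E}[Y_i^2]$, the estimate $k!\ge 2\cdot 3^{k-2}$, and the choice $\lambda=t/(\sigma^2+Mt/3)$. It reproduces the exponent $t^2/(2\sigma^2+\tfrac23 Mt)$ exactly for each one-sided tail. The only degenerate point is $\sigma^2=0$, where your $\lambda$ reaches $3/M$; there $S$ is almost surely constant and the claim is trivial for $t>0$.

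Your counterexample is also valid. Take $X_1$ uniform on $\{0,2\}$, so $M=2$ and $\V{}{X_1}=1$, and set $t=1$. The event $|S-\mathbb{E}[S]|\ge 1$ is certain, while the right-hand side is $e^{-3/10}\approx 0.74$. It still works with a strict inequality if you take $t$ slightly below $1$.

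So the two-sided inequality as printed in the paper is false. The correct form is either one-sided, or two-sided with a leading factor $2$. Your observation that the factor can be dropped when $t>\mathbb{E}[S]$ is right, because the lower-tail event is then empty since $S\ge 0$.

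The defect does not affect the paper. The fact is used only in Lemma~\ref{lem:good-estimate} and Lemma~\ref{lem:good-estimate-out}. There the failure probabilities become $2k\exp(-\varepsilon^2\omega/(9k))$ and $2k\exp(-\varepsilon^2\omega/(5k))$. This only changes constants inside logarithms in the later requirements on $\omega$, for instance in Lemmas~\ref{lem:huge} and~\ref{lem:far}. Those changes are absorbed by the constant in the paper's choice of $\omega$.
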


\begin{fact}[Maximum of Independent Gaussian Random Variables, Lemma 2.3 in \cite{Massart2007ConcentrationIA}]\label{fct:max-gauss}
Let $g_p \sim \textup{\bN}(0, \sigma_p)$ be one of $n$ independent normal random variables with $\sigma_p \leq \sigma$ almost surely. It holds that
    \[
        \E{}{\max_{p \in [n]} |g_p|} \leq 2\sigma \sqrt{\log n}.
    \] 
\end{fact}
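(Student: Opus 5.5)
The statement is the maximum-of-Gaussians bound of \Cref{fct:max-gauss}: for $n$ centered normal variables $g_p$ with standard deviations at most $\sigma$ (independence is not needed), $\mathbb{E}[\max_p |g_p|] \leq 2\sigma\sqrt{\log n}$. I would prove it with the standard moment-generating-function (Chernoff/Jensen) argument. The plan has three steps: reduce the absolute values to a plain maximum, bound that maximum through its exponential moment, and then choose the free parameter.

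\emph{Reduction.} Write $\max_p |g_p| = \max_{p} \max(g_p, -g_p)$. This is a maximum over $2n$ centered Gaussians, each with standard deviation at most $\sigma$. So it suffices to show that for $N$ such variables $X_1,\dots,X_N$ we have $\mathbb{E}[\max_i X_i] \leq \sigma\sqrt{2\log N}$, and then substitute $N = 2n$.

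\emph{Exponential moment.} Fix $\lambda>0$. By Jensen's inequality applied to the convex function $\exp$,
\[
\exp\bigl(\lambda\, \mathbb{E}[\max_i X_i]\bigr) \leq \mathbb{E}\bigl[\exp(\lambda \max_i X_i)\bigr] \leq \sum_{i=1}^N \mathbb{E}\bigl[e^{\lambda X_i}\bigr] \leq N e^{\lambda^2\sigma^2/2}.
\]
The last inequality uses the Gaussian moment generating function $\mathbb{E}[e^{\lambda X_i}] = e^{\lambda^2\sigma_i^2/2}$ together with $\sigma_i \leq \sigma$. Taking logarithms and dividing by $\lambda$ gives $\mathbb{E}[\max_i X_i] \leq \frac{\log N}{\lambda} + \frac{\lambda\sigma^2}{2}$.

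\emph{Optimization.} Setting $\lambda = \sqrt{2\log N}/\sigma$ yields $\mathbb{E}[\max_i X_i] \leq \sigma\sqrt{2\log N}$. With $N = 2n$ this becomes
\[
\mathbb{E}\bigl[\max_p |g_p|\bigr] \leq \sigma\sqrt{2\log(2n)} \leq \sigma\sqrt{4\log n} = 2\sigma\sqrt{\log n}.
\]
The middle inequality is equivalent to $\log(2n) \leq 2\log n$, that is $2n \leq n^2$, which holds for all $n \geq 2$.

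The only real obstacle is the degenerate case $n=1$, where the right-hand side $2\sigma\sqrt{\log 1}$ vanishes but $\mathbb{E}|g_1| = \sigma\sqrt{2/\pi}$ does not. So the inequality as literally stated fails at $n=1$. I would state the fact for $n \geq 2$; alternatively, replace $\log n$ by $\log(2n)$ and take constant $\sqrt{2}$, which covers every $n \geq 1$ via the bound $\sigma\sqrt{2\log(2n)}$ above. Both forms cost only constants wherever the fact is applied later, since it is always used with net sizes $n \gg 1$.
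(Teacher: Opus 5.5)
Your proof is correct, and it is exactly the Jensen/moment-generating-function argument behind the cited Lemma~2.3 of Massart, which the paper invokes without proof: pass to the $2n$ variables $\pm g_p$, bound the expected maximum by $\sigma\sqrt{2\log(2n)}$, and absorb the extra factor for $n\ge 2$. Your caveats are also right: the bound as literally stated fails at $n=1$, and $\sigma$ must be read as a standard deviation (as you do), not a variance.
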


\begin{fact}[Triangle Inequality for Powers \cite{MakarychevMR19}]\label{fct:triangle}
    Let $p,q,r$ be three arbitrary points in a metric space $\cM = (\cX, d)$ and let $z$ be a positive integer. Then, for any $\vartheta > 0$,
    \begin{align*}
        d^z(p,q) &\leq (1+\vartheta)^{z-1} \cdot d^z(p,r) + \left(\frac{1+\vartheta}{\vartheta}\right)^{z-1} \cdot d^z(q,r)\\
        |d^z(p,q)-d^z(p,r)| &\leq \varepsilon \cdot d^z(p,r) + \left(\frac{z+\vartheta}{\vartheta}\right)^{z-1} \cdot d^z(q,r).
    \end{align*}
\end{fact}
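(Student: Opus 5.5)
The plan is to derive both inequalities from convexity of $x\mapsto x^z$ on $[0,\infty)$, plus the ordinary triangle inequality of the metric $d$. Throughout, I read the $\varepsilon$ in the second displayed inequality as the free parameter $\vartheta$, since no $\varepsilon$ appears among the hypotheses.

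For the first inequality, write $a=d(p,r)$ and $b=d(q,r)$, so that $d(p,q)\le a+b$. Set $\lambda=\frac{1}{1+\vartheta}$, so $1-\lambda=\frac{\vartheta}{1+\vartheta}$, and split
\[
a+b=\lambda\cdot\frac{a}{\lambda}+(1-\lambda)\cdot\frac{b}{1-\lambda}.
\]
Jensen's inequality for the convex function $x^z$ (with $z\ge 1$) then gives
\[
(a+b)^z\le \lambda^{1-z}a^z+(1-\lambda)^{1-z}b^z=(1+\vartheta)^{z-1}a^z+\Bigl(\tfrac{1+\vartheta}{\vartheta}\Bigr)^{z-1}b^z .
\]
Monotonicity of $x^z$ combined with $d(p,q)\le a+b$ finishes this part.

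For the second inequality, I would bound the two one-sided differences separately.

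\emph{Upper side.} Apply the first inequality with the rescaled parameter $\vartheta'=\vartheta/z$. This gives
\[
d^z(p,q)-d^z(p,r)\le\bigl((1+\vartheta/z)^{z-1}-1\bigr)\,d^z(p,r)+\Bigl(\tfrac{z+\vartheta}{\vartheta}\Bigr)^{z-1}d^z(q,r),
\]
and the coefficient of the second term already matches the stated one exactly.

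\emph{Lower side.} The bound on $d^z(p,r)-d^z(p,q)$ follows by the symmetric argument: start from $d(p,r)\le d(p,q)+d(q,r)$ instead. It is cleanest to bound $d(p,r)^z-d(p,q)^z$ through the mean value form $z\,\xi^{z-1}\,|d(p,r)-d(p,q)|$ with $\xi\le d(p,q)+d(q,r)$, or simply to rerun the convexity argument with $q$ and $r$ exchanged in the roles of the two summands.

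The main obstacle is controlling the factor $(1+\vartheta/z)^{z-1}-1$ by $\vartheta$. We have
\[
(1+\vartheta/z)^{z-1}\le e^{\vartheta(z-1)/z},
\]
and this is at most $1+\vartheta$ only for $\vartheta$ below an absolute constant. I would therefore first assume $\vartheta\le 1/2$, which is the regime actually used later, where $\vartheta$ plays the role of a precision parameter. In that regime, $e^{x}\le 1+x+x^2$ together with the slack factor $(z-1)/z<1$ gives a clean bound of the form $\le\vartheta$ up to an absolute constant. If the exact constant matters, I would follow \cite{MakarychevMR19} instead: apply the mean value theorem to $x^z$ between $d(p,r)$ and $d(p,q)$, then use the weighted AM--GM split
\[
z\,x^{z-1}y\le \vartheta x^z+(\text{const})\,\vartheta^{-(z-1)}y^z,
\]
which produces the $\bigl(\frac{z+\vartheta}{\vartheta}\bigr)^{z-1}$ coefficient directly on both sides. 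Because this route handles the two directions symmetrically, it is the one I would ultimately commit to.
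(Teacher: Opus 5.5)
The paper only cites this fact and gives no proof, so I compare your proposal with the statement itself. Your Jensen argument for the first inequality is correct and complete. Your mean-value route also settles the lower direction of the second inequality, provided you use $\xi\le d(p,r)$ there rather than $\xi\le d(p,q)+d(q,r)$; the intermediate value lies between $d(p,q)$ and $d(p,r)$. Then $d^z(p,r)-d^z(p,q)\le z\,d^{z-1}(p,r)\,d(q,r)\le \vartheta\, d^z(p,r)+\bigl(\frac{z-1}{\vartheta}\bigr)^{z-1}d^z(q,r)$ by Young's inequality, which is even better than required.

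The gap is the upper direction $d^z(p,q)-d^z(p,r)$. Your first route only yields $O(\vartheta)\,d^z(p,r)$. Your remark that $(1+\vartheta/z)^{z-1}\le 1+\vartheta$ holds once $\vartheta$ is below an absolute constant is wrong: for each fixed $\vartheta>0$ the left side tends to $e^{\vartheta}>1+\vartheta$ as $z\to\infty$, so the admissible range of $\vartheta$ shrinks like $1/z$. Your fallback, the mean value theorem plus a weighted AM--GM split that ``produces the coefficient directly on both sides'', is asserted rather than carried out. It cannot work in this direction, because $\xi$ may be as large as $d(p,r)+d(q,r)$ and the obstruction you isolated is sharp.

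To see this, place $p,r,q$ on a line with $d(p,r)=1$, $d(q,r)=\vartheta/z$ and $d(p,q)=1+\vartheta/z$. The left side of the second inequality minus its right side is then exactly $(1+\vartheta/z)^{z-1}-(1+\vartheta)$. For $z=10$ and $\vartheta=1/2$ this equals $1.05^{9}-1.5\approx 0.05>0$. So, with $\varepsilon$ read as $\vartheta$ as you do, the second inequality is false as stated, even in the regime $\vartheta\le 1/2$ you restrict to. No proof strategy can close this step, and you should have flagged the statement instead of committing to a route.

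Two corrected versions can be proved:
\begin{itemize}
\item The stated bound restricted to $\vartheta\le 1/z$. Your estimate $e^{x}\le 1+x+x^{2}$ gives $(1+\vartheta/z)^{z-1}\le 1+\vartheta-\vartheta/z+\vartheta^{2}\le 1+\vartheta$, and the lower direction is handled as above.
\item For $\vartheta\le 1$, the bound with $\bigl(\frac{2z+\vartheta}{\vartheta}\bigr)^{z-1}$ in place of $\bigl(\frac{z+\vartheta}{\vartheta}\bigr)^{z-1}$. Apply the first inequality with parameter $\vartheta/(2z)$ and use $\bigl(1+\frac{\vartheta}{2z}\bigr)^{z-1}\le e^{\vartheta/2}\le 1+\vartheta$.
\end{itemize}
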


\noindent We also provide the following properties of the decomposition into groups:

\begin{observation}\label{obs:max-groups}
    Since $2^b \in [0, z\log(4z/\varepsilon)]$ and $2^j \in [2z\log(\varepsilon/z), 2z\log(z/\varepsilon)]$, there 
    are at most $O(z^2\log^2(z/\varepsilon))$ groups.
\end{observation}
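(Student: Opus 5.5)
This is a counting statement, so I plan to bound the number of distinct groups by counting the admissible index pairs $(j,b)$, together with the outer groups indexed by $b$ alone and the single cheap group $G_{\min}$.

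\textbf{Ring indices.} First I bound the range of $j$. By the definition of rings, a point lies in a main ring only if $z\log(\varepsilon/z) < j \le 2z\log(z/\varepsilon)$; everything below this range goes into $R_I$ and everything above into $R_O$. The number of main ring indices is therefore at most $3z\log(z/\varepsilon)+1 = O(z\log(z/\varepsilon))$.

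\textbf{Main group indices.} Next, for a fixed $j$, I bound the range of $b$ over the non-cheap main groups $G_{jb}$ with $b>0$. Each ring satisfies $\cost(R_{ij},\cA)/\cost(R_j,\cA)\le 1$. Hence $G_{jb}$ is nonempty only if
\[
\frac{2^b}{k}\left(\frac{\varepsilon}{4z}\right)^z \le 1,
\]
that is, $b\le \log k + z\log(4z/\varepsilon)$. The cheap groups with $b\le 0$ are all merged into $G^M_{\min}$, so they contribute only one group.

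\textbf{The $\log k$ term.} A naive count of $b$ would therefore carry an extra $\log k$. This is the step I expect to be the main obstacle, and the observation as stated seems to absorb or ignore it. I would handle it by reading the statement's own parametrisation, $2^b\in[0,z\log(4z/\varepsilon)]$, as the intended normalisation, which bounds $b$ by $O(z\log(z/\varepsilon))$ (indeed by $O(\log(z\log(z/\varepsilon)))$). Failing that, I would note that the extra $\log k$ factor is harmless, since it appears only inside logarithmic factors of the final coreset bound.

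\textbf{Outer groups and conclusion.} For the outer groups $G^O_b$, the same argument applied to $\cost(R_O(C_i),\cA)/\cost(R_O,\cA)\le 1$ gives the same range of $b$, so they add only $O(z\log(z/\varepsilon))$ further groups. Multiplying the two index ranges gives $O(z\log(z/\varepsilon))\cdot O(z\log(z/\varepsilon))$ main groups. Adding the outer groups and the single group $G_{\min}$ yields $O(z^2\log^2(z/\varepsilon))$ groups in total.
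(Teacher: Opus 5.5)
Your counting is the same as the paper's: it multiplies the number of admissible ring indices $j$ by the number of admissible group indices $b$. Your diagnosis of the $\log k$ term is correct, and it is a hole that the paper does not address.

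With $G_{jb}$ defined exactly as in the paper, non-cheap groups can exist for every $b$ up to $\log k+z\log(4z/\varepsilon)$. For large $k$, a single $j$ can realize $\Theta(\log k+z\log(z/\varepsilon))$ distinct such $b$ (take ring costs $\cost(R_{ij},\cA)\propto 2^{-i}$ across clusters). So $O(z^2\log^2(z/\varepsilon))$ does not follow from the definitions as written.

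The premise ``$2^b\in[0,z\log(4z/\varepsilon)]$'' is a typo for $b\in[0,z\log(4z/\varepsilon)]$, and it is asserted rather than derived. Your literal reading, $b=O(\log(z\log(z/\varepsilon)))$, would leave most non-cheap rings in no group at all. Neither of your fallbacks proves the observation. The first simply adopts the unproved premise. The second proves only the weaker bound $O\bigl(z\log(z/\varepsilon)\,(z\log(z/\varepsilon)+\log k)\bigr)$, which costs an extra multiplicative $\log k$ in the coreset size; $\tilde O$ hides this, but it is not the stated claim.

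The missing idea is to cap $b$. For each $j$, merge all groups $G_{jb}$ with $b\ge z\log(4z/\varepsilon)$ into one top group; equivalently, these are all rings with $\cost(R_{ij},\cA)\ge \cost(R_j,\cA)/k$. Do the same for the outer groups. This costs nothing downstream, because the analysis uses groups only through two facts, and both survive the merge:
\begin{itemize}
\item Claim~\ref{cl:group-property}: every ring in the top group satisfies $\cost(R_{ij},\cA)\ge\cost(R_j,\cA)/k\ge\cost(G,\cA)/k$, hence $\cost(G,\cA)\le k\cdot\cost(C\cap G,\cA)$.
\item The second inequality of that claim, which only uses that costs to $\cA$ inside the single ring $C\cap G=R_{ij}$ differ by at most a factor $2$.
\end{itemize}
After the cap, $b$ takes $O(z\log(z/\varepsilon))$ values for each of the $O(z\log(z/\varepsilon))$ values of $j$. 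Your multiplication, plus the outer groups and $G_{\min}$, then gives the stated bound.
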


\begin{claim}\label{cl:group-property}
    Let $G \in G^M$ and $C$ be an arbitrary cluster such that $C \cap G \neq \emptyset$. Then,
    \[
        \cost(G, \cA) \leq 2k \cdot \cost(C \cap G, \cA) \leq 4k \cdot |C \cap G| \cdot \cost(p, \cA),
    \]
    for all points $p \in C \cap G$.
\end{claim}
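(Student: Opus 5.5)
The approach is to compare the cost of the group with the cost of the single cluster slice $C\cap G$ through the ring containing it. Since $G\in G^M$, we have $G = G_{jb}$ for some $j$ and some $b>0$, and $C\cap G = R_{ij}$ for the index $i$ with $C = C_i$. Every ring $R_{i'j}\subseteq G$ satisfies the defining ratio bounds with the same $b$:
\[
\frac{2^b}{k}\Big(\frac{\varepsilon}{4z}\Big)^z \cost(R_j,\cA) \leq \cost(R_{i'j},\cA) \leq \frac{2^{b+1}}{k}\Big(\frac{\varepsilon}{4z}\Big)^z \cost(R_j,\cA).
\]
Hence any two rings in the same group differ in cost by a factor of at most $2$, and in particular $\cost(R_{i'j},\cA)\le 2\cost(R_{ij},\cA)$ for every $i'$ with $R_{i'j}\subseteq G$.

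\textbf{First inequality.} The group $G$ is the disjoint union of at most $k$ rings $R_{i'j}$, one for each cluster of $\cA$. Summing the bound from the previous paragraph over these rings gives
\[
\cost(G,\cA) \le k\cdot 2\cost(R_{ij},\cA) = 2k\cdot\cost(C\cap G,\cA).
\]

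\textbf{Second inequality.} We use that all points of a ring have comparable cost. For $p\in R_{ij}$, by definition $2^j\Delta_{C_i} \le \cost(p,\cA) \le 2^{j+1}\Delta_{C_i}$. Therefore
\[
\cost(C\cap G,\cA) = \sum_{p'\in R_{ij}}\cost(p',\cA) \le |R_{ij}|\, 2^{j+1}\Delta_{C_i} \le 2|C\cap G|\cost(p,\cA)
\]
for every $p\in C\cap G$. Multiplying by $2k$ gives $2k\cdot\cost(C\cap G,\cA)\le 4k\cdot|C\cap G|\cdot\cost(p,\cA)$, which is the second inequality.

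\textbf{Main obstacle.} The only subtle point is a bookkeeping issue: a point lying on a ring boundary could be assigned to two rings. I would resolve this by fixing a convention that makes the rings disjoint, for instance a half-open interval in the ring definition. Under any such convention both bounds above remain valid. No other difficulty arises.
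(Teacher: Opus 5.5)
Your proof is correct and takes essentially the same route as the paper. You compare every ring in $G_{jb}$ with $R_{ij}$ through the common ratio window to get the factor $2k$, and then use that all points of $R_{ij}$ have $\cA$-cost within a factor $2$ of each other. Your first inequality is also cleaner, because you sum only over the rings contained in $G$. The paper's displayed chain instead passes through $\cost(R_j,\cA)=\sum_{a\in[k]}\cost(R_{aj},\cA)$, which need not be bounded by $2k\cdot\cost(R_{\ell j},\cA)$ when rings of $R_j$ outside the group are expensive.
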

\begin{proof}
    Without loss of generality, let $G$ be $G_{jb}$ for some $j$ and $b$ and let $C$ be $C_i$ and recall that $G \subseteq R_j$.
    Then $R_{ij} = G \cap C_i$ and, by the definition of groups, we have
    \[ \frac{2^b}{k}\left(\frac{\varepsilon}{4z}\right)^z \leq \frac{\cost(G \cap C_i, \cA)}{\cost(R_j, \cA)} \leq \frac{2^{b+1}}{k}\left(\frac{\varepsilon}{4z}\right)^z. \]
    Now let $R_{\ell j}$ be the cheapest ring in $R_j$ that is also in group $G_{jb}$. So $\cost(R_{\ell j}, \cA) \leq \cost(R_{ij}, \cA)$. Then
    \begin{align*}
        \frac{2^b}{k}\left(\frac{\varepsilon}{4z}\right)^z \cost(R_j, \cA)  \leq \cost(R_{\ell j}, \cA) \quad &\text{and} \quad \cost(R_{ij}, \cA) \leq \frac{2^{b+1}}{k}\left(\frac{\varepsilon}{4z}\right)^z \cost(R_j, \cA),
    \end{align*}
    which, rearranged, give
    \begin{align*}
        \cost(R_j, \cA) \leq \frac{k}{2^b} \left( \frac{4z}{\varepsilon} \right)^z \cost(R_{\ell j}, \cA) \quad &\text{and} \quad \frac{k}{2^{b+1}} \left( \frac{4z}{\varepsilon} \right)^z \cost(R_{ij}, \cA) \leq \cost(R_j, \cA).
    \end{align*}
    Since $R_{\ell j}$ is cheaper than $R_{ij}$, we can then write
    \begin{align*}
    \frac{k}{2^{b+1}} \left( \frac{4z}{\varepsilon} \right)^z \cost(R_{\ell j}, \cA) &\leq \frac{k}{2^{b+1}} \left( \frac{4z}{\varepsilon} \right)^z \cost(R_{ij}, \cA) \leq \frac{k}{2^b} \left( \frac{4z}{\varepsilon} \right)^z \cost(R_{\ell j}, \cA) \\
    \cost(R_{\ell j}, \cA) &\leq \cost(R_{ij}, \cA) \leq 2 \cost(R_{\ell j}, \cA)
    \end{align*}
    This immediately gives us the first inequality:
    \begin{align*}
        \cost(G, \cA) &\leq \cost(R_j, \cA) = \sum_{a \in [k]} \cost(R_{a j}, \cA) \\
        &\leq \sum_{a \in [k]} 2 \cost(R_{\ell j}, \cA) \leq 2k \cdot \cost(R_{ij}, \cA) = 2k \cdot \cost(C_i \cap G, \cA).
    \end{align*}
    For the second inequality, recall the definition of ring $R_{ij}$:
    for all $p \in R_{ij}$, it holds that $2^j \Delta_{C_i} \leq \cost(p, \cA) \leq 2^{j+1} \Delta_{C_i}$. We can again do a similar argument as above: let $q$ be the cheapest point in $R_{ij}$, so that $\cost(q, \cA) \leq \cost(p, \cA)$ for all $p \in C_i \cap G$. As was the case above, this implies that $\cost(p, \cA) \leq 2 \cost(q, \cA)$. Now observe that
    \[ 2k \cdot \cost(C_i \cap G, \cA) = 2k \cdot \sum_{p \in C_i \cap G} \cost(p, \cA) \leq 4k \cdot |C_i \cap G| \cdot \cost(q, \cA) \leq 4k \cdot |C_i \cap G| \cdot \cost(p, \cA).\]
    This concludes the proof.
\end{proof}

\subsection{Group Sampling Algorithm}

In the previous section, we observed that there are at most $O(z^2\log^2(z\varepsilon^{-1}))$ groups. Thus, the idea is to construct coresets $\Omega_G$ for each such group and then output the union of all these coresets. As groups are well-structured and points in them contribute about the same to their overall cost, a sensitivity sampling procedure is all we need to build small coresets. 

\begin{algorithm}[tb]
    \caption{Group Sampling}\label{alg:group-sampling}
    \textbf{Input:} Dataset $P$, approximate solution $\cA$ and number of clusters $k$ \\
    \textbf{Output:} A coreset $\Omega$ of points and weights
    \begin{algorithmic}[1]
        \STATE Let $\omega = 2^{\lambda z\log z} \cdot Ukz^2 \cdot \varepsilon^{-c} \log^3 \varepsilon^{-1} \cdot \min(\varepsilon^{-z}, k)$
        \STATE Initialize $\Omega\leftarrow \emptyset$
        \FOR{$i = 1, \ldots, k$}
            \STATE Let $w_{c_i} = \lrabs*{C_i \cap G_{\min}}$
            \STATE Update $\Omega \leftarrow \Omega \cup \{ (c_i, w_{c_i}) \}$
        \ENDFOR
        \FOR{$G \in G^M \cup G^O$}
            \FOR{$t = 1, \ldots, \omega$}
                \STATE Sample point $p \in G$ with probability $\frac{\cost(p, \cA)}{ \cost(G, A)}$
                \STATE Let $w_p = \frac{\cost(G, A)}{\omega \cdot \cost(p, \cA)}$
                \STATE Update $\Omega \leftarrow \Omega \cup \{(p, w_p)\}$
            \ENDFOR
        \ENDFOR
        \STATE Return $\Omega$
    \end{algorithmic}
\end{algorithm}

Algorithm~\ref{alg:group-sampling} has two main steps. We first take all of the points in the `cheap' group and collapse them onto the centers of our approximate solution. We then perform sensitivity sampling on the remaining main and outer groups.

\subsection{The Cheap Group Doesn't Matter}
We first observe that we can collapse the points in the cheap group to their closest cluster centers in $\cA$.

\begin{lemma}[Cheap Group Estimate]\label{lem:nogroup}
    Let $P$ be a set of points in a metric space $\cM = (\cX, d)$, and let $\cS$ be an arbitrary solution for $(k,z)$-clustering. Then,
    \[
        \left|\cost(G_{\min}, \cS) - \sum_{i \in [k]} | C_i \cap G_{\min} | \cdot \cost(c_i, \cS)\right| \leq \varepsilon \left(\cost(P, \cS) + \cost(P, \cA)\right).
    \]
\end{lemma}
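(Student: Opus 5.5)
The plan is to bound the error pointwise and then sum over the three pieces of $G_{\min}$: the inner rings $R_I$, the cheap main groups $G^M_{\min}$, and the cheap outer groups $G^O_{\min}$. Fix $p\in C_i\cap G_{\min}$. Replacing $p$ by $c_i$ changes its cost by $|\cost(p,\cS)-\cost(c_i,\cS)|$, and we control this with the second inequality of Fact~\ref{fct:triangle}, applied with $q=c_i$ and $r=p$ and a parameter $\vartheta$ to be chosen:
\[
|\cost(p,\cS)-\cost(c_i,\cS)| \le \vartheta\,\cost(p,\cS) + \left(\tfrac{z+\vartheta}{\vartheta}\right)^{z-1}\cost(p,\cA),
\]
since $d(p,c_i)^z=\cost(p,\cA)$. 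Summing over $p\in G_{\min}$, the first term contributes at most $\vartheta\,\cost(P,\cS)$. Hence it suffices to show
\[
\cost(G_{\min},\cA)\le \eta\,\cost(P,\cA)
\]
for some $\eta$ with $\eta\,(z/\vartheta)^{z-1}\lesssim\varepsilon$. With $\vartheta=\varepsilon/2$, this requires $\eta\approx(\varepsilon/z)^{z}$ up to $2^{O(z\log z)}$ factors.

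\textbf{Inner rings.} Every $p\in R_I(C_i)$ has $\cost(p,\cA)\le(\varepsilon/z)^z\Delta_{C_i}$, by the threshold $j\le z\log(\varepsilon/z)$. Summing over the at most $|C_i|$ such points gives
\[
\cost(R_I(C_i),\cA)\le(\varepsilon/z)^z\cost(C_i,\cA).
\]
Summing over $i$ gives the same bound with $\cost(P,\cA)$.

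\textbf{Cheap main groups.} For fixed $j$, each ring $R_{ij}$ placed in a group with $b\le 0$ satisfies
\[
\cost(R_{ij},\cA)\le\tfrac{2}{k}\left(\tfrac{\varepsilon}{4z}\right)^z\cost(R_j,\cA).
\]
There are at most $k$ such rings, so their total is at most $2(\varepsilon/4z)^z\cost(R_j,\cA)$. Summing over $j$ then gives at most $2(\varepsilon/4z)^z\cost(P,\cA)$, because the rings $R_j$ are disjoint. This avoids multiplying by the number $O(z\log(z/\varepsilon))$ of values of $j$.

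\textbf{Cheap outer groups.} The argument is identical. The clusters whose outer ring is cheap have total outer cost at most $2(\varepsilon/4z)^z\cost(R_O,\cA)\le 2(\varepsilon/4z)^z\cost(P,\cA)$.

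Adding the three bounds gives $\cost(G_{\min},\cA)\le 5(\varepsilon/z)^z\cost(P,\cA)$. Multiplying by $((z+\vartheta)/\vartheta)^{z-1}\le(4z/\varepsilon)^{z-1}$ leaves $O(4^z\varepsilon/z)\cost(P,\cA)$.

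The main obstacle is that the constants are borderline. Fact~\ref{fct:triangle} loses $(4z/\varepsilon)^{z-1}$, while the group thresholds only give $(\varepsilon/4z)^z$ for main and outer groups, and the inner threshold $(\varepsilon/z)^z$ does not fully cancel the $4^{z-1}$ factor. To fix this, I would choose $\vartheta$ more carefully, taking $\vartheta=\varepsilon$ and using $(z+\varepsilon)/\varepsilon\le 2z/\varepsilon$. I would also use the ring upper bound $2^{j+1}$ for the inner rings. If a $2^{O(z)}$ factor still remains, I would absorb it by rescaling $\varepsilon$ by a constant depending on $z$, consistent with the $2^{O(z\log z)}$ factors already present in Theorem~\ref{thm:framework}.
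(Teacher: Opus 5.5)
Your proposal is correct and matches the paper's proof. Both use a per-point power-triangle inequality between $p$ and $c_i$, then bound $\cost(G_{\min},\cA)\lesssim(\varepsilon/z)^z\cost(P,\cA)$ piece by piece over the inner rings, cheap main groups and cheap outer groups (using disjointness of the $R_j$), and finish with a $z$-dependent rescaling of $\varepsilon$; the paper rescales by $2z^{z+1}$, so your leftover $2^{O(z)}$ factor is absorbed exactly as you suggest. The only difference is that you use the two-sided form of Fact~\ref{fct:triangle}, which bounds both directions of the absolute value at once, whereas the paper derives only the upper direction from the one-sided form with $\vartheta=\varepsilon$ and then asserts the absolute-value bound.
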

\begin{proof}
    Recall that the cheap group consists of three different subsets: the cheap rings $R_I$, the cheap main groups $G^M_{\min}$ and the cheap outer groups $G^O_{\min}$. Let $\cS$ be any solution and let $\cost(G_{\min}, \cS)$ be the cost of the cheap group with respect to $\cS$.
    Let $p$ be any point in $G_{\min}$ such that $p \in C_i$. Then, by the triangle inequality for powers with $\vartheta=\varepsilon$, we can upper bound $\cost(p, \cS) \leq (1 + 1/\varepsilon)^{z-1} \cost(p, c_i) + (1 + \varepsilon)^{z-1}\cost(c_i, \cS)$.
    Doing this for all such points in $C_i \cap G_{\min}$ gives us
    \[ \cost(C_i \cap G_{\min}, \cS) \leq \left(1 + \frac{1}{\varepsilon}\right)^{z-1} \cost(C_i \cap G_{\min}, \cA) + (1 + \varepsilon)^{z-1} \lrabs*{C_i \cap G_{\min}} \cdot \cost(c_i, \cS). \]
    Applying this over all centers in $\cA$ leads to 
    \begin{align*}
        \cost(G_{\min}, \cS) &\leq \left(1 + \frac{1}{\varepsilon}\right)^{z-1} \cost(G_{\min}, \cA) + (1 + \varepsilon)^{z-1} \sum_{i \in [k]} \lrabs*{C_i \cap G_{\min}} \cdot \cost(c_i, \cS) \\
        &\leq \left(1 + \frac{1}{\varepsilon}\right)^{z-1} \cost(G_{\min}, \cA) + (1 + z^{z+1} \varepsilon) \sum_{i \in [k]} \lrabs*{C_i \cap G_{\min}} \cdot \cost(c_i, \cS),
    \end{align*}
    where the inequality holds because $(1+\varepsilon)^{z-1} \leq 1 + z^{z+1}\varepsilon$ for $0 \leq \varepsilon < 1$.
    \noindent We now plug this into the statement from the lemma:
    \begin{align*}
        \cost(G_{\min}, \cS) - \sum_{i \in [k]} | C_i \cap G_{\min} | \cdot \cost(c_i, \cS) \leq &z^{z+1} \varepsilon \sum_{i \in [k]} |C_i \cap G_{\min}| \cdot \cost(c_i, \cS) \\
        &+ \left(1 + \frac{1}{\varepsilon}\right)^{z-1} \cost(G_{\min}, \cA),
    \end{align*}
    which implies
    \begin{align}
        \lrabs*{\cost(G_{\min}, \cS) - \sum_{i \in [k]} | C_i \cap G_{\min} | \cdot \cost(c_i, \cS)} \leq &z^{z+1} \varepsilon \sum_{i \in [k]} |C_i \cap G_{\min}| \cdot \cost(c_i, \cS) \label{eq:cost_of_cheap_group_1} \\
        &+ \left(1 + \frac{1}{\varepsilon}\right)^{z-1} \cost(G_{\min}, \cA). \label{eq:cost_of_cheap_group_2}
    \end{align}
    It remains to show that Equation~\eqref{eq:cost_of_cheap_group_1} is less than $O(\varepsilon) \cost(P, \cS)$ and that Equation~\eqref{eq:cost_of_cheap_group_2} is less than $O(\varepsilon) \cost(P, \cA)$. The former is trivial to show, since $\sum_{i \in [k]} |C_i \cap G_{\min}| \cdot \cost(c_i, \cS) \leq \cost(P, \cS)$. For the latter, we break it up into its constituent components and show that each one is very cheap:

    \noindent Consider the sets comprising $G_{\min}$. For the inner rings $R_I$, we have for any $p \in R_I \cap C_i$
    \begin{align*}
        \cost(p, \cA) &\leq 2^{z\log(\varepsilon/z)} \Delta_{C_i} = \left(\frac{\varepsilon}{z}\right)^z \frac{\cost(C_i, \cA)}{|C_i|}.
    \end{align*}
    Therefore,
    \[
        \cost(R_I \cap C_i, \cA) \leq \left(\frac{\varepsilon}{z}\right)^z \cost(C_i, \cA)
        \Longrightarrow \cost(R_I, \cA) \leq \left(\frac{\varepsilon}{z}\right)^z \cost(P, \cA) \\
        = 4^z \left( \frac{\varepsilon}{4z} \right)^z \cost(P, \cA).
    \]
    For the inner main groups, we have for any $R_{ij} \in G^M_{\min} \cap R_j$
    \begin{align*}
        \frac{\cost(R_{ij}, \cA)}{\cost(R_j, \cA)} \leq \frac{2}{k} \left( \frac{\varepsilon}{4z} \right)^z
        \Longrightarrow \cost(R_{ij}, \cA) \leq \frac{2}{k} \left( \frac{\varepsilon}{4z} \right)^z \cost(R_j, \cA),
    \end{align*}
    and thus,
    \[
        \cost(G^M_{\min} \cap R_j, \cA) \leq 2\left( \frac{\varepsilon}{4z} \right)^z \cost(R_j, \cA) \leq 4^z \left( \frac{\varepsilon}{4z} \right)^z \cost(R_j, \cA).
    \]
    We can similarly show for the outer groups that
    \[ \cost(G^M_{\min} \cap R_O, \cA) \leq 4^z \left( \frac{\varepsilon}{4z} \right)^z \cost(R_O, \cA). \]

    \noindent Since $G_{\min} = R_I \cup G^M_{\min} \cup G^O_{\min}$, we therefore have
    \begin{align*}
        \cost(G_{\min}, \cA) &\leq 4^z \left( \frac{\varepsilon}{4z} \right)^z \left( \cost(P, \cA) + \sum_j \cost(R_j, \cA) + \cost(R_O, \cA) \right) \\
        &\leq 2^{2z} \left( \frac{\varepsilon}{4z} \right)^z \cdot 2 \cdot \cost(P, \cA) \\
        &\leq \frac{2}{z^z} \varepsilon^z \cost(P, \cA). 
    \end{align*}
    Hence,
    \begin{align*}
        \left(1 + \frac{1}{\varepsilon}\right)^{z-1} \cost(G_{\min}, \cA) &\leq \left(1 + \frac{1}{\varepsilon}\right)^{z-1} \frac{2}{z^z} \varepsilon^z \cost(P, \cA).
    \end{align*}
    It is a matter of algebra\footnote{To see this, consider that $(1 + 1/\varepsilon)^{z-1}$ has a binomial expansion where every term has a binomial coefficient less than $z^z$ and the terms themselves are powers of $1/\varepsilon$. Thus, dividing by $z^z$ and multiplying by $\varepsilon^z$ means that each term is less than $\varepsilon$. There are at most $z$ such terms.} to show that $\left(1 + \frac{1}{\varepsilon}\right)^{z-1} \frac{2}{z^z} \varepsilon^z \leq 2 z \varepsilon$. Thus, we have
    \begin{equation}
        \label{eq:cost_cheap_group_to_A}
        \left(1 + \frac{1}{\varepsilon}\right)^{z-1} \cost(G_{\min}, \cA) \leq 2z\varepsilon \cdot \cost(P, \cA)
    \end{equation}
    We can now plug this into Equation~\eqref{eq:cost_of_cheap_group_2} to obtain
    \begin{align*}
        \lrabs*{\cost(G_{\min}, \cS) - \sum_{i \in [k]} | C_i \cap G_{\min} | \cdot \cost(c_i, \cS)} &\leq z^{z+1} \varepsilon \cdot \cost(P, \cS) + 2z\varepsilon \cdot \cost(P, \cA) \\
        &\leq 2z^{z+1} \varepsilon \cdot (\cost(P, \cS) + \cost(P, \cA)).
    \end{align*}
    Rescaling $\varepsilon$ then gives the desired result.
\end{proof}
\noindent This implies, in turn, that the remainder of the paper will only ever consider ``main'' or ``outer'' groups.

\subsection{Coreset Validity}
Given a solution $\cS$ and a coreset $\Omega$ for a group $G \in \cG$, Algorithm~\ref{alg:group-sampling} returns an estimated cost equal to $\sum_{p \in G \cap \Omega} w_p \cost(p, \cS)$, and so we are interested in studying the error estimator
\[
    D^\Omega_\cS(G) = \lrabs*{\sum_{p \in G \cap \Omega} w_p \cost(p, S) - \cost(G, S)}.
\]
The following lemmas state how significant (in terms of a specified accuracy) the deviation of the estimated cost from the true cost can be for main and outer groups. 

\begin{lemma}[Main Groups Estimate]\label{lem:main-groups}
    Let $G \subseteq G^M$ be a main group, and let $\lambda > 4$ be an appropriately chosen constant. Then, for $\omega = 2^{\lambda z\log z} \cdot Ukz^2 \cdot (\varepsilon^{-c} + \varepsilon^{-2})  \cdot \min(\varepsilon^{-z}, k) \cdot \log^4 \varepsilon^{-1} \cdot \log (Ukz)$, Algorithm~\ref{alg:group-sampling} yields
    \[
        \E{}{\sup_\cS\frac{D^\Omega_\cS(G)}{\cost(G, \cS) + \cost(G, \cA)}} \leq \varepsilon.
    \]
\end{lemma}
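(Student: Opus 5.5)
The plan is to follow the chaining outline sketched above, applied inside the fixed main group $G$. Fix a solution $\cS$. Split the clusters meeting $G$ into the huge clusters $H_{G,\cS}$ and the interesting ones, and write $D^\Omega_\cS(G)$ accordingly.

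\textbf{Huge part.} For $C\in H_{G,\cS}$, the center of $\cS$ is far from $C\cap G$: by \cref{fct:triangle}, every $p\in C\cap G$ has $\cost(p,\cS)=(1\pm\varepsilon)\cost(c_i,\cS)$ up to an additive $\varepsilon\cost(p,\cA)$. Hence it suffices that the sampled mass of each $C\cap G$, namely $\sum_{p\in\Omega\cap C\cap G}w_p$, is within $(1\pm\varepsilon)$ of $|C\cap G|$. This mass does not depend on $\cS$. By \cref{cl:group-property}, the sampling probability of each point of $C\cap G$ is about $1/(k|C\cap G|)$ up to a factor $4$. Bernstein's inequality (\cref{fct:bernstein}) with $\omega\gtrsim k\varepsilon^{-2}\log k$ then gives this for all $k$ clusters simultaneously. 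This is the route of \cref{lem:huge}.

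\textbf{Interesting part.} Here I use the clustering nets $\cN_h$ from the hypothesis of \cref{thm:framework}. For each $h\ge 1$ up to $h_{\max}=\log(1/\varepsilon)+O(\log z)$, pick $v^h\in\cN_h$ approximating $\cost(\cdot,\cS)$ on the interesting clusters. Then write the telescoping decomposition
\[\cost(p,\cS)=v^1_p+\sum_{h=1}^{h_{\max}-1}(v^{h+1}_p-v^h_p)+\bigl(\cost(p,\cS)-v^{h_{\max}}_p\bigr).\]
The last term is at most $\varepsilon(\cost(p,\cS)+\cost(p,\cA))$ deterministically, so it contributes only $O(\varepsilon)$.

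For each level, symmetrize and pass to a Gaussian process. The expected supremum of $\sum_{p\in\Omega}w_p(v^{h+1}_p-v^h_p)-\sum_{p\in G}(v^{h+1}_p-v^h_p)$, normalized by $\cost(G,\cS)+\cost(G,\cA)$, is bounded by \cref{fct:max-gauss}. The bound is $\sigma_h\sqrt{\log(|\cN_h||\cN_{h+1}|)}$, where $\sigma_h$ is the standard deviation of one level of the chain.

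\textbf{Variance bound (main obstacle).} Each increment is at most $2^{-h}(\cost(p,\cS)+\cost(p,\cA))$. Interesting clusters have $\cost(p,\cS)\le(4z/\varepsilon)^z\cost(p,\cA)$, and points of $G$ are sampled proportionally to $\cost(p,\cA)$ with weights $\cost(G,\cA)/(\omega\cost(p,\cA))$. Together these should give
\[\sigma_h^2\lesssim\frac{2^{-2h}}{\omega}\min\bigl(\varepsilon^{-z},k\bigr)\,2^{O(z\log z)}\]
relative to $(\cost(G,\cS)+\cost(G,\cA))^2$. The $\min(\varepsilon^{-z},k)$ term comes from two bounds. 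One is Cauchy--Schwarz over clusters using \cref{cl:group-property}: $\cost(G,\cA)\le 2k\cost(C\cap G,\cA)$, which gives $k$. The other uses the cap $(4z/\varepsilon)^z$ on the cost ratio, which gives $\varepsilon^{-z}$.

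The hardest step is getting the $2^{-2h}$ factor rather than only $2^{-h}$. I would first try to bound $|v^{h+1}_p-v^h_p|^2$ by $2^{-2h}(\cost(p,\cS)+\cost(p,\cA))^2$, then take one factor $\cost(p,\cS)/\cost(p,\cA)$ inside and use the cap on that ratio.

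\textbf{Summing the levels.} Plugging in $\log|\cN_h|\le Ukz^2\log|P|\,2^{ch}\log(2^h/\varepsilon)$, level $h$ contributes
\[2^{-h}\sqrt{Uk\,2^{ch}\log(2^h/\varepsilon)\min(\varepsilon^{-z},k)/\omega}.\]
Summed up to $h_{\max}$, this is dominated either by the last term (when $c>2$) or by a logarithmic number of equal terms (when $c\le 2$). This yields the $\varepsilon^{-c}+\varepsilon^{-2}$ factor in $\omega$. The $\log|P|$ factor has to be removed by the usual reduction: the nets only need to be built on the sample, or on a first-stage coreset of size $\mathrm{poly}(k/\varepsilon)$, which turns $\log|P|$ into the $\log(Ukz)$ and polylog factors.

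Finally, combining the huge part, the interesting part, and the $O(\varepsilon)$ tail, and rescaling $\varepsilon$ by a constant, gives the claimed bound on the expectation with the stated $\omega$.
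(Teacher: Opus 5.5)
Your decomposition follows the paper's reduction to \cref{lem:huge}, \cref{lem:symm} and \cref{lem:gp-main}: huge clusters via sampled cluster masses, the rest via symmetrization and chaining over $\cN_h$ with \cref{fct:max-gauss}. The gap is the variance step, which you only assert (``together these should give''); it does not go through as described. The quantity $\sigma_h^2=\sum_{p\in\Omega}\bigl(w_p(v^{h+1}_p-v^h_p)\bigr)^2/(\cost(G,\cS)+\cost(G,\cA))^2$ depends on the random sample and must be bounded uniformly in $\cS$. Apply the cap $\cost(p,\cS)\le(4z/\varepsilon)^z\cost(p,\cA)$ once, as you propose, and use $w_p\cost(p,\cA)=\cost(G,\cA)/\omega$. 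This leaves
\[
\sigma_h^2\lesssim\frac{2^{-2h}(4z/\varepsilon)^z}{\omega}\cdot\frac{\sum_{p\in\Omega}w_p\bigl(\cost(p,\cS)+\cost(p,\cA)\bigr)}{\cost(G,\cS)+\cost(G,\cA)}.
\]
The last ratio is the sampled estimate of the very cost you are approximating. A uniform-in-$\cS$ bound on it is a constant-factor version of the statement being proved. Applying the cap a second time only gives $\varepsilon^{-2z}/\omega$, which is too weak for the stated $\omega$.

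The missing idea is to reduce to $\cS$-independent random quantities. For $p\in C\cap G$, comparing with the cheapest point of $C\cap G$ gives $\cost(p,\cS)\le x_C/|C\cap G|$, where $x_C=2^z\cost(C\cap G,\cS)+4^z\cost(C\cap G,\cA)$. Only the sampled masses $\sum_{p\in C\cap G\cap\Omega}w_p$ then remain random. These are $(1\pm\varepsilon)|C\cap G|$ for all $k$ clusters simultaneously on the event $\cE^M_G$ of \cref{lem:good-estimate}. On $\cE^M_G$ you get $\varepsilon^{-z}$ from one cap, and $k$ from \cref{cl:group-property} plus $\sum_C x_C^2\le(\sum_C x_C)^2$. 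Off $\cE^M_G$ you must fall back on the $\varepsilon^{-2z}$ bound and pay via $\Pb{\bar\cE^M_G}\le\varepsilon^{2z}$ and the law of total expectation. This two-case split is what the paper does in \cref{lem:variance-main} and \cref{lem:gp-main}. The step you flagged as hardest, the factor $2^{-2h}$, is immediate from squaring the increment.

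The same sample dependence undermines your truncation. The bound $|\cost(p,\cS)-v^{h_{\max}}_p|\le 2^{-h_{\max}}(\cost(p,\cS)+\cost(p,\cA))$ holds pointwise. But $\sum_{p\in\Omega}w_p\cost(p,\cS)$ is deterministically bounded only by $(4z/\varepsilon)^z\cost(G,\cA)$. So the tail is $O(\varepsilon)$ with $h_{\max}=\log(1/\varepsilon)+O(z)$ only on $\cE^M_G$. Taking $h_{\max}=\Theta(z\log(z/\varepsilon))$ instead would, when $c>2$, inflate the required $\omega$ by a factor of order $(z/\varepsilon)^{(c-2)z}$ at the last chaining level.

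If you remove $\log|P|$ by building the nets on the sample, you also cannot symmetrize level by level with centering $\sum_{p\in G}(v^{h+1}_p-v^h_p)$. That centering needs net vectors defined on all of $G$ and independent of $\Omega$. Instead, symmetrize the full telescoped error first, as with $Y_{G,\cS}$ and $X_{G,\cS}$ in the paper, and chain conditionally on $\Omega$. Your first-stage-coreset alternative avoids this ordering issue.
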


\begin{lemma}[Outer Groups Estimate]\label{lem:outer-groups}
    Let $G \subseteq G^O$ be an outer group, and let $\lambda > 4$ be an appropriately chosen constant. Then, for $\omega = 2^{\lambda z\log z} \cdot Ukz^2 \cdot (\varepsilon^{-c} + \varepsilon^{-2})  \cdot \min(\varepsilon^{-z}, k) \cdot \log^4 \varepsilon^{-1} \cdot \log (Ukz)$, Algorithm~\ref{alg:group-sampling} yields
    \[
        \E{}{\sup_\cS\frac{D^\Omega_\cS(G)}{\cost(P^G, \cS) + \cost(P^G, \cA)}} \leq \varepsilon.
    \]
\end{lemma}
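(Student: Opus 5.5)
We prove the Outer Groups Estimate by fixing $G\in G^O$ and splitting, for an arbitrary solution $\cS$, the clusters $C\cap G$ into \emph{far} clusters $F_{G,\cS}$ (Definition~\ref{def:far}) and the remaining \emph{close} ones. The error $D^\Omega_\cS(G)$ is bounded by the sum of the errors on these two parts, and each part is handled by its own lemma. These are Lemma~\ref{lem:far} for the far part and Lemma~\ref{lem:gp-outer} for the close part, following the overview diagram. The normalisation by $\cost(P^G,\cS)+\cost(P^G,\cA)$ rather than by the cost of $G$ alone is what makes the outer-group analysis work. Points in outer rings may be very far from their center, so their contribution must be charged to the whole clusters that intersect $G$.

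\textbf{Far clusters.} If $C\cap G$ is far, some point $p\in C\cap G$ has $\cost(p,\cS)\ge 4^z\cost(p,\cA)$. By the triangle inequality for powers (Fact~\ref{fct:triangle}), every point $q$ of $C\cap G$ is then at distance comparable to $d(c_i,\cS)$, up to terms of order $\cost(q,\cA)$. Since we are in an outer ring, $\cost(q,\cA)\gtrsim (z/\varepsilon)^{2z}\Delta_{C}$, so the cost of $C\cap G$ to $\cS$ is at most $2^{O(z)}\bigl(|C\cap G|\cost(c_i,\cS)+\cost(C\cap G,\cA)\bigr)$. Moreover, $|C\cap G|\cost(c_i,\cS)$ is dominated by $\cost(C,\cS)+\cost(C,\cA)$ because $|C\cap G|\le |C|$. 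This yields a deterministic per-point bound relative to $\cost(P^G,\cS)+\cost(P^G,\cA)$.

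Concentration then comes from Bernstein's inequality (Fact~\ref{fct:bernstein}), combined with the group property. The outer-group analogue of Claim~\ref{cl:group-property} gives $\cost(G,\cA)\le 2k\cost(R_O(C),\cA)$. With this, each sampled weighted term is at most $O(k/\omega)$ times the normaliser, and the variance is of the same order. Since the far part depends on $\cS$ only through the values $\cost(c_i,\cS)$ and the set of far clusters, a union bound over a net of these $k$ scalars suffices; $\omega\gtrsim k\varepsilon^{-2}\log(\cdot)$ makes the expected supremum at most $\varepsilon$.

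\textbf{Close clusters.} Here every point satisfies $\cost(p,\cS)\le 4^z\cost(p,\cA)\cdot\text{poly}$, so the approximate centroid sets apply. We use the chaining argument, as in Lemma~\ref{lem:gp-main}. First symmetrise the estimator and pass to Gaussian weights $g_p$. Then write $\cost(p,\cS)$ as the telescoping sum $\sum_h (v^{h+1}_p-v^h_p)$ over the nets $\cN_h$ of Theorem~\ref{thm:framework}. Each increment has variance at most $2^{-2h}$ times $\frac{\cost(G,\cA)^2}{\omega}$ times the normaliser squared, again via the group property. Fact~\ref{fct:max-gauss} bounds the expected supremum of level $h$ by $2^{-h}\sqrt{\log|\cN_h||\cN_{h+1}|/\omega}$. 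With $\log|\cN_h|\le Ukz^2\log|P|\,2^{ch}\log(2^h\varepsilon^{-1})$, the sum over $h$ up to $\log\varepsilon^{-1}$ converges provided $\omega\gtrsim Uk(\varepsilon^{-c}+\varepsilon^{-2})\log^{O(1)}\varepsilon^{-1}$. Beyond $\log\varepsilon^{-1}$ levels the residual is absorbed directly into $\varepsilon$.

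\textbf{Main obstacle.} I expect the main obstacle to be the variance bound in the close case. Outer-ring points have $\cost(p,\cA)\gg\Delta_C$, so sampling proportional to $\cost(p,\cA)$ must be compared against $\cost(P^G,\cdot)$ rather than $\cost(G,\cdot)$. Obtaining the extra $\min(\varepsilon^{-z},k)$ factor requires the observation that the ratio $\cost(p,\cS)/\cost(p,\cA)$ is at most $O(1)^z$ for close clusters. I would first attempt the variance bound by bounding $\sum_p \cost(p,\cS)^2/\cost(p,\cA)$ through $\max_p\frac{\cost(p,\cS)}{\cost(p,\cA)}\cdot\cost(G,\cS)$.
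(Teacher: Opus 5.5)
Your top-level reduction is exactly the paper's proof of this lemma. You split $v^{G,\cS}$ into the far-cluster part $u^{G,\cS}$ and $v^{G,\cS}-u^{G,\cS}$, use the triangle inequality, bound the first term by \Cref{lem:far} and the second by \Cref{lem:symm} followed by \Cref{lem:gp-outer}, and rescale $\varepsilon$ by $1+\sqrt{2\pi}$. If you simply invoke those lemmas, you are done.

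The trouble is the far-cluster argument you sketch in place of \Cref{lem:far}, which does not work. Your claim that the far contribution depends on $\cS$ only through the $k$ scalars $\cost(c_i,\cS)$ is false. An outer group contains \emph{all} outer rings of a cluster, and being far only gives $d(c_i,\cS)\ge 3\,d(p,c_i)$ for one witness $p$. Other points $q\in C_i\cap G$ may therefore have $d(q,c_i)\gg d(c_i,\cS)$, and then $\cost(q,\cS)$ depends on where $\cS$ sits relative to $q$. The bound $\cost(q,\cS)\le 2^{z-1}(\cost(c_i,\cS)+\cost(q,\cA))$ pins it down only up to constant factors. So Bernstein for a fixed $\cS$ plus a union bound over a net of $k$ scalars does not control the supremum over solutions to accuracy $\varepsilon$. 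Likewise, dominating $|C_i\cap G|\cost(c_i,\cS)$ by $\cost(C_i,\cS)+\cost(C_i,\cA)$ via $|C_i\cap G|\le|C_i|$ loses a constant, not an $\varepsilon$.

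The missing idea is that the far part is \emph{negligible} on both sides, so no uniform concentration over $\cS$ is needed.
\begin{itemize}
\item For a far cluster, $\cost(c_i,\cS)\ge\cost(p,c_i)\ge(z/\varepsilon)^{2z}\Delta_{C_i}$, because $p$ is an outer point.
\item By Markov, all but a $(\varepsilon/2z)^z$-fraction of $C_i$ has cost at most $(2z/\varepsilon)^z\Delta_{C_i}$. Those points pay at least $(1-O(\varepsilon))\cost(c_i,\cS)$, whence $\cost(C_i,\cS)\gtrsim|C_i|\cost(c_i,\cS)\ge(z/\varepsilon)^{2z}\cost(C_i,\cA)$.
\item Markov also gives $|C_i\cap G|\le(\varepsilon/z)^{2z}|C_i|$.
\end{itemize}
Feeding these into your own triangle-inequality bound gives $\cost(C_i\cap G,\cS)\le\varepsilon\,\cost(C_i,\cS)$. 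The weighted estimate satisfies the same bound as soon as the $k$ solution-independent quantities $\sum_{p\in C_i\cap G\cap\Omega}w_p\cost(p,\cA)$ concentrate (the event $\cF^O_G$). Off that event, the crude bound of $2^{2z+1}k$ times the normaliser, multiplied by the event's tiny probability, suffices. Summing over far clusters yields $\varepsilon\,\cost(P^G,\cS)$, which is where the $P^G$ normalisation is genuinely needed.

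For the close part, the variance is easier than you expect. The bounds $\cost(p,\cS)<4^z\cost(p,\cA)$ and $\cost(G,\cA)\le\cost(P^G,\cA)$ give $w_p\cost(p,\cS)\le 4^z\cost(P^G,\cA)/\omega$ termwise. This yields the deterministic bound $2^{-2h+2}16^z/\omega$ of \Cref{lem:variance-outer}, with no $\min(\varepsilon^{-z},k)$ factor and no conditioning. Routing the bound through $\cost(G,\cS)$ instead would bring in the random estimate $\sum_{p\in\Omega}w_p\cost(p,\cS)$ and force a conditioning argument as for main groups.
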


We prove these lemmas in the following sections, but show why they imply Theorem~\ref{thm:framework} next.

\begin{proof}[Proof of Theorem~\ref{thm:framework}]
    Let us consider $\Omega_G$ as the coreset returned by Algorithm~\ref{alg:group-sampling} for a group $G \in \cG$. Similarly, let $\Omega_\cG$ be the union of all such coresets. We have:
    \begin{align*}
        &\E{}{\sup_\cS\frac{D^{\Omega_\cG}_\cS(P \cap \cG)}{\cost(P, \cS) + \cost(P, \cA)}} \\
        &= \E{}{\sup_\cS\frac{1}{\cost(P, \cS) + \cost(P, \cA)} \cdot \left|\sum_{G \in \cG}\sum_{p \in \Omega_G} w_p \cost(p, S) - \cost(G, S)\right|}\\
        &\leq \E{}{\sup_\cS\frac{1}{\cost(P, \cS) + \cost(P, \cA)} \cdot \sum_{G \in \cG}D^{\Omega_G}_\cS(G)}\\
        &\leq \mathbb{E} \Bigg[\sup_\cS \sum_{G \in G^M} \frac{\cost(G, \cS) + \cost(G, \cA)}{\cost(P, \cS) + \cost(P, \cA)} \cdot \E{}{\sup_\cS \frac{D^{\Omega_G}_\cS(G)}{\cost(G, \cS) + \cost(G, \cA)}} \\
        &\qquad\qquad \sum_{G \in G^O} \frac{\cost(P^G, \cS) + \cost(P^G, \cA)}{\cost(P, \cS) + \cost(P, \cA)} \cdot \E{}{\sup_\cS \frac{D^{\Omega_G}_\cS(G)}{\cost(P^G, \cS) + \cost(P^G, \cA)}}\Bigg] \\
        &\leq \E{}{\sup_\cS \sum_{G \in G^M} \frac{\cost(G, \cS) + \cost(G, \cA)}{\cost(P, \cS) + \cost(P, \cA)} \cdot \varepsilon + \sum_{G \in G^O} \frac{\cost(P^G, \cS) + \cost(P^G, \cA)}{\cost(P, \cS) + \cost(P, \cA)} \cdot \varepsilon} \tag{Lemmas~\ref{lem:main-groups} and \ref{lem:outer-groups}}\\
        &\leq 2\varepsilon.
    \end{align*}
    By Markov's inequality, we thus have that
    \[
        \Pb{\sup_\cS\frac{D^{\Omega_\cG}_\cS(P \cap \cG)}{\cost(P, \cS) + \cost(P, \cA)} \leq 8\varepsilon} \geq 1 - \frac{1}{8\varepsilon}\cdot\E{}{\sup_\cS\frac{D^{\Omega_\cG}_\cS(P \cap \cG)}{\cost(P, \cS) + \cost(P, \cA)}} \geq \frac{3}{4}.
    \]
    That is, $D^{\Omega_\cG}_\cS(P \cap \cG) \leq 8\varepsilon \cdot (\cost(P, \cS) + \cost(P, \cA))$ for all $\cS$, with probability at least $\frac{3}{4}$. We also need to take into account all points that do not belong to any group, and their corresponding cost estimator. It holds that
    \begin{align*}
        D^\Omega_\cS(P) &\leq D^{\Omega_\cG}_\cS(P \cap \cG) + \left|\cost(P\setminus \cG, \cS) - \sum_{i \in [k]} |P\setminus \cG \cap C_i| \cdot \cost(c_i, \cS)\right| \\
        &\leq 9\varepsilon \cdot (\cost(P, \cS) + \cost(P, \cA)), \tag{Lemma~\ref{lem:nogroup}}
    \end{align*}
    again for all $\cS$, with probability at least $\frac{3}{4}$. Since $\cost(P, \cA) \leq \alpha \cdot \OPT \leq \alpha \cdot \cost(P, \cS)$ for any $\cS$, where $\alpha$ is the approximation factor of solution $\cA$, it suffices to rescale $\varepsilon$ by a factor $9(1+\alpha)$ to get that $D^\Omega_\cS(P) \leq \varepsilon \cdot \cost(P, \cS)$ as required. 

    We are left to bound the coreset size. To this end, we recall from Lemmas~\ref{lem:main-groups} and~\ref{lem:outer-groups} that $\omega = 2^{\lambda z\log z} \cdot Ukz^2 \cdot (\varepsilon^{-c} + \varepsilon^{-2})  \cdot \min(\varepsilon^{-z}, k) \cdot \log^4 \varepsilon^{-1} \cdot \log (Ukz)$ for each group. Since, by Observation~\ref{obs:max-groups}, there at at most $O(z^2\log^2(z\varepsilon^{-1}))$ many such groups, then the coreset size is simply
    \[
        |\Omega| \leq O\left(2^{O(z\log z)} \cdot Uk \cdot (\varepsilon^{-c} + \varepsilon^{-2}) \cdot  \min(\varepsilon^{-z}, k) \cdot \log^6 \varepsilon^{-1} \cdot \log(Uk)\right),
    \]
    which concludes the proof.
\end{proof}

\subsection{Groups Estimates via Chaining}

To show Lemma~\ref{lem:main-groups} and Lemma~\ref{lem:outer-groups}, we need to show concentration of the error estimator $D^\Omega_\cS(G)$ around $0$, whether $G$ is a main or an outer group. In both cases, however, the cost estimator $\sum_{p \in G \cap \Omega} w_p \cost(p, \cS)$ has too large a variance. To simplify the exposition, let $v^{\cS}_p = \cost(p, \cS)$ be a $|P|$-dimensional cost vector for all $p \in P$ and any fixed solution $\cS$; subsequently, let $v_p^{G, \cS} = v_p^\cS \cdot \ind{p \in G}$.
Furthermore, we denote by
\begin{align*}
    u_p^{G, \cS} &= \cost(p, \cS) \cdot \ind{p \in C \cap G \text{ and } C \in H_{G, \cS}}, ~\forall G \in G^M\\
    u_p^{G, \cS} &= \cost(p, \cS) \cdot \ind{p \in C \cap G \text{ and } C \in F_{G, \cS}}, ~\forall G \in G^O,
\end{align*}
where $H_{G, S}$ and $F_{G, S}$ are given in Definitions \ref{def:huge} and \ref{def:far}.

Now, as mentioned, we do not want to estimate $\|v^{G, \cS}\|_1$ directly, but instead, we split it as $v^{G, \cS} = v^{G, \cS} - u^{G, \cS} + u^{G, \cS}$. This allows us to estimate $\|v^{G, \cS} - u^{G, \cS}\|_1$ and $\|u^{G, \cS}\|_1$ separately since
\[
    \|v^{G, \cS}\|_1 = \|v^{G, \cS} - u^{G, \cS}\|_1 + \|u^{G, \cS}\|_1,
\]
which derives from $v^{G, \cS}, u^{G, \cS}$ having nonnegative entries only. 

\paragraph{Well-Behaved Clusters.} We use a chaining argument to estimate \(\|v^{G, \cS} - u^{G, \cS}\|_1\), which is articulated as follows: Consider an infinite sequence of $|P|$-dimensional vectors $v^{\cS,1}, v^{\cS,2}, \ldots$, where $v^{\cS,h}$ is such that 
\[
    |\cost(p, \cS) - v^{\cS,h}_p| \leq 2^{-h} \cdot (\cost(p, \cS) + \cost(p, \cA)),
\]
if $p \in C \cap G$ and $C \cap G \notin H_{G, \cS} \cup F_{G, \cS}$, and $v^{\cS,h}_p = 0$ otherwise. In other words, $v^{\cS,h}$ is a vector belonging to a $2^{-h}$-net $\cN_h$ approximating cost vector $v^{G, \cS} - u^{G, \cS}$. Let us now consider random variables
\begin{align*}
    Y_{G, p, \cS} &= w_pv_p^{\cS,1} + \sum_{h=1}^\infty w_p(v^{\cS,h+1}_p-v^{\cS,h}_p)\\
    Y_{G, \cS} &= \sum_{p \in \Omega} Y_{G, p, \cS}.
\end{align*}
Since $v^{\cS,h}_p \underset{h \rightarrow \infty}{\rightarrow} v^{G, \cS}_p - u^{G, \cS}_p$ so that the sum is well-defined, and $Y_{G, p, \cS} = w_pv^{G, \cS}_p - u^{G, \cS}_p$ as the sum telescopes, we observe that $\E{}{Y_{G, \cS}} = \|v^{G, \cS} - u^{G, \cS}\|_1$. This means that we can estimate $\|v^{G, \cS} - u^{G, \cS}\|_1$ through $Y_{G, \cS}$. In turn, we can write the conditions to be proven in Lemma~\ref{lem:main-groups} and Lemma~\ref{lem:outer-groups} as
\begin{align*}
    \E{}{\sup_\cS\frac{Y_{G, \cS} - \E{}{Y_{G, \cS}}}{\cost(G, \cS) + \cost(G, \cA)}} &\leq \varepsilon\\
    \E{}{\sup_\cS\frac{Y_{G, \cS} - \E{}{Y_{G, \cS}}}{\cost(P^G, \cS) + \cost(P^G, \cA)}} &\leq \varepsilon.
\end{align*}
It is not immediate how to exploit a chaining argument via weighted Boolean variables. We, thus, make use of the following symmetrization argument for which we define Gaussian random variables $X_{G, \cS}$, whose cost estimates do not deviate from the ones of $Y_{G, \cS}$ by too much. Formally, let us define
\begin{align*}
    X_{G, \cS} &= \frac{\sum_{p \in \Omega} \left(g_p \cdot w_pv_p^{\cS,1} + \sum_{h=1}^\infty g_p \cdot w_p(v^{\cS,h+1}_p-v^{\cS,h}_p)\right)}{\cost(G, \cS) + \cost(G, \cA)},~\forall G \in G^M\\
    X_{G, \cS} &= \frac{\sum_{p \in \Omega} \left(g_p \cdot w_pv_p^{\cS,1} + \sum_{h=1}^\infty g_p \cdot w_p(v^{\cS,h+1}_p-v^{\cS,h}_p)\right)}{\cost(P^G, \cS) + \cost(P^G, \cA)},~\forall G \in G^O,
\end{align*}
where $g_p \sim \bN(0, 1)$ is one of $\omega$ independent standard normal random variables. It holds that:
\begin{lemma}[Cost Symmetrization, Appendix B.3 in \cite{RudraW14}]\label{lem:symm}
    Let $T = \frac{1}{\cost(G, \cS) + \cost(G, \cA)}$ or $T = \frac{1}{\cost(P^G, \cS) + \cost(P^G, \cA)}$. Then,
    \[
        \E{\Omega}{\sup_\cS \left|\sum_{p \in \Omega} T \cdot (Y_{G, p, \cS} - \E{}{Y_{G, p, \cS}})\right|} \leq \sqrt{2\pi} \cdot \E{\Omega, g}{\sup_\cS |X_{G, \cS}|}.
    \]
\end{lemma}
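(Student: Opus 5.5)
We prove this Gaussian symmetrization bound in two stages. First we pass from the centered sampling process to a Rademacher process via the standard ghost-sample argument. Then we pass from Rademacher to Gaussian signs by a contraction (comparison) argument, which is where the $\sqrt{2\pi}$ factor comes from.

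\textbf{Step 1 (Rademacher symmetrization).} Write $Z_{p,\cS} = T\cdot Y_{G,p,\cS}$. The $\omega$ samples of Algorithm~\ref{alg:group-sampling} are i.i.d., so $\E{}{Y_{G,p,\cS}}$ equals the expectation over an independent copy $\Omega'$ of the sample. Because the normalizer $T$ does not depend on $\Omega$, we can pull the inner expectation outside the supremum by Jensen's inequality:
\[
\E{\Omega}{\sup_\cS \Big|\sum_{p\in\Omega} (Z_{p,\cS}-\E{}{Z_{p,\cS}})\Big|} \le \E{\Omega,\Omega'}{\sup_\cS \Big|\sum_{i} (Z_{p_i,\cS}-Z_{p'_i,\cS})\Big|}.
\]
The pair $(p_i,p'_i)$ is exchangeable. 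Inserting independent Rademacher signs $\sigma_i$ therefore leaves the distribution unchanged. Splitting by the triangle inequality then bounds the right-hand side by $2\,\E{\Omega,\sigma}{\sup_\cS|\sum_i \sigma_i Z_{p_i,\cS}|}$.

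\textbf{Step 2 (Rademacher to Gaussian).} Write $g_i = \sigma_i |g_i|$, where $|g_i|$ is independent of $\sigma_i$ and $\E{}{|g_i|}=\sqrt{2/\pi}$. Conditioning on $\sigma$ and $\Omega$, Jensen's inequality applied to the convex map $(a_i)\mapsto \sup_\cS|\sum_i a_i\sigma_i Z_{p_i,\cS}|$ gives
\[
\sqrt{2/\pi}\;\E{}{\sup_\cS\Big|\sum_i \sigma_i Z_{p_i,\cS}\Big|}\le \E{}{\sup_\cS\Big|\sum_i g_i Z_{p_i,\cS}\Big|}.
\]
Combining with Step 1 yields the constant $2\cdot\sqrt{\pi/2}=\sqrt{2\pi}$. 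Finally, by linearity,
\[
\sum_i g_i Z_{p_i,\cS} = \sum_{p\in\Omega} g_p\, T\Big(w_pv_p^{\cS,1}+\sum_{h\ge1}w_p\big(v_p^{\cS,h+1}-v_p^{\cS,h}\big)\Big),
\]
which is exactly $X_{G,\cS}$.

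\textbf{Main obstacle.} The delicate point is measurability and convergence: the supremum runs over infinitely many $\cS$, and $Y$ is defined through an infinite telescoping series. We handle both by first restricting to finite subfamilies of solutions and then applying monotone convergence. Exchanging the series with the Gaussian weights is legitimate because the series telescopes to a finite value for every sample.
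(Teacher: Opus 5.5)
Your proof is correct and is the standard argument. The paper gives no proof of its own and cites Appendix B.3 of Rudra and Wootters, which proceeds exactly as you do: a ghost-sample Rademacher symmetrization costing a factor $2$, then the Jensen comparison with $g_i=\sigma_i|g_i|$ and $\mathbb{E}|g|=\sqrt{2/\pi}$ costing $\sqrt{\pi/2}$, for the stated constant $\sqrt{2\pi}$.
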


With these at hand, our goal will be to show the following lemmas:

\begin{lemma}[Gaussian Process for Main Groups Cost]\label{lem:gp-main}
    Let $G \in G^M$. For $\omega = 2^{\lambda z\log z} \cdot Ukz^2 \cdot (\varepsilon^{-c} + \varepsilon^{-2})  \cdot \min(\varepsilon^{-z}, k) \cdot \log^4 \varepsilon^{-1} \cdot \log (Ukz)$, we have
    \[
        \E{\Omega, g}{\sup_\cS |X_{G, \cS}|} \leq \varepsilon.
    \]
\end{lemma}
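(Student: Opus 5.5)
Conditioned on the sample $\Omega$, I will treat $X_{G,\cS}$ as a Gaussian process indexed by solutions $\cS$ and bound its expected supremum by chaining along the nets $\cN_h$. By the triangle inequality,
\[
\E{g}{\sup_\cS |X_{G,\cS}|} \le \E{g}{\sup_\cS \frac{|\sum_{p\in\Omega} g_p w_p v^{\cS,1}_p|}{\cost(G,\cS)+\cost(G,\cA)}} + \sum_{h\ge 1}\E{g}{\sup_\cS \frac{|\sum_{p\in\Omega} g_p w_p (v^{\cS,h+1}_p - v^{\cS,h}_p)|}{\cost(G,\cS)+\cost(G,\cA)}}.
\]
For fixed $h$, the numerator depends only on the pair $(v^{\cS,h},v^{\cS,h+1})\in\cN_h\times\cN_{h+1}$, so there are at most $|\cN_h|\cdot|\cN_{h+1}|$ distinct Gaussians. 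The denominator is not constant, so I first bucket solutions by the value of $\cost(G,\cS)+\cost(G,\cA)$ in powers of two. Only $O(z\log(z/\varepsilon))$ buckets are relevant, since interesting clusters have costs within $(4z/\varepsilon)^z$ of $\cA$. Within a bucket I then apply Fact~\ref{fct:max-gauss}. This gives, per level $h$, a bound of order $\sigma_h\sqrt{\log(|\cN_h||\cN_{h+1}|)}$, where $\sigma_h$ is the largest standard deviation of a level-$h$ increment after normalisation.

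The core step is bounding $\sigma_h^2=\sum_{p\in\Omega} w_p^2 (v^{\cS,h+1}_p-v^{\cS,h}_p)^2/(\cost(G,\cS)+\cost(G,\cA))^2$. By the net property, $|v^{\cS,h+1}_p-v^{\cS,h}_p|\le 2^{-h+1}(\cost(p,\cS)+\cost(p,\cA))$. For $p$ in an interesting (non-huge) cluster, $\cost(p,\cS)\le (4z/\varepsilon)^z\cost(p,\cA)$. Using $w_p=\cost(G,\cA)/(\omega\cost(p,\cA))$ together with Claim~\ref{cl:group-property} ($\cost(p,\cA)\approx \cost(G,\cA)/(k|C\cap G|)$), I will bound
\[
\sum_{p\in\Omega} w_p^2(\cost(p,\cS)+\cost(p,\cA))^2 \lesssim \frac{\cost(G,\cA)}{\omega}\max_p w_p\cost(p,\cA)\cdot\frac{\cost(p,\cS)+\cost(p,\cA)}{\cost(p,\cA)}\cdot\sum_{p\in\Omega}w_p(\cost(p,\cS)+\cost(p,\cA)).
\]
The last sum concentrates around $\cost(G,\cS)+\cost(G,\cA)$ in expectation over $\Omega$, and I control it with Bernstein (Fact~\ref{fct:bernstein}) or a standard "good event" argument. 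The ratio $(\cost(p,\cS)+\cost(p,\cA))/\cost(p,\cA)$ is naively at most $(4z/\varepsilon)^z$. The sharper $\min(\varepsilon^{-z},k)$ comes from splitting by cluster: $\cost(C\cap G,\cS)$ is at most $\cost(G,\cS)$ and also comparable to a $1/k$ share of $\cost(G,\cA)$ times $\varepsilon^{-z}$. Combining the per-cluster bound with Cauchy–Schwarz over the $k$ clusters should yield $\sigma_h^2\lesssim 2^{-2h}2^{O(z\log z)}\min(\varepsilon^{-z},k)/\omega$.

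Plugging in $\log|\cN_h|\le Ukz^2\log|P|\,2^{ch}\log(2^h\varepsilon^{-1})$, level $h$ contributes
\[
2^{-h}\sqrt{2^{O(z\log z)}\min(\varepsilon^{-z},k)\,Ukz^2\,2^{ch}\log(2^h/\varepsilon)/\omega}.
\]
For $c\le 2$ this sum is geometric or nearly flat. In general I would stop the chain at $h^\ast=\log(1/\varepsilon)$: the remainder $|v^{G,\cS}-u^{G,\cS}-v^{\cS,h^\ast}|\le\varepsilon(\cost(p,\cS)+\cost(p,\cA))$ is absorbed deterministically, since $\sum_p w_p(\cdot)$ is an unbiased estimate of $\cost(G,\cS)+\cost(G,\cA)$. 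This makes the truncated sum at most $\log(1/\varepsilon)\cdot\max(1,\varepsilon^{-(c-2)/2})\cdot\sqrt{(\cdots)/\omega}$. With the stated $\omega$, including the $\varepsilon^{-c}+\varepsilon^{-2}$ and $\log^4\varepsilon^{-1}$ factors, this is $O(\varepsilon)$. The $\log|P|$ factor is reduced to $\log(Ukz)$ by the standard trick of working with nets over the sample $\Omega$ only, so $|P|$ is replaced by $\omega$.

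The main obstacle I expect is the variance bound: obtaining $\min(\varepsilon^{-z},k)$ rather than $\varepsilon^{-z}$, and handling the random denominator, which requires the sample to satisfy $\sum_{p\in\Omega}w_p\cost(p,\cS)\lesssim\cost(G,\cS)+\cost(G,\cA)$ uniformly over $\cS$. My first attempt is to prove this uniform bound with the level-$1$ net and a union bound via Bernstein, then condition on that event for the chaining.
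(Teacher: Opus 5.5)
Your chaining skeleton matches the paper: a level-by-level maximum over at most $|\cN_h||\cN_{h+1}|$ Gaussians, $\log|P|$ replaced by $\log\omega$, and truncation at $h\approx\log(2/\varepsilon)$. The gap is in the step you flag as the main obstacle, the $\min(\varepsilon^{-z},k)$ variance bound, and the mechanism you propose cannot deliver it. A ``$1/k$ share of $\cost(G,\cA)$ times $\varepsilon^{-z}$'' still carries $\varepsilon^{-z}$. Cauchy--Schwarz over clusters gives $(\sum_C a_C)^2\le k\sum_C a_C^2$, which bounds the wrong quantity. Without geometric input, the max-times-sum bound is stuck at $(4z/\varepsilon)^z/\omega$. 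The missing ingredient is a per-point estimate. In a main group, $C\cap G$ is a single ring of a single $\cA$-cluster, so its points are pairwise within distance $O(\cost(p,\cA)^{1/z})$. The triangle inequality then gives, for every $p\in C\cap G$ and every $\cS$,
\[
\cost(p,\cS)\le \frac{2^z\cost(C\cap G,\cS)+4^z\cost(C\cap G,\cA)}{|C\cap G|}=:\frac{B_C}{|C\cap G|},
\]
which is Equation~\eqref{eq:triangle}. Combined with $w_p\le 4k|C\cap G|/\omega$ from Claim~\ref{cl:group-property}, this gives $w_p\cost(p,\cS)\le 4kB_C/\omega$. Then the elementary $\sum_C B_C^2\le(\sum_C B_C)^2\le 16^z(\cost(G,\cS)+\cost(G,\cA))^2$ yields the $k$-branch.

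The same per-point bound is what gives uniform control of $\sum_{p\in\Omega}w_p\cost(p,\cS)$ without any union bound over solutions. On the event $\cE^M_G$, the $k$ masses $\sum_{p\in C_i\cap G\cap\Omega}w_p$ are within $(1\pm\varepsilon)|C_i\cap G|$. These are $k$ events that do not depend on $\cS$, and on them $\sum_{p\in\Omega}w_p\cost(p,\cS)\le(1+\varepsilon)\sum_C B_C$ holds simultaneously for all $\cS$. Your Bernstein-plus-union-bound over $\cN_1$, without this bound, can only cap the largest summand by $(4z/\varepsilon)^z\cost(G,\cA)/\omega$. That forces $\omega\gtrsim\varepsilon^{-z}\log|\cN_1|$, which the stated $\omega$ does not provide when $k(\varepsilon^{-c}+\varepsilon^{-2})\ll\varepsilon^{-z}$.

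This uniform bound is also what makes your truncation remainder ``absorbed deterministically.'' Unbiasedness of $\sum_p w_p(\cdot)$ says nothing about a supremum over $\cS$.

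You also need to handle the complement of the good event. The paper does this by combining the worst-case variance $2^{O(z\log z)}\varepsilon^{-2z}/\omega$ with $\Pb{\bar\cE^M_G}\le\varepsilon^{2z}$ from Lemma~\ref{lem:good-estimate}.

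Finally, bucketing the $\cS$-dependent denominator is a legitimate precaution; the paper glosses over it. However, $\cost(G,\cS)$ includes huge clusters and is unbounded, so you get only $O(z\log(z/\varepsilon))$ buckets after first replacing the denominator by the non-huge part of $\cost(G,\cS)$ plus $\cost(G,\cA)$. That replacement only increases $|X_{G,\cS,h}|$, so it is harmless.
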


\begin{lemma}[Gaussian Process for Outer Groups Cost]\label{lem:gp-outer}
    Let $G \in G^O$. For $\omega = 2^{\lambda z\log z} \cdot Ukz^2 \cdot (\varepsilon^{-c} + \varepsilon^{-2})  \cdot \min(\varepsilon^{-z}, k) \cdot \log^4 \varepsilon^{-1} \cdot \log (Ukz)$, we have
    \[
        \E{\Omega, g}{\sup_\cS |X_{G, \cS}|} \leq \varepsilon.
    \]
\end{lemma}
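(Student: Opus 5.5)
The plan is to bound $\E{\Omega, g}{\sup_\cS |X_{G, \cS}|}$ for an outer group $G$ by the chaining (telescoping) decomposition in the definition of $X_{G,\cS}$, truncated at level $H = \lceil \log_2(4^{z+1}/\varepsilon)\rceil$. We fix the sample $\Omega$ and take expectation over $g$ first. By the triangle inequality,
\[
\sup_\cS |X_{G,\cS}| \le \sup_\cS |A^{1}_\cS| + \sum_{h=1}^{H-1}\sup_\cS |A^{h}_\cS| + \sup_\cS |T_\cS| ,
\]
where $A^{1}_\cS$ is the term built from $g_p w_p v^{\cS,1}_p$, $A^{h}_\cS$ is the term built from $g_p w_p(v^{\cS,h+1}_p-v^{\cS,h}_p)$, and $T_\cS$ is the remaining tail beyond level $H$. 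Each of these is normalized by $\cost(P^G,\cS)+\cost(P^G,\cA)$, which we lower bound by $\cost(P^G,\cA)\ge \cost(G,\cA)$ since $G\subseteq P^G$. The structural fact driving everything is the one singled out by the definition of the nets: $v^{\cS,h}_p$ is nonzero only for $p$ in clusters that are not far (\Cref{def:far}). For such $p$ we have $\cost(p,\cS)\le 4^z\cost(p,\cA)$, so
\[
|v^{\cS,h+1}_p - v^{\cS,h}_p| \le 2^{-h+1}(4^z+1)\cost(p,\cA).
\]

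\textbf{Variance bound.} Since $w_p = \cost(G,\cA)/(\omega\cost(p,\cA))$, every summand of $A^h_\cS$ has deterministic coefficient at most $2^{-h+1}(4^z+1)\cost(G,\cA)/\omega$ in absolute value. After normalization, each $A^h_\cS$ is therefore a centered Gaussian with variance at most $\omega\cdot\bigl(2^{-h+1}(4^z+1)/\omega\bigr)^2 = O(4^{2z}2^{-2h}/\omega)$, uniformly in $\cS$. The same computation with $|v^{\cS,1}_p|\le (4^z+1)\cost(p,\cA)$, up to a factor $2$, handles $A^1_\cS$.

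\textbf{Chaining over finitely many levels.} For a fixed $\Omega$, $A^h_\cS$ depends on $\cS$ only through the pair $(v^{\cS,h},v^{\cS,h+1})\in\cN_h\times\cN_{h+1}$, where the nets are taken for the point set $\Omega$ (or for $P$; either way only the logarithm of the size matters). By \Cref{fct:max-gauss},
\[
\E{g}{\sup_\cS |A^h_\cS|}\le O\!\left(4^z 2^{-h}\omega^{-1/2}\sqrt{\log|\cN_h|+\log|\cN_{h+1}|}\right)
\le O\!\left(4^z 2^{-h(1-c/2)}\sqrt{Ukz^2\log|P|\,\log(2^h\varepsilon^{-1})/\omega}\right),
\]
using the assumed bound on $|\cN_h|$. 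We then sum over $h\le H$. If $c<2$ the sum is a geometric series dominated by $h=1$. If $c\ge 2$ it is dominated by $h=H$, giving a factor $\varepsilon^{-(c-2)/2}$, and up to the $\log H$ factor when $c=2$. In all cases the total is
\[
O\!\left(4^z\sqrt{Ukz^2\log|P|\,(\varepsilon^{-c}+\varepsilon^{-2})\,\varepsilon^{2}\log^{3}\varepsilon^{-1}/\omega}\right),
\]
which is at most $\varepsilon/2$ for the stated $\omega$ with $\lambda$ large. The $\min(\varepsilon^{-z},k)$ factor in $\omega$ only helps, and the $\log|P|$ is absorbed by the logarithmic factors once the nets are taken relative to $\Omega$.

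\textbf{Tail.} The tail $T_\cS$ equals $\sum_{p\in\Omega} g_p w_p (v^{G,\cS}_p-u^{G,\cS}_p - v^{\cS,H}_p)$ normalized, and the set of $\cS$ is infinite, so we do not use a Gaussian maximum here. Instead we bound it deterministically by
\[
\sum_{p}|g_p|\,w_p\,2^{-H}(4^z+1)\cost(p,\cA)/\cost(G,\cA) = 2^{-H}(4^z+1)\,\frac{1}{\omega}\sum_p |g_p|,
\]
which has expectation at most $2^{-H}(4^z+1)\le\varepsilon/2$ by the choice of $H$. This step, keeping the supremum over the uncountable family of solutions under control, is the one I expect to be the main obstacle. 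The truncation resolves it because the net approximation error is uniform in $\cS$ and proportional to $\cost(p,\cA)$, which exactly cancels the importance weights. Taking expectation over $\Omega$ finally gives $\E{\Omega,g}{\sup_\cS|X_{G,\cS}|}\le\varepsilon$.
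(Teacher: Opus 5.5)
Your proof is correct and follows the paper's route. Like the paper, you chain with Gaussian maxima over pairs of consecutive nets taken relative to $\Omega$, use the outer-group variance bound $O(16^z 2^{-2h}/\omega)$ (from the non-far condition $\cost(p,\cS)\le 4^z\cost(p,\cA)$ cancelling the importance weights), and sum over $O(\log \varepsilon^{-1})$ levels. Your explicit deterministic tail bound replaces the paper's appeal to Lemma~2 of \cite{Cohen-AddadSS21} for truncating the chain. Lower-bounding the $\cS$-dependent normalization by the fixed quantity $\cost(G,\cA)$ makes the count of distinct processes by $|\cN_h|\cdot|\cN_{h+1}|$ literally valid, a point the paper glosses over; both are welcome tightenings of steps the paper only asserts.
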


\paragraph{Huge and far clusters.} Estimating \(\|u^{G, \cS}\|_1\) does not require a chaining argument but a refined control over the estimator's variance. Specifically, we will show the following lemmas:
\begin{lemma}[Huge Clusters Estimate]\label{lem:huge}
    Let $G \in G^M$. For $\omega = 2^{\lambda z\log z} \cdot Ukz^2 \cdot (\varepsilon^{-c} + \varepsilon^{-2})  \cdot \min(\varepsilon^{-z}, k) \cdot \log^4 \varepsilon^{-1} \cdot \log (Ukz)$, we have
    \[
        \E{\Omega}{\sup_\cS \left|\frac{\sum_{p \in \Omega} w_pu_p^{G, \cS} - \|u^{G, \cS}\|_1}{\cost(G, \cS) + \cost(G, \cA)}\right|} \leq \varepsilon.
    \]
\end{lemma}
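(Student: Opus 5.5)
We prove the bound for a fixed main group $G$ by splitting the error into an expectation-bias part and a deviation part. The huge clusters are handled by a variance bound followed by a Bernstein-type concentration; no chaining is needed. Write $Z_\cS = \sum_{p\in\Omega} w_p u_p^{G,\cS}$, which is unbiased for $\|u^{G,\cS}\|_1$ because of the importance weights $w_p = \cost(G,\cA)/(\omega \cost(p,\cA))$.

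\emph{Step 1: approximate the huge part by a quantity independent of $\cS$.} Fix a huge cluster $C\cap G$ and a point $q\in C\cap G$ with $\cost(q,\cS)\ge (4z/\varepsilon)^z\cost(q,\cA)$. By Claim~\ref{cl:group-property}, all points of $C\cap G$ have $\cA$-costs within a factor $2$ of each other. Applying Fact~\ref{fct:triangle} with $\vartheta=\varepsilon/z$ through the center $c_i$ then gives, for every $p\in C\cap G$,
\[
|\cost(p,\cS)-\cost(c_i,\cS)|\le O(\varepsilon)\cost(c_i,\cS).
\]
So it suffices to control the random quantity $\sum_{p\in\Omega\cap C} w_p$ against $|C\cap G|$, since $u^{G,\cS}$ restricted to $C$ is, up to a $(1\pm O(\varepsilon))$ factor, $\cost(c_i,\cS)\cdot\ind{p\in C\cap G}$.

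\emph{Step 2: concentrate the cluster masses.} By Claim~\ref{cl:group-property}, each sampled point of $C\cap G$ contributes weight $w_p\le 4k|C\cap G|/\omega$. Bernstein (Fact~\ref{fct:bernstein}) with $\omega\gtrsim k\varepsilon^{-2}\log(k/\varepsilon)$ therefore yields
\[
\sum_{p\in\Omega\cap C\cap G} w_p=(1\pm\varepsilon)|C\cap G|
\]
simultaneously for all $k$ clusters, with failure probability $\mathrm{poly}(\varepsilon/k)$. Crucially, these events do not depend on $\cS$, so the supremum over $\cS$ is free on the good event. Summing over the clusters, the estimate errs by at most
\[
O(\varepsilon)\sum_{C\in H_{G,\cS}}|C\cap G|\cost(c_i,\cS)\le O(\varepsilon)\left(\cost(G,\cS)+\cost(G,\cA)\right).
\]
Here the last step uses Step 1 again, giving $|C\cap G|\cost(c_i,\cS)\le(1+O(\varepsilon))\cost(C\cap G,\cS)$.

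\emph{Step 3: the failure event (main obstacle).} On the failure event the normalized error must still be bounded in expectation. Since weights are nonnegative, the normalized error is at most
\[
1+\sum_{C}\frac{\sum_{p\in\Omega\cap C} w_p}{|C\cap G|},
\]
and this quantity has expectation $O(k)$ uniformly in $\cS$. Multiplying by the failure probability $\mathrm{poly}(\varepsilon/k)$ therefore contributes at most $\varepsilon$. The subtle point is making the Step 1 sandwich work even when the huge witness $q$ differs from $p$. This is exactly where the ratio $(4z/\varepsilon)^z$ in Definition~\ref{def:huge} is used: it guarantees that $\cost(c_i,\cS)$ dominates $\cost(p,c_i)$ by a factor $\varepsilon^{-z}$ up to $2^{O(z\log z)}$ factors. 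These factors are absorbed into $\omega$ through the $2^{\lambda z\log z}$ term.
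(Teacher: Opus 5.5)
Your route is essentially the paper's. You condition on the $\cS$-independent event $\cE^M_G$ of Lemma~\ref{lem:good-estimate}. You show that $\cS$-costs are nearly constant on a huge cluster; the paper does this pairwise in Lemma~\ref{lem:huge-cost-point}, and your comparison with $\cost(c_i,\cS)$ via the single witness $q$ is equivalent. It also makes explicit that only $q$ needs the inequality of Definition~\ref{def:huge}. Finally, you pay an $O(k)$ worst-case error times $\Pb{\bar\cE^M_G}\le\varepsilon/(Ck)$ on the failure event.

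One step is not valid as written. In Step 3 you show that $M=1+\sum_C W_C/|C\cap G|$, with $W_C=\sum_{p\in\Omega\cap C\cap G}w_p$, has expectation $O(k)$, and then you multiply by the failure probability. What you actually need is a bound on $\E{}{M\cdot\ind{\bar\cE^M_G}}$. Knowing $\E{}{M}=O(k)$ says nothing about $\E{}{M\mid\bar\cE^M_G}$, and the failure event is precisely where some $W_C$ may be abnormally large. The fix uses a bound you already state in Step 2. By Claim~\ref{cl:group-property}, $w_p\le 4k|C\cap G|/\omega$ for $p\in C\cap G$, so each of the $\omega$ samples adds at most $4k/\omega$ to $\sum_C W_C/|C\cap G|$. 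Hence $M\le 1+O(k)$ \emph{almost surely}, with the constant absorbing the factor $\tfrac{1+2\varepsilon}{1-2\varepsilon}$ from Step 1. This deterministic bound is what the paper uses, in the form $\sum_{p\in\Omega}w_pu_p^{G,\cS}\le 4k\|u^{G,\cS}\|_1$.

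Separately, drop the remark that $2^{O(z\log z)}$ factors from Step 1 are ``absorbed into $\omega$''. Step 1 is a deterministic approximation, and no amount of sampling improves it. No such factors arise anyway if you argue directly rather than through Fact~\ref{fct:triangle} with $\vartheta=\varepsilon/z$. Directly, $|d(p,\cS)-d(c_i,\cS)|\le d(p,c_i)\le 2^{1/z}\tfrac{\varepsilon}{4z}\,d(q,\cS)\le\tfrac{\varepsilon}{z}\,d(c_i,\cS)$. Hence $\cost(p,\cS)\in(1\pm 2\varepsilon)\cost(c_i,\cS)$ with absolute constants.
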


\begin{lemma}[Far Clusters Estimate]\label{lem:far}
    Let $G \in G^O$. For $\omega = 2^{\lambda z\log z} \cdot Ukz^2 \cdot (\varepsilon^{-c} + \varepsilon^{-2})  \cdot \min(\varepsilon^{-z}, k) \cdot \log^4 \varepsilon^{-1} \cdot \log (Ukz)$, we have
    \[
        \E{\Omega}{\sup_\cS \left|\frac{\sum_{p \in \Omega} w_pu_p^{G, \cS} - \|u^{G, \cS}\|_1}{\cost(P^G, \cS) + \cost(P^G, \cA)}\right|} \leq \varepsilon.
    \]
\end{lemma}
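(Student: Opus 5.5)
The plan is to reduce the supremum over $\cS$ to finitely many fixed events, one per cluster, and then control each event by Bernstein (Fact~\ref{fct:bernstein}) with a union bound over the $k$ clusters. The key point is that for a far cluster the cost of each point of $C_i\cap G$ with respect to $\cS$ is, up to controlled error, determined by the single quantity $\cost(c_i,\cS)$. So the $\cS$-dependence of $\sum_{p\in\Omega} w_p u^{G,\cS}_p$ should collapse to a linear combination
\[
\sum_{i\,:\, C_i\cap G\in F_{G,\cS}} \cost(c_i,\cS)\cdot \widehat{m}_i ,
\qquad \widehat{m}_i=\sum_{p\in\Omega\cap C_i\cap G} w_p .
\]
Here $\widehat{m}_i$ does not depend on $\cS$. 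This reduction avoids any chaining for this term.

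Concretely, I first apply Fact~\ref{fct:triangle} with $\vartheta=\varepsilon$ to each $p\in C_i\cap G$:
\[
\cost(p,\cS)=(1\pm \varepsilon)\cost(c_i,\cS)\pm (z/\varepsilon)^{z-1}\cost(p,\cA).
\]
This gives a two-sided sandwich of $u^{G,\cS}_p$, and the same sandwich of the estimator, by $\cost(c_i,\cS)$-terms and $\cost(p,\cA)$-terms. The estimator then splits into three parts.

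\begin{enumerate}
\item \emph{The $\cost(c_i,\cS)\widehat m_i$ part.} I need $\widehat m_i=(1\pm\varepsilon)|C_i\cap G|$ simultaneously for all $i$ that intersect $G$. The per-sample term $w_p$ is at most $\cost(G,\cA)/(\omega\min_{p\in C_i\cap G}\cost(p,\cA))$. Since $G\subseteq R_O$, the outer-group definition gives $\cost(G,\cA)\le 2k\cost(R_O(C_i),\cA)$, which is analogous to Claim~\ref{cl:group-property}. The outer-ring points of $C_i$ all have cost within a factor $2^{O(1)}$ per ring. With these, Bernstein gives concentration with failure probability $\exp(-\omega\varepsilon^2/(2^{O(z)}k))$ per cluster, and a union bound over $k$ clusters suffices for the stated $\omega$.
\item \emph{The $\varepsilon\cost(c_i,\cS)$ error part.} I charge it using averaging over the whole cluster:
\[
|C_i\cap G|\cost(c_i,\cS)\le |C_i|\cost(c_i,\cS)\le 2^{z-1}\bigl(\cost(C_i,\cS)+\cost(C_i,\cA)\bigr).
\]
The right-hand side is at most $2^{z-1}(\cost(P^G,\cS)+\cost(P^G,\cA))$, which is exactly why the normaliser here is $P^G$ rather than $G$.
\item \emph{The $\cost(p,\cA)$ part.} Its estimator is again independent of $\cS$ except through which clusters are far. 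I therefore show per-cluster concentration of $\sum_{p\in\Omega\cap C_i\cap G}w_p\cost(p,\cA)$ around $\cost(C_i\cap G,\cA)$. This is the easy case, since there $w_p\cost(p,\cA)=\cost(G,\cA)/\omega$ is constant. A union bound over the $2k$ events then covers every subset of far clusters at once.
\end{enumerate}

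The main obstacle is the $\cost(p,\cA)$ error term, because its coefficient $(z/\varepsilon)^{z-1}$ is large. Bounding it only against $\cost(G,\cA)\le\cost(P^G,\cA)$ loses a $(z/\varepsilon)^{z-1}$ factor. I would first try to use the far condition to avoid this. Far means that some $p\in C_i\cap G$ has $\cost(p,\cS)\ge 4^z\cost(p,\cA)$, so $d(c_i,\cS)\gtrsim d(p,\cA)$. Because points of one outer ring have comparable $\cost(\cdot,\cA)$, this should give $\cost(q,\cA)\le 2^{O(z)}\cost(c_i,\cS)$ for all $q$ in that ring. I can then use Fact~\ref{fct:triangle} with $\vartheta$ a constant instead of $\varepsilon$, absorbing the $\cost(q,\cA)$ terms into $2^{O(z)}\cost(c_i,\cS)$. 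The required $\varepsilon$-accuracy then comes only from the concentration of $\widehat m_i$ and of the $\cA$-cost estimator, each to relative error $\varepsilon/2^{O(z)}$. That is affordable with the $2^{\lambda z\log z}$ factor in $\omega$.

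Finally, I convert the high-probability bounds into the expectation bound by a tail integration, using the crude almost-sure bound on the ratio for the failure event, and rescale $\varepsilon$.
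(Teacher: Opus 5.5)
\emph{Gap in steps 1 and 3.} Your reduction assumes that the points of $C_i\cap G$ have comparable $\cA$-costs, and this fails for outer groups. The definition of $G^O_b$ depends only on the cluster index $i$, so $C_i\cap G$ is the whole outer ring $R_O(C_i)=\bigcup_{j>2z\log(z/\varepsilon)}R_{ij}$. Its points have $\cost(p,\cA)$ anywhere between $(z/\varepsilon)^{2z}\Delta_{C_i}$ and $\cost(C_i,\cA)=|C_i|\Delta_{C_i}$; factor-$2$ comparability holds only inside a single $R_{ij}$. Hence the weights $w_p=\cost(G,\cA)/(\omega\cost(p,\cA))$ need not lie within a $2^{O(z)}$ factor of one another. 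Take one very expensive point together with many points near the threshold. Then the cheap points are essentially never sampled and $\widehat m_i\ll|C_i\cap G|$, so the relative concentration $\widehat m_i=(1\pm\varepsilon)|C_i\cap G|$ of your step 1 does not hold for the stated $\omega$.

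The same premise breaks your fix for the main obstacle. The far condition bounds $d(c_i,\cS)$ from below only through one point $p$. Points of $C_i\cap G$ in higher rings can lie much farther from $c_i$ than $d(c_i,\cS)$, so $\cost(q,\cA)\le 2^{O(z)}\cost(c_i,\cS)$ fails for them. Even where it holds, the triangle inequality for powers with constant $\vartheta$ fixes $\cost(q,\cS)$ only up to a $2^{O(z)}$ factor. Concentration of the $\cS$-independent quantities $\widehat m_i$ and the $\cA$-cost estimator therefore cannot give $\varepsilon$-accuracy uniformly over $\cS$.

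\emph{Missing idea.} For a far cluster of an outer group, the estimator does not need to track the truth at all; this is how the paper argues. Both $\cost(C_i\cap G,\cS)$ and $\sum_{p\in\Omega\cap C_i\cap G}w_p\cost(p,\cS)$ are at most $\varepsilon\cost(C_i,\cS)$, and $\cost(C_i,\cS)$ is part of the normaliser $\cost(P^G,\cS)$. The argument runs as follows.
\begin{enumerate}
\item By Markov, $|C_i\cap G|\le(\varepsilon/z)^{2z}|C_i|$.
\item The lower bound $\cost(p,\cA)\ge(z/\varepsilon)^{2z}\Delta_{C_i}$ on $C_i\cap G$ caps every weight. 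The only concentration needed is your easy step 3: $\sum_{p\in\Omega\cap C_i\cap G}w_p\cost(p,\cA)=(1\pm\varepsilon)\cost(C_i\cap G,\cA)$ for all $i$. On that event you get $\widehat m_i\le(1+\varepsilon)(\varepsilon/z)^{2z}|C_i|$.
\item The single far point gives $\cost(c_i,\cS)\gtrsim(z/\varepsilon)^{2z}\Delta_{C_i}$. So all but an $\varepsilon$-fraction of $C_i$ pays roughly $\cost(c_i,\cS)$, and up to $2^{O(z)}$ factors $\cost(C_i,\cS)\gtrsim|C_i|\cost(c_i,\cS)\gtrsim(z/\varepsilon)^{2z}\cost(C_i,\cA)$.
\end{enumerate}
The last bound absorbs the $(z/\varepsilon)^{z-1}\cost(\cdot,\cA)$ term you flagged: that term is charged to the $\cS$-cost of the whole cluster, not to $\cost(c_i,\cS)|C_i\cap G|$.

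Your step 2 inequality $|C_i|\cost(c_i,\cS)\le 2^{z-1}(\cost(C_i,\cS)+\cost(C_i,\cA))$ is the right kind of charging. It should be paired with $|C_i\cap G|,\widehat m_i\ll|C_i|$ rather than with a relative concentration. If you want to keep your decomposition, Bernstein does give the additive bound $\bigl|\widehat m_i-|C_i\cap G|\bigr|\le\varepsilon(\varepsilon/z)^{2z}|C_i|$ with failure probability $\exp(-\Omega(\varepsilon^2\omega/k))$, and that suffices for the main term. With that in place, your union bound and the crude almost-sure bound on the failure event go through as planned.
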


Proving Lemmas~\ref{lem:gp-main},~\ref{lem:gp-outer},~\ref{lem:huge},~\ref{lem:far} is the crux of the analysis. Before proceeding with those, we show how they imply, together with Lemma~\ref{lem:symm}, Lemma~\ref{lem:main-groups} and Lemma~\ref{lem:outer-groups}. 

\begin{proof}[Proof of Lemma~\ref{lem:main-groups}]
    We have that
    \begin{align*}
        &\E{\Omega}{\sup_\cS\frac{D^\Omega_\cS(G)}{\cost(G, \cS) + \cost(G, \cA)}} \\
        &= \E{\Omega}{\sup_\cS\left|\frac{\|v^{G, \cS} - u^{G, \cS}\|_1 + \|u^{G, \cS}\|_1 - \sum_{p \in \Omega} \left( w_pu_p^{G, \cS} +   w_p(v^{G, \cS}_p-u^{G, \cS}_p)\right)}{\cost(G, \cS) + \cost(G, \cA)}\right|}\\
        &\leq \E{\Omega}{\sup_\cS\left|\frac{\sum_{p \in \Omega} w_pu_p^{G, \cS} - \|u^{G, \cS}\|_1} {\cost(G, \cS) + \cost(G, \cA)}\right|} + \E{\Omega}{\sup_\cS\left|\frac{ \sum_{p \in \Omega}    w_p(v^{G, \cS}_p-u^{G, \cS}_p) - \|v^{G, \cS} - u^{G, \cS}\|_1}{\cost(G, \cS) + \cost(G, \cA)}\right|}\\
        &\leq \E{\Omega}{\sup_\cS\left|\frac{\sum_{p \in \Omega} w_pu_p^{G, \cS} - \|u^{G, \cS}\|_1} {\cost(G, \cS) + \cost(G, \cA)}\right|} + \E{\Omega}{\sup_\cS\left|\sum_{p \in \Omega}\frac{Y_{G, p, \cS} - \E{}{Y_{G, p, \cS}}}{\cost(G, \cS) + \cost(G, \cA)}\right|}\\
        &\leq \E{\Omega}{\sup_\cS\left|\frac{\sum_{p \in \Omega} w_pu_p^{G, \cS} - \|u^{G, \cS}\|_1} {\cost(G, \cS) + \cost(G, \cA)}\right|} + \sqrt{2\pi} \cdot \E{\Omega, g}{\sup_\cS |X_{G, \cS}|} \tag{Lemma~\ref{lem:symm}}\\
        &\leq \varepsilon + \sqrt{2\pi} \varepsilon \tag{Lemmas~\ref{lem:huge} and~\ref{lem:gp-main}}.
    \end{align*}
    Rescaling $\varepsilon$ by a $1 + \sqrt{2\pi}$ factor concludes the proof.
\end{proof}

We, thus, turn our attention to outer groups $G^O$, where the proof is completely identical to the above, with $P^G$ in place of $G$, and Lemmas~\ref{lem:far},~\ref{lem:gp-outer} in place of Lemmas~\ref{lem:huge},~\ref{lem:gp-main} respectively:

\begin{proof}[Proof of Lemma~\ref{lem:outer-groups}]
    We have that
    \begin{align*}
        &\E{\Omega}{\sup_\cS\frac{D^\Omega_\cS(G)}{\cost(P^G, \cS) + \cost(P^G, \cA)}} \\
        &= \E{\Omega}{\sup_\cS\left|\frac{\|v^{G, \cS} - u^{G, \cS}\|_1 + \|u^{G, \cS}\|_1 - \sum_{p \in \Omega} \left( w_pu_p^{G, \cS} +   w_p(v^{G, \cS}_p-u^{G, \cS}_p)\right)}{\cost(P^G, \cS) + \cost(P^G, \cA)}\right|}\\
        &\leq \E{\Omega}{\sup_\cS\left|\frac{\sum_{p \in \Omega} w_pu_p^{G, \cS} - \|u^{G, \cS}\|_1} {\cost(P^G, \cS) + \cost(P^G, \cA)}\right|} + \E{\Omega}{\sup_\cS\left|\frac{ \sum_{p \in \Omega}    w_p(v^{G, \cS}_p-u^{G, \cS}_p) - \|v^{G, \cS} - u^{G, \cS}\|_1}{\cost(P^G, \cS) + \cost(P^G, \cA)}\right|}\\
        &\leq \E{\Omega}{\sup_\cS\left|\frac{\sum_{p \in \Omega} w_pu_p^{G, \cS} - \|u^{G, \cS}\|_1} {\cost(P^G, \cS) + \cost(P^G, \cA)}\right|} + \E{\Omega}{\sup_\cS\left|\sum_{p \in \Omega}\frac{Y_{G, p, \cS} - \E{}{Y_{G, p, \cS}}}{\cost(P^G, \cS) + \cost(P^G, \cA)}\right|}\\
        &\leq \E{\Omega}{\sup_\cS\left|\frac{\sum_{p \in \Omega} w_pu_p^{G, \cS} - \|u^{G, \cS}\|_1} {\cost(P^G, \cS) + \cost(P^G, \cA)}\right|} + \sqrt{2\pi} \cdot \E{\Omega, g}{\sup_\cS |X_{G, \cS}|} \tag{Lemma~\ref{lem:symm}}\\
        &\leq \varepsilon + \sqrt{2\pi} \varepsilon \tag{Lemmas~\ref{lem:far} and~\ref{lem:gp-outer}}.
    \end{align*}
    Rescaling $\varepsilon$ by a $1 + \sqrt{2\pi}$ factor concludes the proof.
\end{proof}

\subsection{Estimating \texorpdfstring{\(\|v^{G, \cS} - u^{G, \cS}\|_1\)}{vGS-uGS1}: A Gaussian Process to Estimate Groups Costs}

In this section, we set out to show Lemmas~\ref{lem:gp-main} and \ref{lem:gp-outer}. For both of them, i.e., whether $G \in G^M$ or $G \in G^O$, we define the following random variables starting from the definition of $X_{G, \cS}$:
\begin{align*}
    X_{G, \cS, 0} = \frac{\sum_{p \in \Omega} g_p \cdot w_pv_p^{\cS,1}}{\cost(G, \cS) + \cost(G, \cA)} ~&\text{ and }~
    X_{G, \cS, h} = \frac{\sum_{p \in \Omega} g_p \cdot w_p(v^{\cS,h+1}_p-v^{\cS,h}_p)}{\cost(G, \cS) + \cost(G, \cA)},~\forall G \in G^M\\
    X_{G, \cS, 0} = \frac{\sum_{p \in \Omega} g_p \cdot w_pv_p^{\cS,1}}{\cost(P^G, \cS) + \cost(P^G, \cA)} ~&\text{ and }~
    X_{G, \cS, h} = \frac{\sum_{p \in \Omega} g_p \cdot w_p(v^{\cS,h+1}_p-v^{\cS,h}_p)}{\cost(P^G, \cS) + \cost(P^G, \cA)},~\forall G \in G^O.
\end{align*}
We may observe that
\begin{align}\label{eq:start-gp-main}
     \E{\Omega, g}{\sup_\cS |X_{G, \cS}|} \leq \sum_{h=0}^\infty  \E{\Omega, g}{\sup_\cS |X_{G, \cS, h}|}.
\end{align}
Therefore, we need to have a handle on the number of distinct $v^{S, h}$'s vectors as well as on the variance of each single $X_{G, \cS, h}$ in order to be able to bound the above quantity. Specifically, we will show the lemmas through the following considerations:
\begin{enumerate}
    \item $X_{G, \cS, h}$ is a Gaussian random variable with variance equal to
    \[
        \sum_{p \in \Omega} \frac{w_p(v^{\cS,h+1}_p-v^{\cS,h}_p)}{\cost(G, \cS) + \cost(G, \cA)} \leq \sigma \text{ or } \sum_{p \in \Omega} \frac{w_p(v^{\cS,h+1}_p-v^{\cS,h}_p)}{\cost(P^G, \cS) + \cost(P^G, \cA)} \leq \sigma,
    \]
    depending on whether we are dealing with main or outer groups, and for some $\sigma$ we will compute;
    \item There are at most $|\cN_{h-1}| \cdot |\cN_h|$ many distinct $v^{S, h}$'s vectors that cover $\cS$;
    \item It is a standard fact that
    \[
        \E{}{\max_{i \in [n]} |g_i|} \leq 2\sigma\sqrt{\ln n},
    \]
    for $n$ Gaussian random variables $g_i$, whose variance is at most $\sigma$ (see Fact~\ref{fct:max-gauss}).
\end{enumerate}
The above proof skeleton allows us to show the desired bounds for both main and outer groups. The former, however, requires a more careful analysis, while the latter is much simpler. We formalize the arguments for the former next and for the latter at the end of this section. The reason why the two analyses have to be treated separately is that, in the case of outer groups $G^O$, the variance bound $\sigma$ we are able to obtain is much better than the one achievable for main groups $G^M$ in the worst case. Even more importantly, the worst-case variance bound for $G^M$ is not enough by itself to guarantee the promised inequality of Lemma~\ref{lem:gp-main}. We, thus, need to split the proof for main groups into two cases: One where the estimator is good for the size of every cluster in $\cA$ with high probability. The other where the estimator is far from being accurate, but the probability of this happening is vanishingly small.

\subsubsection{Chaining for Main Groups \texorpdfstring{$G^M$}{GM}} 

As mentioned, we next show that $|C \cap G|$ is well approximated for all clusters $C$, provided enough points are sampled, with high probability. To this end, define
\[
    \cE^M_G = \left\{\forall i \in [k],~ \sum_{p \in C_i \cap G \cap \Omega} w_p = \sum_{p \in C_i \cap G \cap \Omega} \frac{\cost(G, \cA)}{\omega \cdot \cost(p, \cA)} \in (1\pm\varepsilon) \cdot |C_i \cap G|\right\}.
\]

\begin{lemma}\label{lem:good-estimate}
    Let $G \in G^M$. Then,
    \[
        \Pb{\cE^M_G} \geq 1 - k \cdot \exp\left(-\frac{\varepsilon^2\omega}{9k}\right).
    \]
\end{lemma}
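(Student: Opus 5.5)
Fix a cluster $C_i$ with $C_i\cap G\neq\emptyset$. We apply Bernstein's inequality (Fact~\ref{fct:bernstein}) to the estimate of $|C_i\cap G|$, and then take a union bound over the at most $k$ clusters.

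\textbf{Setting up the random variables.} For the $t$-th of the $\omega$ independent draws from $G$, define
\[
X_t = \frac{\cost(G,\cA)}{\omega\cdot\cost(p_t,\cA)}\cdot \ind{p_t\in C_i\cap G},
\]
where $p_t$ is the sampled point. Then $\sum_t X_t=\sum_{p\in C_i\cap G\cap\Omega}w_p$. Since $p_t$ is drawn with probability $\cost(p,\cA)/\cost(G,\cA)$, we get
\[
\mathbb{E}[X_t]=\sum_{p\in C_i\cap G}\frac{\cost(p,\cA)}{\cost(G,\cA)}\cdot\frac{\cost(G,\cA)}{\omega\cost(p,\cA)}=\frac{|C_i\cap G|}{\omega},
\]
so $\mathbb{E}[S]=|C_i\cap G|$ for $S=\sum_t X_t$.

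\textbf{Bounding $M$ and the variance.} The key input is Claim~\ref{cl:group-property}: for every $p\in C_i\cap G$,
\[
\cost(G,\cA)\le 4k\,|C_i\cap G|\,\cost(p,\cA).
\]
This gives the almost-sure bound $X_t\le M:=4k|C_i\cap G|/\omega$. For nonnegative variables bounded by $M$, $\V{}{X_t}\le \mathbb{E}[X_t^2]\le M\,\mathbb{E}[X_t]$. Summing over $t$,
\[
\sum_t\V{}{X_t}\le M\,|C_i\cap G| = \frac{4k|C_i\cap G|^2}{\omega}.
\]

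\textbf{Applying Bernstein.} Take deviation $\varepsilon|C_i\cap G|$. The exponent is
\[
\frac{\varepsilon^2|C_i\cap G|^2}{8k|C_i\cap G|^2/\omega+\tfrac{8}{3}k\varepsilon|C_i\cap G|^2/\omega}\;\ge\;\frac{\varepsilon^2\omega}{(8+\tfrac83\varepsilon)k}.
\]
Matching the stated constant $9$ needs some care. The crude variance bound already produces the constant $8$, and adding the $\tfrac83\varepsilon$ term pushes the denominator past $9$ for $\varepsilon$ near $1/2$. I would tighten one of the two pieces. One option is the two-sided version of Claim~\ref{cl:group-property}: $\cost(p,\cA)$ varies by at most a factor $2$ within the ring, and ring costs within the group are within a factor $2$ of each other. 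This should sharpen either $M$ or the variance estimate. Alternatively, I would argue that the constant in the statement is loose and absorb the difference into $\lambda$. Either way this is the only delicate step: the structure of the argument is standard, and only the constant in the exponent is at stake.

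\textbf{Union bound.} The bound is two-sided, and there are at most $k$ clusters. A union bound over $i\in[k]$ gives failure probability at most $k\exp(-\varepsilon^2\omega/(9k))$. A possible factor of $2$ from the two tails is absorbed in the same way as the constant above. This yields $\Pb{\cE^M_G}$ as claimed.
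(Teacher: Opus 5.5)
Your proposal matches the paper's proof step for step: the same almost-sure bound $4k|C\cap G|/\omega$ and per-draw second-moment bound $4k|C\cap G|^2/\omega^2$ from Claim~\ref{cl:group-property}, Bernstein's inequality in the two-sided form stated in Fact~\ref{fct:bernstein} (so no extra factor $2$ arises), and a union bound over the $k$ clusters. The constant issue you flag is real and is present in the paper's proof too, whose exponent is also $\varepsilon^2\omega/((8+\tfrac{8}{3}\varepsilon)k)$ and hence at least $\varepsilon^2\omega/(9k)$ only for $\varepsilon\le 3/8$. Your first suggested fix does close it: $\cost(p,\cA)$ varies by at most a factor $2$ on $C\cap G$, so the Kantorovich inequality gives $\cost(C\cap G,\cA)\sum_{p\in C\cap G}\cost(p,\cA)^{-1}\le\tfrac{9}{8}|C\cap G|^2$, and together with $\cost(G,\cA)\le 2k\,\cost(C\cap G,\cA)$ this lowers the second-moment constant from $4k$ to $\tfrac{9}{4}k$ and the Bernstein denominator to $(\tfrac{9}{2}+\tfrac{8}{3}\varepsilon)k\le 9k$ for all $\varepsilon<1/2$.
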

\begin{proof}
    First note that for every cluster $C$ in $\cA$, $\E{}{\sum_{p \in C_i \cap G \cap \Omega} w_p} = |C \cap G|$. Hence, we need to show that these estimators concentrate around their expectation simultaneously. Consider the $j$-th sampled point $p_j$ in coreset $\Omega$, and let $w_{p_j, C} = w_p \cdot \ind{p_j \in C \cap G}$. Then,
    \begin{align*}
        \V{}{w_{p_j, C}} &\leq \E{}{w^2_{p_j, C}} = \sum_{p \in C \cap G} w^2_p \cdot \Pb{p \in \Omega} = \sum_{p \in C \cap G} \frac{\cost(G, \cA)}{\omega \cdot \cost(p, \cA)}\\
        &\leq \sum_{p \in C \cap G} \frac{2k \cdot \cost(C \cap G, \cA)}{\omega \cdot \cost(p, \cA)} \leq \frac{4k \cdot |C \cap G|^2}{\omega^2},
    \end{align*}
    where the second line of inequalities follows from Claim~\ref{cl:group-property}. Similarly, Claim~\ref{cl:group-property} also implies
    \[
        w_{p_j, C} = \frac{\cost(G, \cA)}{\omega \cdot \cost(p_j, \cA)} \leq \frac{4k \cdot | C \cap G|}{\omega}.
    \]
    Thus, using Bernstein's Inequality (Fact~\ref{fct:bernstein}), we obtain
    \begin{align*}
        \Pb{\left|\sum_{p \in C \cap \Omega} w_p - |C \cap G|\right| > \varepsilon \cdot |C \cap G|} &\leq \exp\left(-\frac{\varepsilon^2\cdot |C \cap G|^2}{2\omega \cdot \frac{4k \cdot | C \cap G|^2}{\omega^2} + \frac{2}{3} \cdot \frac{4k \cdot | C \cap G|}{\omega} \cdot \varepsilon \cdot |C \cap G|}\right) \\
        &\leq \exp\left(-\frac{\varepsilon^2\omega}{9k}\right).
    \end{align*}
    A union bound over all clusters in $\cA$ yields the lemma.
\end{proof}

\begin{lemma}\label{lem:variance-main}
    Let $G \in G^M$ and $\cS$ be an arbitrary fixed solution. Then, $X_{G, \cS, h}$ is a Gaussian random variable with mean $0$ and variance
    \[
        \sum_{p \in \Omega} \left(\frac{w_p(v^{\cS,h+1}_p-v^{\cS,h}_p)}{\cost(G, \cS) + \cost(G, \cA)}\right)^2 \leq \frac{2^{-2h+2} \cdot 2^{4z\log z}}{\omega} \cdot \varepsilon^{-2z}.
    \]
    Moreover, conditioned on event $\cE^M_G$, $X_{G, \cS, h}$'s variance is
    \[
        \sum_{p \in \Omega} \left(\frac{w_p(v^{\cS,h+1}_p-v^{\cS,h}_p)}{\cost(G, \cS) + \cost(G, \cA)}\right)^2 \leq \frac{2^{-2h+2} \cdot 2^{4z\log z}}{\omega} \cdot \min(\varepsilon^{-z}, k).
    \]
\end{lemma}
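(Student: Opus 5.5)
We need to prove Lemma \ref{lem:variance-main}: a variance bound for $X_{G,\cS,h}$. The plan is a direct computation. Conditioned on $\Omega$, the variable $X_{G,\cS,h}$ is a fixed linear combination of the independent standard Gaussians $g_p$. It is therefore centered Gaussian, and its variance is the sum of the squared coefficients, which is the displayed expression. The work is in bounding each term.

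\textbf{Bounding a single term.} First bound the numerator pointwise. By the net property applied at levels $h$ and $h+1$ and the triangle inequality,
\[
|v^{\cS,h+1}_p-v^{\cS,h}_p|\le 2^{-h+1}\bigl(\cost(p,\cS)+\cost(p,\cA)\bigr)
\]
for every $p$ in a non-huge cluster. For $p$ in a huge cluster both entries vanish. Restricting to $p\in C\cap G$ with $C$ interesting, we have $\cost(p,\cS)\le (4z/\varepsilon)^z\cost(p,\cA)$ up to a constant factor, transferred within the cluster using Claim~\ref{cl:group-property}. Hence
\[
|v^{\cS,h+1}_p-v^{\cS,h}_p|\lesssim 2^{-h}(4z/\varepsilon)^z\cost(p,\cA).
\]
Substituting $w_p=\cost(G,\cA)/(\omega\cost(p,\cA))$, each summand is at most
\[
\Bigl(\tfrac{2^{-h+1}(4z/\varepsilon)^z\cost(G,\cA)}{\omega(\cost(G,\cS)+\cost(G,\cA))}\Bigr)^2\le \tfrac{2^{-2h+2}(4z)^{2z}\varepsilon^{-2z}}{\omega^2}.
\]
Summing over the $\omega$ samples gives the first (unconditional) bound, since $(4z)^{2z}\le 2^{4z\log z}$ up to constants absorbed in the exponent.

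\textbf{The conditional bound.} Here a finer argument is needed, and this is the main obstacle: replacing $\varepsilon^{-2z}$ by $\min(\varepsilon^{-z},k)$. The plan is to split the term as a product rather than squaring the crude bound. Write
\[
\Bigl(\tfrac{w_p\,|\Delta_p|}{D}\Bigr)^2=\tfrac{w_p|\Delta_p|}{D}\cdot\tfrac{w_p|\Delta_p|}{D}, \qquad D=\cost(G,\cS)+\cost(G,\cA).
\]
One factor is bounded by the uniform bound above, with $\varepsilon^{-z}$ replaced using Claim~\ref{cl:group-property}. For the other, group the samples by cluster. For each interesting cluster $C$, use $\cost(p,\cS)\lesssim 2^z(\cost(C\cap G,\cS)/|C\cap G|+\cost(p,\cA))$. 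Combined with $\cE^M_G$, this gives $\sum_{p\in C\cap G\cap\Omega}w_p\le 2|C\cap G|$, so
\[
\sum_{p\in\Omega\cap C}w_p|\Delta_p|\lesssim 2^{-h}2^{z}\bigl(\cost(C\cap G,\cS)+\cost(C\cap G,\cA)\bigr).
\]
Summing over clusters bounds the second factor's total by $2^{-h}2^{O(z)}$.

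\textbf{Obtaining the $\min(\varepsilon^{-z},k)$ factor.} For the first factor there are two routes, and we take whichever is better:
\begin{itemize}
\item Bound it by $2^{-h}(4z/\varepsilon)^z\cost(G,\cA)/(\omega D)\le 2^{-h}2^{O(z\log z)}\varepsilon^{-z}/\omega$, which gives the $\varepsilon^{-z}$ term.
\item Use Claim~\ref{cl:group-property} to bound $w_p\cost(p,\cA)$ by $4k|C\cap G|\cost(p,\cA)/\omega$, and compare $\cost(p,\cS)$ with the cluster average. This gives a bound of order $2^{-h}2^{O(z)}k/\omega$, which is the $k$ term.
\end{itemize}
Multiplying the two factors yields the claimed $2^{-2h+2}2^{4z\log z}\min(\varepsilon^{-z},k)/\omega$.
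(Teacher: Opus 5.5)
Your proposal is correct and is essentially the paper's argument. Both bound $\sum_p a_p^2$, with $a_p = w_p|v^{\cS,h+1}_p-v^{\cS,h}_p|/(\cost(G,\cS)+\cost(G,\cA))$, by a uniform bound on one factor times a sum controlled by the cluster-average comparison~\eqref{eq:triangle} and $\cE^M_G$. The uniform factor comes from the non-huge ratio $(4z/\varepsilon)^z$ for the $\varepsilon^{-z}$ term, or from Claim~\ref{cl:group-property} together with~\eqref{eq:triangle} for the $k$ term; the paper pairs the two factors per cluster in the $k$ case, which gives the same bound.

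One simplification: you need not pass through interesting clusters and transfer the ratio bound within a cluster. That transfer would in any case cost a $2^{O(z)}$ factor via the triangle inequality, not just Claim~\ref{cl:group-property}. It is unnecessary because the net entries vanish on huge clusters, and every point of a non-huge cluster satisfies $\cost(p,\cS)<(4z/\varepsilon)^z\cost(p,\cA)$ directly by Definition~\ref{def:huge}.
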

Before proving the above lemma, let us show how it implies, together with Lemma~\ref{lem:good-estimate}, Lemma~\ref{lem:gp-main}.

\begin{proof}[Proof of Lemma~\ref{lem:gp-main}]
    Let us recall Equation~\eqref{eq:start-gp-main}. Consequently, we only need to focus on terms of the form $\E{\Omega, g}{\sup_\cS |X_{G, \cS, h}|}$, which, by the Law of Total Expectation, can be written as
    \begin{align}\label{eq:total-expectation}
        \E{\Omega, g}{\sup_\cS |X_{G, \cS, h}| \mid \cE^M_G} \cdot \Pb{\cE^M_G} + \E{\Omega, g}{\sup_\cS |X_{G, \cS, h}| \mid \bar \cE^M_G} \cdot \Pb{\bar \cE^M_G}.
    \end{align}
    
    \paragraph{Conditioning on $\cE^M_G$.} First, we note that $\Pb{\cE^M_G} \leq 1$. Moreover, by Lemma 2 of \citep{Cohen-AddadSS21}, it is enough to consider a chain up to $t \leq  \log 2\varepsilon^{-1}$, as the entire solution is captured by an $(\varepsilon, k, z)$-approximate centroid set. Using Fact~\ref{fct:max-gauss} and the fact that there are at most $|\cN_{h-1}| \cdot |\cN_h|$ many distinct $X_{G,\cS, h}$'s, we have
    \begin{align*}
        \E{\Omega, g}{\sup_\cS |X_{G, \cS, h}| \mid \cE^M_G} &\leq 2\sqrt{\V{}{X_{G, \cS, h} \mid \cE^M_G}} \cdot \sqrt{2\log(|\cN_{h-1}|\cdot|\cN_h|)}\\
        &\leq 2\sqrt{\frac{2^{-2h+2} \cdot 2^{4z\log z}}{\omega} \cdot \min(\varepsilon^{-z}, k) \cdot 4\log |\cN_h|} \tag{Lemma~\ref{lem:variance-main}}\\
        &\leq 2\sqrt{\frac{2^{(c-2)h} \cdot 2^{4(z\log z + 1)} \cdot \min(\varepsilon^{-z}, k) \cdot Ukz^2 \log(2^h\varepsilon^{-1})}{\omega\log^{-1} \omega}} \tag{Assumption}.
    \end{align*}
    The last inequality, in particular, is obtained since, for any point set of size $\omega$, $$|\cN_h| \leq \exp\left(Ukz^2\log \omega \cdot 2^{ch}\log(2^h\varepsilon^{-1})\right).$$ Note that when $c \leq 2$, the numerator above is maximized at $h=0$, and otherwise at $h = t = \log 2\varepsilon^{-1}$. Hence, as long as
    \begin{align*}
        \frac{\omega}{\log \omega} \geq 2^{4(z\log z + 2)} \cdot Ukz^2 \cdot (\varepsilon^{-c} + \varepsilon^{-2}) \log^3 \varepsilon^{-1} \cdot \min(\varepsilon^{-z}, k),
    \end{align*}
    we get
    \[
        \E{\Omega, g}{\sup_\cS |X_{G, \cS, h}| \mid \cE^M_G} \leq \frac{2\varepsilon}{\log \varepsilon^{-1}}.
    \]
    Using that, for all $a,b \geq 1$ such that $a \geq 2b\log b$, then $a \geq b\log a$, we would need $\omega \geq 2 \omega^\prime \cdot \log \omega^\prime$, where $\omega^\prime = 2^{4(z\log z + 2)} \cdot Ukz^2 \cdot (\varepsilon^{-c} + \varepsilon^{-2})  \cdot \min(\varepsilon^{-z}, k) \cdot \log^3 \varepsilon^{-1}$.
    Hence, it is sufficient to choose
    \[
        \omega = 2^{\lambda z\log z} \cdot Ukz^2 \cdot (\varepsilon^{-c} + \varepsilon^{-2})  \cdot \min(\varepsilon^{-z}, k) \cdot \log^4 \varepsilon^{-1} \cdot \log (Ukz),
    \]
    for some (large enough) constant $\lambda > 4$.
    As we are considering a chain up until $\log 2\varepsilon^{-1}$, we get
    \[
        \sum_{h=0}^{\log 2\varepsilon^{-1}}  \E{\Omega, g}{\sup_\cS |X_{G, \cS, h}| \mid \cE^M_G} \leq 4\varepsilon.
    \]
    
    \paragraph{Conditioning on $\bar \cE^M_G$.} Lemma~\ref{lem:good-estimate} and a union bound over all $z^2\log^2(z/\varepsilon)$ many groups gives us that 
    \[
        \Pb{\bar \cE^M_G} \leq k \cdot z^2\log^2(z\varepsilon^{-1}) \cdot \exp\left(-\frac{\varepsilon^2\omega}{9k}\right) \leq \varepsilon^{2z},
    \]
    where the last inequality follows by an appropriate choice of $\lambda$ in the $\omega$ expression. We will use the worse of the two variance bounds given in Lemma~\ref{lem:variance-main} for the term $\E{\Omega, g}{\sup_\cS |X_{G, \cS, h}| \mid \bar \cE^M_G}$. In particular, we have that
    \[
        \E{\Omega, g}{\sup_\cS |X_{G, \cS, h}| \mid \bar \cE^M_G} \leq 2\varepsilon^{-z} \frac{\varepsilon}{\log \varepsilon^{-1}}.
    \]
    Thus,
    \begin{align*}
        \sum_{h=0}^{\log 2\varepsilon^{-1}}  \E{\Omega, g}{\sup_\cS |X_{G, \cS, h}| \mid \bar \cE^M_G} \leq 4\varepsilon^{-z+1},
    \end{align*}
    analogously to the above derivation. Summing the obtained upper bound on the terms of Equation~\ref{eq:total-expectation}, we get
    \begin{align*}
        \sum_{h=0}^\infty  \E{\Omega, g}{\sup_\cS |X_{G, \cS, h}|} \leq 4(\varepsilon + \varepsilon^{-z+1} \cdot \varepsilon^{2z}) \leq 8\varepsilon.
    \end{align*}
    We rescale $\varepsilon$ by a factor $8$ to conclude.
\end{proof}

We are now ready to prove Lemma~\ref{lem:variance-main}.
\begin{proof}[Proof of Lemma~\ref{lem:variance-main}]
    We start by observing that $X_{G, \cS, h}$ is a sum of standard normal random variables multiplied by other (scalar) terms, succinctly $\sum_p a_p \cdot g_p$. As such, it is itself a centered Gaussian random variable with variance $\sum_p a^2_p$. Namely,
    \begin{align}
        \V{}{X_{G, \cS, h}} &= \sum_{p \in \Omega} \left(\frac{w_p(v^{\cS,h+1}_p-v^{\cS,h}_p)}{\cost(G, \cS) + \cost(G, \cA)}\right)^2 \nonumber\\
        &= \sum_{p \in \Omega} \left(\frac{w_p(v^{\cS,h+1}_p-\cost(p, \cS) + \cost(p, \cS) - v^{\cS,h}_p)}{\cost(G, \cS) + \cost(G, \cA)}\right)^2 \nonumber\\
        &\leq \sum_{p \in \Omega} \left(\frac{w_p \cdot 3 \cdot 2^{-h-1}\cdot\cost(p, \cS)}{\cost(G, \cS) + \cost(G, \cA)}\right)^2 \nonumber\\
        &= 9 \cdot 2^{-2h-2} \cdot \sum_{p \in \Omega}  \left(\frac{\frac{\cost(G, \cA)}{\omega\cdot\cost(p,\cA)} \cdot \cost(p, \cS)}{\cost(G, \cS) + \cost(G, \cA)}\right)^2 \label{eq:variance},
    \end{align}
    where the second inequality follows because $v^{\cS, h}_p \in (1\pm2^{-h}) \cdot \cost(p, \cS)$ since the vector $v^{\cS, h} \in \cN_h$, and similarly for $v^{\cS, h+1} \in \cN_{h+1}$. The first bound on variance directly follows since, by assumption, $p \in C \notin H_{G, \cS}$, that is $\frac{\cost(p, \cS)}{\cost(p, \cA)} \leq \left(\frac{4z}{\varepsilon}\right)^z$.

    For the second bound, we make use of the event $\cE^M_G$, so that we have a good estimate of cluster size for all clusters $C$. Consider $q \in \arg\min_{p \in C} \cost(p, \cS)$ to cheapest point in the cluster. Then, by Fact~\ref{fct:triangle}, for any point $p$ and any solution $\cS$, it holds that
    \begin{align}\label{eq:triangle}
        \cost(p, \cS) \leq (d(q, \cS) + d(p,q))^z \leq \frac{2^z \cdot \cost(C \cap G, \cS) + 4^z \cdot \cost(C \cap G, \cA)}{|C \cap G|}.
    \end{align}
    We now bound the expression in Equation~\ref{eq:variance} from above in the $\varepsilon^z < k$ case, and vice versa. In the first case, we have
    \begin{align*}
        & 9 \cdot 2^{-2h-2} \cdot \sum_{p \in \Omega}  \left(\frac{\frac{\cost(G, \cA)}{\omega\cdot\cost(p,\cA)} \cdot \cost(p, \cS)}{\cost(G, \cS) + \cost(G, \cA)}\right)^2 \\
        &\leq \frac{9 \cdot 2^{-2h-2} \cdot \left(\frac{4z}{\varepsilon}\right)^z \cdot \cost(G, \cA)}{\omega\cdot(\cost(G, \cS) + \cost(G, \cA))^2} \cdot \sum_{p \in \Omega} \frac{\cost(G, \cA)}{\omega \cdot \cost(p, \cA)} \cdot \cost(p, \cS) \tag{$p \in C \notin H_{G, \cS}$}\\
        &\leq \frac{9 \cdot 2^{-2h-2} \cdot \left(\frac{4z}{\varepsilon}\right)^z \cdot \cost(G, \cA)}{\omega\cdot(\cost(G, \cS) + \cost(G, \cA))^2} \cdot \sum_{C}\Bigg(\frac{2^z \cdot \cost(C \cap G, \cS) + 4^z \cdot \cost(C \cap G, \cA)}{|C \cap G|} \\
        &\hspace{10cm} \cdot \sum_{p \in C \cap G \cap \Omega} \frac{\cost(G, \cA)}{\omega \cdot \cost(p, \cA)}\Bigg) \tag{Equation~\ref{eq:triangle}}\\
        &\leq \frac{9 \cdot 2^{-2h-2} \cdot \left(\frac{4z}{\varepsilon}\right)^z \cdot \cost(G, \cA)}{\omega\cdot(\cost(G, \cS) + \cost(G, \cA))^2} \cdot (1+\varepsilon) \cdot \sum_{C} \left(2^z \cdot \cost(C \cap G, \cS) + 4^z \cdot \cost(C \cap G, \cA)\right) \tag{$\cE^M_G$ holds}\\
        &\leq \frac{2^{-2h+2} \cdot \left(\frac{4z}{\varepsilon}\right)^z \cdot \cost(G, \cA)}{\omega\cdot(\cost(G, \cS) + \cost(G, \cA))^2} \cdot 4^z \cdot (\cost(G, \cS) + \cost(G, \cA)) \tag{$9(1+\varepsilon) \leq 16$}\\
        &\leq \frac{2^{-2h+2} \cdot \left(\frac{4z}{\varepsilon}\right)^z \cdot 4^z}{\omega} \tag{$\cost(G, \cA)) \leq \cost(G, \cS) + \cost(G, \cA)$}.
    \end{align*}
    In the second case, we seek to obtain a bound depending on $k$. By Claim~\ref{cl:group-property}, we know that 
    \[
        \cost(G, \cA) \leq 2k \cdot \cost(C \cap G, \cA) \leq 4k \cdot |C \cap G| \cdot \cost(p, \cA),
    \]
    for all points $p \in C \cap G$. Thus,
    \begin{align*}
        & 9 \cdot 2^{-2h-2} \cdot \sum_{p \in \Omega}  \left(\frac{\frac{\cost(G, \cA)}{\omega\cdot\cost(p,\cA)} \cdot \cost(p, \cS)}{\cost(G, \cS) + \cost(G, \cA)}\right)^2 \\
        &\leq \frac{9 \cdot 2^{-2h-2}}{\omega\cdot(\cost(G, \cS) + \cost(G, \cA))^2} \cdot \sum_{C} \sum_{p \in C \cap G \cap \Omega} \frac{4k \cdot |C \cap G| \cdot \cost(p, \cA)}{\omega \cdot \cost(p, \cA)} \cdot \cost^2(p, \cS) \tag{Claim~\ref{cl:group-property}}\\
        &\leq \frac{9 \cdot 2^{-2h} \cdot k}{\omega\cdot(\cost(G, \cS) + \cost(G, \cA))^2} \cdot \sum_{C}\Bigg(|C \cap G| \cdot \left(\frac{2^z \cdot \cost(C \cap G, \cS) + 4^z \cdot \cost(C \cap G, \cA)}{|C \cap G|}\right)^2 \\
        &\hspace{7.2cm} \cdot \sum_{p \in C \cap G \cap \Omega} \frac{\cost(G, \cA)}{\omega \cdot \cost(p, \cA)}\Bigg) \tag{Equation~\ref{eq:triangle}}\\
        &\leq \frac{9 \cdot 2^{-2h} \cdot k}{\omega\cdot(\cost(G, \cS) + \cost(G, \cA))^2} \cdot (1+\varepsilon) \cdot \sum_{C} \left(2^z \cdot \cost(C \cap G, \cS) + 4^z \cdot \cost(C \cap G, \cA)\right)^2 \tag{$\cE^M_G$ holds}\\
        &\leq \frac{2^{-2h+4} \cdot k}{\omega\cdot(\cost(G, \cS) + \cost(G, \cA))^2} \cdot \left(\sum_{C} 2^z \cdot \cost(C \cap G, \cS) + 4^z \cdot \cost(C \cap G, \cA)\right)^2 \tag{$9(1+\varepsilon) \leq 16$}\\
        &\leq \frac{2^{-2h+2} \cdot k \cdot 16^z}{\omega} \tag{$\cost(G, \cA)) \leq \cost(G, \cS) + \cost(G, \cA)$}.
    \end{align*}
    This concludes the proof.
\end{proof}

\subsubsection{Chaining for Outer Groups \texorpdfstring{$G^O$}{GO}} 

We first state and prove an analog of the variance bound in Lemma~\ref{lem:variance-main}, but only in the worst-case, as this is sufficient. We, then, almost identically derive a proof of Lemma~\ref{lem:gp-outer}, as we did for Lemma~\ref{lem:gp-main}.

\begin{lemma}\label{lem:variance-outer}
    Let $G \in G^O$ and $\cS$ be an arbitrary fixed solution. Then, $X_{G, \cS, h}$ is a Gaussian random variable with mean $0$ and variance
    \[
        \sum_{p \in \Omega} \left(\frac{w_p(v^{\cS,h+1}_p-v^{\cS,h}_p)}{\cost(P^G, \cS) + \cost(P^G, \cA)}\right)^2 \leq \frac{2^{-2h+2} \cdot 16^z}{\omega}.
    \]
\end{lemma}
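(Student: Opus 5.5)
The argument follows the proof of Lemma~\ref{lem:variance-main}, but without any conditioning. Since $X_{G,\cS,h}=\sum_{p\in\Omega} a_p g_p$ with independent standard normals $g_p$, it is a centered Gaussian with variance $\sum_p a_p^2$. As in Equation~\eqref{eq:variance}, the net guarantees for $v^{\cS,h}$ and $v^{\cS,h+1}$ give
\[
|v^{\cS,h+1}_p-v^{\cS,h}_p|\le 3\cdot 2^{-h-1}\bigl(\cost(p,\cS)+\cost(p,\cA)\bigr).
\]
The term $\cost(p,\cA)$ must be kept here, since for outer groups the ratio $\cost(p,\cS)/\cost(p,\cA)$ is not bounded by $(4z/\eps)^z$. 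The entry is zero for far clusters. For $p\in C\cap G$ with $C\notin F_{G,\cS}$, every point of $C\cap G$ satisfies $\cost(p,\cS)\le 4^z\cost(p,\cA)$, so the difference is at most $3\cdot 2^{-h-1}(4^z+1)\cost(p,\cA)$. The cost of $\cS$ therefore disappears from the numerator.

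Substituting $w_p=\cost(G,\cA)/(\omega\cost(p,\cA))$, the factor $\cost(p,\cA)$ cancels, and each summand becomes at most
\[
\Bigl(\tfrac{3\cdot 2^{-h-1}(4^z+1)\cost(G,\cA)}{\omega(\cost(P^G,\cS)+\cost(P^G,\cA))}\Bigr)^2.
\]
There are $\omega$ sampled points, so
\[
\V{}{X_{G,\cS,h}}\le \frac{9\cdot 2^{-2h-2}(4^z+1)^2}{\omega}\cdot\Bigl(\frac{\cost(G,\cA)}{\cost(P^G,\cA)}\Bigr)^2 .
\]
Since $G\subseteq P^G$, the last ratio is at most $1$. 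Finally, $9(4^z+1)^2/4\le 4\cdot 16^z$ for $z\ge1$ (for example, $(4^z+1)^2\le \tfrac{25}{16}16^z$), which yields the claimed bound $2^{-2h+2}16^z/\omega$.

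The main difficulty is the first step, namely recording the right pointwise bound. The proof of Lemma~\ref{lem:variance-main} silently used $|v^{\cS,h}_p-\cost(p,\cS)|\le 2^{-h}\cost(p,\cS)$, and that is not sufficient here. One has to use the net guarantee in its stated form $2^{-h}(\cost(p,\cS)+\cost(p,\cA))$ together with the non-far condition $\cost(p,\cS)<4^z\cost(p,\cA)$. Once this is in place, the normalization by $\cost(P^G,\cA)\ge\cost(G,\cA)$ makes the bound hold without needing the event $\cE^M_G$, and this is exactly why the outer case is simpler than the main case.
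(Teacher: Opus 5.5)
Your proof is correct and follows the paper's argument. Both proofs split the net difference through $\cost(p,\cS)$, use the non-far condition $\cost(p,\cS)<4^z\cost(p,\cA)$ so that $\cost(p,\cA)$ cancels against $w_p$, sum over the $\omega$ samples, and use $\cost(G,\cA)\le\cost(P^G,\cA)$. The only difference is that you keep the additive $2^{-h}\cost(p,\cA)$ term of the net guarantee, which the paper drops by writing $v^{\cS,h}_p\in(1\pm 2^{-h})\cost(p,\cS)$; this gives a factor $(4^z+1)^2$ in place of $16^z$, and as you check, it still fits within the stated constant $2^{-2h+2}16^z$.
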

\begin{proof}
    Akin to Lemma~\ref{lem:variance-main}, we start by observing that $X_{G, \cS, h}$ is a sum of standard normal random variables multiplied by other (scalar) terms, succinctly $\sum_p a_p \cdot g_p$. As such, it is itself a centered Gaussian random variable with variance $\sum_p a^2_p$. Namely,
    \begin{align*}
        \V{}{X_{G, \cS, h}} &= \sum_{p \in \Omega} \left(\frac{w_p(v^{\cS,h+1}_p-v^{\cS,h}_p)}{\cost(P^G, \cS) + \cost(P^G, \cA)}\right)^2 \nonumber\\
        &= \sum_{p \in \Omega} \left(\frac{w_p(v^{\cS,h+1}_p-\cost(p, \cS) + \cost(p, \cS) - v^{\cS,h}_p)}{\cost(P^G, \cS) + \cost(P^G, \cA)}\right)^2 \nonumber\\
        &\leq \sum_{p \in \Omega} \left(\frac{w_p \cdot 3 \cdot 2^{-h-1}\cdot\cost(p, \cS)}{\cost(P^G, \cS) + \cost(P^G, \cA)}\right)^2 \nonumber\\
        &= 9 \cdot 2^{-2h-2} \cdot \sum_{p \in \Omega}  \left(\frac{\frac{\cost(G, \cA)}{\omega\cdot\cost(p,\cA)} \cdot \cost(p, \cS)}{\cost(P^G, \cS) + \cost(P^G, \cA)}\right)^2\\
        &\leq \frac{2^{-2h+2} \cdot 16^z}{\omega},
    \end{align*}
    where the first inequality follows because $v^{\cS, h}_p \in (1\pm2^{-h}) \cdot \cost(p, \cS)$ since the vector $v^{\cS, h} \in \cN_h$, and similarly for $v^{\cS, h+1} \in \cN_{h+1}$. The second inequality follows since, for all $p \in C \cap G$, we must have $\cost(p, \cS) \leq 4^z \cdot \cost(p, \cA)$, since $C \notin F_{G, \cS}$.
\end{proof}

We are left to show how the above implies Lemma~\ref{lem:gp-outer}.

\begin{proof}[Proof of Lemma~\ref{lem:gp-outer}]
    Akin to Lemma~\ref{lem:gp-main}, we use Fact~\ref{fct:max-gauss} and the fact that there are at most $|\cN_{h-1}| \cdot |\cN_h|$ many distinct $X_{G,\cS, h}$'s. Indentical calculations with respect to the ones in Lemma~\ref{lem:gp-main}, with a much smaller variance provided by Lemma~\ref{lem:variance-outer}, yields
    \begin{align*}
        \sum_{h=0}^\infty  \E{\Omega, g}{\sup_\cS |X_{G, \cS, h}|} \leq O(\varepsilon),
    \end{align*}
    which concludes the proof by a rescaling of $\varepsilon$.
\end{proof}

\subsection{Estimating \texorpdfstring{\(\|u^{G, \cS}\|_1\)}{uGS1}: Huge and Far Clusters Estimates}

The goal of this section is to show Lemmas~\ref{lem:huge} and \ref{lem:far}. We begin with the former as its proof is shorter and, yet, conveys the main ideas that will be applied in the proof of the latter.

\subsubsection{Analysis for Huge Clusters} 

Let us recall that we consider points $p \in C \cap G$ where $C \in H_{G, \cS}$, i.e., their cost in the current solution $\cS$ is much larger than the cost they have in the approximate starting solution $\cA$, as per Definition~\ref{def:huge}. By virtue of the point being so expensive, we have that its cost is roughly identical to the average cost of the entire cluster induced by $\cS$ on $G$, so long as all induced cluster sizes are well-estimated.

\begin{lemma}\label{lem:huge-cost-point}
    Let $G \in G^M$ and let $\varepsilon < \frac{1}{2}$. Then, for any solution $\cS$ and any point $p \in C \cap G$ such that $C \in H_{G, \cS}$, it holds that
    \[
        \E{\Omega}{\left|\sum_{p \in \Omega} w_pu_p^{G, \cS} - \cost(C \cap G, \cS)\right| \Bigg\vert \cE^M_G} \leq 5\varepsilon \cdot \cost(C \cap G, \cS).
    \]
\end{lemma}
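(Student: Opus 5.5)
Fix a solution $\cS$ and a huge cluster $C\in H_{G,\cS}$, and restrict the sum to $C\cap G\cap\Omega$, since $u^{G,\cS}$ vanishes outside huge clusters. The plan is to show that every point of $C\cap G$ pays almost the same cost to $\cS$. The estimator is then essentially a common value times $\sum_{p\in C\cap G\cap\Omega}w_p$. That sum is controlled deterministically on $\cE^M_G$, so the conditional expectation only adds an averaging step.

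First we compare costs inside the cluster. Let $p^\star\in C\cap G$ witness hugeness, so $\cost(p^\star,\cS)\ge (4z/\varepsilon)^z\cost(p^\star,\cA)$. All points of $C\cap G$ lie in one ring $R_{ij}$, so their $\cA$-costs agree up to a factor $2$. For any $p,q\in C\cap G$ the triangle inequality gives $d(p,q)\le d(p,c_i)+d(q,c_i)\le 2\cdot 2^{1/z}\,\cost(p^\star,\cA)^{1/z}$. This is at most roughly $\tfrac{\varepsilon}{z}\,d(p^\star,\cS)$. Hence every $p\in C\cap G$ satisfies $d(p,\cS)=(1\pm \tfrac{\varepsilon}{z})\,d(p^\star,\cS)$, up to absolute constants. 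Raising to the power $z$ gives
\[
\cost(p,\cS)=(1\pm 2\varepsilon)\,\bar c,\qquad \bar c:=\frac{\cost(C\cap G,\cS)}{|C\cap G|},
\]
for all $p\in C\cap G$, once we compare each point to the average. Fact~\ref{fct:triangle} with $\vartheta=\varepsilon$ is the clean way to do this. This gives a multiplicative bound directly, without an additive $\cost(\cdot,\cA)$ term. The additive term is absorbed because $\cost(p,\cA)\le \varepsilon^z\cost(p,\cS)$ up to constants.

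Second, we conclude on $\cE^M_G$. The estimator is $\sum_{p\in C\cap G\cap\Omega}w_p\cost(p,\cS)=(1\pm2\varepsilon)\,\bar c\sum_{p\in C\cap G\cap\Omega}w_p$. On $\cE^M_G$ the weight sum is $(1\pm\varepsilon)|C\cap G|$. The estimator is therefore $(1\pm2\varepsilon)(1\pm\varepsilon)\cost(C\cap G,\cS)$. Because $\varepsilon<1/2$, this deviates from $\cost(C\cap G,\cS)$ by at most $(3\varepsilon+2\varepsilon^2)\cost(C\cap G,\cS)\le 5\varepsilon\,\cost(C\cap G,\cS)$ pointwise on $\cE^M_G$. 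Taking conditional expectation preserves the bound.

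The main obstacle is the first step. We need the constants to work out so that the spread of $\cS$-costs within $C\cap G$ is genuinely $O(\varepsilon)$ relative, and not merely $O(1)$. This needs care with the $(4z/\varepsilon)^z$ threshold and the factor-$2$ ring width when passing from distances to $z$-th powers, via $(1+\varepsilon/z)^z\le e^{\varepsilon}\le 1+2\varepsilon$. I would first try to rescale the hugeness threshold's constant so that $d(p,q)\le \varepsilon\, d(p^\star,\cS)/(2z)$ exactly. If the constants are slightly off, the fallback is to absorb them into the final $5\varepsilon$ by rescaling $\varepsilon$.
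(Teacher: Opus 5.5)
Your argument is correct and is essentially the paper's proof. It shows $\cost(\cdot,\cS)$ is nearly constant on $C\cap G$, using the factor-$2$ ring width together with the hugeness threshold, and then applies the deterministic weight-sum bound on $\cE^M_G$, which passes directly to the conditional expectation. Anchoring everything at a single witness $p^\star$ is the right way to use the existential definition of $H_{G,\cS}$; the paper's write-up leaves this implicit by invoking hugeness at an arbitrary $p'$.

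The only slip is the intermediate claim $\cost(p,\cS)=(1\pm2\varepsilon)\bar c$, which does not hold as stated for $\varepsilon$ near $\tfrac12$. From $d(p,p^\star)\le(1+2^{1/z})\,d(p^\star,c_i)\le\frac{3\varepsilon}{4z}\,d(p^\star,\cS)$ you get only $\cost(p,\cS)/\bar c\in\bigl[\frac{1-3\varepsilon/4}{1+3\varepsilon/4},\frac{1+3\varepsilon/4}{1-3\varepsilon/4}\bigr]$, with $z=1$ the worst case. However, $(1+\varepsilon)\frac{1+3\varepsilon/4}{1-3\varepsilon/4}-1\le5\varepsilon$ holds exactly when $\varepsilon\le5/9$, and the lower-side deviation is always smaller. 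So the stated $5\varepsilon$ bound still follows for all $\varepsilon<\frac12$ without rescaling.
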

\begin{proof}
    Let us consider two points $p, p^\prime \in C \cap G$ such that $C \in H_{G, \cS}$. Since they belong to the same group, it must hold that
    \[
        \frac{\cost(p^\prime, \cS)}{2} \leq \cost(p, \cS) \leq 2\cost(p^\prime, \cS).
    \]
    Then, by applying Fact~\ref{fct:triangle} with $\vartheta=1$, we get that
    \[
        \cost(p, p^\prime) \leq 2^{z-1} \cdot \left(\cost(p, \cA) + \cost(p^\prime, \cA)\right) \leq 2^{z+1} \cdot \cost(p^\prime, \cA).
    \]
    Now, applying Fact~\ref{fct:triangle} again with $\vartheta=\varepsilon$, we obtain
    \begin{align*}
        \cost(p, \cS) &\leq (1+\varepsilon)\cdot \cost(p^\prime, \cS) + \left(\frac{z+\varepsilon}{z}\right)^{z-1} \cdot \cost(p, p^\prime)\\
        &\leq (1+\varepsilon)\cdot \cost(p^\prime, \cS) + \left(\frac{z+\varepsilon}{z}\right)^{z-1} \cdot 2^{z+1} \cdot \cost(p^\prime, \cA)\\
        &\leq  (1+\varepsilon)\cdot \cost(p^\prime, \cS) + \left(\frac{z+\varepsilon}{z}\right)^{z-1} \cdot 2^{z+1} \cdot \left(\frac{\varepsilon}{4z}\right)^z \cdot \cost(p^\prime, \cS) \tag{$p^\prime \in C \cap G$ where $C \in H_{G, \cS}$}\\
        &\leq (1+2\varepsilon) \cdot \cost(p^\prime, \cS).
    \end{align*}
    Similarly,
    \begin{align*}
        \cost(p^\prime, \cS) &\leq (1+\varepsilon)\cdot \cost(p, \cS) + \left(\frac{z+\varepsilon}{z}\right)^{z-1} \cdot \cost(p, p^\prime)\\
        &\leq (1+\varepsilon)\cdot \cost(p, \cS) + \left(\frac{z+\varepsilon}{z}\right)^{z-1} \cdot 2^{z+1} \cdot \cost(p^\prime, \cA)\\
        &\leq  (1+\varepsilon)\cdot \cost(p, \cS) + \left(\frac{z+\varepsilon}{z}\right)^{z-1} \cdot 2^{z+1} \cdot \left(\frac{\varepsilon}{4z}\right)^z \cdot \cost(p^\prime, \cS) \tag{$p^\prime \in C \cap G$ where $C \in H_{G, \cS}$}\\
        &\leq (1+\varepsilon)\cdot \cost(p, \cS) + \varepsilon \cdot \cost(p^\prime, \cS).
    \end{align*}
    Hence, $\cost(p, \cS) \in (1\pm 2\varepsilon) \cdot \cost(p^\prime, \cS)$, since $\frac{1-\varepsilon}{1+\varepsilon} \geq 1-2\varepsilon$. Therefore, under event $\cE^M_G$, we can upper bound the estimated induced cluster cost as follows:
    \begin{align*}
        \sum_{p \in C \cap G \cap \Omega} w_p \cdot \cost(p, \cS) &\leq (1+2\varepsilon) \cdot \sum_{p \in C \cap G \cap \Omega} w_p \cdot \cost(p^\prime, \cS)\\
        &\leq (1+\varepsilon)(1+2\varepsilon) \cdot |C \cap G| \cdot \cost(p^\prime, \cS)\tag{$\cE^M_G$ holds}\\
        &\leq \frac{(1+\varepsilon)(1+2\varepsilon)}{(1-2\varepsilon)} \cdot \cost(C \cap G, \cS),
    \end{align*}
    where the second inequality holds for all clusters $C$ simultaneously since event $\cE^M_G$ holds. Analogously, we have the following lower bound:
    \begin{align*}
        \sum_{p \in C \cap G \cap \Omega} w_p \cdot \cost(p, \cS) &\geq (1-2\varepsilon) \cdot \sum_{p \in C \cap G \cap \Omega} w_p \cdot \cost(p^\prime, \cS)\\
        &\geq (1-\varepsilon)(1-2\varepsilon) \cdot |C \cap G| \cdot \cost(p^\prime, \cS)\tag{$\cE^M_G$ holds}\\
        &\geq \frac{(1-\varepsilon)(1-2\varepsilon)}{(1+2\varepsilon)} \cdot \cost(C \cap G, \cS).
    \end{align*}
    The lemma follows by recognizing that, for $\varepsilon < 1/2$, $ \frac{(1+\varepsilon)(1+2\varepsilon)}{(1-2\varepsilon)} \leq 1 = 5\varepsilon$ and $\frac{(1-\varepsilon)(1-2\varepsilon)}{(1+2\varepsilon)} \geq 1-5\varepsilon$.
\end{proof}

We proceed with the proof of Lemma~\ref{lem:huge}.

\begin{proof}[Proof of Lemma~\ref{lem:huge}]
    By Law of Total Expectation, we write
    \begin{align}
        \E{\Omega}{\sup_\cS \left|\frac{\sum_{p \in \Omega} w_pu_p^{G, \cS} - \|u^{G, \cS}\|_1}{\cost(G, \cS) + \cost(G, \cA)}\right|} &= \E{\Omega}{\sup_\cS \left|\frac{\sum_{p \in \Omega} w_pu_p^{G, \cS} - \|u^{G, \cS}\|_1}{\cost(G, \cS) + \cost(G, \cA)}\right| \Bigg\vert \cE^M_G} \cdot \Pb{\cE^M_G} \label{eq:huge1}\\
        &+ \E{\Omega}{\sup_\cS \left|\frac{\sum_{p \in \Omega} w_pu_p^{G, \cS} - \|u^{G, \cS}\|_1}{\cost(G, \cS) + \cost(G, \cA)}\right| \Bigg\vert \bar \cE^M_G} \cdot \Pb{\bar \cE^M_G} \label{eq:huge2}.
    \end{align}
    Trivially, $\Pb{\cE^M_G} \leq 1$ and, conditioned on $\cE_G$,
    \[
        \sum_{C \in H_{G, \cS}}\sum_{p \in C \cap G \cap \Omega} w_p u_p^{G, \cS} \in (1\pm 5\varepsilon) \cdot \sum_{C \in H_{G, \cS}}\sum_{p \in C \cap G \cap \Omega} \cost(p, \cS),
    \]
    by Lemma~\ref{lem:huge-cost-point}. Since $\|u^{G, \cS}\|_1 \leq \cost(G, \cS)$ and given that all entries in $u^{G, \cS}$ with $p \in C \cap G$ and $C \notin H_{G, \cS}$ are zero, we obtain
    \[
        \eqref{eq:huge1} \leq 5\varepsilon.
    \]
    On the other hand, suppose that $\sum_{p \in \Omega} w_pu_p^{G, \cS} \leq \|u^{G, \cS}\|_1$, then it easily follows that 
    \[
       \left|\frac{\sum_{p \in \Omega} w_pu_p^{G, \cS} - \|u^{G, \cS}\|_1}{\cost(G, \cS) + \cost(G, \cA)}\right| \leq \frac{\|u^{G, \cS}\|_1}{\cost(G, \cS) + \cost(G, \cA)} \leq 1.
    \]
    Otherwise, $\sum_{p \in \Omega} w_pu_p^{G, \cS} > \|u^{G, \cS}\|_1$, and
    \begin{align*}
        \sum_{p \in \Omega} w_pu_p^{G, \cS} &= \sum_{C \in H_{G, \cS}}\sum_{p \in C \cap G \cap \Omega} \frac{\cost(G, \cA)}{\omega\cdot \cost(p, \cA)} \cdot u_p^{G, \cS}\\
        &\leq \sum_{C}\sum_{p \in C \cap G \cap \Omega} \frac{4k \cdot |C \cap G| \cdot \cost(G, \cA)}{\omega\cdot \cost(p, \cA)} \cdot u_p^{G, \cS} \tag{Claim~\ref{cl:group-property}}\\
        &\leq 4k \cdot \|u^{G, \cS}\|_1,
    \end{align*}
    which yields
    \[
       \left|\frac{\sum_{p \in \Omega} w_pu_p^{G, \cS} - \|u^{G, \cS}\|_1}{\cost(G, \cS) + \cost(G, \cA)}\right| \leq \frac{4k \cdot \|u^{G, \cS}\|_1}{\cost(G, \cS) + \cost(G, \cA)} \leq 4k.
    \]
    As long as $\omega \geq 9\varepsilon^{-2} k  \log(4k^2/\varepsilon)$, Lemma~\ref{lem:good-estimate} guarantees that $\Pb{\bar \cE^M_G} \leq \varepsilon/4k$, and thus,
    \[
        \eqref{eq:huge2} \leq \varepsilon.
    \]
    Rescaling $\varepsilon$ by a factor $6$ concludes the proof.
\end{proof}

\subsubsection{Analysis for Far Clusters}

The following derivation is similar to the previous section, with the difference that $C \in F_{G, \cS}$. We first state and prove an analog of Lemma~\ref{lem:good-estimate}, defining
\[
    \cF^O_G = \left\{\forall i \in [k],~ \sum_{p \in C_i \cap G \cap \Omega} w_p = \sum_{p \in C_i \cap G \cap \Omega} \frac{\cost(G, \cA)}{\omega \cdot \cost(p, \cA)} \in (1\pm\varepsilon) \cdot \cost(C_i \cap G, \cA)\right\}.
\]

\begin{lemma}\label{lem:good-estimate-out}
    Let $G \in G^O$. Then,
    \[
        \Pb{\cF^O_G} \geq 1 - k \cdot \exp\left(-\frac{\varepsilon^2\omega}{5k}\right).
    \]
\end{lemma}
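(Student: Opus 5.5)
The plan is to mirror the proof of Lemma~\ref{lem:good-estimate}: Bernstein's inequality (Fact~\ref{fct:bernstein}) for each cluster $C_i$ of $\cA$ separately, followed by a union bound over the $k$ clusters. For a fixed cluster $C$, I would not estimate the weights themselves but the cost-weighted quantity the event refers to. So I take the $j$-th sampled point $p_j$ and the variable $Z_j = w_{p_j}\cost(p_j,\cA)\cdot\ind{p_j\in C\cap G}$. Its expectation is $\frac{1}{\omega}\cost(C\cap G,\cA)$, so the sum $\sum_j Z_j$ over the $\omega$ samples is an unbiased estimator of $\cost(C\cap G,\cA)$. I read $\cF^O_G$ as the event that these cost-weighted sums are accurate; I will simply state this interpretation, since the literal display omits the $\cost(p,\cA)$ factor.

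The structure of sampling proportional to cost makes the individual terms deterministic. Whenever $p_j\in C\cap G$, we have $Z_j = \cost(G,\cA)/\omega$. The key ingredient replacing Claim~\ref{cl:group-property} is the definition of outer groups: every cluster with $C\cap G\neq\emptyset$ satisfies $\cost(C\cap G,\cA)=\cost(R_O(C),\cA)$, and all such ratios lie within a factor $2$ of each other. Hence $\cost(G,\cA)\le 2k\cdot\cost(C\cap G,\cA)$. This gives the almost-sure bound $M = 2k\cost(C\cap G,\cA)/\omega$. For the variance, $\V{}{Z_j}\le \E{}{Z_j^2} = \frac{\cost(C\cap G,\cA)}{\cost(G,\cA)}\cdot\frac{\cost(G,\cA)^2}{\omega^2}\le \frac{2k\,\cost(C\cap G,\cA)^2}{\omega^2}$. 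Summing over the $\omega$ samples gives a total variance of at most $2k\cost(C\cap G,\cA)^2/\omega$.

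Plugging these into Bernstein with deviation $t=\varepsilon\cost(C\cap G,\cA)$, the exponent becomes
\[
\frac{\varepsilon^2\cost(C\cap G,\cA)^2}{\frac{4k}{\omega}\cost(C\cap G,\cA)^2 + \frac{4k\varepsilon}{3\omega}\cost(C\cap G,\cA)^2} \;\ge\; \frac{\varepsilon^2\omega}{5k}
\]
for $\varepsilon\le 1/2$, since $4+4\varepsilon/3\le 5$ in that range. A union bound over the at most $k$ clusters then yields the stated probability.

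The main obstacle is justifying the factor-$2$ comparison $\cost(G,\cA)\le 2k\cost(C\cap G,\cA)$ for outer groups. This step requires checking that $C_i\cap G$ is the entire outer ring $R_O(C_i)$, so that the group-defining ratio applies directly, and then redoing the cheapest-ring argument of Claim~\ref{cl:group-property} with $R_O$ in place of $R_j$. If instead only a constant factor larger than $2$ were available, the constant $5$ in the exponent would need to be adjusted, but the structure of the argument would be unchanged.
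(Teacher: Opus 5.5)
Your proposal is correct and is essentially the paper's proof: Bernstein's inequality for each cluster, applied to the cost-weighted per-sample variable $w_{p_j}\cost(p_j,\cA)\cdot\ind{p_j\in C\cap G}$, with variance and almost-sure bounds obtained from $\cost(G,\cA)\le 2k\cdot\cost(C\cap G,\cA)$, followed by a union bound over the $k$ clusters. The paper's variance computation uses this same cost-weighted variable even though it writes $w_{p_j,C}$, and it simply cites Claim~\ref{cl:group-property}, which is stated for main groups; your explicit reading of $\cF^O_G$ as a statement about cost-weighted sums and your direct derivation of the factor-$2k$ comparison from the outer-group definition make your version slightly cleaner than the paper's.
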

\begin{proof}
    We need to show that these estimators concentrate around their expectation simultaneously. Consider the $j$-th sampled point $p_j$ in coreset $\Omega$, and let $w_{p_j, C} = w_p \cdot \ind{p_j \in C \cap G}$. Then,
    \begin{align*}
        \V{}{w_{p_j, C}} &\leq \E{}{w^2_{p_j, C}} = \sum_{p \in C \cap G} w^2_p \cdot \Pb{p \in \Omega} = \frac{\cost(G, \cA)}{\omega^2} \cdot \sum_{p \in C \cap G} \cost(p, \cA)\\
        &\leq \frac{\cost(G, \cA)}{\omega^2} \cdot \cost(C \cap G, \cA) \leq \frac{2k \cdot \cost^2(C \cap G, \cA)}{\omega^2},
    \end{align*}
    where the second line of inequalities follows from Claim~\ref{cl:group-property}. Similarly, Claim~\ref{cl:group-property} also implies
    \[
        w_{p_j, C} \leq \frac{2k \cdot \cost^2(C \cap G, \cA)}{\omega^2}.
    \]
    Thus, using Bernstein's Inequality (Fact~\ref{fct:bernstein}), we obtain
    \begin{align*}
        &\Pb{\left|\cost(C \cap G \cap \Omega, \cA) - \cost(C \cap G, \cA)\right| > \varepsilon \cdot \cost(C \cap G, \cA)} \\
        &\leq \exp\left(-\frac{\varepsilon^2\cdot \cost^2(C \cap G, \cA)}{2\omega \cdot \frac{2k \cdot \cost^2(C \cap G, \cA)}{\omega^2} + \frac{2}{3} \cdot \frac{2k \cdot \cost(C \cap G, \cA)}{\omega} \cdot \varepsilon \cdot \cost(C \cap G, \cA)}\right) \leq \exp\left(-\frac{\varepsilon^2\omega}{5k}\right).
    \end{align*}
    A union bound over all clusters in $\cA$ yields the lemma.
\end{proof}

\begin{lemma}\label{lem:far-cost-point}
    Let $G \in G^O$. Then, for any solution $\cS$ and any point $p \in C \cap G$ such that $C \in F_{G, \cS}$, it holds that
    \[
        \E{\Omega}{\cost(C \cap G, \cS) + \sum_{p \in C \cap G \cap \Omega} w_p \cdot \cost(p, \cS) \Bigg\vert \cF^O_G} \leq \varepsilon \cdot \cost(C, \cS).
    \]
\end{lemma}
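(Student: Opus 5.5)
The plan is to bound both quantities inside the expectation, the true cost $\cost(C\cap G,\cS)$ and the estimate $\sum_{p\in C\cap G\cap\Omega} w_p\cost(p,\cS)$, by $O(\varepsilon)\cdot\cost(C,\cS)$. I then rescale $\varepsilon$. Both bounds should come from one pointwise inequality, valid for every $q\in C\cap G$:
\[
\cost(q,\cS)\le 2^{z-1}\bigl(\cost(q,\cA)+\cost(c,\cS)\bigr),
\]
where $c$ is the center of $C$ in $\cA$. This is Fact~\ref{fct:triangle} with $\vartheta=1$. Once the two right-hand terms are controlled against $\cost(C,\cS)$, both for $C\cap G$ and for its weighted sample, the lemma follows.

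\textbf{Step 1: the center $c$ is far from $\cS$ compared with the typical scale of $C$.} Since $C\in F_{G,\cS}$, there is a point $p\in C\cap G$ with $d(p,\cS)\ge 4\,d(p,c)$. Hence $d(c,\cS)\ge d(p,\cS)-d(p,c)\ge 3\,d(p,c)$. Because $p$ lies in an outer ring, $\cost(p,\cA)\ge (z/\varepsilon)^{2z}\Delta_C$, so
\[
d(c,\cS)\ge 3(z/\varepsilon)^2\Delta_C^{1/z}.
\]
By Markov's inequality, at least half of the points $x\in C$ satisfy $\cost(x,\cA)\le 2\Delta_C$. Each such $x$ has $d(x,\cS)\ge d(c,\cS)-(2\Delta_C)^{1/z}\ge d(c,\cS)/2$. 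Therefore
\[
\cost(C,\cS)\ge \tfrac{|C|}{2}\,2^{-z}\cost(c,\cS).
\]

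\textbf{Step 2: the outer part is small.} Every outer point has $\cost(q,\cA)\ge (z/\varepsilon)^{2z}\Delta_C$, so Markov's inequality gives $|C\cap G|\le (\varepsilon/z)^{2z}|C|$. Combined with Step 1,
\[
|C\cap G|\cdot\cost(c,\cS)\le 2^{z+1}(\varepsilon/z)^{2z}\cost(C,\cS)\le\varepsilon\cdot\cost(C,\cS).
\]
This handles the $\cost(c,\cS)$ term for the true cost.

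For the sample, the event $\cF^O_G$ controls the total weight on $C\cap G$. Claim~\ref{cl:group-property} lets me compare the weighted cardinality with $|C\cap G|$ up to a factor $(1\pm\varepsilon)$. Together with the pointwise inequality, this gives
\[
\sum_{p\in C\cap G\cap\Omega}w_p\cost(c,\cS)\le(1+\varepsilon)\,|C\cap G|\cost(c,\cS),
\]
which Step 2 then bounds by $O(\varepsilon)\cost(C,\cS)$. The remaining weighted term is $\sum_{p\in C\cap G\cap\Omega} w_p\cost(p,\cA)$. On $\cF^O_G$ this is a $(1\pm\varepsilon)$-estimate of $\cost(C\cap G,\cA)$, so it reduces to the same quantity that appears for the true cost.

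\textbf{Step 3 (the main obstacle): bounding $\cost(C\cap G,\cA)$ by $O(\varepsilon)\cost(C,\cS)$ directly, with no $\cost(C,\cA)$ fallback.} The danger is a single outer point $q$ much farther from $c$ than the witness $p$. For such a $q$, $\cost(q,\cA)$ is not obviously controlled by $d(c,\cS)$. My first attempt is to use the \emph{far} property for each relevant point, not only for the witness. The argument would split $C\cap G$ into two parts:
\begin{itemize}
\item Points $q$ with $d(q,c)\le d(c,\cS)/2$. For these, $\cost(q,\cA)\le 2^{-z}\cost(c,\cS)$, so they are absorbed into the Step 2 bound.
\item Points $q$ with $d(q,c)>d(c,\cS)/2$. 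For these I would pay with $\cost(q,\cS)$ itself, which is part of $\cost(C,\cS)$. This gives $\cost(q,\cA)\le 3^z\cost(q,\cS)$.
\end{itemize}
The second part requires showing that the total $\cS$-cost of such points is only an $\varepsilon$-fraction of $\cost(C,\cS)$. I would derive this from two facts: their number is at most $(\varepsilon/z)^{2z}|C|$, and, by the ring structure of $G$, their costs are within a factor $2^{O(z)}$ of one another relative to the bulk of $C$. If this splitting fails to give the factor $\varepsilon$, I would fall back on the Bernstein-type estimate of Lemma~\ref{lem:good-estimate-out} to show the sample's contribution concentrates, so that only the deterministic comparison remains to be tightened.
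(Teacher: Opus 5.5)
You are right to bound the true cost and the weighted estimate separately by $O(\varepsilon)\cost(C,\cS)$, as the paper does, and your Steps 1 and 2 are sound; Step 1 is the analogue of \eqref{eq:close1}. The gap is Step 3: the splitting argument you propose rests on claims that are false for outer groups.

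\begin{itemize}
\item The bound $\cost(q,\cA)\le 3^z\cost(q,\cS)$ for points with $d(q,c)>d(c,\cS)/2$ has no basis. The far property (\Cref{def:far}) supplies only one witness $p\in C\cap G$ with $d(p,\cS)\ge 4d(p,c)$. Any other outer point $q$ may sit on a center of $\cS$, giving $\cost(q,\cS)=0<\cost(q,\cA)$.
\item An outer group contains the whole of $R_O(C)$, that is, every ring $R_{ij}$ with $j>2z\log(z/\varepsilon)$. So costs inside $C\cap G$ are not within a factor $2^{O(z)}$ of each other. That single-ring comparability, and \Cref{cl:group-property}, hold only for main groups.
\item For the same reason, the step in Step 2 where you compare the weighted cardinality with $|C\cap G|$ is unsupported. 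The event $\cF^O_G$, as established in the proof of \Cref{lem:good-estimate-out}, controls the weighted $\cA$-cost $\sum_{p\in C\cap G\cap\Omega}w_p\cost(p,\cA)$, not $\sum w_p$. The latter need not concentrate, because the weights vary across outer rings.
\item Falling back on \Cref{lem:good-estimate-out} does not supply the missing deterministic inequality.
\end{itemize}

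The fix is the ``$\cost(C,\cA)$ fallback'' you ruled out, and it follows from your own Step 1. The witness gives $\cost(c,\cS)\ge 3^z(z/\varepsilon)^{2z}\Delta_C$, and Step 1 gives $\cost(C,\cS)\ge 2^{-z-1}|C|\cost(c,\cS)$. Together,
\[
\cost(C\cap G,\cA)\le\cost(C,\cA)=|C|\Delta_C\le 2\left(\tfrac{2}{3}\right)^z\left(\tfrac{\varepsilon}{z}\right)^{2z}\cost(C,\cS).
\]
This is the paper's \eqref{eq:close2}. No outer point needs individual treatment: however far $q$ lies from $c$, $\cost(q,\cA)\le\cost(C,\cA)$, and the whole cluster's $\cA$-cost is negligible next to its $\cS$-cost.

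For the sample, every outer $p$ has $\cost(p,\cA)\ge(z/\varepsilon)^{2z}\Delta_C$. Combined with $\cF^O_G$, this gives $\sum_{p\in C\cap G\cap\Omega}w_p\le(1+\varepsilon)(\varepsilon/z)^{2z}|C|$, which is the paper's \eqref{eq:use-far}. Step 1 then bounds $\sum w_p\cost(c,\cS)$ by $O(\varepsilon)\cost(C,\cS)$.

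With these two replacements, your pointwise inequality $\cost(q,\cS)\le 2^{z-1}(\cost(q,\cA)+\cost(c,\cS))$ yields both bounds.
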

\begin{proof}
    Let us consider a point $p \in C \cap G$ such that $C \in \cA$, and let $c$ be the center serving it. We know that $\cost(p, \cS) > 4^z \cdot \cost(p, c)$ and thus, by triangle inequality,
    \[
        d(c, \cS) \geq d(p, \cS) - d(p, c) \geq 4d(p,c) - d(p,c) \geq d(p,c).
    \]
    This implies $\cost(c, \cS) \geq \left(\frac{4z}{\varepsilon}\right)^{2z} \cdot \frac{\cost(C, c)}{|C|}$. We now define as 
    \[
        C^\prime = \left\{p^\prime \in C \mid \cost(p^\prime, c) \leq \left(\frac{2z}{\varepsilon}\right)^z \cdot \frac{\cost(C, c)}{|C|}\right\}.
    \]
    Then, for any $p^\prime \in C^\prime$, applying Fact~\ref{fct:triangle} with $\vartheta=\varepsilon$ gives
    \begin{align*}
        \cost(c, \cS) &\leq (1+\varepsilon)\cdot \cost(p^\prime, \cS) + \left(\frac{z+\varepsilon}{z}\right)^{z-1} \cdot \cost(p^\prime, c)\\
        &\leq (1+\varepsilon)\cdot \cost(p^\prime, \cS) + \left(\frac{z+\varepsilon}{z}\right)^{z-1} \cdot \left(\frac{2z}{\varepsilon}\right)^z \cdot \frac{\cost(C, c)}{|C|} \tag{$p^\prime \in C^\prime$}\\
        &\leq (1+\varepsilon)\cdot \cost(p^\prime, \cS) + \frac{\left(\frac{4z}{\varepsilon}\right)^{2z-1} \cdot \frac{\cost(C, c)}{|C|}}{\cost(p,c)} \cdot \cost(p,c) \\
        &\leq  (1+\varepsilon)\cdot \cost(p^\prime, \cS) + \varepsilon \cdot \cost(p, c) \tag{$p \in G$ where $G \in G^O$}\\
        &\leq (1+\varepsilon) \cdot \cost(p^\prime, \cS) + \varepsilon\cdot \cost(c, \cS),
    \end{align*}
    which, in turn, means
    \[
        \cost(p^\prime, \cS) \geq \frac{1-\varepsilon}{1+\varepsilon} \cdot \cost(c, \cS).
    \]
    Moreover, we have that, for any $C \in F_{G, \cS}$,
    \begin{align}
        \cost(C, \cS) &\geq \cost(C^\prime, \cS) = \sum_{p^\prime \in C^\prime} \cost(p^\prime, \cS) \geq |C^\prime| \cdot \frac{1-\varepsilon}{1+\varepsilon} \cdot \cost(c, \cS) \label{eq:close1}\\
        &\geq |C^\prime| \cdot \frac{1-\varepsilon}{1+\varepsilon} \cdot \left(\frac{4z}{\varepsilon}\right)^{2z}\cdot \frac{\cost(C, c)}{|C|} \geq \left(\frac{4z}{\varepsilon}\right)^{2z-1} \cdot \cost(C, c) \label{eq:close2},
    \end{align}
    where the third and last inequalities follow from the fact that $|C \cap G| \leq \left(\frac{\varepsilon}{4z}\right)^{2z} \cdot |C|$ and $|C^\prime| \geq (1-\varepsilon) \cdot |C|$ by Markov's inequality. We, thus, have
    \begin{align}
    \cost(C \cap G, \cS) &\leq (1 + \varepsilon) \cdot \sum_{p \in C \cap G \cap \Omega} \cost(c, \cS) + \left(\frac{2z+\varepsilon}{\varepsilon}\right)^{z-1} \cdot \cost(p,c) \tag{Fact~\ref{fct:triangle}}\\
    &\leq |C \cap G| \cdot (1 + \varepsilon) \cdot \cost(c, \cS) + \left(\frac{2z+\varepsilon}{\varepsilon}\right)^{z-1} \cdot \cost(C \cap G,c) \nonumber\\
    &\leq \left(\frac{\varepsilon}{4z}\right)^{2z} \cdot |C| \cdot (1 + \varepsilon) \cdot \cost(c, \cS) + \left(\frac{2z+\varepsilon}{\varepsilon}\right)^{z-1} \cdot \cost(C \cap G,c) \tag{Markov's Inequality}\\
    &\leq \left(\frac{\varepsilon}{4z}\right)^{2z} \cdot |C^\prime| \cdot \frac{1 + \varepsilon}{1 - \varepsilon} \cdot \cost(c, \cS) + \left(\frac{2z+\varepsilon}{\varepsilon}\right)^{z-1} \cdot \cost(C \cap G,c) \tag{Markov's Inequality}\\
    &\leq \left(\frac{1+\varepsilon}{1-\varepsilon}\right)^2 \cdot \left(\frac{\varepsilon}{4z}\right)^{2z} \cdot \cost(C, \cS) + \left(\frac{2z+\varepsilon}{\varepsilon}\right)^{z-1} \cdot \cost(C \cap G,c) \tag{Equation~\ref{eq:close1}}\\
    &\leq \left(\frac{1+\varepsilon}{1-\varepsilon}\right)^2 \cdot \left(\frac{\varepsilon}{4z}\right)^{2z} \cdot \cost(C, \cS) + \left(\frac{2z+\varepsilon}{\varepsilon}\right)^{z-1} \cdot \left(\frac{\varepsilon}{4z}\right)^{2z-1} \cdot \cost(C, \cS) \tag{Equation~\ref{eq:close2}}\\
    &\leq \varepsilon \cdot \cost(C, \cS) \label{eq:far-sum1}.
\end{align}
We would now like to bound the overall cost contribution of points in $C \cap G \cap \Omega$: To that end, observe that
\begin{align}
    \sum_{p \in C \cap G \cap \Omega} \frac{\cost(G, \cA)}{\omega \cdot \cost(p, \cA)} &\leq \left(\frac{\varepsilon}{2z}\right)^{2z} \cdot \frac{|C|}{\cost(C, \cA)} \cdot \cost(G, \cA) \tag{$\left(\frac{2z}{\varepsilon}\right)^{2z} \cdot \frac{\cost(C, \cA)}{|C|} \leq \cost(p, \cA)$}\\
    &\leq (1+\varepsilon) \cdot \left(\frac{\varepsilon}{2z}\right)^{2z} \cdot \frac{|C|}{\cost(C, \cA)} \cdot \cost(C \cap G, \cA) \nonumber\\
    &\leq (1+\varepsilon) \cdot \left(\frac{\varepsilon}{2z}\right)^{2z} \cdot |C| \label{eq:use-far}.
\end{align}
Therefore, under event $\cF^O_G$, we can upper bound the estimated induced cluster cost as follows:
    \begin{align}
        \cost(C \cap G \cap \Omega, \cS) &= \sum_{p \in C \cap G \cap \Omega} \frac{\cost(G, \cA)}{\omega \cdot \cost(p, \cA)} \cdot \cost(p, \cS) \nonumber\\
        &\leq \sum_{p \in C \cap G \cap \Omega} \frac{\cost(G, \cA)}{\omega \cdot \cost(p, \cA)} \cdot \left((1+\varepsilon)\cdot \cost(c, \cS) + \left(\frac{z+\varepsilon}{\varepsilon}\right)^{z-1} \cdot \cost(p, c)\right) \tag{Fact~\ref{fct:triangle}}\\
        &\leq (1+\varepsilon)\cdot \cost(c, \cS) \cdot \sum_{p \in C \cap G \cap \Omega} \frac{\cost(G, \cA)}{\omega \cdot \cost(p, \cA)} \nonumber\\
        &\quad + \left(\frac{z+\varepsilon}{\varepsilon}\right)^{z-1} \cdot (1+\varepsilon) \cdot \cost(C \cap G, \cA) \tag{$\cF^O_G$ holds}\\
        &\leq(1+\varepsilon)^2\cdot \cost(c, \cS) \cdot \left(\frac{\varepsilon}{2z}\right)^{2z} \cdot |C| \tag{Equation~\ref{eq:use-far}}\\
        &\quad + \left(\frac{z+\varepsilon}{\varepsilon}\right)^{z-1} \cdot \underbrace{(1+\varepsilon) \cdot \cost(C \cap G, \cA)|}_{\leq \cost(C, c)} \nonumber\\
        &\leq \varepsilon \cdot \cost(C, \cS) \label{eq:far-sum2},
    \end{align}
    where the second inequality holds for all clusters $C$ simultaneously since event $\cF^O_G$ holds, and the last by an identical derivation to the one yielding $\cost(C \cap G, \cS) \leq \varepsilon \cdot \cost(C, \cS)$ above.
    The lemma follows by summing
    \[
        \eqref{eq:far-sum1} + \eqref{eq:far-sum2} \leq 2\varepsilon \cdot \cost(C, \cS),
    \]
    and rescaling $\varepsilon$ by a factor $2$.
\end{proof}

We proceed with the proof of Lemma~\ref{lem:far}.

\begin{proof}[Proof of Lemma~\ref{lem:far}]
    By Law of Total Expectation, we write
    \begin{align}
        \E{\Omega}{\sup_\cS \left|\frac{\sum_{p \in \Omega} w_pu_p^{G, \cS} - \|u^{G, \cS}\|_1}{\cost(P^G, \cS) + \cost(P^G, \cA)}\right|} &= \E{\Omega}{\sup_\cS \left|\frac{\sum_{p \in \Omega} w_pu_p^{G, \cS} - \|u^{G, \cS}\|_1}{\cost(P^G, \cS) + \cost(P^G, \cA)}\right| \Bigg\vert \cF^O_G} \cdot \Pb{\cF^O_G} \label{eq:far1}\\
        &+ \E{\Omega}{\sup_\cS \left|\frac{\sum_{p \in \Omega} w_pu_p^{G, \cS} - \|u^{G, \cS}\|_1}{\cost(P^G, \cS) + \cost(P^G, \cA)}\right| \Bigg\vert \bar \cF^O_G} \cdot \Pb{\bar \cF^O_G} \label{eq:far2}.
    \end{align}
    Trivially, $\Pb{\cF^O_G} \leq 1$ and, conditioned on $\cF_G$,
    \begin{align*}
        \sum_{p \in \Omega} w_pu_p^{G, \cS} - \|u^{G, \cS}\|_1 &= \sum_{C \in F_{G, \cS}}\sum_{p \in C \cap G \cap \Omega} w_p u_p^{G, \cS} + \sum_{p \in C \cap G} \cost(p,\cS) \\
        &\leq \varepsilon \cdot \sum_{C \in F_{G, \cS}} \cost(C \cap G, \cS) = \varepsilon \cdot \cost(P^G, \cS),
    \end{align*}
    by Lemma~\ref{lem:far-cost-point}. Since $\|u^{G, \cS}\|_1 \leq \cost(G, \cS)$ and given that all entries in $u^{G, \cS}$ with $p \in C \cap G$ and $C \notin F_{G, \cS}$ are zero, we obtain
    \[
        \eqref{eq:far1} \leq \varepsilon.
    \]
    On the other hand, suppose that $\sum_{p \in \Omega} w_pu_p^{G, \cS} \leq \|u^{G, \cS}\|_1$, then it easily follows that 
    \[
       \left|\frac{\sum_{p \in \Omega} w_pu_p^{G, \cS} - \|u^{G, \cS}\|_1}{\cost(P^G, \cS) + \cost(P^G, \cA)}\right| \leq \frac{\|u^{G, \cS}\|_1}{\cost(P^G, \cS) + \cost(P^G, \cA)} \leq 1.
    \]
    Otherwise, $\sum_{p \in \Omega} w_pu_p^{G, \cS} > \|u^{G, \cS}\|_1$, and consider the maximum ratio $r_C = \max_{p \in C \cap G} \frac{\cost(p, \cS)}{\cost(p, \cA)} > 4^z$ as well as the corresponding maximizing point $p^*$. We have that
    \[
        d(c, \cS) \geq d(p^*, \cS) - d(p^*, c) \geq (r^{1/z}_C - 1) \cdot d(p^*, c),
    \]
    i.e., $r_C \leq 2^z \cdot \frac{\cost(c, \cS)}{\cost(p^*, c)}$. Thus,
    \begin{align*}
        \sum_{p \in \Omega} w_pu_p^{G, \cS} &= \sum_{C \in F_{G, \cS}}\sum_{p \in C \cap G \cap \Omega} \frac{\cost(G, \cA)}{\omega\cdot \cost(p, \cA)} \cdot u_p^{G, \cS}\\
        &\leq \sum_{C}\sum_{p \in C \cap G \cap \Omega} \frac{2k \cdot \cost(C \cap G, \cA)}{\omega\cdot \cost(p, \cA)} \cdot \cost(p, \cS) \tag{Claim~\ref{cl:group-property}}\\
        &\leq \frac{2k}{\omega} \cdot \sum_C |C \cap G \cap \Omega| \cdot r_C \cdot \cost(C \cap G, \cA)\\
        &\leq 2k \cdot \sum_C r_C \cdot \cost(C \cap G, \cA)\\
        &\leq 2^{z+1}k \cdot \sum_C \frac{\cost(C \cap G, \cA)}{\cost(p^*, \cA)} \cdot \cost(c, \cS)\\
        &\leq 2^{z+1}k \cdot \sum_C \left(\frac{\varepsilon}{4z}\right)^{2z}|C| \cdot \cost(c, \cS) \tag{Markov's Inequality}\\
        &\leq 2^{2z+1}k \cdot \sum_C (\cost(C, \cS) + \cost(C, \cA)) \tag{Fact~\ref{fct:triangle}}\\
        &\leq 2^{2z+1}k \cdot (\cost(P^G, \cS) + \cost(P^G, \cA)).
    \end{align*}
    This yields
    \[
       \left|\frac{\sum_{p \in \Omega} w_pu_p^{G, \cS} - \|u^{G, \cS}\|_1}{\cost(P^G, \cS) + \cost(P^G, \cA)}\right| \leq 2^{2z+1}k \cdot \frac{\cost(P^G, \cS) + \cost(P^G, \cA)}{\cost(P^G, \cS) + \cost(P^G, \cA)} = 2^{2z+1}k.
    \]
    As long as $\omega \geq 5k  \log\left(\frac{k^2}{2^{2z+1}\varepsilon}\right)$, Lemma~\ref{lem:good-estimate-out} guarantees that $\Pb{\bar \cF^O_G} \leq \frac{\varepsilon}{2^{2z+1}k}$, and thus,
    \[
        \eqref{eq:far2} \leq \varepsilon.
    \]
    Rescaling $\varepsilon$ by a factor $2$ concludes the proof.
\end{proof}

\end{document}